%% file: main.tex
\documentclass[a4paper,11pt]{article}

\input{MyDef.tex}

\newcommand{\ff}{\mathfrak{f}}

\usepackage{fontawesome5} 

\usepackage[misc]{ifsym} 

\usepackage{tikz-feynman}

\usetikzlibrary{decorations.markings}

\usetikzlibrary{decorations.pathmorphing,decorations.markings}

\usetikzlibrary{patterns}

\allowdisplaybreaks

\usetikzlibrary{decorations.markings}

\tikzset{
    partial ellipse/.style args={#1:#2:#3}{
        insert path={+ (#1:#3) arc (#1:#2:#3)}
    }
}

\tikzset{->-/.style={decoration={
  markings,
  mark=at position #1 with {\arrow{>}}},postaction={decorate}}}
\tikzset{-<-/.style={decoration={
  markings,
  mark=at position #1 with {\arrow{<}}},postaction={decorate}}}

\begin{document}

\flushbottom

\title{Generalized comodule tube algebras for boundary and domain wall defects of (2+1)D topological order}

\author[a,\star]{Zhian Jia\orcidlink{0000-0001-8588-173X},}

\author[b,\dagger]{Sheng Tan\orcidlink{0009-0008-3318-9942}}

\affiliation[a]{Institute of Quantum Physics, School of Physics, Central South University,
Changsha 418003, China}

\affiliation[b]{School of Mathematical Sciences, Capital Normal University, Beijing 100048, China}

\affiliation[\star]{Email: \href{mailto:giannjia@foxmail.com}{giannjia@foxmail.com}}

\affiliation[\dagger]{Email: \href{mailto:tansheng2018@outlook.com}{tansheng2018@outlook.com}}


\abstract{
The tube algebra, which carries the structure of a $C^*$ weak Hopf algebra, is a fundamental tool for characterizing topological excitations in topological phases. In this work, we generalize the tube algebra framework to codimension-2 defects in $(2+1)$D gapped phases described by Turaev--Viro--Barrett--Westbury TQFTs, with particular emphasis on Levin--Wen string-net models. We show that such codimension-2 defects are described by a \emph{defect tube algebra}, which naturally carries the structure of a comodule algebra over the weak Hopf tube algebra associated with the topological excitations. Boundary and domain wall defects are then characterized by representations of the corresponding boundary and domain wall defect tube algebras. In particular, a domain wall defect tube algebra can be regarded as a generalized Drinfeld double of the tube algebras associated with the two adjacent boundary defects. More generally, for a domain wall joining $N$ bulk phases, the associated domain wall defect tube algebra can be viewed as an $N$-tuple algebra. This perspective extends naturally to general $k$-defects, namely codimension-2 defects joining $k$ codimension-1 defects, for which the resulting generalized tube algebra carries the structure of a multicomodule algebra.
}

\keywords{Tube algebra,  Topological State of Matter, Topological Field Theory, Quantum Group, Field Theories in Lower Dimensions}

\maketitle

\input{Ch-Introduction}

\input{Ch-Motivation}

\input{Ch-BdTube}

\input{Ch-DWTube}

\input{Ch-DWTwist}

\input{Ch-Multicomodule}

\input{Ch-DWMultimodule}

\input{Ch-Examples}

\section{Conclusion and Outlook}
\label{sec:Discussion}

In this paper we have developed a unified algebraic framework for codimension-$2$ defects in $(2+1)$D topological order, built on a family of generalized tube algebras attached to defects for boundaries and domain walls. The organizing principle throughout is that the tube algebra of a codimension-$2$ defect is not merely an algebra: it carries a (multi)comodule structure over the weak Hopf tube algebras, and the defect data are recovered from its representation theory. This yields an explicit Tannaka--Krein type reconstruction for comodule algebras.

Several directions merit further exploration.
\begin{enumerate}
    \item \emph{Fermionic and symmetry-enriched tube algebras.} For fermionic and symmetry-enriched string-net models, analogous tube-algebraic characterizations of the corresponding topological phases are expected to hold. The fermionic case requires a graded (super) refinement of the comodule algebra structure. The symmetry enriched case requires a $G$-enriched structure.
    \item \emph{Higher-dimensional generalizations.} For $(3+1)$D and higher-dimensional string-net models, a tube-algebra-like structure should likewise emerge. Understanding how the higher algebraic version of comodule and multicomodule structures lift to this setting---and how they encode fusion $2$-categorical data---is an important open problem.
    \item \emph{Quantum $N$-tuple algebras.} The quantum $N$-tuple algebra introduced in this work deserves a systematic investigation as an algebraic structure in its own right, beyond its origin from topological constructions. Several fundamental questions remain open, including its intrinsic algebraic characterization, the classification of its representations, the possible existence of an $R$-matrix-like structure governing the exchange of $N$-fold junctions, and the formulation of an appropriate weak Hopf-algebraic framework that generalizes the notion of the quantum double.
   \item \emph{Non-invertible symmetries and SymTFT.} The generalized tube algebras developed in this work also naturally arise in the study of SymTFTs with defects. A systematic investigation of their role in this context would be an interesting direction for future research.
\end{enumerate}

\begin{acknowledgments}
Z.J. is supported by start-up grant of Central South University (Grant No. 502045031) and by the Frontier Interdisciplinary Direction ``Quantum Technologies'' Project of Central South University (Grant No. 506010805).
S.T. is supported by Beijing Natural Science Foundation (Grant No.~1264052), a Beijing municipal research support program for returned overseas scholars, and start-up fund from Capital Normal University.
Z.J. acknowledges the hospitality of Hank Chen and Wei Cui during his visit to BIMSA, and thanks Hank Chen, Wei Cui, Mo Huang, Ran Luo, Ce Shen, Yilong Wang, and Zhi-Hao Zhang for valuable discussions during his stay in BIMSA. He also thanks Yi-Nan Wang for hosting his visit to Peking University and acknowledges beneficial discussions with Yi-Nan Wang and Yu-An Chen during his stay there. He especially thanks Liang Kong for many beneficial discussions on tube algebra and topological defects that initiate the project, and Zhengwei Liu for illustrating the tube algebra in alterfold TQFT framework. He also thanks Dominic Williamson for bringing several related works to his attention.
Part of the results of this work were presented at the ``BIMSA TQFT and Higher Symmetries Seminar'' and workshop ``Topological quantum groups, quantum symmetries and related topics'' in Harbin Institute of Technology.
S.T. acknowledges the helpful conversations with Zhengwei Liu, Hank Chen and Yu-An Chen.

\end{acknowledgments}

\appendix

\input{Ch-Appendix}

\bibliographystyle{apsrev4-1-title}
\bibliography{Jiabib}
\end{document}

%% file: MyDef.tex
\usepackage{mytheme}
\pdfoutput = 1
\usepackage[T1]{fontenc}

\usepackage{amsmath}
\usepackage{amssymb}
\usepackage{amsthm}
\usepackage{bm}
\usepackage{braket}
\usepackage{graphicx}
\usepackage{adjustbox}

\usepackage{mathdots}

\usepackage{multirow}
\usepackage{tabu}

\usepackage{booktabs} 
\usepackage{xcolor}   
\usepackage{hhline}   

\usepackage{orcidlink} 

\usepackage{tikz}
\usetikzlibrary{positioning,intersections}
\usetikzlibrary{calc}
\usetikzlibrary{shapes.geometric}
\usepackage{braids}
\usepackage{tikz-cd}
\usetikzlibrary{shapes.geometric,shapes.misc}
\usetikzlibrary{arrows,matrix,calc,scopes,decorations.markings,snakes}
\usetikzlibrary{arrows.meta}
\usetikzlibrary{intersections, patterns,fit} 
\usetikzlibrary{decorations.pathreplacing,angles,quotes}
\usetikzlibrary{tqft} 

\definecolor{darkblue}{RGB}{40, 85, 120} 

\newtheorem{theorem}{Theorem}[section]
\newtheorem{lemma}[theorem]{Lemma}

\newtheorem{proposition}[theorem]{Proposition}
\newtheorem{corollary}[theorem]{Corollary}

\theoremstyle{definition}
\newtheorem{definition}[theorem]{Definition}

\theoremstyle{remark}
\newtheorem{remark}[theorem]{Remark}

\usepackage[bbgreekl]{mathbbol} 

\usepackage{dsfont}

\usepackage{microtype}       

\DeclareMathAlphabet{\mathcal}{OMS}{cmsy}{m}{n}

\usepackage{euscript}

\newcommand\eB           {\EuScript{B}}
\newcommand\eC           {\EuScript{C}}
\newcommand\eD           {\EuScript{D}}
\newcommand\eE          {\EuScript{E}}

\newcommand\eK         {\EuScript{K}}

\newcommand\eM          {\EuScript{M}}
\newcommand\eN         {\EuScript{N}}

\newcommand\eP         {\EuScript{P}}

\newcommand\eX         {\EuScript{X}}

\newcommand{\Zbb}{\mathbb{Z}}
\newcommand{\Cbb}{\mathbb{C}}

\newcommand{\one}{\mathbb{1}}

\newcommand{\Rep}{\mathsf{Rep}}

\newcommand\Tr{\operatorname{Tr}}

\newcommand{\id}{\operatorname{id}}
\newcommand{\Irr}{\operatorname{Irr}}
\newcommand{\Hom}{\operatorname{Hom}}
\newcommand{\End}{\operatorname{End}}

\newcommand{\Tube}{\mathbf{Tube}}

\newcommand{\sfEnd}{\mathsf{End}}

\newcommand{\Fun}{\mathsf{Fun}}
\newcommand{\Vect}{\mathsf{Vect}}

\newcommand\ffa         {\mathfrak{a}}
\newcommand\ffb         {\mathfrak{b}}

\newcommand\fff         {\mathfrak{f}}

\newcommand\fC           {\mathfrak{C}}

\usepackage{mathtools,amssymb,scalerel}

\newcommand\biopencrossl{%
	\mathrel{\scalerel*{>\kern-.4\LMpt\joinrel\blacktriangleleft}{x}}}
\newcommand\biopencrossr{%
	\mathrel{\scalerel*{\blacktriangleright\joinrel\kern-.4\LMpt<}{x}}}
\newcommand\bicrosslr{%
	\mathrel{\scalerel*{\mathrel{\blacktriangleright}\joinrel\blacktriangleleft}{x}}}



%% file: Ch-Introduction.tex
\section{Introduction}
\label{sec:intro}
Anyons are particles in (2+1)D spacetime manifold that obey exchange statistics beyond those of bosons and fermions \cite{Leinaas1977anyon}. Upon exchanging two anyons, the many-body wavefunction acquires a phase factor more general than $e^{i\nu\pi}$ with $\nu=0,1$. For non-Abelian anyons, the exchange operation acts on the degenerate ground-state manifold through a noncommuting unitary transformation.
More broadly, anyons arise as quasiparticle excitations in topologically ordered phases \cite{Wen2004,simon2023topoligical}.
Beyond their fundamental theoretical significance, anyons are also of considerable interest for quantum information processing, since topological states are intrinsically robust against local perturbations and noise~\cite{Kitaev2003,wang2010topological,Nayak2008}.

Non-Abelian anyons play a central role in topological quantum computation. However, the experimental preparation and manipulation of non-Abelian anyons remain highly challenging with current technology.
To overcome these difficulties, a variety of alternative approaches have been proposed. One promising direction is to consider topological point defects, which can effectively exhibit non-Abelian properties even in systems without intrinsic non-Abelian anyonic excitations \cite{Bombin2010,Cheng2012defect,Sarma2015defect,cong2017defects,Kitaev2012boundary}.

For a non-chiral $(2+1)$D topological phase, its topological excitations are characterized by a unitary modular tensor category (UMTC) $\mathcal{Z}(\eC)$, which is the Drinfeld center of some unitary fusion category (UFC) $\eC$. 
It has been shown that these excitations can be characterized by the \emph{tube algebra} constructed from $\eC$.
The study of tube algebras originates in the pioneering work of Ocneanu; see, for example,~\cite{ocneanu1994chirality,ocneanu2001operator,izumi2000structure,muger2003subfactorsII,evans1998quantum,Neshveyev2018tube,popa2015cohomology}. Given a UFC $\eC$, we denote its tube algebra by $\Tube_{\eC}$. A fundamental result asserts that the representation category of the tube algebra is equivalent to the Drinfeld center,
\begin{equation}
\Rep(\Tube_{\eC}) \simeq \mathcal{Z}(\eC).
\end{equation}
Over the years, numerous generalizations and variants of tube algebras have been developed~\cite{Kitaev2012boundary,jones2001annular,hoek2019drinfeld,lan2025tube,bridgeman2020computing,bai2025weakhopf,jia2024weakTube,Barter2019domainwall,Barter2022computing,jia2025tube}. These algebras provide powerful algebraic frameworks for describing a wide range of categorical structures arising in mathematical physics.
From a conceptual viewpoint, tube algebras can be approached from at least three complementary perspectives: subfactor theory, tensor category theory, and topological order. Tube algebras and their variants have found broad applications in topological phases of matter. They provide algebraic descriptions of topological excitations, characterize charges in symmetry topological field theories (SymTFTs), and furnish computational tools for extracting categorical and topological data, etc.~\cite{Kitaev2012boundary,bai2025weakhopf,choi2026generalized,cordova2025representation,Christian2023,Bullivant2019tube,Lin2023tube,Hu2018full,Kawagoe2024tube,bridgeman2020computing,jia2024weakTube,Barter2019domainwall,Barter2022computing,jia2025tube, jia2025haagerup,jia2025Ising}.

\begin{figure}
    \centering
    \includegraphics[width=0.65\linewidth]{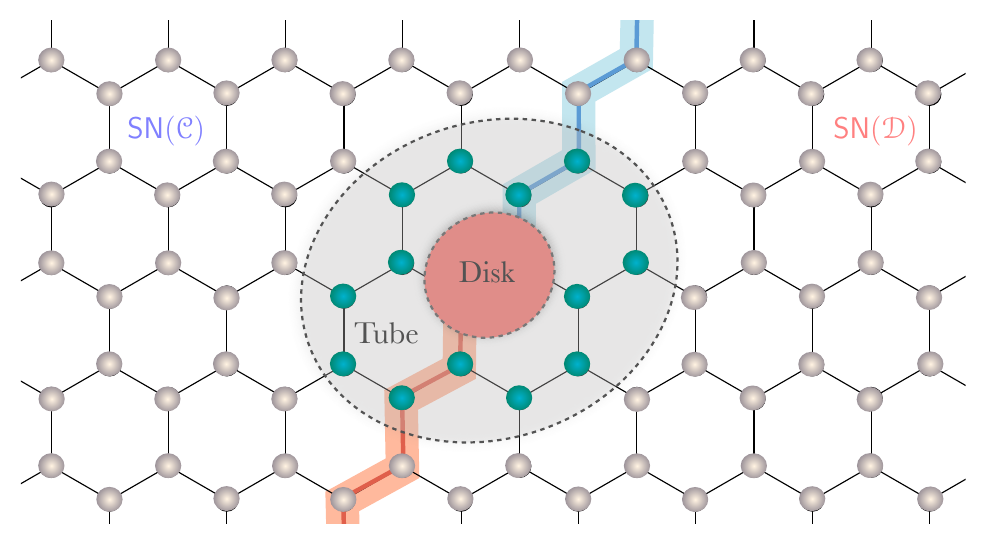}
   \caption{The point defect (disk region) connects two domain walls described by the $\eC|\eD$-bimodule categories $\eM$ and $\eN$, with tube regions surrounding the defect.}
    \label{fig:TubeRegion}
\end{figure}

The present work adopts the topological-order perspective, focusing in particular on tube algebras that characterize topological excitations and defects. The boundary tube algebra introduced by Kitaev and Kong~\cite{Kitaev2012boundary} was originally developed to describe gapped boundaries of the Levin--Wen string-net model. Since then, this perspective has inspired a variety of tube algebra constructions associated with bulk phases, domain walls, and higher-codimensional excitations~\cite{bridgeman2020computing,bai2025weakhopf,jia2024weakTube,Barter2019domainwall,Barter2022computing,jia2025tube,Bullivant2019tube}.
More recently, in~\cite{jia2025tube}, we showed that the Drinfeld quantum double of a weak Hopf algebra also admits a natural tube algebra interpretation. Furthermore, this construction can be generalized using multimodule categories, leading to the notion of the \emph{$N$-tuple algebra}.

Non-chiral topological orders can support both gapped boundaries and gapped domain walls, namely, extensions of the Hamiltonian to two-dimensional spatial manifolds with boundaries or line defects remain gapped, see, e.g., \cite{Beigi2011the,Kitaev2012boundary,Levin2013protected,Cong2017,wang2020electric,kapustin2011topological,Barkeshli2013theory,Lan2015gapped,hu2018boundary,Lan2020Gapped,jia2023boundary,Jia2023weak,jia2024weakTube,jia2025tube,jia2024generalized,Jia2026weakhopf}.
Point-like (codimension-2\footnote{Throughout this paper, codimension is defined relative to the spatial dimension of the spacetime manifold.}) defects on a given gapped boundary or domain wall (codimension-1 defect) may be viewed as topological excitations localized on the corresponding boundary or domain wall.
These excitations can be characterized through the weak Hopf boundary and domain wall tube algebras, in the sense that the excitations correspond to representations of the tube algebra, while their fusion rules are encoded in the coalgebra structure of the tube algebra~\cite{Kitaev2012boundary,bridgeman2023invertible,jia2024weakTube,jia2025tube,bai2025weakhopf,Liu2026alterfold}.

In this work, we investigate point defects in two-dimensional topological phases and uncover the underlying quantum algebraic symmetries associated with these defects. 
The most extensively studied examples of such defects are those occurring at the interface between rough and smooth boundaries in the toric code model. These defects exhibit non-Abelian behavior analogous to that of Ising anyons; see, e.g., Ref.~\cite{Sarma2015defect}. Their $\mathbb{Z}_N$ generalizations are known as parafermion zero modes~\cite{Cheng2012defect}.
While the algebraic structures associated with codimension-2 defects have been discussed briefly in Refs.~\cite{Kitaev2012boundary,bridgeman2020computing}, a complete characterization from the perspective of quantum algebra symmetries remains lacking.

We will show that \emph{the appropriate quantum algebraic structure for codimension-2 defects is a comodule algebra over weak Hopf algebras}. We then explicitly construct the corresponding comodule tube algebra. This construction may be viewed as a natural extension of Tannaka--Krein reconstruction to the setting of comodule algebras. Furthermore, the weak Hopf tube algebra arises as a special case of the comodule tube algebra when the codimension-2 defect connects two codimension-1 defects of the same type. In this sense, the comodule tube algebra provides a unified framework encompassing both topological excitations and more general codimension-2 defects.

\subsection{Main results of this work}

Our main contribution is the extension of the weak Hopf tube algebraic characterization of $(2+1)$D bosonic topological excitations to defects. We show that the corresponding defect tube algebra naturally carries the structure of a comodule algebra over the weak Hopf tube algebras describing topological excitations for the corresponding codimension-1 defects. Moreover, the domain wall defect tube algebra can be regarded as the quantum double\footnote{This notion is defined for weak Hopf algebras; for comodule algebras, since there is no coalgebra structure, the notion cannot be generalized straightforwardly. Instead, we provide a definition inspired by the defect tube algebra.} of the boundary defect tube algebra, for which we give a rigorous definition and establish the corresponding equivalence. This notion of quantum double can also be generalized to quantum $N$-tuple algebras, namely, an comodule algebra constructed from $N$ given comodule algebras.

In string-net models, violations of local stabilizer conditions ($B_f=1$ and $Q_v=1$) are interpreted as topological excitations. Given a local disk region, if the stabilizer conditions are violated within that region, we say that the region contains an excitation. For gapped boundaries and domain walls, boundary and domain wall excitations are defined analogously as localized disk regions on the corresponding boundary or domain wall; see Figure~\ref{fig:TubeRegion} for an illustration of the domain wall case. By considering a tube region surrounding such excitation disks and applying topological local moves to reduce it to a minimal tube, one obtains the tube algebra. Its multiplication is naturally defined by gluing two tube regions together. The coalgebra structure is induced by the fusion of topological excitations and may be understood geometrically as inserting a topological line and cutting a tube into two smaller tubes. The tube algebras for bulk excitations, boundary excitations and domain wall excitations are all weak Hopf algebras \cite{Kitaev2012boundary,bridgeman2023invertible,jia2024weakTube,jia2025tube,bai2025weakhopf}.

The above construction admits a natural generalization to topological codimension-2 defects \footnote{Topological excitations may be regarded as a special class of defects connecting two codimension-1 defects of the same type. More generally, defects can interpolate between codimension-1 defects of different types.}, see Refs.~\cite{Kitaev2012boundary,kong2012universal,williamson2017SETtube,morrison2012blob} for some early related discussions. The crucial distinction from the bulk case is that defects are not fusion-closed: while topological excitations can fuse with defects, and the fusion of two defects may yield topological excitations, defects themselves do not generally admit a fusion product. As a consequence, the coalgebra structure present in the bulk tube algebra no longer exists in the defect sector. Instead, we show that the defect tube algebra carries the structure of a comodule algebra over the bulk tube algebra. More precisely, it is equipped with a coaction of the bulk tube algebra that is compatible with the algebra multiplication. This coaction is induced by the fusion of bulk topological excitations with the defect, thereby replacing the coproduct associated with the fusion of bulk excitations.

Since topological excitations, viewed as codimension-2 defects, can only connect two codimension-1 defects of the same type, they do not exhibit the more general junction structure of arbitrary codimension-2 defects. In contrast, a general codimension-2 defect may connect $N$ distinct codimension-1 defects, a phenomenon with no analogue in the conventional tube algebra description of topological excitations. In this case, the corresponding defect tube algebra naturally carries a multicomodule structure and, as discussed in Section~\ref{section:multicomodule}, forms a multicomodule algebra. Moreover, every such defect can be transformed into a twist defect connecting a regular codimension-1 defect and the given codimension-1 defect associated with a non-regular module category.

More generally, one may consider multimodule categories over $N$ base UFCs, corresponding to codimension-1 domain walls connecting $N$ bulk phases; see Figure~\ref{fig:defect}. The associated defect tube algebra then acquires the structure of an \emph{$N$-tuple algebra}. The case $N=2$ reduces to the Drinfeld quantum double associated with comodule algebras, while higher values of $N$ provide a natural higher-arity generalization. To the best of our knowledge, these \emph{$N$-tuple algebras} have not been systematically studied in the existing literature. 

The main results of this work are summarized roughly in the following theorem.

\begin{theorem}
Codimension-2 defects in gapped phases described by Turaev--Viro--Barrett--Westbury TQFTs are characterized by comodule tube algebras (more precisely, these tube algebras are bicomodule algebras over the weak Hopf tube algebras for the corresponding excitations on the codimension-1 defect), and each such defect furnishes a representation of the corresponding comodule tube algebra. This framework applies uniformly to boundary defects\footnote{Boundary defects may be viewed as a special case of domain wall defects, since a boundary can be interpreted as a domain wall separating a topological phase from the vacuum.}, domain wall defects, and defects for domain wall connecting $N$ bulk phases, which are discussed in Sections~\ref{Sec:BoundaryTube}, \ref{Sec:DomainwallTube}, and \ref{sec:multi-moduletube}, respectively. More generally, codimension-2 defects connecting more than two codimension-1 defects are described by multicomodule algebras; see Section~\ref{section:multicomodule}.

We develop the theory of generalization of Drinfeld quantum double to comodule algebras. We prove that the domain wall defect tube algebra is equivalent to the  Drinfeld quantum double of boundary defect tube algebras. For domain walls connecting $N$ bulk phases, the associated defect tube algebra carries the structure of a bicomodule algebra. It may be regarded as an quantum $N$-tuple algebra of the $N$ boundary defect tube algebras; the special case $N=2$ reduces to the Drinfeld quantum double, see Section~\ref{sec:multi-moduletube}. 

We provide an explicit construction of the separability idempotent and use it to establish a generalized Schur orthogonality relation for defect tube algebras. Based on this, we prove the equivalence between the representation category of the defect tube algebra and the functor category of module categories over UFCs. These results are summarized in Table~\ref{tab:TubeAlg}.
\end{theorem}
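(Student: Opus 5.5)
This theorem is a summary of results proved in later sections, so the plan is to assemble it from a sequence of structural statements, each proved by explicit string-net/graphical calculus.

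\emph{Step 1: the defect tube algebra and its coaction.} For a UFC $\eC$, module categories $\eM,\eN$ and a codimension-2 defect between them, define the defect tube algebra as the space of minimal tubes, spanned by diagrams labelled by simples $a\in\Irr(\eC)$, $m\in\Irr(\eM)$, $n\in\Irr(\eN)$ together with module-associator morphisms. Multiplication is stacking of tubes followed by reduction with the $F$-moves (module associators). Associativity then follows from the pentagon equations for $\eM$ and $\eN$.

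Define the left and right coactions of the weak Hopf tube algebras $\Tube_{\eM}$ and $\Tube_{\eN}$ geometrically: insert a $\eC$-line parallel to the defect and cut the tube. Algebraically this is a sum over intermediate simple labels, normalized by quantum dimensions. Three things must then be checked:
\begin{itemize}
\item coassociativity and counitality, via the completeness relation (bubble removal);
\item multiplicativity of the coaction, the key comodule-algebra axiom;
\item commutation of the left and right coactions, which gives the bicomodule structure.
\end{itemize}
The weak unit condition, $\rho(1)=(1_{(1)}\otimes 1)\cdots$, needs care. I would verify it directly, since the unit of the tube algebra is a sum of idempotents over boundary labels.

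\emph{Step 2: representation theory.} Construct an explicit separability idempotent $e=\sum \frac{d}{\dim\eC}\,x\otimes x^{\dagger}$ from tube diagrams, and check $ex=xe$ by an $F$-move argument. Separability together with the $C^*$ structure gives semisimplicity and a generalized Schur orthogonality for the matrix coefficients of irreducible representations.

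Next, define a functor $\Fun_{\eC}(\eM,\eN)\to\Rep(\text{defect tube algebra})$ sending a module functor $F$ to the vector space $\bigoplus_{m,n}\Hom_{\eN}(n,F(m))$. The action of tubes is given by the module-functor structure maps $F(a\triangleright m)\cong a\triangleright F(m)$. Show this functor is fully faithful and essentially surjective by comparing simple objects. Both sides are semisimple, so it suffices to show that the idempotents $1_{m,n}$ generate and that the dimension counts match via orthogonality.

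\emph{Step 3: domain walls and $N$-tuple algebras.} For $\eC|\eD$-bimodules, define the generalized Drinfeld double of two comodule algebras, pairing a left $\Tube$-comodule algebra with a right one through a twisted product built from the coaction and the antipode, mimicking the double of weak Hopf algebras. Then exhibit an explicit linear isomorphism: decompose a domain-wall tube into a $\eC$-part and a $\eD$-part, and check that it intertwines the products. For $N$ phases, iterate this to $N$-tuple algebras and to multicomodule algebras for $k$-defects.

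The main obstacle I expect is Step 3's double: the definition must be chosen so that this isomorphism is an algebra map compatible with the weak (non-strict-unit) coactions. I would first derive the product from the graphical gluing and then abstract it into a definition.
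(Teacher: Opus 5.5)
Your Step 1 matches the paper in outline. Step 2, however, is missing the lemma that actually carries the representation theory, and Step 3 starts from an approach that cannot work in general.

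\textbf{The gap in Step 2.} You propose to get full faithfulness and essential surjectivity ``by comparing simple objects'', with dimension counts ``via orthogonality''. But for any separability idempotent $\Upsilon$, the pairing $\sum\chi_{\mathfrak f}(\Upsilon^{\langle 1\rangle})\chi_{\mathfrak g}(\Upsilon^{\langle 2\rangle})$ is only the trace of the averaging projector onto $\Hom_A(\mathcal H^{\mathfrak f},\mathcal H^{\mathfrak g})$. It says nothing until you know that intertwiner space. In particular it gives neither irreducibility of $\mathcal H^{\mathfrak f}$ for simple $\mathfrak f$, nor $\mathcal H^{\mathfrak f}\not\cong\mathcal H^{\mathfrak g}$ for $\mathfrak f\not\cong\mathfrak g$.

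The missing ingredient is the isomorphism $\Hom_A(\mathcal H^{\mathfrak f},\mathcal H^{\mathfrak g})\cong\operatorname{Nat}_{\eC}(\mathfrak g,\mathfrak f)$, which the paper proves directly. An intertwiner commutes with the wall-label idempotents. By semisimplicity (using matrix units $p_{s,t,r},\iota_{s,t,r}$ for $\mathfrak f(s)$), all its components $T_{s,t}$ are then precomposition with a single morphism $\eta_s\colon\mathfrak g(s)\to\mathfrak f(s)$. Commuting with an arbitrary tube is then equivalent to $\varphi^{\mathfrak f}_{a,z}\circ\eta_{a\otimes z}=(\id_a\otimes\eta_z)\circ\varphi^{\mathfrak g}_{a,z}$. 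This holds because the morphisms $\mathfrak g(\nu)$ are jointly epic and the morphisms $\zeta\circ(\id_a\otimes\alpha)$ are jointly monic. Only after this does Schur's lemma in $\Fun_{\eC}(\eM,\eN)$ give $\langle\chi_{\mathfrak f},\chi_{\mathfrak g}\rangle=\delta_{\mathfrak f,\mathfrak g}$.

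Your essential-surjectivity count also needs $\sum_{\mathfrak f\ {\rm simple}}(\dim\mathcal H^{\mathfrak f})^2=\dim A$, which you do not establish.

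A convention point: with vertices $\nu\in\Hom(y,a\otimes z)$ and $\zeta\in\Hom(a\otimes u,v)$, the disk space that carries a left action is $\bigoplus_{s,t}\Hom_{\eN}(\mathfrak f(s),t)$. Natural transformations act on it by precomposition, i.e.\ contravariantly, which is why the paper gets $\Fun_{\eC}(\eM,\eN)^{\rm op}$. Your $\bigoplus\Hom_{\eN}(n,F(m))$ is its transpose and is naturally a right module.

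\textbf{The separability idempotent.} Your element $\sum\frac{d}{\dim\eC}\,x\otimes x^{\dagger}$ leaves open exactly the weights that make $\sum\Upsilon^{\langle 1\rangle}\Upsilon^{\langle 2\rangle}=1$. The paper does not guess them. It transports the Haar integral $\lambda$ of the weak Hopf algebra $\Tube({}_\eC\eM)$: $\Upsilon=(s_{\eN,\eM}\otimes\id)\circ\delta_{\eM,\eN,\eM}(\lambda)$. Here $\delta_{\eM,\eK,\eN}$ is the factorization map (insert an intermediate $\eK$-wall, resolve the identity, cut), and $s$ is the antipode-like map. Centrality then follows from invariance of $\lambda$ together with identities proved once in the appendix: $\delta(XY)=\delta(X)\delta(Y)$, $s(XY)=s(Y)s(X)$, coassociativity, and the counit-type relations.

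The two coactions are the cases $\eK=\eN$ and $\eK=\eM$ of $\delta$, so your Step 1 checks are best done at that level of generality. In particular, a coaction inserts a wall segment labelled $\sum_{s\in\Irr(\eN)}s$ (or $\sum_{s\in\Irr(\eM)}s$) and copies the bulk label $a$ into both factors; it does not insert a $\eC$-line.

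\textbf{The double in Step 3.} A product ``built from the coaction and the antipode'' cannot be intrinsic to the two comodule algebras. Moving a $\eC$-arc past a $\eD$-arc uses the mixed associators of $\eM$ and $\eN$ (F-symbols ${}_{\eN}F_{avb}$, ${}_{\eM}F^{ayb}$ with $a\in\eC$, $b\in\eD$), and neither one-sided tube algebra sees these. A concrete counterexample:
\begin{itemize}
\item Take $\eC=\eD=\Vect_{\Zbb_2}$ and rank-one walls, i.e.\ after folding, $\Vect_{\Zbb_2\times\Zbb_2}$-modules given by 2-cocycles $\psi_\eM,\psi_\eN$.
\item Both one-sided defect tube algebras are $\Cbb[\Zbb_2]$ with group-like coactions, whatever $\psi_\eM,\psi_\eN$ are.
\item The domain-wall defect tube algebra is $\Cbb_{\psi_\eN/\psi_\eM}[\Zbb_2\times\Zbb_2]$, which is $\Cbb^4$ or $M_2(\Cbb)$ depending on the class of the ratio.
\end{itemize}

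The paper therefore defines $\bicrosslr$ through the tube picture itself. It embeds both boundary defect tube algebras by inserting unit strings and glues by multiplying the embedded elements. This gluing is a bijection only after balancing over the wall-label idempotents, since the product vanishes unless the intermediate wall labels match. Your fallback of deriving the product from gluing is therefore the right move. The paper then obtains the multimodule (hence domain-wall) representation theory by folding to the boundary case over $(\boxtimes_\alpha\eC_\alpha)\boxtimes(\boxtimes_\beta\eD_\beta^{\rm rev})$, rather than redoing it.
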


\begin{table}[hb]
    \centering
    \resizebox{1.0\linewidth}{!}{%
    \begin{tabular}{|c|c|c|}
        \hline
        & Tube algebra description 
        & Categorical description \\
        \hline
        Boundary phase
        & $\mathbf{Tube}({_{\eC}}\eM)$
        & $\mathsf{Boundary}({_{\eC}}\eM)=\sfEnd_{\eC}(\eM)^{\rm op} \simeq \Rep(\mathbf{Tube}({_{\eC}}\eM))$ \\
        \hline 
        Boundary defect
        & $\mathbf{Tube}({_{\eC}}\eM;{_{\eC}}\eN)$
        & $\mathsf{Defect}({_{\eC}}\eM;{_{\eC}}\eN) = \Fun_{\eC}(\eM,\eN)^{\rm op} \simeq \Rep(\mathbf{Tube}({_{\eC}}\eM;{_{\eC}}\eN))$ \\ 
        \hline
        Domain wall phase
        & $\mathbf{Tube}({_{\eC}}\eM_{\eD})$
        & $\mathsf{Wall}({_{\eC}}\eM_{\eD})=\mathsf{End}_{\eC|\eD}(\eM)^{\rm op} \simeq \Rep(\mathbf{Tube}({_{\eC}}\eM_{\eD}))$ \\
        \hline 
        Domain wall defect
        & $\mathbf{Tube}({_{\eC}}\eM_{\eD};{_{\eC}}\eN_{\eD})$
        & $\mathsf{Defect}({_{\eC}}\eM_{\eD};{_{\eC}}\eN_{\eD}) =\Fun_{\eC|\eD}(\eM,\eN)^{\rm op} \simeq \Rep(\mathbf{Tube}({_{\eC}}\eM_{\eD};{_{\eC}}\eN_{\eD}))$ \\    
        \hline   
        Multimodule domain wall phase
        & $\mathbf{Tube}({_{\{\eC_{\alpha}\}}}\eM_{\{\eD_{\beta}\}})$
        & $\begin{aligned}
        \mathsf{Wall}({_{\{\eC_{\alpha}\}}}\eM_{\{\eD_{\beta}\}})
        = 
            \mathsf{End}_{{_{\{\eC_{\alpha}\}}}|{\{\eD_{\beta}\}}}(\eM)^{\rm op}
           \simeq \Rep(\mathbf{Tube}({_{\{\eC_{\alpha}\}}}\eM_{\{\eD_{\beta}\}})) 
        \end{aligned}$ \\
        \hline 
        Multimodule domain wall defect
        & 
        $\mathbf{Tube}({_{\{\eC_{\alpha}\}}}\eM_{\{\eD_{\beta}\}};
           {_{\{\eC_{\alpha}\}}}\eN_{\{\eD_{\beta}\}})$
        & 
        $\begin{aligned}
            &\mathsf{Defect}({_{\{\eC_{\alpha}\}}}\eM_{\{\eD_{\beta}\}};
            {_{\{\eC_{\alpha}\}}}\eN_{\{\eD_{\beta}\}}) 
             = 
            \Fun_{{_{\{\eC_{\alpha}\}}}|{\{\eD_{\beta}\}}}(\eM,
            \eN)^{\rm op}\\
          &  \simeq 
            \Rep(\mathbf{Tube}({_{\{\eC_{\alpha}\}}}\eM_{\{\eD_{\beta}\}};
            {_{\{\eC_{\alpha}\}}}\eN_{\{\eD_{\beta}\}}))
        \end{aligned}$ \\ 
        \hline 
      $N$-defect
        & $\mathbf{Tube}(\eM_1;\cdots;\eM_N)$
        & $\begin{aligned}
            &\mathsf{Defect}(\eM_1;\cdots;\eM_N) =\Fun_{\eC^{k_0,k_0+1}|\eC^{N,1}}(\eM_{\rm in},\eM_{\rm out})^{\rm op}
            \\
           &  \simeq \Rep(\mathbf{Tube}(\eM_1;\cdots;\eM_N))
        \end{aligned}$ \\ 
        \hline    
    \end{tabular}
    }
    \caption{Summary of the tube algebra descriptions of boundary and domain wall excitations and defects.}
    \label{tab:TubeAlg}
\end{table}

Let us make a few remarks before proceeding. Throughout this paper, $d$ denotes the spatial dimension and $D$ the spacetime dimension. Unless explicitly stated otherwise, all categories and algebras are understood to be defined over $\Cbb$. To keep the discussion accessible to physics readers, we adopt the tube-basis approach when constructing the generalized tube algebras. We note that an equivalent formulation can be given using internal Homs, along the lines of~\cite{bai2025weakhopf}. Our goal is to make the material as self-contained as possible, so we have tried to present the proofs clearly and without assuming too much prior expertise. For the reader's convenience, we also spell out the boundary, domain wall, and multimodule cases in detail. Although these share the same conceptual backbone and the presentation may be somewhat repetitive, we hope that treating them separately will help physics readers follow the main ideas more easily.

The paper is organized as follows. In Section~\ref{sec:AlgTheory}, we review the algebraic theory of codimension-2 defects on boundaries and domain walls and provide the motivation for the construction of defect tube algebras. In Section~\ref{Sec:BoundaryTube}, we construct the boundary defect tube algebra and show how boundary defects are realized as its representations. We provide an explicit construction of the defect tube algebra, prove its comodule algebra structure, and introduce the factorization map and antipode-like map together with their fundamental properties. We then derive the separability idempotent and establish the generalized Schur orthogonality relation. Based on these results, we prove the equivalence between the representation category of the defect tube algebra and the corresponding module functor category.
In Section~\ref{Sec:DomainwallTube}, we develop the analogous theory for domain walls and their associated defects. We further show that the domain wall defect tube algebra possesses a richer algebraic structure and can be interpreted as a generalized Drinfeld quantum double.
In Section~\ref{sec:dw-twist}, we specialize the domain wall construction to twist defects, derive a simplified presentation of their tube algebras, and explain how general domain wall defects can be reformulated as twist defects using relative tensor products.
In Section~\ref{section:multicomodule}, we introduce multicomodule tube algebras for codimension-2 defects connecting multiple codimension-1 defects and discuss the correspondence between general $N$-fold defects and twist defects.
In Section~\ref{sec:multi-moduletube}, we study multimodule domain walls and the associated $N$-tuple algebra structures. In this setting, the corresponding defect tube algebra possesses a structure that can be interpreted as a quantum $N$-tuple algebra, which reduces to the generalized Drinfeld quantum double for comodule algebras when $N=2$.
In Section~\ref{sec:exmp}, we illustrate our constructions through explicit examples. Finally, we conclude with a discussion in Section~\ref{sec:Discussion}. Additional background on string-net models and technical proofs are collected in the appendices.

%% file: Ch-Motivation.tex
\section{Boundary and domain wall defects in topological orders}
\label{sec:AlgTheory}
In this section, we review the basic theory of boundary and domain wall defects in topological quantum phases described by Turaev--Viro--Barrett--Westbury TQFTs, namely the $(2+1)$D Levin--Wen string-net models based on (multi)fusion categories~\cite{Levin2005,Kitaev2012boundary,kong2012universal,kirillov2011stringnet,Lin2021generalized,jia2024weakTube,turaev1992state,barrett1996invariants}. We also provide the physical motivation for the construction of the defect tube algebra.

\subsection{Algebraic theory of boundary defects} 

Let us start by reviewing the theory of boundary defects, as discussed in, e.g., \cite{Kitaev2012boundary,cong2017defects}, before elaborating on why comodule algebras play a pivotal role in characterizing these defects. 
A well-known example is the defect separating the smooth and rough boundaries of the $\Zbb_2$ toric code. This defect behaves as an Ising anyon, making it non-Abelian with a quantum dimension of $\sqrt{2}$.  
Algebraically, the bulk unitary fusion category (UFC) is $\eC=\Rep(\Zbb_2)$, and the two boundaries are characterized by the $\eC$-module categories $\eM_s=\Rep(\Zbb_2)$ and $\eM_r=\Vect_{\Cbb}$ \footnote{In this paper, when discussing boundaries, we assign an orientation such that the bulk lies on the left-hand side of the boundary when traversing along its positive direction; the module category is thus a left module category.}. 
Correspondingly, boundary defects are described by $\eC$-module functors (see Ref.~\cite{etingof2016tensor} for details about the notions of module functors):
\begin{itemize}
    \item The defect $\tau_+ \in \Fun_{\eC}(\eM_s,\eM_r) =: \eP_{0,1}$, which interpolates from the smooth boundary to the rough boundary.
    \item The defect $\tau_- \in \Fun_{\eC}(\eM_r,\eM_s) =: \eP_{1,0}$, which interpolates from the rough boundary to the smooth boundary.
\end{itemize}
Boundary excitations can be viewed as special types of defects that connect the same boundary \footnote{For any category $\eX$, the monoidal structure on $\End(\eX)$ is given by composition of functors, and we adopt the convention $\mathfrak{f}\otimes \mathfrak{g} := \mathfrak{g}\circ\mathfrak{f}$.}:
\begin{itemize}
    \item The smooth boundary excitations $I,E \in \Fun_{\eC}(\eM_s,\eM_s) =: \eP_{0,0}$.
    \item The rough boundary excitations $I,M \in \Fun_{\eC}(\eM_r,\eM_r) =: \eP_{1,1}$.
\end{itemize}
The fusion of defects is described by a multifusion category (for more details about multifusion category, see Refs.~\cite{jia2024weakTube,kong2019multifusion,etingof2016tensor})
\begin{equation}
    \eP = \bigoplus_{i,j=0,1} \eP_{i,j}.
\end{equation}
For defects $a \in \eP_{i,j}$ and $b \in \eP_{k,l}$, fusion is allowed only when $j = k$, in which case the resulting defect lies in $\eP_{i,l}$. For example,
\begin{equation}
    \tau_+ \otimes \tau_- \in \eP_{0,0}
\end{equation}
yields a boundary excitation on the smooth boundary. Similarly, $E \otimes \tau_+$ is allowed and results in a defect in $\eP_{0,1}$, whereas $M \otimes \tau_+$ and $M \otimes E$ are not allowed. We emphasize that fusion in this setting is direction-dependent: for instance, $E \otimes \tau_+$ is allowed, while $\tau_+ \otimes E$ is not. This asymmetry reflects the fact that the boundary carries a natural orientation.

The quantum symmetries underlying boundary excitations are weak Hopf symmetries, which can be explicitly reconstructed from the boundary tube algebra \cite{Kitaev2012boundary,bridgeman2023invertible,jia2024weakTube,jia2025tube}. In this sense, boundary excitations can be regarded as symmetry charges associated with these symmetries. For the smooth boundary, the symmetry $W_s$ gives rise to $\eP_{0,0} = \Rep(W_s)$, while for the rough boundary, the symmetry $W_r$ yields $\eP_{1,1} = \Rep(W_r)$.
In contrast, for boundary defects the corresponding theory is not yet well established. From the tube algebra perspective, we propose that the relevant symmetry structure is instead described by a comodule algebra over weak Hopf algebras. Namely, for $\eP_{0,1}$ there exists a comodule algebra $W_{s,r}$ over $W_s$ and $W_r$ such that $\eP_{0,1} = \Rep(W_{s,r})$. Similarly, for $\eP_{1,0}$ there is a comodule algebra $W_{r,s}$ such that $\eP_{1,0} = \Rep(W_{r,s})$. We will present an explicit tube algebra construction of these comodule algebras in Section~\ref{sec:ExpTubeToric}.

For general topological boundaries of a non-chiral topological phase, a similar picture holds.
Given a bulk UFC $\eC$ and two gapped boundaries determined by $\eC$-module categories $\eM$ and $\eN$, defects between the $\eM$-boundary and the $\eN$-boundary are characterized by the $\eC$-module functor category $\Fun_{\eC}(\eM,\eN)$;\footnote{One may equally well describe the boundary defect by $\Fun_{\eC}(\eM,\eN)^{\rm op}$, i.e., with the direction of all arrows reversed. As we will see in the string-net construction (Remark~\ref{remk:opFun}), this choice is in fact the more natural one. This convention does not substantially affect the physics, since if $\eC$ is a fusion category, then so is $\eC^{\rm op}$. We shall make use of both descriptions interchangeably in what follows.} see Fig.~\ref{fig:defect}. 
Topological excitations on a boundary can be identified with boundary defects on the same boundary. For an $\eM$-boundary, this correspondence is explicitly given by $\mathsf{Boundary}({_{\eC}}\eM) = \Fun_{\eC}(\eM,\eM)$.
A crucial observation is that boundary excitations can be absorbed by boundary defects. Concretely, for a boundary charge $X \in \Fun_{\eC}(\eM,\eM)$ and a boundary defect $\ff \in \Fun_{\eC}(\eM,\eN)$, one has $X \otimes \ff \in \Fun_{\eC}(\eM,\eN)$.
Mathematically, this fusion operation is defined by the composition of functors. From the formal definitions of boundary charges and boundary defects, it follows immediately that fusing a boundary charge with a boundary defect produces another boundary defect.
For a boundary charge $Y \in \Fun_{\eC}(\eN,\eN)$, however, fusion acts on the opposite side of the defect, yielding $\ff \otimes Y \in \Fun_{\eC}(\eM,\eN)$.
Physically, this reflects the fact that boundary charges are confined to propagate within their respective boundary phases: the $\eM$-boundary lies to the left of the defect, while the $\eN$-boundary lies to its right.

For a fixed bulk UFC $\eC$, there are finitely many topological boundary types, characterized by indecomposable $\eC$-module categories $\eM_i$ with $i = 0, \ldots, n-1$.
We accordingly define the defect categories $\eP_{i,j} := \Fun_{\eC}(\eM_i,\eM_j)$.
Collecting boundary excitations and boundary defects together, we obtain a multifusion category
\begin{equation}
    \eP^{\rm boundary}_{\eC} = \bigoplus_{i,j=0}^{n-1} \eP_{i,j}.
\end{equation}
The fusion of $a \in \eP_{i,j}$ and $b \in \eP_{k,l}$ is nonvanishing only when $j = k$.
The diagonal components $\eP_{i,i}$ describe boundary excitations on the $\eM_i$-boundary, while the off-diagonal components $\eP_{i,j}$ with $i \neq j$ correspond to defects between the $\eM_i$-boundary and the $\eM_j$-boundary.

\begin{figure}
    \centering
    \includegraphics[width=0.85\linewidth]{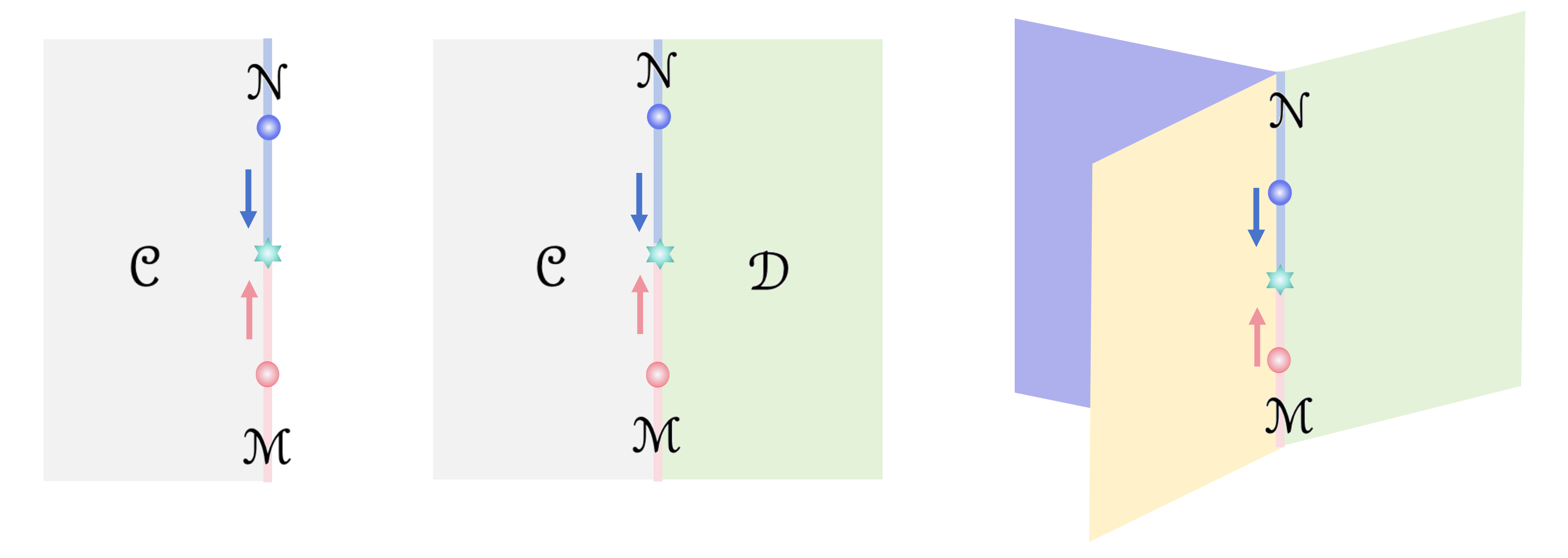}
    \caption{Defects corresponding to gapped boundaries, gapped domain walls, and multimodule domain walls. Fusion between defects, boundary charges, and domain wall charges occurs in one dimension, such that the direction of fusion plays a critical role.}
    \label{fig:defect}
\end{figure}


However, a direct algebraic characterization of all irreducible boundary excitations and boundary defects is generally difficult.
In the finite group case, where the bulk UFC is $\eC = \Vect_G$ \cite{cong2017defects,Cong2017}, defects between two gapped boundaries specified by subgroups $K_1, K_2 \subseteq G$ are classified by the set
\begin{equation}
\bigl\{ (T,R) : T \in K_1 \backslash G / K_2,\; R \in \Irr \left(\Rep( K_1\cap^{r_T} K_2) \right) \bigr\},
\end{equation}
where $K_1\cap^{r_T} K_2 := K_1 \cap r_T K_2 r_T^{-1}$ for a representative $r_T \in T$.
Moreover, the quantum dimension of a simple defect is given by
\begin{equation}
\operatorname{FPdim}(T,R)
=
\frac{\sqrt{|K_1||K_2|}}{\bigl| K_1\cap^{r_T} K_2 \bigr|}
\cdot \operatorname{dim}(R).
\end{equation}
Taking $K_1 = K_2 = K$ reproduces the classification of boundary excitations for the $K$-boundary.

For a general UFC $\eC$, the problem is considerably more challenging.
The boundary tube algebra provides a powerful approach to characterizing boundary excitations: one can reconstruct the boundary tube algebra $\mathbf{Tube}({_{\eC}\eM})$ \cite{Kitaev2012boundary,kong2012universal,bridgeman2023invertible,jia2024weakTube,jia2025tube}, which forms a weak Hopf algebra whose irreducible representations correspond to boundary charge types.
In contrast, boundary defects have not been systematically analyzed in the existing literature; see Refs.~\cite{Kitaev2012boundary,jia2024weakTube,jia2025tube} for brief discussions.

In this work, we argue that boundary defects, corresponding to the off-diagonal sectors $\eP_{i,j}$, are characterized by a defect tube algebra $\mathbf{Tube}({_{\eC}\eM_i};{_{\eC}\eM_j})$, which, as we will show, takes the structure of a $\mathbf{Tube}({_{\eC}\eM_j})|\mathbf{Tube}({_{\eC}\eM_i})$-bicomodule algebra\footnote{Recall that  $\mathbf{Tube}({_{\eC}\eM_k}) := \mathbf{Tube}({_{\eC}\eM_k};{_{\eC}\eM_k})$.}.
All irreducible boundary defects are then given by irreducible representations of $\mathbf{Tube}({_{\eC}\eM_i};{_{\eC}\eM_j})$.
In this way, we obtain a correspondence
\begin{equation}
    \eP_{i,j} \simeq \Rep\bigl(\mathbf{Tube}({_{\eC}\eM_i};{_{\eC}\eM_j})\bigr).
\end{equation}
This provides a complete algebraic characterization of boundary excitations and boundary defects from the quantum algebra symmetry perspective.

\subsection{Algebraic theory of domain wall defects}

Domain wall defects can be mapped to boundary defects via the folding trick.
For two non-chiral topological phases specified by input UFCs $\eC$ and $\eD$, 
a domain wall ${}_{\eC}\eM_{\eD}$ can be folded into a topological boundary 
of the phase with input UFC $\eC \boxtimes \eD^{\rm rev}$.\footnote{The reverse category \(\eD^{\mathrm{rev}}\) has the same objects and morphisms as \(\eD\), but with the reversed tensor product \(x\otimes_{\mathrm{rev}}y:=y\otimes x\).} 
Under this correspondence, the boundary is described by a left module category 
${}_{\eC \boxtimes \eD^{\rm rev}}\eM$.
Accordingly, a domain wall defect 
$\mathfrak{a} \in \Fun_{\eC|\eD}(\eM,\eN)$ 
is mapped to a boundary defect
\begin{equation}
    \mathfrak{a} \in \Fun_{\eC \boxtimes \eD^{\rm rev}}(\eM,\eN).
\end{equation}

Given two bulk phases characterized by UFCs $\eC$ and $\eD$, 
a domain wall ${}_{\eC}\eM^i_{\eD}$ supports a topological phase 
described by the domain wall tube algebra 
$\mathbf{Tube}({}_{\eC}\eM^i_{\eD})$. 
One has
\begin{equation}
  \eP_{i,i} 
  = \mathsf{Fun}({}_{\eC}\eM^i_{\eD},{}_{\eC}\eM^i_{\eD})
  \simeq 
  \Rep\!\left(\mathbf{Tube}({}_{\eC}\eM^i_{\eD})\right).
\end{equation}
See Refs.~\cite{jia2024weakTube,jia2025tube,bai2025weakhopf} for the detailed discussion.

As we will show, by an argument parallel to the boundary case, 
domain wall defects between ${}_{\eC}\eM^i_{\eD}$ and ${}_{\eC}\eM^j_{\eD}$ 
are characterized by the tube algebra 
$\mathbf{Tube}({}_{\eC}\eM^i_{\eD};{}_{\eC}\eM^j_{\eD})$. 
In contrast to the diagonal case, this algebra is no longer a weak Hopf algebra; 
instead, it forms a bicomodule algebra over the weak Hopf algebras 
$\mathbf{Tube}({}_{\eC}\eM^j_{\eD})$ 
and 
$\mathbf{Tube}({}_{\eC}\eM^i_{\eD})$. 
Correspondingly,
\begin{equation}
  \eP_{i,j} 
  = \mathsf{Fun}({}_{\eC}\eM^i_{\eD},{}_{\eC}\eM^j_{\eD})
  \simeq 
  \Rep\!\left(\mathbf{Tube}({}_{\eC}\eM^i_{\eD};{}_{\eC}\eM^j_{\eD})\right).
\end{equation}
Collecting all domain wall charges and defects, one obtains a multifusion category
\begin{equation}
    \eP^{\rm wall}_{\eC|\eD} = \bigoplus_{i,j=0}^{n-1} \eP_{i,j}.
\end{equation}
This category is fully characterized by the weak Hopf tube algebras together with the associated comodule tube algebras.

The tube algebra characterization of topological charges and defects becomes particularly rich in the presence of domain walls, where it admits a nontrivial structural interpretation.
Since a bulk phase can be regarded as the domain wall phase associated with the regular domain wall, the boundary--bulk correspondence implies that the bulk theory is given by the Drinfeld center of the boundary theory. In particular, one is led to
\begin{equation}
    \Rep\!\left(\mathbf{Tube}({}_{\eC}\eC_{\eC})\right)
    \simeq 
    \mathcal{Z}\!\left(
    \Rep\!\left(\mathbf{Tube}({}_{\eC}\eC)\right)
    \right).
\end{equation}
This suggests that the bulk tube algebra can be interpreted as a Drinfeld quantum double of the boundary tube algebra.

More generally, we propose that (here $\bicrosslr$ denotes a gluing operation, analogous to the bicrossed product appearing in the construction of the Drinfeld double)\footnote{Throughout, we abbreviate $\mathbf{Tube}({}_{\eC}\eM_{\eD};{}_{\eC}\eM_{\eD})$ as $\mathbf{Tube}({}_{\eC}\eM_{\eD})$, etc., to avoid clutter in the notation when there is no ambiguity.}
\begin{equation}
  \mathbf{Tube}({}_{\eC}\eM_{\eD})
  \simeq 
  \mathbf{Tube}_{\rm bd}(\eM_{\eD})
  \,\bicrosslr\,
  \mathbf{Tube}_{\rm bd}({}_{\eC}\eM),
\end{equation}
namely, the tube algebra associated with a domain wall can be obtained by gluing the boundary tube algebras on the two sides. 
This construction gives rise to a Drinfeld double--like structure for a pair of weak Hopf algebras.

This picture extends naturally to domain wall defects. In this case, the corresponding tube algebra can be expressed as
\begin{equation}
  \mathbf{Tube}({}_{\eC}\eM_{\eD};{}_{\eC}\eN_{\eD})
  \simeq 
  \mathbf{Tube}_{\rm bd}(\eM_{\eD};\eN_{\eD})
  \,\bicrosslr\,
  \mathbf{Tube}_{\rm bd}({}_{\eC}\eM;{}_{\eC}\eN),
\end{equation}
which endows a pair of comodule tube algebras with a Drinfeld double--like structure via the gluing operation.
To the best of our knowledge, this construction defines a new operation for comodule algebras that has not been previously discussed in the literature.

\subsection{Multimodule domain wall defects}

For domain walls connecting more than two bulk phases (see Fig.~\ref{fig:defect}), one needs to employ multimodule categories to characterize the domain wall~\cite{jia2025tube}. 
A multimodule category over UFCs $\{\eC_{\alpha}\}_{\alpha\in I}$ and $\{\eD_{\beta}\}_{\beta \in J}$ is denoted by $_{\{\eC_{\alpha}\}}\eM_{\{\eD_{\beta}\}}$. 
The reason for dividing the bulks into two groups is that each $(2+1)$D bulk has its own orientation, which may or may not align with the orientation of the domain wall.

The tube algebra can be extended to the multimodule domain wall $\mathbf{Tube}(_{\{\eC_{\alpha}\}}\eM_{\{\eD_{\beta}\}})$, which can be proven to be a weak Hopf algebra.
Fixing bulk phases and an indecomposable domain wall $_{\{\eC_{\alpha}\}}\eM_{\{\eD_{\beta}\}}^i$, we similarly can characterize the topological excitations in the domain wall via the corresponding tube algebra
\begin{equation}
\eP_{i,i} =   \Fun_{{\{\eC_{\alpha}\}}|{\{\eD_{\beta}\}}}(_{\{\eC_{\alpha}\}}\eM_{\{\eD_{\beta}\}}^i,{_{\{\eC_{\alpha}\}}}\eM_{\{\eD_{\beta}\}}^i) \simeq \Rep(\mathbf{Tube}(_{\{\eC_{\alpha}\}}\eM^i_{\{\eD_{\beta}\}})).
\end{equation}
An interesting aspect is that this type of tube algebra provides a natural generalization of the Drinfeld quantum double. 
For $n$ bulks, the resulting multimodule tube algebra can be regarded as an \emph{$N$-tuple algebra}~\cite{jia2025tube} for the $N$ boundary tube algebras:
\begin{equation}
 \mathbf{Tube}(_{\{\eC_{\alpha}\}}\eM_{\{\eD_{\beta}\}}) =   \left( \bicrosslr_{\alpha\in I} \mathbf{Tube}(_{\eC_{\alpha}}\eM) \right) \bicrosslr \left( \bicrosslr_{\beta\in J} \mathbf{Tube}(\eM_{\eD_{\beta}}) \right) 
\end{equation}
where $|I|+|J|=N$.
There is a hierarchical structure for these algebras: any $M$-tuple algebra with $M<N$ can be naturally embedded into the $N$-tuple algebra by assigning trivial strings to the bulks that do not appear in the $M$-tuple. In particular, all quantum doubles are contained within this $N$-tuple algebra.

As we will show later, the above discussion also holds for defects connecting multimodule domain walls. The only difference is that the resulting tube algebra 
$\mathbf{Tube}({_{\{\eC_{\alpha}\}}\eM^i_{\{\eD_{\beta}\}}};{_{\{\eC_{\alpha}\}}\eM^j_{\{\eD_{\beta}\}}})$
forms a comodule algebra.
The representation category of this defect tube algebra is equivalent to the domain wall defect category
\begin{equation}
\eP_{i,j} =   \Fun_{{\{\eC_{\alpha}\}}|{\{\eD_{\beta}\}}}(_{\{\eC_{\alpha}\}}\eM_{\{\eD_{\beta}\}}^i,{_{\{\eC_{\alpha}\}}}\eM_{\{\eD_{\beta}\}}^j) \simeq \Rep(\mathbf{Tube}(_{\{\eC_{\alpha}\}}\eM^i_{\{\eD_{\beta}\}};{_{\{\eC_{\alpha}\}}}\eM_{\{\eD_{\beta}\}}^j)).
\end{equation}
By combining domain wall excitations and defects between domain walls, we obtain a multifusion category
\begin{equation}
    \eP_{{\{\eC_{\alpha}\}}|{\{\eD_{\beta}\}}}^{\rm wall} = \bigoplus_{i,j=0}^{n-1} \eP_{i,j}.
\end{equation}
The domain wall weak Hopf tube algebra, together with the defect comodule tube algebra, provides a complete characterization of this multifusion category.

The tube algebra also provides a systematic method to construct $N$-tuple algebras for comodule algebras, a construction that has not been previously discussed in the literature.

\subsection{Lattice Hamiltonian for defects}

Following Refs.~\cite{Kitaev2012boundary,kong2012universal,jia2024weakTube,jia2025tube}, the Hamiltonian for a boundary defect can be constructed by regarding the defect as a dangling string in the string-net framework. For a given functor $\mathfrak{f} \in \Fun_{\eC}(\eM,\eN)$ and objects $x \in \eM$, $y \in \eN$, we interpret $\mathfrak{f}$ as a dangling string and define the fusion of $x$ with $\mathfrak{f}$ by $x \otimes \mathfrak{f} := \mathfrak{f}(x)$ (From this we see that $(\mathfrak{g}\circ \mathfrak{f}) (x)=\mathfrak{g} (\mathfrak{f} (x))$ can be interpreted as $x\otimes (\mathfrak{g}\circ \mathfrak{f})=(x\otimes \mathfrak{f})\otimes \mathfrak{g}$; we thus naturally have $\mathfrak{g}\circ \mathfrak{f} =\mathfrak{f}\otimes \mathfrak{g}$.). The corresponding fusion vertex is then of the following form
\begin{equation}
\langle x,\mathfrak{f}\rightarrow y;\alpha |\,\,   =  \begin{aligned}
		\begin{tikzpicture}
\draw[
  line width=1.6pt,
  cyan,
  postaction={
    decorate,
    decoration={
      markings,
      mark=at position 0.6 with {\arrow{latex}}
    }
  }
] (0,0) -- (0,1);            
\draw[
  decorate, line width=.6pt,
  decoration={snake, amplitude=2pt, segment length=6pt}
] (0,0) -- (1,0);
\draw[-latex, line width=.6pt]
  (1.05,0) -- (1,0);
\draw[line width=.6pt,
  black
] (1,0) -- (1.25,0);
\draw[
  line width=1.6pt,
  violet,
  postaction={
    decorate,
    decoration={
      markings,
      mark=at position 0.6 with {\arrow{latex}}
    }
  }
] (0,-1.0) -- (0,0);   

   \node[ line width=0.6pt, dashed, draw opacity=0.5] (a) at (0.3,0.7){$y$};
    \node[ line width=0.6pt, dashed, draw opacity=0.5] (a) at (0.3,-0.7){$x$};
    \node[ line width=0.6pt, dashed, draw opacity=0.5] (a) at (0.7,0.3){$\mathfrak{f}$};    
    \node[ line width=0.6pt, dashed, draw opacity=0.5] (a) at (-0.3,0){$\alpha$};    
    \node[ line width=0.6pt, dashed, draw opacity=0.5] at (0,0){$\scriptstyle \bullet$}; 
		\end{tikzpicture}
	\end{aligned}
\end{equation}
where a wavy line is used to emphasize the defect string and $\alpha$ is the label of basis vectors.
Accordingly, the vertex space associated with the defect is
$\mathcal{H}_v^{\mathfrak{f}}
=
\bigoplus_{x \in \Irr \eM,\; y \in \Irr \eN}
\Hom\bigl(\mathfrak{f}(x), y\bigr)$.
Note that the defect string is the only string permitted to have a dangling end.

\begin{remark}\label{remk:opFun}
 Given a natural transformation $\eta: \mathfrak{g}\to \mathfrak{f}$, it induces a map from the vertex space of $\mathfrak{f}$ to that of $\mathfrak{g}$ as follows:
    \begin{equation}
        \begin{aligned}
		\begin{tikzpicture}
\draw[
  line width=1.6pt,
  cyan,
  postaction={
    decorate,
    decoration={
      markings,
      mark=at position 0.6 with {\arrow{latex}}
    }
  }
] (0,0) -- (0,1);     
\draw[
  decorate, line width=.6pt,
  decoration={snake, amplitude=2pt, segment length=6pt}
] (0,0) -- (1,0);
\draw[
  decorate, line width=.6pt,
  decoration={snake, amplitude=2pt, segment length=6pt}
] (2.3,0) -- (1.3,0);
\draw[-latex, line width=.6pt]
  (1.05,0) -- (1,0);
\draw[line width=.6pt,
  black
] (1,0) -- (1.25,0);
\draw[-latex, line width=.6pt]
  (2.35,0) -- (2.3,0);
\draw[line width=.6pt,
  black
] (2.3,0) -- (2.55,0);
\draw[
  line width=1.6pt,
  violet,
  postaction={
    decorate,
    decoration={
      markings,
      mark=at position 0.6 with {\arrow{latex}}
    }
  }
] (0,-1.0) -- (0,0);   
   \node[ line width=0.6pt, dashed, draw opacity=0.5] (a) at (0.3,0.7){$y$};
    \node[ line width=0.6pt, dashed, draw opacity=0.5] (a) at (0.3,-0.7){$x$};
    \node[ line width=0.6pt, dashed, draw opacity=0.5] (a) at (0.7,0.3){$\mathfrak{f}$};  
      \node[ line width=0.6pt, dashed, draw opacity=0.5] (a) at (1.9,0.3){$\mathfrak{g}$};  
    \node[ line width=0.6pt, dashed, draw opacity=0.5] (a) at (-0.3,0){$\alpha$};   
        \node[ line width=0.6pt, dashed, draw opacity=0.5] (a) at (1.2,-0.3){$\eta$};  
    \node[ line width=0.6pt, dashed, draw opacity=0.5] at (0,0){$\scriptstyle \bullet$}; 
      \node[ line width=0.6pt, dashed, draw opacity=0.5] at (1.25,0){$\scriptstyle \bullet$}; 
		\end{tikzpicture}
	\end{aligned}
    \end{equation}
This contravariant nature suggests that it is more convenient to work with $\Fun(\eM,\eN)^{\rm op}$ when describing the codimension-2 defect. Further details can be found in Proposition~\ref{prop:Hom-Nat} and its proof.
\end{remark}

The vertex operator $Q_v$ for the defect is the projection to the allowed vertex configuration; the face operator $B_f$ is of similar form as that for the bulk but the fusion symbol for those involving domain wall and boundary strings should use the fusion symbols arising from the module category structure (this has a more natural interpretation using multifusion string-net \cite{jia2024weakTube}, since in this setting, objects in $\eM,\eN$ and module functors lie in the same multifusion category). The defect Hamiltonian is given by
\begin{equation}
    H_{\rm defect}^{\eM,\eN,\mathfrak{f}}=-\sum_f B_f -Q_{v}
\end{equation}
where $\eM$, $\eN$ label the domain wall or boundary and $\mathfrak{f}$ is the defect connecting two domain walls or boundaries.

%% file: Ch-DWTwist.tex
\section{Domain wall twist defects and their tube algebras}
\label{sec:dw-twist}

When the two bulk phases are both described by the same fusion category $\eC$, there exists a distinguished class of domain wall defects known as \emph{twist defects}~\cite{Kitaev2012boundary,jia2022electricmagnetic,Barkeshli2019symmetry}. Such defects interpolate between a nontrivial domain wall ${}_{\eC}\eM_{\eC}$ and the transparent domain wall ${}_{\eC}\eC_{\eC}$. Mathematically, they are described by bimodule functors
\begin{equation}
    \tau_- \in \Fun_{\eC|\eC}(\eM,\eC).
\end{equation}
A schematic illustration is given by
\begin{align}
    \begin{aligned}
        \begin{tikzpicture}[scale=1]
            \filldraw[black!60, fill=gray!15, dotted] (0,0) circle[radius=1.9];
            \draw[line width=1pt,dotted,gray] (90:0.32) -- (90:1.9);
            \draw[-latex,line width=1pt,dotted,gray] (90:0.32) -- (90:1.3);
            \draw[line width=1pt,cyan] (-90:0.32) -- (-90:1.9);
            \draw[-latex,line width=1pt,cyan] (-90:1.3) -- (-90:.9);
            \draw[line width=1pt,black] (0,0) circle (0.32);
            \fill[pattern=north east lines, pattern color=gray] (0,0) circle (0.32);
             \node[ line width=0.6pt, dashed, draw opacity=0.5] (a) at (180:1){$\eC$};
             \node[ line width=0.6pt, dashed, draw opacity=0.5] (a) at (0:1){$\eC$};
             \node[ line width=0.6pt, dashed, draw opacity=0.5] (a) at (270:2.2){$\eM$};
             \node[ line width=0.6pt, dashed, draw opacity=0.5] (a) at (90:2.2){$\eC$};
             \node[ line width=0.6pt, dashed, draw opacity=0.5] (a) at (0,0){$\tau_-$};
        \end{tikzpicture}
    \end{aligned}\;. 
\end{align}
Similarly, one may define the dual type of twist defect
\begin{equation}
    \tau_+ \in \Fun_{\eC|\eC}(\eC,\eM).
\end{equation}
For these special defects, the corresponding defect tube algebra admits a significant simplification and becomes more closely related to the conventional tube algebra of Ocneanu~\cite{ocneanu1994chirality,ocneanu2001operator,muger2003subfactorsII}.

Without loss of generality, let us focus on the defect type $\tau_-$. By considering a tubular neighborhood surrounding the defect, one obtains the tube algebra
\begin{equation}
\textbf{Tube}(\eM;\eC)
=
\operatorname{span}
\left\{
\begin{aligned}
        \begin{tikzpicture}[scale=0.65]
        \filldraw[black!60, fill=gray!15, dotted, even odd rule] (0,0) circle[radius=0.5] (0,0) circle[radius=1.5];
             \draw[line width=1pt,red] (0,0.5)--(0,1.5);
             \draw[line width=1pt,cyan] (0,-0.5)--(0,-1.5);
             \draw[red] (0,0.8) arc[start angle=90, end angle=270, radius=0.8];
             \draw[red] (0,1.3) arc[start angle=90, end angle=-90, radius=1.3];
            \node[ line width=0.6pt, dashed, draw opacity=0.5] (a) at (0,1.7){$\scriptstyle w$};
             \node[ line width=0.6pt, dashed, draw opacity=0.5] (a) at (0,-1.7){$\scriptstyle x$};
            \node[ line width=0.6pt, dashed, draw opacity=0.5] (a) at (-1,0){$\scriptstyle a$};
            \node[ line width=0.6pt, dashed, draw opacity=0.5] (a) at (1.1,0){$\scriptstyle b$};
            \node[ line width=0.6pt, dashed, draw opacity=0.5] (a) at (-0.2,-1){$\scriptstyle y$};
            \node[ line width=0.6pt, dashed, draw opacity=0.5] (a) at (-0.2,-1.35){$\scriptstyle \mu$};
            \node[ line width=0.6pt, dashed, draw opacity=0.5] (a) at (0.2,-0.8){$\scriptstyle \nu$};
            \node[ line width=0.6pt, dashed, draw opacity=0.5] (a) at (0,-0.3){$\scriptstyle z$};
            \node[ line width=0.6pt, dashed, draw opacity=0.5] (a) at (0,0.3){$\scriptstyle u$};
            \node[ line width=0.6pt, dashed, draw opacity=0.5] (a) at (-0.2,1){$\scriptstyle v$};
            \node[ line width=0.6pt, dashed, draw opacity=0.5] (a) at (-0.2,1.3){$\scriptstyle \gamma$};
            \node[ line width=0.6pt, dashed, draw opacity=0.5] (a) at (0.2,0.8){$\scriptstyle \zeta$};
        \end{tikzpicture}
    \end{aligned}
\right\}.
\end{equation}
Here $a,b,u,v,w\in\Irr(\eC)$ and $x,y,z\in\Irr(\eM)$, while the morphisms $\mu,\nu,\gamma,\zeta$ belong to the corresponding morphism spaces determined by the string labels. By the general discussion of the previous section, this tube algebra naturally carries the structure of a bicomodule algebra.
Furthermore, since the upper part of the tube lies entirely in the $\eC$ string-net sector, it can be reduced using the fusion-category relations. As a result, the tube algebra admits the simplified presentation
\begin{equation}
\mathbf{Tube}'(\eM;\eC)
=
\operatorname{span}
\left\{
\begin{aligned}
        \begin{tikzpicture}[scale=0.65]
        \filldraw[black!60, fill=gray!15, dotted, even odd rule] (0,0) circle[radius=0.5] (0,0) circle[radius=1.5];
             \draw[line width=1pt,cyan] (0,-0.5)--(0,-1.5);
             \draw[red] (0,0.8) arc[start angle=90, end angle=270, radius=0.8];
             \draw[red] (0,0.8) arc[start angle=90, end angle=-90, radius=1];
             \node[ line width=0.6pt, dashed, draw opacity=0.5] (a) at (0,-1.7){$\scriptstyle x$};
             \node[ line width=0.6pt, dashed, draw opacity=0.5] (a) at (0,1.1){$\scriptstyle a$};
             \node[ line width=0.6pt, dashed, draw opacity=0.5] (a) at (-0.2,-1){$\scriptstyle y$};
             \node[ line width=0.6pt, dashed, draw opacity=0.5] (a) at (-0.2,-1.35){$\scriptstyle \mu$};
             \node[ line width=0.6pt, dashed, draw opacity=0.5] (a) at (0.2,-0.8){$\scriptstyle \nu$};
             \node[ line width=0.6pt, dashed, draw opacity=0.5] (a) at (0,-0.3){$\scriptstyle z$};
        \end{tikzpicture}
    \end{aligned}
\right\}.
\end{equation}
The unit element of $\mathbf{Tube}'(\eM;\eC)$ is represented by the tube containing a single domain wall edge, while the multiplication is inherited from the original tube algebra and is defined by the gluing of tubes. 

The $\mathbf{Tube}'(\eM;\eC)$  is also a right $\mathbf{Tube}(\eM)$-comodule algebra, whose coaction is induced by the fusion structure 
\begin{equation}
    \begin{aligned}
        \begin{tikzpicture}[scale=0.65]
            \filldraw[black!60, fill=gray!15, dotted] (0,0.75) circle[radius=1.95]; 
            \draw[line width=1pt,cyan] (0,0.3)--(0,-1.2);
            \draw[line width=1pt,cyan] (0,0.3)--(0,1.2);
            \draw[fill=gray!15] (0,1.3) circle (0.2); 
            \fill[pattern=north east lines, pattern color=gray] (0,1.3) circle (0.2);
            \draw[fill=gray!15] (0,0.2) circle (0.2); 
            \fill[pattern=north east lines, pattern color=gray] (0,0.2) circle (0.2);
            \draw[red] (0,2) arc[start angle=90, end angle=270, radius=1.25];
            \draw[red] (0,2) arc[start angle=90, end angle=-90, radius=1.35];
            \node[ line width=0.6pt, dashed, draw opacity=0.5] (a) at (0.2,0.75){$\scriptstyle i$};
            \node[ line width=0.6pt, dashed, draw opacity=0.5] (a) at (-1.45,0.75){$\scriptstyle a$};
            \node[ line width=0.6pt, dashed, draw opacity=0.5] (a) at (0.5,0.2){$\scriptstyle \mathfrak{f}$};
            \node[ line width=0.6pt, dashed, draw opacity=0.5] (a) at (0.6,1.3){$\scriptstyle \tau_{-}$};                 
        \end{tikzpicture}
    \end{aligned}
    = \sum_{j,\sigma} \sqrt{ \frac{d_j}{d_ad_i}}\sum_{k,\rho}\sqrt{ \frac{d_k}{d_jd_{\bar{a}}}} \;\;
    \begin{aligned}
        \begin{tikzpicture}[scale=0.75]
            \filldraw[black!60, fill=gray!15, dotted] (0,1.8) circle[radius=1.2];
            \filldraw[black!60, fill=gray!15, dotted] (0,-0.6) circle[radius=1.2];
            \draw[line width=1pt,cyan] (0,-0.6)--(0,-1.8);
             \draw[line width=1pt,cyan] (0,-0.6)--(0,1.6);
            \draw[fill=gray!15] (0,1.8) circle (0.2); 
            \fill[pattern=north east lines, pattern color=gray] (0,1.8) circle (0.2);
            \draw[fill=gray!15] (0,-0.6) circle (0.2); 
            \fill[pattern=north east lines, pattern color=gray] (0,-0.6) circle (0.2);
            \draw[red] (0,2.4) arc[start angle=90, end angle=270, radius=0.6];
            \draw[red] (0,2.4) arc[start angle=90, end angle=-90, radius=0.75];
            \draw[red] (0,0) arc[start angle=90, end angle=270, radius=0.6];
            \draw[red] (0,0.3) arc[start angle=90, end angle=-90, radius=0.9];
            \node[ line width=0.6pt, dashed, draw opacity=0.5] (a) at (0.35,2.08){$\scriptstyle \tau_-$};  
            \node[ line width=0.6pt, dashed, draw opacity=0.5] (a) at (0.3,-0.8){$\scriptstyle \mathfrak{f}$};
            \node[ line width=0.6pt, dashed, draw opacity=0.5] (a) at (-0.8,-.6){$\scriptstyle a$}; 
            \node[ line width=0.6pt, dashed, draw opacity=0.5] (a) at (1,1.8){$\scriptstyle \bar{a}$};
            \node[ line width=0.6pt, dashed, draw opacity=0.5] (a) at (-0.8,1.8){$\scriptstyle a$};
            \node[ line width=0.6pt, dashed, draw opacity=0.5] (a) at (0.7,-.6){$\scriptstyle \bar{a}$};     
            \node[ line width=0.6pt, dashed, draw opacity=0.5] (a) at (0.2,1.2){$\scriptstyle \sigma$};  
            \node[ line width=0.6pt, dashed, draw opacity=0.5] (a) at (0.15,1.46){$\scriptstyle i$}; 
            \node[ line width=0.6pt, dashed, draw opacity=0.5] (a) at (0.2,0.75){$\scriptstyle \rho$};  
            \node[ line width=0.6pt, dashed, draw opacity=0.5] (a) at (0.2,0.45){$\scriptstyle \rho$};
            \node[ line width=0.6pt, dashed, draw opacity=0.5] (a) at (0.2,0){$\scriptstyle \sigma$}; 
            \node[ line width=0.6pt, dashed, draw opacity=0.5] (a) at (-0.18,1){$\scriptstyle j$};               
            \node[ line width=0.6pt, dashed, draw opacity=0.5] (a) at (-0.2,0.6){$\scriptstyle k$};   
            \node[ line width=0.6pt, dashed, draw opacity=0.5] (a) at (-0.18,0.2){$\scriptstyle j$}; 
            \node[ line width=0.6pt, dashed, draw opacity=0.5] (a) at (0.15,-0.25){$\scriptstyle i$};  
        \end{tikzpicture}
    \end{aligned}\;\; .
\end{equation}
Moreover, $\mathbf{Tube}(\eM;\eC)$ and $\mathbf{Tube}'(\eM;\eC)$ are Morita equivalent and therefore possess equivalent representation categories.

The preceding discussion shows that every domain wall twist defect may be viewed as a special class of domain wall defects whose associated tube algebra admits a particularly simple description. One may then ask whether the converse is also true: can an arbitrary domain wall defect be reformulated as a domain wall twist defect? The answer is affirmative, although establishing this correspondence requires additional categorical machinery.
The key ingredient is the relative tensor product of module categories over unitary fusion categories; we refer the reader to~\cite{etingof2010fusion} for a detailed account. Given a $\eC|\eD$-bimodule category $\eM$ and a $\eD|\eE$-bimodule category $\eN$, their fusion is described by the relative tensor product
\begin{equation}
    \eM\boxtimes_{\eD}\eN,
\end{equation}
which naturally inherits the structure of a $\eC|\eE$-bimodule category.
Using this construction, an arbitrary domain wall defect can be reinterpreted as a domain wall twist defect through the following procedure:
\begin{align}
    \begin{aligned}
        \begin{tikzpicture}[scale=1]
            \filldraw[black!60, fill=gray!15,dotted] (0,0) circle[radius=1.9];
            \draw[line width=1pt,cyan] (90:0.32) -- (90:1.9);
            \draw[-latex,line width=1pt,cyan] (90:0.32) -- (90:1.3);
            \draw[line width=1pt,violet] (-90:0.32) -- (-90:1.9);
            \draw[-latex,line width=1pt,violet] (-90:1.3) -- (-90:.9);
            \draw[line width=1pt,black] (0,0) circle (0.32);
            \fill[pattern=north east lines, pattern color=gray] (0,0) circle (0.32);
             \node[ line width=0.6pt, dashed, draw opacity=0.5] (a) at (180:1){$\eC$};
             \node[ line width=0.6pt, dashed, draw opacity=0.5] (a) at (0:1){$\eD$};
             \node[ line width=0.6pt, dashed, draw opacity=0.5] (a) at (270:2.2){$\eM$};
             \node[ line width=0.6pt, dashed, draw opacity=0.5] (a) at (90:2.2){$\eN$};
             \node[ line width=0.6pt, dashed, draw opacity=0.5] (a) at (0,0){$\tau$};
        \end{tikzpicture}
    \end{aligned} \mapsto     
    \begin{aligned}
        \begin{tikzpicture}[scale=1]
            \filldraw[black!60, fill=gray!15,dotted] (0,0) circle[radius=1.9];
            \draw[line width=1.5pt,blue] (90:0.32) -- (90:1.9);
            \draw[-latex,line width=1.5pt,blue] (90:0.32) -- (90:1.3);
            \draw[line width=1pt,black] (0,0) circle (0.32);
            \fill[pattern=north east lines, pattern color=gray] (0,0) circle (0.32);
             \node[ line width=0.6pt, dashed, draw opacity=0.5] (a) at (270:1){$\eD$};
             \node[ line width=0.6pt, dashed, draw opacity=0.5] (a) at (1,1){$\eM^{\rm op}\boxtimes_{\eC} \eN$};
             \node[ line width=0.6pt, dashed, draw opacity=0.5] (a) at (0,0){$\tau_+$};
        \end{tikzpicture}
    \end{aligned}
    \simeq 
        \begin{aligned}
        \begin{tikzpicture}[scale=1]
            \filldraw[black!60, fill=gray!15,dotted] (0,0) circle[radius=1.9];
            \draw[line width=1.5pt,brown] (90:-1.9) -- (90:-0.32);
            \draw[-latex,line width=1.5pt,brown] (90:-1.32) -- (90:-0.9);
            \draw[line width=1pt,black] (0,0) circle (0.32);
            \fill[pattern=north east lines, pattern color=gray] (0,0) circle (0.32);
             \node[ line width=0.6pt, dashed, draw opacity=0.5] (a) at (90:1){$\eC$};
             \node[ line width=0.6pt, dashed, draw opacity=0.5] (a) at (1,-1){$\eM\boxtimes_{\eD} \eN^{\rm op}$};
             \node[ line width=0.6pt, dashed, draw opacity=0.5] (a) at (0,0){$\tau_-$};
        \end{tikzpicture}
    \end{aligned}
\end{align}
where we use the fact $\eM^{\rm op}$ is a $\eD|\eC$-bimodule and $\Fun_{\eC|\eD}(\eM,\eN)\simeq  \eM^{\rm op}\boxtimes_{\eC} \eN$. This is fusion of two domain walls over the bulk $\eC$.

It is also worth noting that bulk excitations can be viewed as a special class of domain wall twist defects, namely defects interpolating between two transparent (regular) domain walls ${}_{\eC}\eC_{\eC}$. Consequently, the comodule tube algebra framework provides a unified description of both domain wall defects and bulk excitations. In this perspective, the conventional tube algebra arises as a special case of the comodule tube algebra associated with transparent domain walls.

%% file: Ch-Multicomodule.tex
\section{Multicomodule tube algebra for codimension-2 defects connecting multiple domain walls}
\label{section:multicomodule}

Let us now turn to a more general configuration of (2+1)D topological order with both codimension-1 and codimension-2 defects, as shown in Figure~\ref{fig:2dTOdef}. Here, several (2+1)D topological orders are separated by codimension-1 defects, and these defects are themselves joined by codimension-2 defects. A key feature of this network is the presence of codimension-2 defects that connect \(N\) codimension-1 defects. Considering the tubular region around such a defect naturally yields \emph{multicomodule tube algebra}.

\begin{figure}[h]
    \centering
    \includegraphics[width=0.6\linewidth]{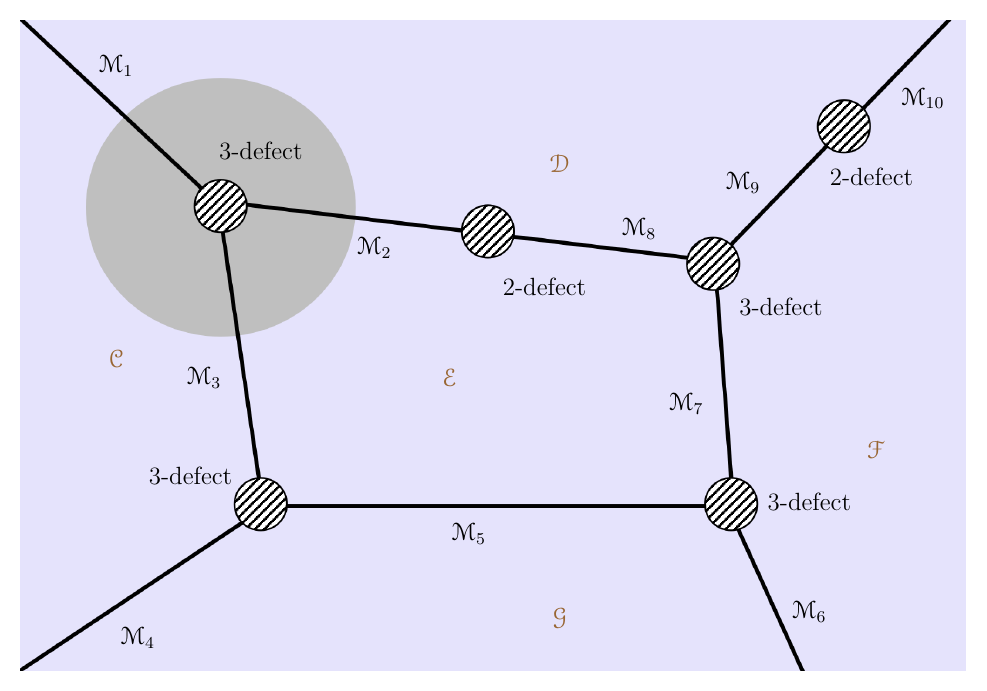}
\caption{General configuration of defects in a two-dimensional topological phase. A codimension-2 $N$-defect may connect $N$ codimension-1 defects. By considering the tube surrounding the $N$-defect, one obtains the associated multicomodule defect tube algebra.}
    \label{fig:2dTOdef}
\end{figure}

In general, bulk phases are labeled by UFCs. Codimension-1 defects separating bulk phases labeled by $\eC_i$ and $\eC_j$ are described by $\eC_j|\eC_i$-bimodule categories. Since codimension-1 defects are oriented, a defect with a fixed orientation is labeled by a bimodule category $\eM$, while reversing its orientation replaces $\eM$ with the opposite bimodule category $\eM^{\rm op}$. Correspondingly, if $\eM$ is a $\eC_j|\eC_i$-bimodule category, then $\eM^{\rm op}$ naturally acquires the structure of a $\eC_i|\eC_j$-bimodule category.

Let $N\geq 2$ be an integer. Consider a configuration consisting of $N$ codimension-1 defects, labeled by bimodule categories, meeting at a codimension-2 defect $\tau$. Up to isotopy, such a configuration may be represented as in Figure~\ref{fig:N_defect}(a). Let $\{\eM_i\}_{1\leq i\leq N}$ be a collection of bimodule categories and $\{\eC^{i,i+1}\}_{1\leq i\leq N}$ the corresponding bulk fusion categories\footnote{We adopt this superscript notation for convenience in describing the hierarchical structure that will appear later. It should not be confused with the standard notation for multifusion categories.}. Fix an integer $1\leq k_0\leq N$. For $1\leq i\leq k_0$, the bimodule category $\eM_i$ is taken to be a $\eC^{i,i+1}|\eC^{i-1,i}$-bimodule category, corresponding to a codimension-1 defect oriented away from the codimension-2 defect. For $k_0+1\leq i\leq N$, $\eM_i$ is a $\eC^{i-1,i}|\eC^{i,i+1}$-bimodule category, corresponding to a codimension-1 defect oriented toward the codimension-2 defect. Notice any other configuration can be turned into this by reversing the orientations of some codimension-1 defects.

Consider a tubular neighborhood surrounding the codimension-2 $N$-defect. By applying topological local moves, this neighborhood can be reduced to a canonical tube configuration, from which one constructs the associated defect tube algebra
\begin{equation}
    \mathbf{Tube}(\{(\eM_i)\}_{1\leq i\leq N}),
\end{equation}
or, more succinctly,
\begin{equation}
    \mathbf{Tube}(\eM_1;\eM_2;\ldots;\eM_N),
\end{equation}
where the underlying fusion categories are understood implicitly. As we shall show, this algebra is naturally equipped with multiple compatible comodule structures arising from the adjacent codimension-1 defects, leading to the notion of a \emph{multicomodule algebra}.

\begin{definition}[Multicomodule algebra]
Let $\{A_i\}_{i\in I}$ and $\{B_j\}_{j\in J}$ be families of weak Hopf algebras. An \(\{A_i\}_{i\in I}\vert\{B_j\}_{j\in J}\)-multicomodule algebra is an algebra $\mathfrak{C}$ equipped with left $A_i$-coactions $\beta_i:\mathfrak{C}\to A_i\otimes \mathfrak{C}$ and right $B_j$-coactions $\rho_j:\mathfrak{C}\to \mathfrak{C}\otimes B_j$, each of which endows $\mathfrak{C}$ with the structure of a comodule algebra, subject to the following compatibility conditions:
\begin{enumerate}
    \item For any $i,k\in I$, the left coactions commute:
    \begin{equation}
        (\id_{A_i}\otimes \beta_k)\circ \beta_i
        =
        (\sigma_{A_k,A_i}\otimes \id_{\mathfrak{C}})
        \circ
        (\id_{A_k}\otimes \beta_i)
        \circ
        \beta_k.
    \end{equation}
    Equivalently, $\mathfrak{C}$ is an $A_i|A_k^{\rm op}$ bicomodule algebra for every pair $i,k\in I$.

    \item For any $j,l\in J$, the right coactions commute:
    \begin{equation}
        (\rho_l\otimes \id_{B_j})\circ \rho_j
        =
        (\id_{\mathfrak{C}}\otimes \sigma_{B_j,B_l})
        \circ
        (\rho_j\otimes \id_{B_l})
        \circ
        \rho_l.
    \end{equation}
    Equivalently, $\mathfrak{C}$ is a $B_l|B_j^{\rm op}$ bicomodule algebra for every pair $l,j\in J$.

    \item For any $i\in I$ and $j\in J$, the left and right coactions are compatible:
    \begin{equation}
        (\id_{A_i}\otimes \rho_j)\circ \beta_i
        =
        (\beta_i\otimes \id_{B_j})\circ \rho_j.
    \end{equation}
    Equivalently, $\mathfrak{C}$ is an $A_i|B_j$ bicomodule algebra for every $i\in I$ and $j\in J$.
\end{enumerate}
Here, $\sigma_{A,B}:A\otimes B\to B\otimes A$ denotes the braiding defined by
\begin{equation}
    \sigma_{A,B}(a\otimes b)=b\otimes a.
\end{equation}
When $|I|=|J|=1$, this notion reduces to that of a bicomodule algebra; when $|I|+|J|=3$, one obtains a tricomodule algebra, etc.
\end{definition}

A string-net lattice model for this configuration can be constructed as a generalization of the model for 2-defects~\cite{jia2024weakTube}, although the defect data in the present setting are considerably more intricate. The key observation is that one must separately fuse the inward-oriented and outward-oriented bimodule categories, leading to the composite bimodule categories
\begin{align}
  & \eM_{\rm in}
  =
  \eM_{k_0+1}
  \boxtimes_{\eC^{k_0+1,k_0+2}}
  \eM_{k_0+2}
  \boxtimes_{\eC^{k_0+2,k_0+3}}
  \cdots
  \boxtimes_{\eC^{N-1,N}}
  \eM_N,\\
  & \eM_{\rm out}
  =
  \eM_{k_0}
  \boxtimes_{\eC^{k_0-1,k_0}}
  \eM_{k_0-1}
  \boxtimes_{\eC^{k_0-2,k_0-1}}
  \cdots
  \boxtimes_{\eC^{1,2}}
  \eM_1.
\end{align}
Both $\eM_{\rm in}$ and $\eM_{\rm out}$ naturally inherit the structure of $\eC^{k_0,k_0+1}|\eC^{N,1}$-bimodule categories. We refer the reader to~\cite{etingof2010fusion} for a detailed discussion of the tensor product of module categories.
In this framework, the $N$-defect is algebraically described by a bimodule functor
\begin{equation}
    \tau \in \Fun_{\eC^{k_0,k_0+1}|\eC^{N,1}}(\eM_{\rm in},\eM_{\rm out}).
\end{equation}
As we shall see, this categorical description is reflected naturally in the hierarchical structure of the associated multicomodule tube algebra.
Another equivalent description of the $N$-defect follows from the well-known equivalence
\begin{equation}
    \Fun_{\eC}(\eM,\eN)
    \simeq
    \eM^{\rm op}\boxtimes_{\eC}\eN,
\end{equation}
for right $\eC$-module categories $\eM$ and $\eN$. Applying this equivalence to the present configuration, one may reverse the orientations of the codimension-1 defects so that they are all oriented either toward or away from the codimension-2 defect. The defect $\tau$ can then be regarded as an object of the category of tensor product of bimodules
\begin{equation}
    \tau \in
    \eM^{\rm op}_1
    \boxtimes_{\eC^{1,2}}
    \eM^{\rm op}_2
    \boxtimes_{\eC^{2,3}}
    \cdots
    \boxtimes_{\eC^{k_0-1,k_0}}
    \eM^{\rm op}_{k_0}
    \boxtimes_{\eC^{k_0,k_0+1}}
    \eM_{k_0+1}
    \boxtimes_{\eC^{k_0+1,k_0+2}}
    \cdots
    \boxtimes_{\eC^{N-1,N}}
    \eM_N.
\end{equation}
This transforms an $N$-defect into a twist defect.


\subsection{Codimension-2  tri-defect comodule tube algebra}

We now consider the simplest nontrivial case, namely $N=3$, which already exhibits all essential features of the general construction. The corresponding algebraic structure will be referred to as the \emph{tri-defect comodule tube algebra} and denoted by $\mathbf{Tube}(\eM_1;\eM_2;\eM_3)$.
There are two orientation patterns for three domain walls meeting at a codimension-2 defect: either two domain walls are oriented toward the junction and one away from it, or one domain wall is oriented toward the junction and two away from it. These two configurations are related by reversing the orientations of a suitable subset of domain walls and replacing the corresponding bimodule categories by their opposite bimodule categories.
Throughout this subsection, $\eM_1$, $\eM_2$, and $\eM_3$ are taken to be $\eC^{1,2}|\eC^{3,1}$-, $\eC^{1,2}|\eC^{2,3}$-, and $\eC^{2,3}|\eC^{3,1}$-bimodule categories, respectively. In other words, we consider the configuration
\begin{align}
    \begin{aligned}
        \begin{tikzpicture}[scale=1]
            \filldraw[black!60, fill=gray!15, dotted] (0,0) circle[radius=1.9];
            \draw[line width=1pt,cyan] (90:0.32) -- (90:1.9);
            \draw[-latex,line width=1pt,cyan] (90:0.32) -- (90:1.3);
            \draw[line width=1pt,violet] (210:0.32) -- (210:1.9);
                        \draw[-latex,line width=1pt,violet] (210:1.9) -- (210:1.32);
            \draw[line width=1pt,magenta] (330:0.32) -- (330:1.9);
                        \draw[-latex,line width=1pt,magenta]  (330:1.9) -- (330:1.32);
            \draw[line width=1pt,black] (0,0) circle (0.32); 
            \fill[pattern=north east lines, pattern color=gray] (0,0) circle (0.32);
             \node[ line width=0.6pt, dashed, draw opacity=0.5] (a) at (150:1){$\eC^{1,2}$};
             \node[ line width=0.6pt, dashed, draw opacity=0.5] (a) at (270:1.4){$\eC^{2,3}$};
             \node[ line width=0.6pt, dashed, draw opacity=0.5] (a) at (30:1.4){$\eC^{3,1}$};
             \node[ line width=0.6pt, dashed, draw opacity=0.5] (a) at (330:2.2){$\eM_3$};
                          \node[ line width=0.6pt, dashed, draw opacity=0.5] (a) at (90:2.2){$\eM_1$};
                                                    \node[ line width=0.6pt, dashed, draw opacity=0.5] (a) at (210:2.2){$\eM_2$};
             \node[ line width=0.6pt, dashed, draw opacity=0.5] (a) at (0,0){$\tau$};
        \end{tikzpicture}
    \end{aligned}
\end{align}
Accordingly, we shall omit all orientation arrows hereafter whenever no confusion can arise.

In this scenario, following the same idea as that for 2-defects, we construct the tri-defect comodule tube algebra 
$\mathbf{Tube}(\eM_1;\eM_2;\eM_3)$, which encodes the interactions of three domain walls meeting at a defect point. 
As a vector space, it is of the form
\begin{equation} \label{eq:tri-tube-basis}
 \mathbf{Tube}(\eM_1;\eM_2;\eM_3)=\operatorname{span} \left\{  \begin{aligned}
        \begin{tikzpicture}[scale=0.7]
            \filldraw[black!60, fill=gray!15, dotted, even odd rule] (0,0) circle[radius=0.5] (0,0) circle[radius=1.9];
            \draw[line width=1pt,cyan] (90:0.5) -- (90:1.9);
            \draw[line width=1pt,violet] (210:0.5) -- (210:1.9);
            \draw[line width=1pt,magenta] (330:0.5) -- (330:1.9);
            \draw[line width=.6pt,red] (90:0.8) arc (90:210:0.8);
            \draw[line width=.6pt,teal] (210:1.2) arc (210:330:1.2);
            \draw[line width=.6pt,blue] (90:1.6) arc (90:-30:1.6);
            \node[ line width=0.6pt, dashed, draw opacity=0.5] (a) at (150:1){$\scriptstyle a$};
            \node[ line width=0.6pt, dashed, draw opacity=0.5] (a) at (270:1.4){$\scriptstyle b$};
            \node[ line width=0.6pt, dashed, draw opacity=0.5] (a) at (30:1.4){$\scriptstyle c$};
            \node[ line width=0.6pt, dashed, draw opacity=0.5] (a) at (90:0.3){$\scriptstyle u$};
            \node[ line width=0.6pt, dashed, draw opacity=0.5] (a) at (0.2,1.2){$\scriptstyle v$};
            \node[ line width=0.6pt, dashed, draw opacity=0.5] (a) at (90:2.1){$\scriptstyle w$};
            \node[ line width=0.6pt, dashed, draw opacity=0.5] (a) at (-0.2,1.5){$\scriptstyle \gamma$};
            \node[ line width=0.6pt, dashed, draw opacity=0.5] (a) at (0.2,0.8){$\scriptstyle \zeta$};
            \node[ line width=0.6pt, dashed, draw opacity=0.5] (a) at (210:2.1){$\scriptstyle x$};
            \node[ line width=0.6pt, dashed, draw opacity=0.5] (a) at (-0.95,-0.3){$\scriptstyle y$};
            \node[ line width=0.6pt, dashed, draw opacity=0.5] (a) at (210:0.3){$\scriptstyle z$};
            \node[ line width=0.6pt, dashed, draw opacity=0.5] (a) at (-1.2,-0.5){$\scriptstyle \mu$};
            \node[ line width=0.6pt, dashed, draw opacity=0.5] (a) at (-0.6,-0.6){$\scriptstyle \nu$};
            \node[ line width=0.6pt, dashed, draw opacity=0.5] (a) at (330:2.1){$\scriptstyle r$};
            \node[ line width=0.6pt, dashed, draw opacity=0.5] (a) at (1.1,-0.9){$\scriptstyle s$};
            \node[ line width=0.6pt, dashed, draw opacity=0.5] (a) at (330:0.3){$\scriptstyle t$};
            \node[ line width=0.6pt, dashed, draw opacity=0.5] (a) at (1.1,-0.4){$\scriptstyle \sigma$};
            \node[ line width=0.6pt, dashed, draw opacity=0.5] (a) at (1.35,-1.05){$\scriptstyle \tau$};
        \end{tikzpicture}
    \end{aligned}\right\}\;.
\end{equation}
These diagrams can be viewed as annular configurations with three marked radial segments corresponding to the domain walls. 
More precisely, the colored radial lines correspond to simple objects in the bimodule categories:
the cyan, violet, and magenta walls are labeled by objects in 
$\Irr(\eM_1)$, $\Irr(\eM_2)$, and $\Irr(\eM_3)$, respectively. 
The annular regions between consecutive walls carry simple objects 
$a\in \Irr(\eC^{1,2})$, $b\in \Irr(\eC^{2,3})$, and $c\in \Irr(\eC^{3,1})$.
The additional labels $\gamma,\zeta,\mu,\nu,\sigma,\tau$ 
encode morphism data compatible with the bimodule structures, ensuring that the entire diagram represents a well-defined element in the comodule tube algebra.

The algebra structure is defined in the standard diagrammatic manner.
The unit is given by the sum of all diagrams in which $a,b,c$ are the unit objects, and the wall labels range over simple objects in the respective bimodule categories. 
The multiplication is defined by gluing of compatible diagrams, as illustrated below:
\begin{align}
    & \mu\left(    \begin{aligned}
        \begin{tikzpicture}[scale=0.7]
            \filldraw[black!60, fill=gray!15, dotted, even odd rule] (0,0) circle[radius=0.5] (0,0) circle[radius=1.9];
            \draw[line width=1pt,cyan] (90:0.5) -- (90:1.9);
            \draw[line width=1pt,violet] (210:0.5) -- (210:1.9);
            \draw[line width=1pt,magenta] (330:0.5) -- (330:1.9);
            \draw[line width=.6pt,red] (90:0.8) arc (90:210:0.8);
            \draw[line width=.6pt,teal] (210:1.2) arc (210:330:1.2);
            \draw[line width=.6pt,blue] (90:1.6) arc (90:-30:1.6);
            \node[ line width=0.6pt, dashed, draw opacity=0.5] (a) at (150:1){$\scriptstyle a$};
            \node[ line width=0.6pt, dashed, draw opacity=0.5] (a) at (270:1.4){$\scriptstyle b$};
            \node[ line width=0.6pt, dashed, draw opacity=0.5] (a) at (30:1.4){$\scriptstyle c$};
            \node[ line width=0.6pt, dashed, draw opacity=0.5] (a) at (90:0.3){$\scriptstyle u$};
            \node[ line width=0.6pt, dashed, draw opacity=0.5] (a) at (0.2,1.2){$\scriptstyle v$};
            \node[ line width=0.6pt, dashed, draw opacity=0.5] (a) at (90:2.1){$\scriptstyle w$};
            \node[ line width=0.6pt, dashed, draw opacity=0.5] (a) at (-0.2,1.5){$\scriptstyle \gamma$};
            \node[ line width=0.6pt, dashed, draw opacity=0.5] (a) at (0.2,0.8){$\scriptstyle \zeta$};
            \node[ line width=0.6pt, dashed, draw opacity=0.5] (a) at (210:2.1){$\scriptstyle x$};
            \node[ line width=0.6pt, dashed, draw opacity=0.5] (a) at (-0.95,-0.3){$\scriptstyle y$};
            \node[ line width=0.6pt, dashed, draw opacity=0.5] (a) at (210:0.3){$\scriptstyle z$};
            \node[ line width=0.6pt, dashed, draw opacity=0.5] (a) at (-1.2,-0.5){$\scriptstyle \mu$};
            \node[ line width=0.6pt, dashed, draw opacity=0.5] (a) at (-0.6,-0.6){$\scriptstyle \nu$};
            \node[ line width=0.6pt, dashed, draw opacity=0.5] (a) at (330:2.1){$\scriptstyle r$};
            \node[ line width=0.6pt, dashed, draw opacity=0.5] (a) at (1.1,-0.9){$\scriptstyle s$};
            \node[ line width=0.6pt, dashed, draw opacity=0.5] (a) at (330:0.3){$\scriptstyle t$};
            \node[ line width=0.6pt, dashed, draw opacity=0.5] (a) at (1.1,-0.4){$\scriptstyle \sigma$};
            \node[ line width=0.6pt, dashed, draw opacity=0.5] (a) at (1.35,-1.05){$\scriptstyle \tau$};
        \end{tikzpicture}
    \end{aligned}
    \otimes 
        \begin{aligned}
        \begin{tikzpicture}[scale=0.7]
            \filldraw[black!60, fill=gray!15, dotted, even odd rule] (0,0) circle[radius=0.5] (0,0) circle[radius=1.9];
            \draw[line width=1pt,cyan] (90:0.5) -- (90:1.9);
            \draw[line width=1pt,violet] (210:0.5) -- (210:1.9);
            \draw[line width=1pt,magenta] (330:0.5) -- (330:1.9);
            \draw[line width=.6pt,red] (90:0.8) arc (90:210:0.8);
            \draw[line width=.6pt,teal] (210:1.2) arc (210:330:1.2);
            \draw[line width=.6pt,blue] (90:1.6) arc (90:-30:1.6);
            \node[ line width=0.6pt, dashed, draw opacity=0.5] (a) at (150:1){$\scriptstyle a'$};
            \node[ line width=0.6pt, dashed, draw opacity=0.5] (a) at (270:1.4){$\scriptstyle b'$};
            \node[ line width=0.6pt, dashed, draw opacity=0.5] (a) at (30:1.33){$\scriptstyle c'$};
            \node[ line width=0.6pt, dashed, draw opacity=0.5] (a) at (90:0.3){$\scriptstyle u'$};
            \node[ line width=0.6pt, dashed, draw opacity=0.5] (a) at (0.3,1.2){$\scriptstyle v'$};
            \node[ line width=0.6pt, dashed, draw opacity=0.5] (a) at (90:2.1){$\scriptstyle w'$};
            \node[ line width=0.6pt, dashed, draw opacity=0.5] (a) at (-0.25,1.5){$\scriptstyle \gamma'$};
            \node[ line width=0.6pt, dashed, draw opacity=0.5] (a) at (0.3,0.8){$\scriptstyle \zeta'$};
            \node[ line width=0.6pt, dashed, draw opacity=0.5] (a) at (210:2.1){$\scriptstyle x'$};
            \node[ line width=0.6pt, dashed, draw opacity=0.5] (a) at (-0.95,-0.3){$\scriptstyle y'$};
            \node[ line width=0.6pt, dashed, draw opacity=0.5] (a) at (210:0.2){$\scriptstyle z'$};
            \node[ line width=0.6pt, dashed, draw opacity=0.5] (a) at (-1.3,-0.5){$\scriptstyle \mu'$};
            \node[ line width=0.6pt, dashed, draw opacity=0.5] (a) at (-0.6,-0.6){$\scriptstyle \nu'$};
            \node[ line width=0.6pt, dashed, draw opacity=0.5] (a) at (330:2.1){$\scriptstyle r'$};
            \node[ line width=0.6pt, dashed, draw opacity=0.5] (a) at (1.1,-0.9){$\scriptstyle s'$};
            \node[ line width=0.6pt, dashed, draw opacity=0.5] (a) at (330:0.3){$\scriptstyle t'$};
            \node[ line width=0.6pt, dashed, draw opacity=0.5] (a) at (1.1,-0.3){$\scriptstyle \sigma'$};
            \node[ line width=0.6pt, dashed, draw opacity=0.5] (a) at (1.35,-1.05){$\scriptstyle \tau'$};
        \end{tikzpicture}
    \end{aligned}\right) \nonumber \\
    =&\;\, \delta_{u,w'}\delta_{z,x'}\delta_{t,r'}     \;\;
    \begin{aligned}
        \begin{tikzpicture}[scale=0.7]
            \filldraw[black!60, fill=gray!15, dotted, even odd rule] (0,0) circle[radius=0.5] (0,0) circle[radius=3.1];
            \draw[line width=1pt,cyan] (90:0.5) -- (90:3.1);
            \draw[line width=1pt,violet] (210:0.5) -- (210:3.1);
            \draw[line width=1pt,magenta] (330:0.5) -- (330:3.1);
            \draw[line width=.6pt,red] (90:0.8) arc (90:210:0.8);
            \draw[line width=.6pt,teal] (210:1.2) arc (210:330:1.2);
            \draw[line width=.6pt,blue] (90:1.6) arc (90:-30:1.6);
            \draw[line width=.6pt,red] (90:2.0) arc (90:210:2.0);
            \draw[line width=.6pt,teal] (210:2.4) arc (210:330:2.4);
            \draw[line width=.6pt,blue] (90:2.8) arc (90:-30:2.8);
            \node[ line width=0.6pt, dashed, draw opacity=0.5] (a) at (150:1){$\scriptstyle a'$};
            \node[ line width=0.6pt, dashed, draw opacity=0.5] (a) at (270:1.4){$\scriptstyle b'$};
            \node[ line width=0.6pt, dashed, draw opacity=0.5] (a) at (30:1.33){$\scriptstyle c'$};
            \node[ line width=0.6pt, dashed, draw opacity=0.5] (a) at (90:0.3){$\scriptstyle u'$};
            \node[ line width=0.6pt, dashed, draw opacity=0.5] (a) at (0.3,1.2){$\scriptstyle v'$};
            \node[ line width=0.6pt, dashed, draw opacity=0.5] (a) at (-0.25,1.5){$\scriptstyle \gamma'$};
            \node[ line width=0.6pt, dashed, draw opacity=0.5] (a) at (0.3,0.8){$\scriptstyle \zeta'$};
            \node[ line width=0.6pt, dashed, draw opacity=0.5] (a) at (-0.95,-0.3){$\scriptstyle y'$};
            \node[ line width=0.6pt, dashed, draw opacity=0.5] (a) at (210:0.2){$\scriptstyle z'$};
            \node[ line width=0.6pt, dashed, draw opacity=0.5] (a) at (-1.3,-0.5){$\scriptstyle \mu'$};
            \node[ line width=0.6pt, dashed, draw opacity=0.5] (a) at (-0.6,-0.6){$\scriptstyle \nu'$};
            \node[ line width=0.6pt, dashed, draw opacity=0.5] (a) at (1.1,-0.9){$\scriptstyle s'$};
            \node[ line width=0.6pt, dashed, draw opacity=0.5] (a) at (330:0.3){$\scriptstyle t'$};
            \node[ line width=0.6pt, dashed, draw opacity=0.5] (a) at (1.1,-0.3){$\scriptstyle \sigma'$};
            \node[ line width=0.6pt, dashed, draw opacity=0.5] (a) at (1.35,-1.05){$\scriptstyle \tau'$};
            \node[ line width=0.6pt, dashed, draw opacity=0.5] (a) at (150:2.2){$\scriptstyle a$};
            \node[ line width=0.6pt, dashed, draw opacity=0.5] (a) at (270:2.6){$\scriptstyle b$};
            \node[ line width=0.6pt, dashed, draw opacity=0.5] (a) at (30:2.6){$\scriptstyle c$};
            \node[ line width=0.6pt, dashed, draw opacity=0.5] (a) at (0.2,1.8){$\scriptstyle u$};
            \node[ line width=0.6pt, dashed, draw opacity=0.5] (a) at (-0.2,2.4){$\scriptstyle v$};
            \node[ line width=0.6pt, dashed, draw opacity=0.5] (a) at (90:3.3){$\scriptstyle w$};
            \node[ line width=0.6pt, dashed, draw opacity=0.5] (a) at (-0.2,2.8){$\scriptstyle \gamma$};
            \node[ line width=0.6pt, dashed, draw opacity=0.5] (a) at (0.2,2.2){$\scriptstyle \zeta$};
            \node[ line width=0.6pt, dashed, draw opacity=0.5] (a) at (210:3.3){$\scriptstyle x$};
            \node[ line width=0.6pt, dashed, draw opacity=0.5] (a) at (-2,-0.9){$\scriptstyle y$};
            \node[ line width=0.6pt, dashed, draw opacity=0.5] (a) at (-1.3,-1){$\scriptstyle z$};
            \node[ line width=0.6pt, dashed, draw opacity=0.5] (a) at (-2.26,-1.05){$\scriptstyle \mu$};
            \node[ line width=0.6pt, dashed, draw opacity=0.5] (a) at (-1.6,-1.16){$\scriptstyle \nu$};
            \node[ line width=0.6pt, dashed, draw opacity=0.5] (a) at (330:3.3){$\scriptstyle r$};
            \node[ line width=0.6pt, dashed, draw opacity=0.5] (a) at (2.16,-1.48){$\scriptstyle s$};
            \node[ line width=0.6pt, dashed, draw opacity=0.5] (a) at (1.8,-0.8){$\scriptstyle t$};
            \node[ line width=0.6pt, dashed, draw opacity=0.5] (a) at (2.16,-0.99){$\scriptstyle \sigma$};
            \node[ line width=0.6pt, dashed, draw opacity=0.5] (a) at (2.4,-1.6){$\scriptstyle \tau$};
        \end{tikzpicture}
    \end{aligned}\;.
\end{align}
It is clear that $\mathbf{Tube}(\eM_1;\eM_2;\eM_3)$ is an algebra with the above structure morphisms.

There are also comodule structures of the tri-defect tube algebra.
Consider the fusion of a domain wall excitation $\mathfrak{f}$ with a tri-defect $\tau$. 
Intuitively, this fusion can be characterized via a natural action of the tri-defect comodule tube algebra 
$\mathbf{Tube}(\eM_1;\eM_2;\eM_3)$, arising from its coaction structure relative to the corresponding domain wall tube algebra. 
This action admits a diagrammatic realization as follows:
\begin{align}
    \begin{aligned}
        \begin{tikzpicture}[scale=1]
            \filldraw[black!60, fill=gray!15, dotted] (0,0) circle[radius=1.9];
            \draw[line width=1pt,cyan] (90:0.12) -- (90:1.9);
            \draw[line width=1pt,violet] (210:0.12) -- (210:1.9);
            \draw[line width=1pt,magenta] (330:0.12) -- (330:0.48);
            \draw[line width=1pt,magenta] (330:0.72) -- (330:1.9);
            \draw[line width=.6pt,red] (90:0.8) arc (90:210:0.8);
            \draw[line width=.6pt,teal] (210:1.2) arc (210:330:1.2);
            \draw[line width=.6pt,blue] (90:1.6) arc (90:-30:1.6);
            \draw[line width=1.2pt,black] (0,0) circle (0.12); 
            \fill[pattern=north east lines, pattern color=gray] (0,0) circle (0.12);
            \draw[line width=1.2pt,black] (330:0.6) circle (0.12); 
            \fill[pattern=north east lines, pattern color=gray] (330:0.6) circle (0.12);
            \node[ line width=0.6pt, dashed, draw opacity=0.5] (a) at (150:1){$\scriptstyle a$};
            \node[ line width=0.6pt, dashed, draw opacity=0.5] (a) at (270:1.4){$\scriptstyle b$};
            \node[ line width=0.6pt, dashed, draw opacity=0.5] (a) at (30:1.4){$\scriptstyle c$};
            \node[ line width=0.6pt, dashed, draw opacity=0.5] (a) at (0.2,-0.36){$\small i$};
            \node[ line width=0.6pt, dashed, draw opacity=0.5] (a) at (0.5,-0.67){$\small \mathfrak{f}$};
            \node[ line width=0.6pt, dashed, draw opacity=0.5] (a) at (0.3,0.1){$\small \tau$};
        \end{tikzpicture}
    \end{aligned} \; = \;  \sum_{j,\sigma} \sqrt{ \frac{d_j}{d_ad_i}}\sum_{k,\rho}\sqrt{ \frac{d_k}{d_jd_b}} \;\;  \begin{aligned}
        \begin{tikzpicture}[scale=0.8]
            \filldraw[black!60, fill=gray!15, dotted] (0,0) circle[radius=1.9];
            \draw[line width=1pt,cyan] (90:0.15) -- (90:1.9);
            \draw[line width=1pt,violet] (210:0.15) -- (210:1.9);
            \draw[line width=1pt,magenta] (330:0.15) -- (330:1.9);
            \draw[line width=.6pt,red] (90:0.8) arc (90:210:0.8);
            \draw[line width=.6pt,teal] (210:1.2) arc (210:330:1.2);
            \draw[line width=.6pt,blue] (90:1.6) arc (90:-30:1.6);
            \draw[line width=1.2pt,black] (0,0) circle (0.15); 
            \fill[pattern=north east lines, pattern color=gray] (0,0) circle (0.15);
            \begin{scope}[shift={(330:3.8)}]
                \filldraw[black!60, fill=gray!15, dotted] (0,0) circle[radius=1.9];
                \draw[line width=1pt,magenta] (330:0.15) -- (330:1.9);
                \draw[line width=1pt,magenta] (150:0.15) -- (150:1.9);
                \draw[line width=.6pt,teal] (150:1.2) arc (150:330:1.2);
                \draw[line width=.6pt,blue] (150:1.6) arc (150:-30:1.6);
                \draw[line width=1.2pt,black] (0,0) circle (0.15); 
                \fill[pattern=north east lines, pattern color=gray] (0,0) circle (0.15);
                \node[ line width=0.6pt, dashed, draw opacity=0.5] (a) at (0,-0.45){$\small \mathfrak{f}$};
                \node[ line width=0.6pt, dashed, draw opacity=0.5] (a) at (240:1.4){$\scriptstyle b$};
                \node[ line width=0.6pt, dashed, draw opacity=0.5] (a) at (60:1.4){$\scriptstyle c$};
            \end{scope}
            \node[ line width=0.6pt, dashed, draw opacity=0.5] (a) at (0.35,0.15){$\small \tau$};
            \node[ line width=0.6pt, dashed, draw opacity=0.5] (a) at (150:1){$\scriptstyle a$};
            \node[ line width=0.6pt, dashed, draw opacity=0.5] (a) at (270:1.4){$\scriptstyle b$};
            \node[ line width=0.6pt, dashed, draw opacity=0.5] (a) at (30:1.4){$\scriptstyle c$};
            \node[ line width=0.6pt, dashed, draw opacity=0.5] (a) at (1.55,-1.2){$\small k$};
            \node[ line width=0.6pt, dashed, draw opacity=0.5] (a) at (1.35,-0.4){$\small j$};
            \node[ line width=0.6pt, dashed, draw opacity=0.5] (a) at (2.25,-0.95){$\small j$};
            \node[ line width=0.6pt, dashed, draw opacity=0.5] (a) at (0.4,-0.55){$\small i$};
            \node[ line width=0.6pt, dashed, draw opacity=0.5] (a) at (2.6,-1.8){$\small i$};
            \node[ line width=0.6pt, dashed, draw opacity=0.5] (a) at (1.2,-1){$\small \rho$};
            \node[ line width=0.6pt, dashed, draw opacity=0.5] (a) at (1.9,-1.4){$\small \rho$};
            \node[ line width=0.6pt, dashed, draw opacity=0.5] (a) at (1.1,-0.3){$\small \sigma$};
            \node[ line width=0.6pt, dashed, draw opacity=0.5] (a) at (2.6,-1.2){$\small \sigma$};
        \end{tikzpicture}
    \end{aligned}
\end{align}
which inspires the definition of the right coaction 
\begin{equation}
    \rho_3:\mathbf{Tube}(\eM_1;\eM_2;\eM_3)\to \mathbf{Tube}(\eM_1;\eM_2;\eM_3) \otimes \mathbf{Tube}(\eM_3)
\end{equation}
as defined by 
\begin{equation}
    \rho_3\left(\begin{aligned}
        \begin{tikzpicture}[scale=0.7]
            \filldraw[black!60, fill=gray!15, dotted, even odd rule] (0,0) circle[radius=0.5] (0,0) circle[radius=1.9];
            \draw[line width=1pt,cyan] (90:0.5) -- (90:1.9);
            \draw[line width=1pt,violet] (210:0.5) -- (210:1.9);
            \draw[line width=1pt,magenta] (330:0.5) -- (330:1.9);
            \draw[line width=.6pt,red] (90:0.8) arc (90:210:0.8);
            \draw[line width=.6pt,teal] (210:1.2) arc (210:330:1.2);
            \draw[line width=.6pt,blue] (90:1.6) arc (90:-30:1.6);
            \node[ line width=0.6pt, dashed, draw opacity=0.5] (a) at (150:1){$\scriptstyle a$};
            \node[ line width=0.6pt, dashed, draw opacity=0.5] (a) at (270:1.4){$\scriptstyle b$};
            \node[ line width=0.6pt, dashed, draw opacity=0.5] (a) at (30:1.4){$\scriptstyle c$};
            \node[ line width=0.6pt, dashed, draw opacity=0.5] (a) at (90:0.3){$\scriptstyle u$};
            \node[ line width=0.6pt, dashed, draw opacity=0.5] (a) at (0.2,1.2){$\scriptstyle v$};
            \node[ line width=0.6pt, dashed, draw opacity=0.5] (a) at (90:2.1){$\scriptstyle w$};
            \node[ line width=0.6pt, dashed, draw opacity=0.5] (a) at (-0.2,1.5){$\scriptstyle \gamma$};
            \node[ line width=0.6pt, dashed, draw opacity=0.5] (a) at (0.2,0.8){$\scriptstyle \zeta$};
            \node[ line width=0.6pt, dashed, draw opacity=0.5] (a) at (210:2.1){$\scriptstyle x$};
            \node[ line width=0.6pt, dashed, draw opacity=0.5] (a) at (-0.95,-0.3){$\scriptstyle y$};
            \node[ line width=0.6pt, dashed, draw opacity=0.5] (a) at (210:0.3){$\scriptstyle z$};
            \node[ line width=0.6pt, dashed, draw opacity=0.5] (a) at (-1.2,-0.5){$\scriptstyle \mu$};
            \node[ line width=0.6pt, dashed, draw opacity=0.5] (a) at (-0.6,-0.6){$\scriptstyle \nu$};
            \node[ line width=0.6pt, dashed, draw opacity=0.5] (a) at (330:2.1){$\scriptstyle r$};
            \node[ line width=0.6pt, dashed, draw opacity=0.5] (a) at (1.1,-0.9){$\scriptstyle s$};
            \node[ line width=0.6pt, dashed, draw opacity=0.5] (a) at (330:0.3){$\scriptstyle t$};
            \node[ line width=0.6pt, dashed, draw opacity=0.5] (a) at (1.1,-0.3){$\scriptstyle \theta$};
            \node[ line width=0.6pt, dashed, draw opacity=0.5] (a) at (1.35,-1.05){$\scriptstyle \tau$};
        \end{tikzpicture}
    \end{aligned}\right) \; = \;  \sum_{i,j,k,\sigma,\rho} \sqrt{ \frac{d_k}{d_ad_id_b}} \;\; \begin{aligned}
        \begin{tikzpicture}[scale=0.7]
            \filldraw[black!60, fill=gray!15, dotted, even odd rule] (0,0) circle[radius=0.5] (0,0) circle[radius=1.9];
            \draw[line width=1pt,cyan] (90:0.5) -- (90:1.9);
            \draw[line width=1pt,violet] (210:0.5) -- (210:1.9);
            \draw[line width=1pt,magenta] (330:0.5) -- (330:1.9);
            \draw[line width=.6pt,red] (90:0.8) arc (90:210:0.8);
            \draw[line width=.6pt,teal] (210:1.2) arc (210:330:1.2);
            \draw[line width=.6pt,blue] (90:1.6) arc (90:-30:1.6);
            \node[ line width=0.6pt, dashed, draw opacity=0.5] (a) at (150:1){$\scriptstyle a$};
            \node[ line width=0.6pt, dashed, draw opacity=0.5] (a) at (270:1.4){$\scriptstyle b$};
            \node[ line width=0.6pt, dashed, draw opacity=0.5] (a) at (30:1.4){$\scriptstyle c$};
            \node[ line width=0.6pt, dashed, draw opacity=0.5] (a) at (90:0.3){$\scriptstyle u$};
            \node[ line width=0.6pt, dashed, draw opacity=0.5] (a) at (0.2,1.2){$\scriptstyle v$};
            \node[ line width=0.6pt, dashed, draw opacity=0.5] (a) at (90:2.1){$\scriptstyle w$};
            \node[ line width=0.6pt, dashed, draw opacity=0.5] (a) at (-0.2,1.5){$\scriptstyle \gamma$};
            \node[ line width=0.6pt, dashed, draw opacity=0.5] (a) at (0.2,0.8){$\scriptstyle \zeta$};
            \node[ line width=0.6pt, dashed, draw opacity=0.5] (a) at (210:2.1){$\scriptstyle x$};
            \node[ line width=0.6pt, dashed, draw opacity=0.5] (a) at (-0.95,-0.3){$\scriptstyle y$};
            \node[ line width=0.6pt, dashed, draw opacity=0.5] (a) at (210:0.3){$\scriptstyle z$};
            \node[ line width=0.6pt, dashed, draw opacity=0.5] (a) at (-1.2,-0.5){$\scriptstyle \mu$};
            \node[ line width=0.6pt, dashed, draw opacity=0.5] (a) at (-0.6,-0.6){$\scriptstyle \nu$};
            \node[ line width=0.6pt, dashed, draw opacity=0.5] (a) at (330:2.1){$\scriptstyle k$};
            \node[ line width=0.6pt, dashed, draw opacity=0.5] (a) at (1.1,-0.9){$\scriptstyle j$};
            \node[ line width=0.6pt, dashed, draw opacity=0.5] (a) at (330:0.3){$\scriptstyle i$};
            \node[ line width=0.6pt, dashed, draw opacity=0.5] (a) at (1.1,-0.4){$\scriptstyle \sigma$};
            \node[ line width=0.6pt, dashed, draw opacity=0.5] (a) at (1.35,-1.05){$\scriptstyle \rho$};
        \end{tikzpicture} \end{aligned}
        \otimes 
        \begin{aligned}
        \begin{tikzpicture}[scale=0.7]
            \filldraw[black!60, fill=gray!15, dotted, even odd rule] (0,0) circle[radius=0.5] (0,0) circle[radius=1.9];
            \draw[line width=1pt,magenta] (150:0.5) -- (150:1.9);
            \draw[line width=1pt,magenta] (330:0.5) -- (330:1.9);
            \draw[line width=.6pt,teal] (150:1.2) arc (150:330:1.2);
            \draw[line width=.6pt,blue] (150:1.6) arc (150:-30:1.6);
            \node[ line width=0.6pt, dashed, draw opacity=0.5] (a) at (240:1.4){$\scriptstyle b$};
            \node[ line width=0.6pt, dashed, draw opacity=0.5] (a) at (60:1.4){$\scriptstyle c$};
            \node[ line width=0.6pt, dashed, draw opacity=0.5] (a) at (150:2.1){$\scriptstyle k$};
            \node[ line width=0.6pt, dashed, draw opacity=0.5] (a) at (-1.33,0.45){$\scriptstyle j$};
            \node[ line width=0.6pt, dashed, draw opacity=0.5] (a) at (150:0.3){$\scriptstyle i$};
            \node[ line width=0.6pt, dashed, draw opacity=0.5] (a) at (-0.9,0.8){$\scriptstyle \sigma$};
            \node[ line width=0.6pt, dashed, draw opacity=0.5] (a) at (-1.57,0.6){$\scriptstyle \rho$};
            \node[ line width=0.6pt, dashed, draw opacity=0.5] (a) at (330:2.1){$\scriptstyle r$};
            \node[ line width=0.6pt, dashed, draw opacity=0.5] (a) at (1.1,-0.9){$\scriptstyle s$};
            \node[ line width=0.6pt, dashed, draw opacity=0.5] (a) at (330:0.3){$\scriptstyle t$};
            \node[ line width=0.6pt, dashed, draw opacity=0.5] (a) at (1.1,-0.3){$\scriptstyle \theta$};
            \node[ line width=0.6pt, dashed, draw opacity=0.5] (a) at (1.35,-1.05){$\scriptstyle \tau$};
        \end{tikzpicture}
    \end{aligned}\;. \label{eq:right-tri-comodule}
\end{equation}
Analogously, the same constructions give rise to the following left and right coactions
\begin{align}
    \beta_1:&\mathbf{Tube}(\eM_1;\eM_2;\eM_3)\to \mathbf{Tube}(\eM_1)\otimes \mathbf{Tube}(\eM_1;\eM_2;\eM_3), \\
    \rho_2:&\mathbf{Tube}(\eM_1;\eM_2;\eM_3)\to \mathbf{Tube}(\eM_1;\eM_2;\eM_3) \otimes \mathbf{Tube}(\eM_2). 
\end{align}
The left or right construction is determined by the orientation of the corresponding domain wall. 

To summarize, we have the following result.
\begin{proposition}
    The morphisms $\beta_1,\rho_2,\rho_3$ endow $\mathbf{Tube}(\eM_1;\eM_2;\eM_3)$ with the structure of a $\mathbf{Tube}(\eM_1)|\{\mathbf{Tube}(\eM_2),\mathbf{Tube}(\eM_3)\}$ tricomodule algebra. The codimension-2 tri-defects can be characterized by the representations of $\mathbf{Tube}(\eM_1)|\{\mathbf{Tube}(\eM_2),\mathbf{Tube}(\eM_3)\}$. The fusion of codimension-2 tri-defects and domain wall excitations are captured by the coaction of $\mathbf{Tube}(\eM_1)|\{\mathbf{Tube}(\eM_2),\mathbf{Tube}(\eM_3)\}$.
\end{proposition}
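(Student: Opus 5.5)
The plan is to reduce each of the three claims to facts already established for 2-defects, namely for boundary and domain wall defect tube algebras, and to use locality of the gluing to handle what is genuinely new. The new ingredient is the set of compatibilities between different coactions.

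\textbf{Step 1: each coaction is a comodule algebra structure.}
I would first check that each of $\beta_1$, $\rho_2$, $\rho_3$ separately makes $\mathbf{Tube}(\eM_1;\eM_2;\eM_3)$ a comodule algebra. Take $\rho_3$ as defined in \eqref{eq:right-tri-comodule}. It only modifies the sector of the annulus adjacent to the magenta wall, i.e.\ the labels $b$, $c$ together with the $\eM_3$-wall data $(r,s,t,\theta,\tau)$. The remaining data are carried along untouched. Restricted to this sector, the formula is exactly the coaction used for the domain wall defect tube algebra in Section~\ref{Sec:DomainwallTube}.

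\begin{itemize}
\item \emph{Coassociativity,} $(\rho_3\otimes\id)\rho_3=(\id\otimes\Delta)\rho_3$: insert two parallel $\eM_3$-lines and expand with the completeness relation $\sum_{j,\sigma}\sqrt{d_j/(d_ad_i)}(\cdots)$ twice. Compare with the coproduct of $\mathbf{Tube}(\eM_3)$ using associativity of the module $F$-moves.
\item \emph{Counitality:} this follows from $\varepsilon$ of the domain wall tube algebra evaluating the inserted tube to a delta.
\item \emph{Multiplicativity,} $\rho_3(XY)=\rho_3(X)\rho_3(Y)$ with $\rho_3(1)$ handled as for weak Hopf comodule algebras: both sides are obtained by cutting the glued annulus along the inserted circle. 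Cutting and gluing are topological local moves, so they commute.
\end{itemize}

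The same argument applies to $\beta_1$ (with left and right swapped by the orientation of $\eM_1$) and to $\rho_2$.

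\textbf{Step 2: compatibility between different coactions.}
This is the main obstacle. I would verify $(\rho_3\otimes\id)\rho_2=(\id\otimes\sigma)(\rho_2\otimes\id)\rho_3$ and $(\id\otimes\rho_j)\beta_1=(\beta_1\otimes\id)\rho_j$. Diagrammatically, $\rho_2$ and $\rho_3$ insert splitting circles near two different walls. The danger is that the sectors overlap: the region labelled $b$ touches both $\eM_2$ and $\eM_3$.

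My first attempt is to normalize the basis diagram \eqref{eq:tri-tube-basis} by $F$-moves so that the arc data attached to each wall is localized near that wall. Then the two insertions act on disjoint disks. Their expansions are independent sums, and the scalar prefactors multiply, since each involves only its own labels. If the shared $b$-arc causes trouble, I would instead compute both sides in the tube basis. I would use the pentagon-type consistency of the bimodule associators (the middle-associativity constraint of $\eM_2$, $\eM_3$ as bimodules) to show that the two orders of expansion give identical coefficients.

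\textbf{Step 3: representations and fusion.}
For the representation statement, I would consider the action of the tube algebra on the vertex space $\bigoplus_{x,y}\Hom(\tau(x),y)$ of a tri-defect $\tau\in\Fun_{\eC^{1,2}|\eC^{3,1}}(\eM_{\rm in},\eM_{\rm out})$, obtained by gluing tubes around the defect disk. The Morita and Schur-orthogonality arguments for 2-defects (separability idempotent, $\Rep\simeq\Fun^{\rm op}$) then carry over after identifying $\eM_{\rm in}=\eM_2\boxtimes_{\eC^{2,3}}\eM_3$ and $\eM_{\rm out}=\eM_1$.

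Finally, the fusion statement follows from the defining picture preceding \eqref{eq:right-tri-comodule}. The fused configuration is exactly the coaction-induced action, so $\tau\otimes\mathfrak f$ corresponds to the module $V_\tau\otimes V_{\mathfrak f}$ restricted along $\rho_3$, and similarly for $\beta_1$ and $\rho_2$.
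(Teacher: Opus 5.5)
Your overall strategy of reducing everything to the two-defect case is the one the paper has in mind. The paper gives no separate proof here: it only offers the fusion picture motivating \eqref{eq:right-tri-comodule} and an appeal to the earlier constructions. However, you have put the difficulty in the wrong place, and Step~3 does not go through as written.

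Step~2 is immediate. The coaction \eqref{eq:right-tri-comodule} contains no $F$-symbols. It relabels only the $\eM_3$-wall data $(r,s,t,\theta,\tau)\mapsto(k,j,i,\sigma,\rho)$ and keeps $a,b,c$ and the other two walls fixed. Its prefactor depends only on bulk labels and on $k,i$. As printed it contains $d_a$; by analogy with the domain wall formula it should involve the labels $b,c$ adjacent to $\eM_3$. The maps $\beta_1$ and $\rho_2$ act in the same way on their own walls. Since no coaction alters a bulk label or another wall's data, the two sides of each compatibility identity are the same sum term by term, and no pentagon argument is needed. Your ``disjoint disks'' normalization could not be carried out anyway, because the $b$-arc necessarily joins $\eM_2$ to $\eM_3$ and both output tubes carry a copy of it.

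The real $F$-symbol content sits in Step~1. In the reduction of $XY$, the fused labels from $b\otimes b'$ and $c\otimes c'$ are shared between the $F$-symbols on $\eM_3$ and those on $\eM_2$ and $\eM_1$, so the $\eM_3$ sector is not independent. Multiplicativity of $\rho_3$ therefore needs the unitarity contractions at the newly created $\eM_3$ wall, as in the appendix computation for the domain wall factorization map. These contractions identify the fused labels across the two tensor factors.

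The genuine gap is Step~3. $\mathbf{Tube}(\eM_1;\eM_2;\eM_3)$ does not act on $\bigoplus_{x,y}\Hom(\tau(x),y)$ with $x$ simple in $\eM_{\rm in}$. Its inner wall labels are simple objects of $\eM_1$, $\eM_2$, $\eM_3$ separately, and simple objects of $\eM_2\boxtimes_{\eC^{2,3}}\eM_3$ are not pairs of simples. The space it actually acts on is $\bigoplus_{x\in\Irr(\eM_2),\,r\in\Irr(\eM_3),\,w\in\Irr(\eM_1)}\Hom_{\eM_1}\bigl(\tau(x\boxtimes_{\eC^{2,3}}r),w\bigr)$. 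The separability idempotent, Schur orthogonality and $\Rep\simeq\Fun^{\rm op}$ results you want to reuse are statements about a different algebra, $\mathbf{Tube}({}_{\eC^{1,2}}(\eM_{\rm in})_{\eC^{3,1}};{}_{\eC^{1,2}}(\eM_1)_{\eC^{3,1}})$.

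So ``identifying $\eM_{\rm in}$'' is exactly the missing step. You need one of the following:
\begin{enumerate}
\item A Morita equivalence between these two algebras, obtained by fusing the $\eM_2$ and $\eM_3$ walls. The paper only asserts this at the end of the section.
\item A direct proof, consisting of:
\begin{enumerate}
\item semisimplicity of the tri-defect algebra. The two-defect idempotent used a factorization map through an intermediate wall together with the antipode-like map, and you would have to build their trivalent analogues.
\item full faithfulness and essential surjectivity for $\eC^{2,3}$-balanced bimodule functors $\eM_2\times\eM_3\to\eM_1$.
\item the universal property of $\boxtimes_{\eC^{2,3}}$.
\end{enumerate}
\end{enumerate}

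Finally, since $\rho_3(1)\neq 1\otimes 1$, fusing $\mathfrak f$ into $\tau$ corresponds to the truncated module $\rho_3(1)\cdot(V_\tau\otimes V_{\mathfrak f})$, not to the full tensor product.
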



\subsection{Codimension-2 $N$-defect comodule tube algebra}

\begin{figure}[t]
    \centering
    \includegraphics[width=0.8\textwidth]{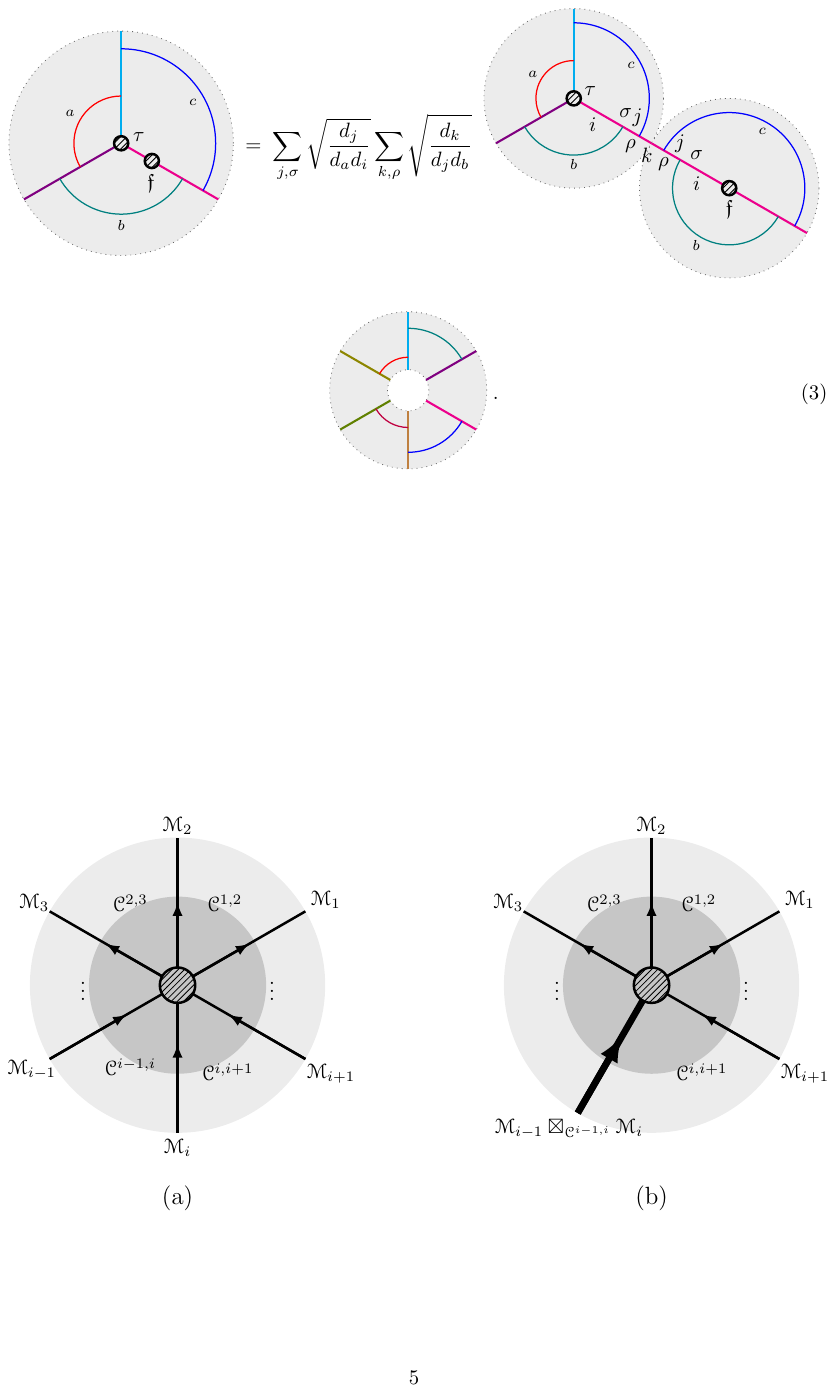} 
    \caption{(a) Schematic representation of an $N$-defect. 
    The central defect is attached to $N$ domain walls labeled by bimodule categories $\eM_i$, arranged cyclically around the defect. 
    The region between two adjacent domain walls $\eM_i$ and $\eM_{i+1}$ is labeled by the fusion category $\eC^{i,i+1}$, with indices understood modulo $N$. 
    (b) Fusion of two adjacent domain walls $\eM_{i-1}$ and $\eM_i$. 
    Here $\eM_{i-1}$ carries a right $\eC^{i-1,i}$-module structure and $\eM_i$ carries a left $\eC^{i-1,i}$-module structure, so the two domain walls can be fused along the intermediate fusion category $\eC^{i-1,i}$ via Deligne tensor product, producing a $\eC^{i-2,i-1}|\eC^{i,i+1}$ domain wall $\eM_{i-1}\boxtimes_{\eC^{i-1,i}}\eM_i$.  }
    \label{fig:N_defect}
\end{figure}

The construction of tri-defect comodule tube algebra can be directly generalized to the general case for $N$-defect. Specifically, we consider the following algebra 
\begin{equation} \label{eq:N-tube-basis}
  \mathbf{Tube}(\{\eM_i\}_{1\leq i\leq N})  :=\operatorname{span}\left\{  \begin{aligned}
        \begin{tikzpicture}[scale=0.7]
            \filldraw[black!60, fill=gray!15, dotted, even odd rule] (0,0) circle[radius=0.5] (0,0) circle[radius=1.9];
            \draw[line width=1pt,violet] (30:0.5) -- (30:1.9);
            \draw[line width=1pt,cyan] (90:0.5) -- (90:1.9);
            \draw[line width=1pt,olive] (150:0.5) -- (150:1.9);
            \draw[line width=1pt,lime!50!black] (210:0.5) -- (210:1.9);
            \draw[line width=1pt,brown] (270:0.5) -- (270:1.9);
            \draw[line width=1pt,magenta] (330:0.5) -- (330:1.9);
            \draw[line width=.6pt,red] (90:0.8) arc (90:150:0.8);
            \draw[line width=.6pt,teal] (30:1.5) arc (30:90:1.5);
            \draw[line width=.6pt,purple] (210:0.9) arc (210:270:0.9);
            \draw[line width=.6pt,blue] (270:1.5) arc (270:330:1.5);
            \node[ line width=1.2pt, dashed, draw opacity=0.5] (a) at (175:1.2){$ \vdots$}; 
            \node[ line width=1.2pt, dashed, draw opacity=0.5] (a) at (5:1.3){$ \vdots$};
        \end{tikzpicture}
    \end{aligned}\right\}.
\end{equation}
In these tube diagrams, the radial walls are labeled by simple objects in the corresponding bimodule categories, while the annular arcs are labeled by simple objects in the corresponding bulk fusion categories. 
The vertices are decorated by bimodule morphisms compatible with the adjacent fusion and module structures.

Analogously to the tri-defect comodule tube algebra, $\mathbf{Tube}(\{\eM_i\}_{1\leq i\leq N})$ carries a natural algebra structure defined diagrammatically. 
The unit is given by the sum of all diagrams in which all arcs are the unit objects, and the wall labels range over simple objects in the respective bimodule categories, and the multiplication is given by gluing compatible diagrams along matching domain walls.  
Furthermore, for each $1\leq i \leq k_0$, $\mathbf{Tube}(\{\eM_i\}_{1\leq i\leq N})$ possesses a natural left comodule structure over the domain wall tube algebra $\mathbf{Tube}(\eM_i)$, whereas for each $k_0+1\leq j\leq N$, it admits a natural right comodule structure over $\mathbf{Tube}(\eM_j)$. The left and right comodule structure are canonically associated with the chosen orientation convention, and are realized by the morphisms
\begin{align}
    \beta_i&:\mathbf{Tube}(\{\eM_i\}_{1\leq i\leq N}) \to \mathbf{Tube}(\eM_i)\otimes \mathbf{Tube}(\{\eM_i\}_{1\leq i\leq N}) , \\
    \rho_j&:\mathbf{Tube}(\{\eM_i\}_{1\leq i\leq N}) \to \mathbf{Tube}(\{\eM_i\}_{1\leq i\leq N}) \otimes \mathbf{Tube}(\eM_j),
\end{align}
which are analogous to Eq.~\eqref{eq:right-tri-comodule}. 

\begin{proposition}
    The morphisms $\{\beta_i,\rho_j\}$ endow $\mathbf{Tube}(\{\eM_i\}_{1\leq i\leq N})$ with the structure of a $\{\mathbf{Tube}(\eM_i)\}_{i=1}^{k_0}|\{\mathbf{Tube}(\eM_j)\}_{j=k_0+1}^N$-multicomodule algebra. Its representation category $\Rep(\mathbf{Tube}(\{\eM_i\}_{1\leq i\leq N}))$ is equivalent to the bimodule functor category $\Fun(\eM_{\rm in}, \eM_{\rm out})$.
\end{proposition}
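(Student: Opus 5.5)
The proof splits into two parts: the multicomodule algebra structure, and the representation-category equivalence. For the first part, each coaction $\beta_i$ or $\rho_j$ only affects the sector of the tube diagram adjacent to the wall $\eM_i$ (respectively $\eM_j$). So the single-coaction comodule-algebra axioms reduce to the two-defect (bicomodule) case already treated for domain wall defects in Section~\ref{Sec:DomainwallTube}. There are three such axioms.
\begin{itemize}
\item Coassociativity: $(\rho_j\otimes\id)\rho_j=(\id\otimes\Delta)\rho_j$.
\item Counitality.
\item Multiplicativity: $\rho_j(XY)=\rho_j(X)\rho_j(Y)$, together with $\rho_j(1)=1\otimes 1$ restricted to the appropriate idempotent, as dictated by the weak Hopf structure.
\end{itemize}
Concretely, I would write $\rho_j$ as in Eq.~\eqref{eq:right-tri-comodule}, i.e. a sum over the intermediate labels $(i,j,k,\sigma,\rho)$ on the $\eM_j$-wall with weights $\sqrt{d_k/(d_a d_i d_b)}$. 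Each axiom then becomes an identity among $F$-symbols of the bimodule category $\eM_j$, namely the pentagon-type relations and the orthonormality of the splitting bases. These are the same identities used to show that $\mathbf{Tube}(\eM_j)$ is a weak Hopf algebra.

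The compatibility conditions of the multicomodule definition (commuting left coactions, commuting right coactions, and left–right compatibility) come from locality. For $i\neq j$, $\beta_i$ and $\rho_j$ insert resolutions of the identity on disjoint radial segments. Both composites $(\id\otimes\rho_j)\beta_i$ and $(\beta_i\otimes\id)\rho_j$ therefore give the same sum of diagrams, with the two tube factors placed in the prescribed order; the swap $\sigma$ accounts only for tensor-factor ordering. Two coactions on the same wall never occur, so no further case arises.

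For the equivalence $\Rep(\mathbf{Tube}(\{\eM_i\}))\simeq\Fun(\eM_{\rm in},\eM_{\rm out})^{\rm op}$, I would reduce to the already established two-defect result for domain wall defects, in two steps.
\begin{enumerate}
\item Show that $\mathbf{Tube}(\{\eM_i\})$ is Morita equivalent to $\mathbf{Tube}(\eM_{\rm in};\eM_{\rm out})$, viewed as a $\eC^{k_0,k_0+1}|\eC^{N,1}$ domain wall defect tube algebra. Geometrically, one fuses the walls $\eM_{k_0+1},\dots,\eM_N$ into a single wall, and likewise the outgoing walls, using the relative Deligne product. On string-nets this can be realized by a sequence of local moves that merge adjacent radial walls across the intervening bulk region $\eC^{i,i+1}$. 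This requires the standard identification of simple objects of $\eM\boxtimes_{\eC}\eN$ with simple summands of the $\eC$-balanced algebra objects, which can be cited from~\cite{etingof2010fusion}.
\item Construct an explicit idempotent $e\in\mathbf{Tube}(\{\eM_i\})$ and an isomorphism $e\,\mathbf{Tube}(\{\eM_i\})\,e\cong\mathbf{Tube}(\eM_{\rm in};\eM_{\rm out})$ with $e$ full. Then invoke the equivalence $\Rep(\mathbf{Tube}(\eM_{\rm in};\eM_{\rm out}))\simeq\Fun_{\eC^{k_0,k_0+1}|\eC^{N,1}}(\eM_{\rm in},\eM_{\rm out})^{\rm op}$ from Section~\ref{Sec:DomainwallTube}.
\end{enumerate}

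An alternative for the second part is to repeat the direct argument used for boundary defects. There one uses the separability idempotent and the generalized Schur orthogonality to show semisimplicity. One then sends a module functor $\tau$ to the representation on $\bigoplus\Hom(\tau(x_{\rm in}),y_{\rm out})$, built from the vertex spaces of the dangling defect string, and checks that this functor is fully faithful and essentially surjective by counting matrix blocks.

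I expect the main obstacle to be the Morita equivalence in step 1. Precisely, the obstacle is verifying that the wall-merging local moves give a full idempotent compatible with multiplication when several bulk regions are collapsed simultaneously. This demands coherent bookkeeping of the bimodule associators across all $N$ sectors. I would first handle $N=3$ via a single merge, and then proceed by induction on $N$, merging one adjacent pair at a time.
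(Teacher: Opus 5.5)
Your proposal is correct and follows essentially the same route as the paper. The paper likewise defines the coactions wall by wall, by analogy with the tri-defect formula, and gets the multicomodule compatibility for free, since distinct coactions resolve distinct walls while bulk labels are only copied. It obtains the representation-category statement by iteratively fusing adjacent walls via relative Deligne products, which it asserts gives categorically Morita equivalent tube algebras. The differences are minor: the paper iterates down to the twist defect tube algebra, whereas you stop at the two-wall defect $\eM_{\rm in}\to\eM_{\rm out}$ and invoke the domain wall defect equivalence directly. The wall-merging Morita step you flag as the main obstacle is exactly the step the paper states without proof.
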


One of the most important features of $\mathbf{Tube}(\{\eM_i\}_{1\leq i \leq M})$ is that it possesses a hierarchical structure, see Figure~\ref{fig:N_defect}. An $N$-defect multicomodule tube algebra can be transformed into an $(N-1)$-defect multicomodule tube algebra by fusing two nearby domain walls $\eM_{i-1}\boxtimes_{\eC^{i-1,i}}\eM_{i}$, yielding $\mathbf{Tube}(\{\eM_l\}_{1\leq l \leq i-1}; \eM_{i-1}\boxtimes_{\eC^{i-1,i}}\eM_{i};\{\eM_l\}_{i+1\leq l \leq N} )$, as illustrated in Figure~\ref{fig:N_defect}. The two resulting multicomodule tube algebras are categorically Morita equivalent, in the sense that their representation categories are equivalent. Upon iterating this procedure, we find that all such generalized tube algebras are categorically Morita equivalent to the twist defect tube algebra, and hence all describe the same type $N$-defects.

%% file: Ch-DWMultimodule.tex
\section{Defect tube algebra for domain wall connecting many bulks}
\label{sec:multi-moduletube}

In Ref.~\cite{jia2025tube}, we constructed a tube algebra for a multimodule domain wall at which $N$ bulk topological phases meet. Such a wall is described mathematically by a multimodule category equipped with compatible actions of $N$ fusion categories. For $N=1$, this construction recovers the boundary tube algebra, whereas for $N=2$ it reduces to the ordinary domain wall tube algebra. The multimodule tube algebra therefore provides a common generalization of these two familiar constructions.

The purpose of this section is to extend this framework from a single multimodule domain wall to defects between two such walls. We will
construct the corresponding multimodule defect tube algebra, establish its bicomodule algebra structure, and relate its finite-dimensional
representations to multimodule functors. We will then show that the gluing construction gives rise to a hierarchy of comodule tube algebras organized by an $N$-tuple algebra structure. 

We begin by recalling the relevant categorical notions. Let $I$ and $J$ be finite index sets, and let $\{\eC_\alpha\}_{\alpha\in I}$ and $\{\eD_\beta\}_{\beta\in J}$ be two families of fusion categories. A $\{\eC_\alpha\}_{\alpha\in I}|\{\eD_\beta\}_{\beta\in J}$-multimodule category consists of an abelian category $\eM$ equipped with left actions $\eC_\alpha\otimes\eM\to\eM$, $\alpha\in I$, and right actions $\eM\otimes\eD_\beta\to\eM$, $\beta\in J$, together with coherent associativity and unit isomorphisms. The actions associated with different fusion categories are required to commute coherently. More precisely, these compatibility data make $\eM$ simultaneously a $\eC_\alpha|\eD_\beta$-bimodule category for every $\alpha\in I$ and $\beta\in J$, a $\eC_\alpha|\eC_{\alpha'}^{\rm rev}$-bimodule category for distinct $\alpha,\alpha'\in I$, and a $\eD_\beta^{\rm rev}|\eD_{\beta'}$-bimodule category for distinct $\beta,\beta'\in J$. These pairwise bimodule structures are required to be mutually compatible. Here for a monoidal category $\eC$, its reverse monoidal category $\eC^{\rm rev}$ has the same underlying category and tensor unit as $\eC$, but its tensor product is defined by $X\otimes^{\rm rev}Y:=Y\otimes X$. The preceding data may equivalently be packaged as a $(\boxtimes _\alpha \eC_\alpha)|(\boxtimes_\beta\eD_\beta)$-bimodule category structure on $\eM$. By folding the right actions to the left, this is also equivalent to a left $(\boxtimes _\alpha \eC_\alpha)\boxtimes(\boxtimes_\beta\eD_\beta^{\rm rev})$-module category structure. Equivalently, by folding the left actions to the right, $\eM$ may be regarded as a right $(\boxtimes _\alpha \eC_\alpha^{\rm rev})\boxtimes(\boxtimes_\beta\eD_\beta)$-module category.

Physically, a multimodule category describes a one-dimensional gapped domain wall
along which several two-dimensional string-net phases meet. We first fix an orientation of the domain wall. Each adjacent oriented surface carries a string-net model with input fusion category $\eC_\alpha$ or $\eD_\beta$, according to whether its orientation induced by the right-hand rule agrees or disagrees with that of the wall. Correspondingly, the strings in these bulk regions act on $\eM$ from the left or from the right.
A lattice realization is obtained by placing the usual Levin--Wen lattice in each bulk region and labeling the edges along the common domain wall by simple objects of $\eM$. The wall vertices are labeled by morphisms in the corresponding
multimodule Hom-spaces. Away from the wall, the ordinary bulk vertex and face
projectors remain unchanged, while near the wall they are replaced by projectors defined using the multimodule actions and their compatible F-moves. These mutually commuting local projectors define an exactly solvable gapped Hamiltonian. See \cite[Section~3.2]{jia2025tube} for more details. A localized violation of the projector conditions gives an excitation or defect on the domain wall. Surrounding such a region by a tube and reducing it by topological local moves motivates the multimodule tube and defect tube algebras constructed below.

\subsection{Generalized tube algebras for multimodule domain wall}

We first recall the tube algebra associated with a single multimodule domain wall. As established in Ref.~\cite{jia2025tube}, this multimodule tube algebra admits a weak Hopf algebra structure. It may be viewed as an $N$-fold generalization of the Drinfeld quantum double construction: rather than combining two boundary tube algebras, it is assembled by gluing together the boundary tube algebras associated with all $N$ bulk regions. For this reason, we refer to it as a \emph{tube quantum $N$-tuple algebra}. Its lower order components are obtained by assigning tensor unit labels to selected bulk strings, giving a natural hierarchy of weak Hopf tube subalgebras. As we will show below, the defect tube algebra between two multimodule domain walls is naturally a bicomodule algebra over their respective multimodule tube algebras.

Suppose $|I|=k$, $|J|=l$, and $N=k+l$. 
Consider a multimodule category domain wall determined by a  $\{\eC_{\alpha}\}_{\alpha \in I}|\{\eD_{\beta}\}_{\beta \in J}$-multimodule category $\eM$. 
The associated domain wall tube algebra $\mathbf{Tube}({_{\{\eC_{\alpha}\}}} \eM_{\{\eD_{\beta}\}})$ is the vector space
\begin{equation} \label{eq:mul_tube_basis} 
    \mathbf{Tube}({_{\{\eC_{\alpha}\}}} \eM_{\{\eD_{\beta}\}}) = \operatorname{span}\left\{\begin{aligned}
        \begin{tikzpicture}[scale=0.6]
            \filldraw[black!60, fill=gray!15, dotted, even odd rule] (0,0) circle[radius=0.5] (0,0) circle[radius=2.6];
            \draw[line width=1pt,violet] (0,0.5)--(0,2.6);
            \draw[line width=1pt,violet] (0,-0.5)--(0,-2.6);
            \draw[line width=0.6pt,red] (0,0.8) arc[start angle=90, end angle=270, radius=0.8];
            \draw[line width=0.6pt,red] (0,1.5) arc[start angle=90, end angle=270, radius=1.5];
            \draw[line width=0.6pt,blue] (0,1.7) arc[start angle=90, end angle=-90, radius=1.7];
            \draw[line width=0.6pt,blue] (0,2.4) arc[start angle=90, end angle=-90, radius=2.4];
            \node[ line width=0.6pt, dashed, draw opacity=0.5] (a) at (-0.3,1.7){$\scriptstyle \zeta_1$};
            \node[ line width=0.6pt, dashed, draw opacity=0.5] (a) at (-0.3,2.4){$\scriptstyle \zeta_l$};
            \node[ line width=0.6pt, dashed, draw opacity=0.5] (a) at (-0.3,-1.75){$\scriptstyle \mu_1$};
            \node[ line width=0.6pt, dashed, draw opacity=0.5] (a) at (-0.3,-2.4){$\scriptstyle \mu_l$};
            \node[ line width=0.6pt, dashed, draw opacity=0.5] (a) at (-1.1,0){{\color{red}$\scriptstyle \cdots$}};
            \node[ line width=0.6pt, dashed, draw opacity=0.5] (a) at (1.4,0){$\scriptstyle b_1$};
            \node[ line width=0.6pt, dashed, draw opacity=0.5] (a) at (2.1,0){{\color{blue}$\scriptstyle \cdots$}};
            \node[ line width=0.6pt, dashed, draw opacity=0.5] (a) at (2.7,0){$\scriptstyle b_l$};
            \node[ line width=0.6pt, dashed, draw opacity=0.5] (a) at (-.4,0){$\scriptstyle a_1$};
            \node[ line width=0.6pt, dashed, draw opacity=0.5] (a) at (-1.8,0){$\scriptstyle a_k$};
            \node[ line width=0.6pt, dashed, draw opacity=0.5] (a) at (0.3,-1.5){$\scriptstyle \nu_k$};
            \node[ line width=0.6pt, dashed, draw opacity=0.5] (a) at (0.3,-0.8){$\scriptstyle \nu_1$};
            \node[ line width=0.6pt, dashed, draw opacity=0.5] (a) at (0,-0.3){$\scriptstyle z_1$};
            \node[ line width=0.6pt, dashed, draw opacity=0.5] (a) at (0,-2.8){$\scriptstyle w_{l+1}$};
            \node[ line width=0.6pt, dashed, draw opacity=0.5] (a) at (0,0.3){$\scriptstyle u_1$};
            \node[ line width=0.6pt, dashed, draw opacity=0.5] (a) at (0,2.8){$\scriptstyle v_{l+1}$};
            \node[ line width=0.6pt, dashed, draw opacity=0.5] (a) at (0.3,1.4){$\scriptstyle \gamma_k$};
            \node[ line width=0.6pt, dashed, draw opacity=0.5] (a) at (0.3,0.8){$\scriptstyle \gamma_1$};
        \end{tikzpicture}
    \end{aligned}\right\}
\end{equation}
where $a_i\in \Irr(\eC_i),b_j\in \Irr(\eD_j)$, $u_i,z_i,v_j,w_j \in \Irr(\eM)$.
In the upper half, the wall labels are ordered as $u_1,\cdots,u_k$, $v_1,\cdots,v_{l+1}$ from bottom to top, while in the lower half, the wall labels are ordered as $z_1,\cdots,z_k$, $w_1,\cdots,w_{l+1}$ from top to bottom. The wall vertex labels are from the Hom-spaces listed as follows: 
\begin{gather*}
    \gamma_i\in \Hom_\eM(a_i\otimes u_i,u_{i+1}),\;\;\nu_i\in \Hom_\eM(z_{i+1},a_i\otimes z_i), \;\;1\leq i \leq k-1, \\
    \zeta_j\in \Hom_\eM(v_j\otimes b_j,v_{j+1}),\;\; \mu_j\in \Hom_\eM(w_{j+1},w_j\otimes b_j),\;\; 1\leq j\leq l, \\
    \gamma_k\in \Hom_\eM(a_k\otimes u_k,v_1),\;\; \nu_k\in \Hom_\eM(w_1,a_k\otimes z_k). 
\end{gather*}
One implicit condition is that these Hom-spaces are required to be nonzero.

In \cite{jia2025tube}, we showed that $\mathbf{Tube}({_{\{\eC_{\alpha}\}}} \eM_{\{\eD_{\beta}\}} )$ can be equipped with a weak Hopf algebra structure\,\footnote{Note that the $*$-structure is constructed based on the assumption that $d_m^L=d_m^R=d_m$ for all $m\in \eM$ (e.g. the pseudo-unitary case).}. 
Furthermore, it can be seen that this algebra naturally contains many weak Hopf tube subalgebras.
More precisely, by setting certain bulk strings to the tensor unit — i.e., replacing the corresponding simple objects \(a_i \in \Irr(\eC_i)\) or \(b_j \in \Irr(\eD_j)\) with the unit object (drawn as a dotted string) — the tube algebra reduces to a weak Hopf subalgebra. 
For example, omitting the fusion categories \(\eC_\alpha\) and \(\eD_\beta\) in this way yields a weak Hopf algebra embedding 
\[
\mathbf{Tube}\big({}_{\eC_1 \boxtimes \cdots \boxtimes \hat{\eC}_\alpha \boxtimes \cdots \boxtimes \eC_{k}} 
\eM_{\eD_1 \boxtimes \cdots \boxtimes \hat{\eD}_\beta \boxtimes \cdots \boxtimes \eD_{l}}\big) \xhookrightarrow{\hspace{0.4cm}} \mathbf{Tube}({_{\{\eC_{\alpha}\}}} \eM_{\{\eD_{\beta}\}} ),
\]
where the hat \(\hat{\bullet}\) indicates removal of the corresponding category. 
This observation reveals a hierarchical structure: when \(N = |I| + |J|\), the full algebra can be interpreted as a \emph{tube quantum \(N\)-tuple algebra}. 
Every quantum \(k\)-tuple algebra (\(k \le N\)) obtained from the same multimodule data appears as a weak Hopf subalgebra of this \(N\)-tuple algebra. 
Using the folding trick, there is a correspondence of the multimodule tube algebra and boundary tube algebra 
\[
    \mathbf{Tube}({_{\{\eC_{\alpha}\}}} \eM_{\{\eD_{\beta}\}} ) \; \longleftrightarrow \; \mathbf{Tube}({_{(\boxtimes_{\alpha} \eC_{\alpha}) \boxtimes (\boxtimes_{\beta} \eD_{\beta}^{\rm rev})}} \eM). 
\]
The folding process is by simply folding all domain wall tubes appearing in the defining data of the weak Hopf domain wall tube algebra into a single boundary tube. We also studied its Morita theory: all multimodule tube algebras constructed are Morita equivalent, namely, their representation categories are equivalent to each other.

We now consider defects between two multimodule domain walls described
by multimodule categories $\eM$ and $\eN$ over the same families $\{\eC_\alpha\}_{\alpha\in I}$ and $\{\eD_\beta\}_{\beta\in J}$. The corresponding \emph{multimodule defect tube algebra} is the following vector space 
\begin{equation} \label{eq:mul_defect_tube_basis}  
   \mathbf{Tube}({_{\{\eC_{\alpha}\}}} \eM_{\{\eD_{\beta}\}};{_{\{\eC_{\alpha}\}}} \eN_{\{\eD_{\beta}\}}) = \operatorname{span} \left\{\begin{aligned}
        \begin{tikzpicture}[scale=0.6]
            \filldraw[black!60, fill=gray!15, dotted, even odd rule] (0,0) circle[radius=0.5] (0,0) circle[radius=2.6];
            \draw[line width=1pt,cyan] (0,0.5)--(0,2.6);
            \draw[line width=1pt,violet] (0,-0.5)--(0,-2.6);
            \draw[line width=0.6pt,red] (0,0.8) arc[start angle=90, end angle=270, radius=0.8];
            \draw[line width=0.6pt,red] (0,1.5) arc[start angle=90, end angle=270, radius=1.5];
            \draw[line width=0.6pt,blue] (0,1.7) arc[start angle=90, end angle=-90, radius=1.7];
            \draw[line width=0.6pt,blue] (0,2.4) arc[start angle=90, end angle=-90, radius=2.4];
            \node[ line width=0.6pt, dashed, draw opacity=0.5] (a) at (-0.3,1.7){$\scriptstyle \zeta_1$};
            \node[ line width=0.6pt, dashed, draw opacity=0.5] (a) at (-0.3,2.4){$\scriptstyle \zeta_l$};
            \node[ line width=0.6pt, dashed, draw opacity=0.5] (a) at (-0.3,-1.75){$\scriptstyle \mu_1$};
            \node[ line width=0.6pt, dashed, draw opacity=0.5] (a) at (-0.3,-2.4){$\scriptstyle \mu_l$};
            \node[ line width=0.6pt, dashed, draw opacity=0.5] (a) at (-1.1,0){{\color{red}$\scriptstyle \cdots$}};
            \node[ line width=0.6pt, dashed, draw opacity=0.5] (a) at (1.4,0){$\scriptstyle b_1$};
            \node[ line width=0.6pt, dashed, draw opacity=0.5] (a) at (2.1,0){{\color{blue}$\scriptstyle \cdots$}};
            \node[ line width=0.6pt, dashed, draw opacity=0.5] (a) at (2.7,0){$\scriptstyle b_l$};
            \node[ line width=0.6pt, dashed, draw opacity=0.5] (a) at (-.4,0){$\scriptstyle a_1$};
            \node[ line width=0.6pt, dashed, draw opacity=0.5] (a) at (-1.8,0){$\scriptstyle a_k$};
            \node[ line width=0.6pt, dashed, draw opacity=0.5] (a) at (0.3,-1.5){$\scriptstyle \nu_k$};
            \node[ line width=0.6pt, dashed, draw opacity=0.5] (a) at (0.3,-0.8){$\scriptstyle \nu_1$};
            \node[ line width=0.6pt, dashed, draw opacity=0.5] (a) at (0,-0.3){$\scriptstyle z_1$};
            \node[ line width=0.6pt, dashed, draw opacity=0.5] (a) at (0,-2.8){$\scriptstyle w_{l+1}$};
            \node[ line width=0.6pt, dashed, draw opacity=0.5] (a) at (0,0.3){$\scriptstyle u_1$};
             \node[ line width=0.6pt, dashed, draw opacity=0.5] (a) at (0,2.8){$\scriptstyle v_{l+1}$};
            \node[ line width=0.6pt, dashed, draw opacity=0.5] (a) at (0.3,1.4){$\scriptstyle \gamma_k$};
            \node[ line width=0.6pt, dashed, draw opacity=0.5] (a) at (0.3,0.8){$\scriptstyle \gamma_1$};
        \end{tikzpicture}
    \end{aligned}\right\}\;.
\end{equation}
Here $a_i\in \Irr(\eC_i)$, and $b_j\in \Irr(\eD_j)$. In the upper half, the wall labels $u_1,\ldots,u_k$, $v_1,\ldots,v_{l+1}$ (from bottom to top) are taken from $\Irr(\eN)$, while in the lower half, the labels $z_1,\ldots,z_k$, $w_1,\ldots,w_{l+1}$ (from top to bottom) are taken from $\Irr(\eM)$.
Therefore, the vertices in the upper half are labeled by morphisms in $\Hom_\eN$, whereas those in the lower half are labeled by morphisms in $\Hom_\eM$. 

A direct consequence is that the multimodule defect tube algebra has an algebra structure. First, the unit is given by 
\begin{equation} \label{eq:mul_defect_tube_unit}
    1=\sum_{x\in\Irr(\eM)}\sum_{z\in\Irr(\eN)} \;\begin{aligned}
        \begin{tikzpicture}[scale=0.65]
             \filldraw[black!60, fill=gray!15, dotted, even odd rule] (0,0) circle[radius=0.5] (0,0) circle[radius=2];
             \draw[line width=1pt,cyan] (0,0.5)--(0,2);
             \draw[line width=1pt,violet] (0,-0.5)--(0,-2);
             \draw[red, thick, dotted] (0,0.8) arc[start angle=90, end angle=270, radius=0.8];
             \draw[red, thick, dotted] (0,1.2) arc[start angle=90, end angle=270, radius=1.2];
             \draw[blue,thick,dotted] (0,1.4) arc[start angle=90, end angle=-90, radius=1.4];
             \draw[blue,thick,dotted] (0,1.8) arc[start angle=90, end angle=-90, radius=1.8];
            \node[ line width=0.6pt, dashed, draw opacity=0.5] (a) at (-0.2,-1.6){$\scriptstyle x$};
            \node[ line width=0.6pt, dashed, draw opacity=0.5] (a) at (-0.2,1.6){$\scriptstyle z$};
        \end{tikzpicture}
    \end{aligned}\;\;, 
\end{equation}
where dotted lines are labeled by multiple unit objects in respective base fusion categories. 
Second, the multiplication map $\mu$ is defined by gluing corresponding wall labels as 
\begin{align}
 &     \mu\left( \begin{aligned}
    \begin{tikzpicture}[scale=0.6]
    \filldraw[black!60, fill=gray!15, dotted, even odd rule] (0,0) circle[radius=0.5] (0,0) circle[radius=2.6];
         \draw[line width=1pt,cyan] (0,0.5)--(0,2.6);
         \draw[line width=1pt,violet] (0,-0.5)--(0,-2.6);
         \draw[line width=0.6pt,red] (0,0.8) arc[start angle=90, end angle=270, radius=0.8];
         \draw[line width=0.6pt,red] (0,1.5) arc[start angle=90, end angle=270, radius=1.5];
        \draw[line width=0.6pt,blue] (0,1.7) arc[start angle=90, end angle=-90, radius=1.7];
        \draw[line width=0.6pt,blue] (0,2.4) arc[start angle=90, end angle=-90, radius=2.4];
        \node[ line width=0.6pt, dashed, draw opacity=0.5] (a) at (-0.3,1.7){$\scriptstyle \zeta_1$};
         \node[ line width=0.6pt, dashed, draw opacity=0.5] (a) at (-0.3,2.4){$\scriptstyle \zeta_l$};
         \node[ line width=0.6pt, dashed, draw opacity=0.5] (a) at (-0.3,-1.75){$\scriptstyle \mu_1$};
        \node[ line width=0.6pt, dashed, draw opacity=0.5] (a) at (-0.3,-2.4){$\scriptstyle \mu_l$};
        \node[ line width=0.6pt, dashed, draw opacity=0.5] (a) at (-1.1,0){{\color{red}$\scriptstyle \cdots$}};
        \node[ line width=0.6pt, dashed, draw opacity=0.5] (a) at (1.4,0){$\scriptstyle b_1$};
         \node[ line width=0.6pt, dashed, draw opacity=0.5] (a) at (2.1,0){{\color{blue}$\scriptstyle \cdots$}};
        \node[ line width=0.6pt, dashed, draw opacity=0.5] (a) at (2.7,0){$\scriptstyle b_l$};
        \node[ line width=0.6pt, dashed, draw opacity=0.5] (a) at (-.4,0){$\scriptstyle a_1$};
        \node[ line width=0.6pt, dashed, draw opacity=0.5] (a) at (-1.8,0){$\scriptstyle a_k$};
        \node[ line width=0.6pt, dashed, draw opacity=0.5] (a) at (0.3,-1.5){$\scriptstyle \nu_k$};
        \node[ line width=0.6pt, dashed, draw opacity=0.5] (a) at (0.3,-0.8){$\scriptstyle \nu_1$};
        \node[ line width=0.6pt, dashed, draw opacity=0.5] (a) at (0,-0.3){$\scriptstyle z_1$};
        \node[ line width=0.6pt, dashed, draw opacity=0.5] (a) at (0,-2.8){$\scriptstyle w_{l+1}$};
        \node[ line width=0.6pt, dashed, draw opacity=0.5] (a) at (0,0.3){$\scriptstyle u_1$};
         \node[ line width=0.6pt, dashed, draw opacity=0.5] (a) at (0,2.8){$\scriptstyle v_{l+1}$};
        \node[ line width=0.6pt, dashed, draw opacity=0.5] (a) at (0.3,1.4){$\scriptstyle \gamma_k$};
        \node[ line width=0.6pt, dashed, draw opacity=0.5] (a) at (0.3,0.8){$\scriptstyle \gamma_1$};
    \end{tikzpicture}
\end{aligned}\otimes 
    \begin{aligned}
    \begin{tikzpicture}[scale=0.6]
    \filldraw[black!60, fill=gray!15, dotted, even odd rule] (0,0) circle[radius=0.5] (0,0) circle[radius=2.6];
         \draw[line width=1pt,cyan] (0,0.5)--(0,2.6);
         \draw[line width=1pt,violet] (0,-0.5)--(0,-2.6);
         \draw[line width=0.6pt,red] (0,0.8) arc[start angle=90, end angle=270, radius=0.8];
         \draw[line width=0.6pt,red] (0,1.5) arc[start angle=90, end angle=270, radius=1.5];
        \draw[line width=0.6pt,blue] (0,1.7) arc[start angle=90, end angle=-90, radius=1.7];
        \draw[line width=0.6pt,blue] (0,2.4) arc[start angle=90, end angle=-90, radius=2.4];
        \node[ line width=0.6pt, dashed, draw opacity=0.5] (a) at (-0.3,1.7){$\scriptstyle \zeta_1'$};
         \node[ line width=0.6pt, dashed, draw opacity=0.5] (a) at (-0.3,2.4){$\scriptstyle \zeta_l'$};
         \node[ line width=0.6pt, dashed, draw opacity=0.5] (a) at (-0.3,-1.75){$\scriptstyle \mu_1'$};
        \node[ line width=0.6pt, dashed, draw opacity=0.5] (a) at (-0.3,-2.4){$\scriptstyle \mu_l'$};
        \node[ line width=0.6pt, dashed, draw opacity=0.5] (a) at (-1.1,0){{\color{red}$\scriptstyle \cdots$}};
        \node[ line width=0.6pt, dashed, draw opacity=0.5] (a) at (1.4,0){$\scriptstyle b_1'$};
         \node[ line width=0.6pt, dashed, draw opacity=0.5] (a) at (2.1,0){{\color{blue}$\scriptstyle \cdots$}};
        \node[ line width=0.6pt, dashed, draw opacity=0.5] (a) at (2.7,0){$\scriptstyle b_l'$};
        \node[ line width=0.6pt, dashed, draw opacity=0.5] (a) at (-.4,0){$\scriptstyle a_1'$};
        \node[ line width=0.6pt, dashed, draw opacity=0.5] (a) at (-1.8,0){$\scriptstyle a_k'$};
        \node[ line width=0.6pt, dashed, draw opacity=0.5] (a) at (0.3,-1.5){$\scriptstyle \nu_k'$};
        \node[ line width=0.6pt, dashed, draw opacity=0.5] (a) at (0.3,-0.8){$\scriptstyle \nu_1'$};
        \node[ line width=0.6pt, dashed, draw opacity=0.5] (a) at (0,-0.3){$\scriptstyle z_1'$};
        \node[ line width=0.6pt, dashed, draw opacity=0.5] (a) at (0,-2.95){$\scriptstyle w_{l+1}'$};
        \node[ line width=0.6pt, dashed, draw opacity=0.5] (a) at (0,0.3){$\scriptstyle u_1'$};
         \node[ line width=0.6pt, dashed, draw opacity=0.5] (a) at (0,2.95){$\scriptstyle v_{l+1}'$};
        \node[ line width=0.6pt, dashed, draw opacity=0.5] (a) at (0.3,1.4){$\scriptstyle \gamma_k'$};
        \node[ line width=0.6pt, dashed, draw opacity=0.5] (a) at (0.3,0.8){$\scriptstyle \gamma_1'$};
    \end{tikzpicture}
\end{aligned}\right) \nonumber \\
    & = \delta_{u_1,v'_{l+1}}\delta_{z_1,w'_{l+1}}\; 
    \begin{aligned}
        \begin{tikzpicture}[scale=0.65]
            \filldraw[black!60, fill=gray!15, dotted, even odd rule] (0,0) circle[radius=0.5] (0,0) circle[radius=3.8];
            \draw[line width=1pt,cyan] (0,0.5)--(0,3.8);
            \draw[line width=1pt,violet] (0,-0.5)--(0,-3.8);
            \draw[line width=0.6pt,red] (0,0.8) arc[start angle=90, end angle=270, radius=0.8];
            \draw[line width=0.6pt,red] (0,1.3) arc[start angle=90, end angle=270, radius=1.3];
            \draw[line width=0.6pt,blue] (0,1.5) arc[start angle=90, end angle=-90, radius=1.5];
            \draw[line width=0.6pt,blue] (0,2) arc[start angle=90, end angle=-90, radius=2];
            \draw[line width=0.6pt,red] (0,2.4) arc[start angle=90, end angle=270, radius=2.4];
            \draw[line width=0.6pt,red] (0,2.9) arc[start angle=90, end angle=270, radius=2.9];
            \draw[line width=0.6pt,blue] (0,3.1) arc[start angle=90, end angle=-90, radius=3.1];
            \draw[line width=0.6pt,blue] (0,3.6) arc[start angle=90, end angle=-90, radius=3.6];
            \node[ line width=0.6pt, dashed, draw opacity=0.5] (a) at (-0.3,1.5){$\scriptstyle \zeta_1'$};
            \node[ line width=0.6pt, dashed, draw opacity=0.5] (a) at (-0.3,2.05){$\scriptstyle \zeta_l'$};
            \node[ line width=0.6pt, dashed, draw opacity=0.5] (a) at (-0.3,-1.5){$\scriptstyle \mu_1'$};
            \node[ line width=0.6pt, dashed, draw opacity=0.5] (a) at (-0.3,-2.05){$\scriptstyle \mu_l'$};
            \node[ line width=0.6pt, dashed, draw opacity=0.5] (a) at (-1,0){{\color{red}$\scriptstyle \cdots$}}; 
            \node[ line width=0.6pt, dashed, draw opacity=0.5] (a) at (1.25,0){$\scriptstyle b_1'$};
            \node[ line width=0.6pt, dashed, draw opacity=0.5] (a) at (1.8,0){{\color{blue}$\scriptstyle \cdots$}};
            \node[ line width=0.6pt, dashed, draw opacity=0.5] (a) at (2.25,0){$\scriptstyle b_l'$};
            \node[ line width=0.6pt, dashed, draw opacity=0.5] (a) at (-.5,0){$\scriptstyle a_1'$};
            \node[ line width=0.6pt, dashed, draw opacity=0.5] (a) at (-1.55,0){$\scriptstyle a_k'$};
            \node[ line width=0.6pt, dashed, draw opacity=0.5] (a) at (0.3,-1.2){$\scriptstyle \nu_k'$};
            \node[ line width=0.6pt, dashed, draw opacity=0.5] (a) at (0.3,-0.7){$\scriptstyle \nu_1'$};
            \node[ line width=0.6pt, dashed, draw opacity=0.5] (a) at (0,-0.25){$\scriptstyle z'_1$};
            \node[ line width=0.6pt, dashed, draw opacity=0.5] (a) at (0.3,-2.2){$\scriptstyle z_1$};
            \node[ line width=0.6pt, dashed, draw opacity=0.5] (a) at (0,0.3){$\scriptstyle u'_1$};
            \node[ line width=0.6pt, dashed, draw opacity=0.5] (a) at (0.3,2.15){$\scriptstyle u_1$};
            \node[ line width=0.6pt, dashed, draw opacity=0.5] (a) at (0.3,1.3){$\scriptstyle \gamma_k'$};
            \node[ line width=0.6pt, dashed, draw opacity=0.5] (a) at (0.3,0.8){$\scriptstyle \gamma_1'$};
            \node[ line width=0.6pt, dashed, draw opacity=0.5] (a) at (0.3,-2.5){$\scriptstyle \nu_1$};
            \node[ line width=0.6pt, dashed, draw opacity=0.5] (a) at (0.3,-2.9){$\scriptstyle \nu_k$};
            \node[ line width=0.6pt, dashed, draw opacity=0.5] (a) at (-0.3,-3.1){$\scriptstyle \mu_1$};
            \node[ line width=0.6pt, dashed, draw opacity=0.5] (a) at (-0.3,-3.6){$\scriptstyle \mu_l$};
            \node[ line width=0.6pt, dashed, draw opacity=0.5] (a) at (0,-4){$\scriptstyle w_{l+1}$};
            \node[ line width=0.6pt, dashed, draw opacity=0.5] (a) at (0.3,2.5){$\scriptstyle \gamma_1$};
            \node[ line width=0.6pt, dashed, draw opacity=0.5] (a) at (0.3,2.9){$\scriptstyle \gamma_k$};
            \node[ line width=0.6pt, dashed, draw opacity=0.5] (a) at (-0.3,3.1){$\scriptstyle \zeta_1$};
            \node[ line width=0.6pt, dashed, draw opacity=0.5] (a) at (-0.3,3.5){$\scriptstyle \zeta_l$};
            \node[ line width=0.6pt, dashed, draw opacity=0.5] (a) at (0,4){$\scriptstyle v_{l+1}$};
            \node[ line width=0.6pt, dashed, draw opacity=0.5] (a) at (-2.6,0){{\color{red}$\scriptstyle \cdots$}}; 
            \node[ line width=0.6pt, dashed, draw opacity=0.5] (a) at (2.9,0){$\scriptstyle b_1$};
            \node[ line width=0.6pt, dashed, draw opacity=0.5] (a) at (3.4,0){{\color{blue}$\scriptstyle \cdots$}};
            \node[ line width=0.6pt, dashed, draw opacity=0.5] (a) at (3.85,0){$\scriptstyle b_l$};
            \node[ line width=0.6pt, dashed, draw opacity=0.5] (a) at (-2.1,0){$\scriptstyle a_1$};
            \node[ line width=0.6pt, dashed, draw opacity=0.5] (a) at (-3.2,0){$\scriptstyle a_k$};
        \end{tikzpicture}
    \end{aligned}\;. \label{eq:Ntube_prod_1}
\end{align}
However, $\mathbf{Tube}({_{\{\eC_{\alpha}\}}} \eM_{\{\eD_{\beta}\}};{_{\{\eC_{\alpha}\}}} \eN_{\{\eD_{\beta}\}})$ has no natural coalgebra structure. Nevertheless, it carries comodule structures:
\begin{align*}
    \beta&: \mathbf{Tube}({_{\{\eC_{\alpha}\}}} \eM_{\{\eD_{\beta}\}};{_{\{\eC_{\alpha}\}}} \eN_{\{\eD_{\beta}\}}) \to \mathbf{Tube}({_{\{\eC_{\alpha}\}}} \eN_{\{\eD_{\beta}\}})\otimes \mathbf{Tube}({_{\{\eC_{\alpha}\}}} \eM_{\{\eD_{\beta}\}};{_{\{\eC_{\alpha}\}}} \eN_{\{\eD_{\beta}\}}), \\
    \rho&: \mathbf{Tube}({_{\{\eC_{\alpha}\}}} \eM_{\{\eD_{\beta}\}};{_{\{\eC_{\alpha}\}}} \eN_{\{\eD_{\beta}\}}) \to \mathbf{Tube}({_{\{\eC_{\alpha}\}}} \eM_{\{\eD_{\beta}\}};{_{\{\eC_{\alpha}\}}} \eN_{\{\eD_{\beta}\}}) \otimes \mathbf{Tube}({_{\{\eC_{\alpha}\}}} \eM_{\{\eD_{\beta}\}}).
\end{align*}
More specifically, the left coaction $\beta$ is defined as  
\begin{align}
    \beta\left(\begin{aligned}
    \begin{tikzpicture}[scale=0.6]
        \filldraw[black!60, fill=gray!15, dotted, even odd rule] (0,0) circle[radius=0.5] (0,0) circle[radius=2.6];
        \draw[line width=1pt,cyan] (0,0.5)--(0,2.6);
        \draw[line width=1pt,violet] (0,-0.5)--(0,-2.6);
        \draw[line width=.6pt,red] (0,0.8) arc[start angle=90, end angle=270, radius=0.8];
        \draw[line width=.6pt,red] (0,1.5) arc[start angle=90, end angle=270, radius=1.5];
        \draw[line width=.6pt,blue] (0,1.7) arc[start angle=90, end angle=-90, radius=1.7];
        \draw[line width=.6pt,blue] (0,2.4) arc[start angle=90, end angle=-90, radius=2.4];
        \node[ line width=0.6pt, dashed, draw opacity=0.5] (a) at (-0.3,1.7){$\scriptstyle \zeta_1$};
        \node[ line width=0.6pt, dashed, draw opacity=0.5] (a) at (-0.3,2.4){$\scriptstyle \zeta_l$};
        \node[ line width=0.6pt, dashed, draw opacity=0.5] (a) at (-0.3,-1.75){$\scriptstyle \mu_1$};
        \node[ line width=0.6pt, dashed, draw opacity=0.5] (a) at (-0.3,-2.4){$\scriptstyle \mu_l$};
        \node[ line width=0.6pt, dashed, draw opacity=0.5] (a) at (-1.1,0){{\color{red}$\scriptstyle \cdots$}};
        \node[ line width=0.6pt, dashed, draw opacity=0.5] (a) at (1.4,0){$\scriptstyle b_1$};
        \node[ line width=0.6pt, dashed, draw opacity=0.5] (a) at (2.1,0){{\color{blue}$\scriptstyle \cdots$}};
        \node[ line width=0.6pt, dashed, draw opacity=0.5] (a) at (2.7,0){$\scriptstyle b_l$};
        \node[ line width=0.6pt, dashed, draw opacity=0.5] (a) at (-.4,0){$\scriptstyle a_1$};
        \node[ line width=0.6pt, dashed, draw opacity=0.5] (a) at (-1.8,0){$\scriptstyle a_k$};
        \node[ line width=0.6pt, dashed, draw opacity=0.5] (a) at (0.3,-1.5){$\scriptstyle \nu_k$};
        \node[ line width=0.6pt, dashed, draw opacity=0.5] (a) at (0.3,-0.8){$\scriptstyle \nu_1$};
        \node[ line width=0.6pt, dashed, draw opacity=0.5] (a) at (0,-0.3){$\scriptstyle z_1$};
        \node[ line width=0.6pt, dashed, draw opacity=0.5] (a) at (0,-2.8){$\scriptstyle w_{l+1}$};
        \node[ line width=0.6pt, dashed, draw opacity=0.5] (a) at (0,0.3){$\scriptstyle u_1$};
        \node[ line width=0.6pt, dashed, draw opacity=0.5] (a) at (0,2.8){$\scriptstyle v_{l+1}$};
        \node[ line width=0.6pt, dashed, draw opacity=0.5] (a) at (0.3,1.4){$\scriptstyle \gamma_k$};
        \node[ line width=0.6pt, dashed, draw opacity=0.5] (a) at (0.3,0.8){$\scriptstyle \gamma_1$};
    \end{tikzpicture}
\end{aligned}\right) & = \; \sum_{s_i,t_j}\sum_{\sigma_i,\rho_j}  \sqrt{\frac{d_{t_{l+1}}}{d_{a_1}\cdots d_{a_k} d_{s_1} d_{b_1}\cdots d_{b_l}}}   \nonumber \\
& \quad \quad \begin{aligned}
    \begin{tikzpicture}[scale=0.6]
    \filldraw[black!60, fill=gray!15, dotted, even odd rule] (0,0) circle[radius=0.5] (0,0) circle[radius=2.6];
         \draw[line width=1pt,cyan] (0,0.5)--(0,2.6);
         \draw[line width=1pt,cyan] (0,-0.5)--(0,-2.6);
         \draw[line width=.6pt,red] (0,0.8) arc[start angle=90, end angle=270, radius=0.8];
         \draw[line width=.6pt,red] (0,1.5) arc[start angle=90, end angle=270, radius=1.5];
        \draw[line width=.6pt,blue] (0,1.7) arc[start angle=90, end angle=-90, radius=1.7];
        \draw[line width=.6pt,blue] (0,2.4) arc[start angle=90, end angle=-90, radius=2.4];
        \node[ line width=0.6pt, dashed, draw opacity=0.5] (a) at (-0.3,1.7){$\scriptstyle \zeta_1$};
         \node[ line width=0.6pt, dashed, draw opacity=0.5] (a) at (-0.3,2.4){$\scriptstyle \zeta_l$};
         \node[ line width=0.6pt, dashed, draw opacity=0.5] (a) at (-0.3,-1.7){$\scriptstyle \rho_1$};
        \node[ line width=0.6pt, dashed, draw opacity=0.5] (a) at (-0.3,-2.4){$\scriptstyle \rho_l$};
        \node[ line width=0.6pt, dashed, draw opacity=0.5] (a) at (-1.1,0){{\color{red}$\scriptstyle \cdots$}};
        \node[ line width=0.6pt, dashed, draw opacity=0.5] (a) at (1.4,0){$\scriptstyle b_1$};
         \node[ line width=0.6pt, dashed, draw opacity=0.5] (a) at (2.1,0){{\color{blue}$\scriptstyle \cdots$}};
        \node[ line width=0.6pt, dashed, draw opacity=0.5] (a) at (2.7,0){$\scriptstyle b_l$};
        \node[ line width=0.6pt, dashed, draw opacity=0.5] (a) at (-.4,0){$\scriptstyle a_1$};
        \node[ line width=0.6pt, dashed, draw opacity=0.5] (a) at (-1.8,0){$\scriptstyle a_k$};
        \node[ line width=0.6pt, dashed, draw opacity=0.5] (a) at (0.35,-1.5){$\scriptstyle \sigma_k$};
        \node[ line width=0.6pt, dashed, draw opacity=0.5] (a) at (0.35,-0.8){$\scriptstyle \sigma_1$};
        \node[ line width=0.6pt, dashed, draw opacity=0.5] (a) at (0,-0.3){$\scriptstyle s_1$};
        \node[ line width=0.6pt, dashed, draw opacity=0.5] (a) at (0,-2.8){$\scriptstyle t_{l+1}$};
        \node[ line width=0.6pt, dashed, draw opacity=0.5] (a) at (0,0.3){$\scriptstyle u_1$};
         \node[ line width=0.6pt, dashed, draw opacity=0.5] (a) at (0,2.8){$\scriptstyle v_{l+1}$};
        \node[ line width=0.6pt, dashed, draw opacity=0.5] (a) at (0.3,1.4){$\scriptstyle \gamma_k$};
        \node[ line width=0.6pt, dashed, draw opacity=0.5] (a) at (0.3,0.8){$\scriptstyle \gamma_1$};
    \end{tikzpicture}
\end{aligned} \otimes 
\begin{aligned}
    \begin{tikzpicture}[scale=0.6]
    \filldraw[black!60, fill=gray!15, dotted, even odd rule] (0,0) circle[radius=0.5] (0,0) circle[radius=2.6];
         \draw[line width=1pt,cyan] (0,0.5)--(0,2.6);
         \draw[line width=1pt,violet] (0,-0.5)--(0,-2.6);
         \draw[line width=.6pt,red] (0,0.8) arc[start angle=90, end angle=270, radius=0.8];
         \draw[line width=.6pt,red] (0,1.5) arc[start angle=90, end angle=270, radius=1.5];
        \draw[line width=.6pt,blue] (0,1.7) arc[start angle=90, end angle=-90, radius=1.7];
        \draw[line width=.6pt,blue] (0,2.4) arc[start angle=90, end angle=-90, radius=2.4];
        \node[ line width=0.6pt, dashed, draw opacity=0.5] (a) at (-0.3,1.7){$\scriptstyle \rho_1$};
         \node[ line width=0.6pt, dashed, draw opacity=0.5] (a) at (-0.3,2.4){$\scriptstyle \rho_l$};
         \node[ line width=0.6pt, dashed, draw opacity=0.5] (a) at (-0.3,-1.75){$\scriptstyle \mu_1$};
        \node[ line width=0.6pt, dashed, draw opacity=0.5] (a) at (-0.3,-2.4){$\scriptstyle \mu_l$};
        \node[ line width=0.6pt, dashed, draw opacity=0.5] (a) at (-1.1,0){{\color{red}$\scriptstyle \cdots$}};
        \node[ line width=0.6pt, dashed, draw opacity=0.5] (a) at (1.4,0){$\scriptstyle b_1$};
         \node[ line width=0.6pt, dashed, draw opacity=0.5] (a) at (2.1,0){{\color{blue}$\scriptstyle \cdots$}};
        \node[ line width=0.6pt, dashed, draw opacity=0.5] (a) at (2.7,0){$\scriptstyle b_l$};
        \node[ line width=0.6pt, dashed, draw opacity=0.5] (a) at (-.4,0){$\scriptstyle a_1$};
        \node[ line width=0.6pt, dashed, draw opacity=0.5] (a) at (-1.8,0){$\scriptstyle a_k$};
        \node[ line width=0.6pt, dashed, draw opacity=0.5] (a) at (0.3,-1.5){$\scriptstyle \nu_k$};
        \node[ line width=0.6pt, dashed, draw opacity=0.5] (a) at (0.3,-0.8){$\scriptstyle \nu_1$};
        \node[ line width=0.6pt, dashed, draw opacity=0.5] (a) at (0,-0.3){$\scriptstyle z_1$};
        \node[ line width=0.6pt, dashed, draw opacity=0.5] (a) at (0,-2.8){$\scriptstyle w_{l+1}$};
        \node[ line width=0.6pt, dashed, draw opacity=0.5] (a) at (0,0.3){$\scriptstyle s_1$};
         \node[ line width=0.6pt, dashed, draw opacity=0.5] (a) at (0,2.8){$\scriptstyle t_{l+1}$};
        \node[ line width=0.6pt, dashed, draw opacity=0.5] (a) at (0.35,1.4){$\scriptstyle \sigma_k$};
        \node[ line width=0.6pt, dashed, draw opacity=0.5] (a) at (0.35,0.8){$\scriptstyle \sigma_1$};
    \end{tikzpicture}
\end{aligned}\;, \label{eq:tube_coprod}
\end{align}
where in the lower half of the first component the wall labels are ordered as $s_1,\cdots,s_k$, $t_1,\cdots,t_{l+1}$ from top to bottom, and in the upper half of the second component the wall labels are ordered as $s_1,\cdots,s_k$, $t_1,\cdots,t_{l+1}$ from bottom to top. The map $\rho$ is analogously defined. This construction is motivated by the same reason as in Remark~\ref{rmk:coaction}: inserting a wall labeled by $\sum_{s_1\in \Irr(\eN)}s_1$ and applying sequential F-moves. 
Intuitively, the left coaction $\beta$ corresponds to fusing a multimodule domain wall excitation from the $\eN$ side onto the defect, while the right coaction $\rho$ corresponds to fusing a multimodule domain wall excitation from the $\eM$ side. 
The comodule algebra structure then encodes how these fusions are compatible with the defect algebra multiplication.

\begin{theorem}
    The linear morphisms $\beta$ and $\rho$ endow  $\mathbf{Tube}({_{\{\eC_{\alpha}\}}} \eM_{\{\eD_{\beta}\}};{_{\{\eC_{\alpha}\}}} \eN_{\{\eD_{\beta}\}})$ with the structure of a $\mathbf{Tube}({_{\{\eC_{\alpha}\}}} \eN_{\{\eD_{\beta}\}})|\mathbf{Tube}({_{\{\eC_{\alpha}\}}} \eM_{\{\eD_{\beta}\}})$-bicomodule algebra. 
\end{theorem}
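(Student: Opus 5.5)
We reduce the claim to the boundary case via folding, then check the remaining axioms directly on the tube basis as a safeguard. Using the folding correspondence recalled above, $\eM$ and $\eN$ become left module categories over $\eE:=(\boxtimes_\alpha\eC_\alpha)\boxtimes(\boxtimes_\beta\eD_\beta^{\rm rev})$. Under this correspondence, $\mathbf{Tube}({_{\{\eC_{\alpha}\}}} \eM_{\{\eD_{\beta}\}};{_{\{\eC_{\alpha}\}}} \eN_{\{\eD_{\beta}\}})$ is identified with the boundary defect tube algebra $\mathbf{Tube}({_{\eE}}\eM;{_{\eE}}\eN)$ of Section~\ref{Sec:BoundaryTube}: the nested arcs $a_1,\dots,a_k,b_1,\dots,b_l$ are collapsed into a single $\eE$-arc labelled $a_1\boxtimes\cdots\boxtimes b_l$, and the chain of vertices $\gamma_1,\dots,\gamma_k,\zeta_1,\dots,\zeta_l$ becomes a single vertex in $\Hom_\eN(a\otimes u_1,v_{l+1})$, after summing over intermediate labels. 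We must check that this identification is an algebra isomorphism intertwining $\beta,\rho$ with the boundary coactions. If so, the theorem follows from the corresponding boundary bicomodule algebra result proved earlier, together with the fact that the multimodule tube algebras are weak Hopf algebras as shown in \cite{jia2025tube}.

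For the direct verification we proceed in four steps.
\begin{enumerate}
\item Show that $\beta$ and $\rho$ are well defined. Because the inserted wall labels $s_i,t_j$ and vertex labels $\sigma_i,\rho_j$ are summed over orthonormal bases, the formula \eqref{eq:tube_coprod} is the F-move expansion of inserting $\sum_{s_1}s_1$ into a single diagram, so it is basis independent.
\item Prove coassociativity, $(\Delta\otimes\id)\beta=(\id\otimes\beta)\beta$, and the analogous identity for $\rho$. Both sides equal the diagram with two inserted walls, resolved by sequential F-moves. The comparison reduces to matching the normalization $\sqrt{d_{t_{l+1}}/(d_{a_1}\cdots d_{b_l}d_{s_1})}$ against the coproduct normalization of $\mathbf{Tube}({_{\{\eC_{\alpha}\}}}\eN_{\{\eD_{\beta}\}})$ from \cite{jia2025tube}; these factors telescope.
\item Prove counitality, $(\varepsilon\otimes\id)\beta=\id$. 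This follows because the counit evaluates the closed tube diagram, and that evaluation is a bubble removal which cancels the square-root factors.
\item Prove the bicomodule compatibility $(\id\otimes\rho)\beta=(\beta\otimes\id)\rho$. The two insertions act on opposite (upper $\eN$ versus lower $\eM$) halves of the annulus and involve disjoint vertices, so the F-moves commute.
\end{enumerate}

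The main obstacle is the comodule-algebra condition. We need $\beta(xy)=\beta(x)\beta(y)$ and $\beta(1)=(\Delta(1))$-type unit compatibility in the weak sense: $\beta(1)=(1_{(1)}\otimes 1_{(2)}\cdot1)$ restricted appropriately, i.e.\ $\beta(1)\in \mathbf{Tube}(\eN)\otimes\mathbf{Tube}(\eM;\eN)$ equals the image of $\Delta(1)$ acting on $1$. For multiplicativity, we stack two tubes as in \eqref{eq:Ntube_prod_1} and insert a cutting wall. On one side the wall is inserted into the glued tube and expanded; on the other, walls are inserted into each factor separately and the results are multiplied in each tensor slot. We plan to show these agree by using the completeness relation $\sum_{s,\sigma}\sqrt{d_s/(d_ad_x)}\,|\sigma\rangle\langle\sigma|=\id_{a\otimes x}$ in the module category $\eN$ to split the single inserted wall across the gluing seam. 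The delta constraints $\delta_{u_1,v'_{l+1}}\delta_{z_1,w'_{l+1}}$ then match the multiplication deltas in both tensor factors. For the unit condition, the weak setting means $\beta(1)\ne1\otimes1$, and we would compare $\beta(1)$ with $\Delta(1_{\mathbf{Tube}(\eN)})$ acting on $1$, again via the same completeness relation. If the index bookkeeping for general $k,l$ becomes unwieldy, we would fall back on the folding reduction: prove the multiplicativity only for $k=1,l=0$ (the boundary case) and transport it across the folding isomorphism.
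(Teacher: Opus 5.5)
Your plan is sound, but your main route is not the one the paper takes. The paper treats everything except multiplicativity as immediate. It justifies $\beta(XX')=\beta(X)\beta(X')$ by rerunning the domain-wall computation with multimodule F-symbols, where the paired sums over the inserted $\eN$-labels collapse by unitarity of the F-symbols. Equivalently, it appeals to the wall-insertion picture, and your fallback is exactly this argument.

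You instead fold into $\eE$ and transport the boundary theorem. This spares you all multimodule F-symbol bookkeeping. The cost is that you need a \emph{strict} algebra isomorphism, not just a correspondence. It must hold for the defect tube and for both excitation tubes, and it must intertwine $\beta,\rho$ and the coproducts (the coproduct case is just $\eM=\eN$). The paper only asserts the folding correspondence, and in its representation-theoretic use it allows for merely a Morita context, which would not transport a coaction.

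The strict isomorphism is available. Define the $\eE$-module associators from the multimodule constraints, so that the $\eE$ gluing rule reduces the same diagram by the same moves. The chain normalizations $\sqrt{d_{a_1}d_{u_1}/d_{u_2}}\cdots\sqrt{d_{v_l}d_{b_l}/d_{v_{l+1}}}$ telescope to the single folded-vertex normalization $\sqrt{d_{a_1}\cdots d_{b_l}\,d_{u_1}/d_{v_{l+1}}}$. The coaction prefactors match because $d_{a_1\boxtimes\cdots\boxtimes b_l}=d_{a_1}\cdots d_{b_l}$.

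Two minor corrections. First, coassociativity, counitality and $(\id\otimes\rho)\beta=(\beta\otimes\id)\rho$ need no F-moves at all. Since $\beta$ and $\rho$ are given directly on the basis, both sides are literally the same triple sum (or, for the counit, a delta times a cancelling scalar).

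Second, your unit condition ``$\Delta(1)$ acting on $1$'' is not well typed, because $\mathbf{Tube}({_{\{\eC_{\alpha}\}}}\eN_{\{\eD_{\beta}\}})$ does not act on the defect tube algebra. The appropriate weak condition, for instance $(\id\otimes\beta)\beta(1)=(\Delta(1)\otimes 1)(1\otimes\beta(1))$, is immediate from $\beta(1)=\sum_{z,s,x}E^{\eN}_{z,s}\otimes E_{s,x}$. Here $E_{s,x}$ is the unit-arc defect tube with upper label $s$ and lower label $x$, and $E^{\eN}_{z,s}$ is its analogue in the excitation tube algebra.
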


\begin{proof}
    The only nontrivial part is to show that $\mathbf{Tube}({_{\{\eC_{\alpha}\}}} \eM_{\{\eD_{\beta}\}};{_{\{\eC_{\alpha}\}}} \eN_{\{\eD_{\beta}\}})$ is a left (resp. right) comodule algebra with respect to $\beta$ (resp. $\rho$), i.e., $\beta(XX')=\beta(X)\beta(X')$. However, this presents no essential difficulty, as the proof is almost the same as that for the domain wall defect tube algebra, except that the structural constants involved (e.g., the multimodule F-symbols) are more complicated and require careful handling.  
    Alternatively, this can be understood from a different perspective using Remark~\ref{rmk:coaction}, where the coaction $\beta$ is obtained by inserting  a wall labeled by $\sum_{s_1\in \Irr(\eN)}s_1$ and then applying F-moves. We omit the detailed verification here. 
\end{proof}

\begin{remark}[Morita theory]
One can also develop a Morita theory for multimodule defect tube algebras in a completely parallel way to the boundary and domain wall cases. 
For a multimodule defect tube algebra, one may introduce generalized tube spaces with prescribed numbers of external and internal legs in each bulk sector:
\begin{equation} \label{eq:tube_space}
    \mathbf{T}^{\mathbf{s},\mathbf{r};\mathbf{m},\mathbf{n}} = \operatorname{span}_\mathbb{C}
    \left\{
    \begin{aligned}
        \begin{tikzpicture}[scale=0.6]
            \filldraw[black!60, fill=gray!15, dotted, even odd rule] (0,0) circle[radius=0.5] (0,0) circle[radius=2.6];
            \draw[line width=1pt,cyan] (0,0.5)--(0,2.6);
            \draw[line width=1pt,violet] (0,-0.5)--(0,-2.6);
            \draw[red] (0,0.8) arc[start angle=90, end angle=270, radius=0.8];
            \draw[red] (0,1.3) arc[start angle=90, end angle=270, radius=1.3];
            \draw[blue] (0,1.7) arc[start angle=90, end angle=-90, radius=1.7];
            \draw[blue] (0,2.2) arc[start angle=90, end angle=-90, radius=2.2];
            \draw[red,line width=.6pt] (120:0.8) -- (120:1.1);
            \draw[red,line width=.6pt] (150:0.8) -- (150:0.5);
            \draw[red,line width=.6pt] (210:0.8) -- (210:1.1);
            \draw[red,line width=.6pt] (240:0.8) -- (240:0.5);
            \draw[red,line width=.6pt] (105:1.6) -- (105:1.3);
            \draw[red,line width=.6pt] (135:1) -- (135:1.3);
            \draw[red,line width=.6pt] (165:1.6) -- (165:1.3);
            \draw[red,line width=.6pt] (195:1) -- (195:1.3);
            \draw[red,line width=.6pt] (225:1.6) -- (225:1.3);
            \draw[red,line width=.6pt] (255:1) -- (255:1.3);
            \draw[blue,line width=.6pt] (60:1.7) -- (60:1.4);
            \draw[blue,line width=.6pt] (30:1.7) -- (30:2);
            \draw[blue,line width=.6pt] (0:1.7) -- (0:1.4);
            \draw[blue,line width=.6pt] (-30:1.7) -- (-30:2);
            \draw[blue,line width=.6pt] (-60:1.7) -- (-60:1.4);
            \draw[blue,line width=.6pt] (75:2.2) -- (75:2.5);
            \draw[blue,line width=.6pt] (45:2.2) -- (45:1.9);
            \draw[blue,line width=.6pt] (15:2.2) -- (15:2.5);
            \draw[blue,line width=.6pt] (-15:2.2) -- (-15:1.9);
            \draw[blue,line width=.6pt] (-45:2.2) -- (-45:2.5);
            \draw[blue,line width=.6pt] (-75:2.2) -- (-75:1.9);
            \node[ line width=0.6pt, dashed, draw opacity=0.5] (a) at (-1,0){{\color{red}$\scriptstyle \cdots$}};
            \node[ line width=0.6pt, dashed, draw opacity=0.5] (a) at (2,0){{\color{blue}$\scriptstyle \cdots$}};
            \node[ line width=0.6pt, dashed, draw opacity=0.5] (a) at (-1,1.3){{\color{red}$\scriptstyle s_{1}$}};
            \node[ line width=0.6pt, dashed, draw opacity=0.5] (a) at (-0.93,0.45){{\color{red}$\scriptstyle r_1$}};
            \node[ line width=0.6pt, dashed, draw opacity=0.5] (a) at (-0.55,-0.8){{\color{red}$\scriptstyle s_{k}$}};
            \node[ line width=0.6pt, dashed, draw opacity=0.5] (a) at (-0.5,-0.1){{\color{red}$\scriptstyle r_k$}};
            \node[ line width=0.6pt, dashed, draw opacity=0.5] (a) at (1.1,2.2){{\color{blue}$\scriptstyle m_{1}$}};
            \node[ line width=0.6pt, dashed, draw opacity=0.5] (a) at (1.1,1.65){{\color{blue}$\scriptstyle n_1$}};
            \node[ line width=0.6pt, dashed, draw opacity=0.5] (a) at (1.58,-1.15){{\color{blue}$\scriptstyle m_{l}$}};
            \node[ line width=0.6pt, dashed, draw opacity=0.5] (a) at (0.8,-1){{\color{blue}$\scriptstyle n_l$}};
        \end{tikzpicture}
    \end{aligned}\; : \; \text{edge}\in \Irr, \text{vertex}\in \Hom 
    \right\},
\end{equation}
where $\mathbf{s}=(s_1,\cdots,s_{k}),  \mathbf{r} = (r_1,\cdots,r_k) \in \mathbb{Z}^{k}_{\geq 0}$, and $\mathbf{m} = (m_1,\cdots,m_l), \mathbf{n} = (n_1,\cdots,n_l)\in \mathbb{Z}^{l}_{\geq 0}$. 
When these numbers agree sectorwise ($\mathbf{s} = \mathbf{r}$, $\mathbf{m} = \mathbf{n}$), the corresponding tube space carries an algebra structure; in general, such tube spaces form bimodules between the associated algebras. 
By stacking tubes and imposing the matching conditions independently in each bulk sector, one obtains Morita contexts relating tube algebras with different choices of auxiliary legs. 
Thus the Morita equivalence class of a multimodule defect tube algebra is independent of these auxiliary choices and depends only on the underlying multimodule defect data.
\end{remark}

\subsection{Multimodule domain wall defects as representations of defect tube algebra}

Before turning to the disk space construction, we record the algebraic ingredient needed for the representation theory of the multimodule defect tube algebra. The factorization map and the antipode-like map introduced for boundary and domain wall defect tube algebras admit direct multimodule analogues. Given three multimodule categories $\eM,\eK$, and $\eN$ over the same families $\{\eC_\alpha\}_{\alpha\in I}$ and $\{\eD_\beta\}_{\beta\in J}$, one has a factorization map
\begin{align*}
    \delta_{\eM,\eK,\eN}:&\; 
    \Tube({}_{\{\eC_\alpha\}}\eM_{\{\eD_\beta\}};{}_{\{\eC_\alpha\}}\eN_{\{\eD_\beta\}}) \\
    & \to
    \Tube({}_{\{\eC_\alpha\}}\eK_{\{\eD_\beta\}};{}_{\{\eC_\alpha\}}\eN_{\{\eD_\beta\}})
    \otimes
    \Tube({}_{\{\eC_\alpha\}}\eM_{\{\eD_\beta\}};{}_{\{\eC_\alpha\}}\eK_{\{\eD_\beta\}})
\end{align*}
and an antipode-like map
\begin{equation}
    \notag
    s_{\eM,\eN}:
    \Tube({}_{\{\eC_\alpha\}}\eM_{\{\eD_\beta\}};{}_{\{\eC_\alpha\}}\eN_{\{\eD_\beta\}})
    \to
    \Tube({}_{\{\eC_\alpha\}}\eN_{\{\eD_\beta\}};{}_{\{\eC_\alpha\}}\eM_{\{\eD_\beta\}}).
\end{equation}
Graphically, the factorization map is obtained by inserting an intermediate multimodule wall labeled by $\eK$, resolving the identity in every bulk sector, and cutting the original tube into two compatible defect tubes. The antipode-like map is obtained by the same reversal operation used in the boundary and domain wall cases, applied simultaneously to all bulk labels and multimodule vertices. Since these constructions are direct extensions of those given previously, we omit their explicit expansions in the multimodule tube basis.

The resulting maps satisfy the multimodule analogues of the structural identities established in the boundary and domain wall settings. In particular, successive factorizations are compatible, the factorization maps preserve multiplication, the antipode-like maps reverse multiplication, and the corresponding counital and factorization identities continue to hold. The factorization maps are also compatible with the left and right coactions introduced above. Their graphical verifications use the same resolutions of the identity, F-moves, and loop removals as before, now carried out simultaneously in the $\eC_\alpha$- and $\eD_\beta$-bulks. The coherence conditions of the multimodule category ensure that the order in which these local moves are performed does not affect the resulting diagram.

These structural maps allow the separability idempotent of the multimodule defect tube algebra to be constructed in the same way as in the boundary and domain wall cases.

\begin{theorem}
    \label{thm:sep-idem-multimodule}
    Let $\lambda \in \Tube({}_{\{\eC_\alpha\}}\eM_{\{\eD_\beta\}})$ be the Haar integral of the weak Hopf multimodule tube algebra. The multimodule defect tube algebra $\Tube({}_{\{\eC_\alpha\}}\eM_{\{\eD_\beta\}};{}_{\{\eC_\alpha\}}\eN_{\{\eD_\beta\}})$ has a separability idempotent given by
    \begin{equation}
        \label{eq:sep-idem-multimodule}
        \Upsilon :=\left(s_{\eN,\eM}\otimes\id\right)\circ \delta_{\eM,\eN,\eM} \left(\lambda\right)=\sum_{\langle\lambda\rangle}
        s_{\eN,\eM}\left(\lambda^{\langle 1\rangle}\right) \otimes \lambda^{\langle 2\rangle}.
    \end{equation}
    Consequently, $\Tube({}_{\{\eC_\alpha\}}\eM_{\{\eD_\beta\}};{}_{\{\eC_\alpha\}}\eN_{\{\eD_\beta\}})$ is separable and semisimple.
\end{theorem}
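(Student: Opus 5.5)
We have to show that $\Upsilon=\sum s_{\eN,\eM}(\lambda^{\langle1\rangle})\otimes\lambda^{\langle2\rangle}$ is a separability idempotent for $A:=\Tube({}_{\{\eC_\alpha\}}\eM_{\{\eD_\beta\}};{}_{\{\eC_\alpha\}}\eN_{\{\eD_\beta\}})$. The plan follows the boundary and domain wall arguments and uses only the structural identities listed before the theorem.

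First, a type check. The factorization map is $\delta_{\eM,\eN,\eM}:\Tube(\eM;\eM)\to\Tube(\eN;\eM)\otimes\Tube(\eM;\eN)$. Applying $s_{\eN,\eM}$ to the first factor sends it to $\Tube(\eM;\eN)=A$. Hence $\Upsilon\in A\otimes A$.

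There are two identities to prove.
\begin{enumerate}
\item[(i)] The multiplication identity $\mu(\Upsilon)=1_A$.
\item[(ii)] The centrality identity $(X\otimes 1)\Upsilon=\Upsilon(1\otimes X)$ for all $X\in A$.
\end{enumerate}

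\textbf{Identity (i).} Here I use the counital and factorization identities: $\mu\circ(s\otimes\id)\circ\delta_{\eM,\eN,\eM}$ should equal the map $\Tube(\eM)\to A$ given by a target-type counit. Concretely, it sends $Y$ to $\varepsilon_t$-like data of $Y$ placed as unit-labelled strings on the $\eN/\eM$ walls. This is the analogue of $\sum S(h_{(1)})h_{(2)}=\varepsilon_s(h)$.

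It then suffices to know $\varepsilon_s(\lambda)=1$ for the Haar integral of the weak Hopf algebra $\Tube({}_{\{\eC_\alpha\}}\eM_{\{\eD_\beta\}})$. This is standard: the Haar integral is normalized, $\lambda$ is a sum of loop diagrams weighted by $d_a/\dim$, and the loops evaluate to $1$. I would verify this graphically, bulk sector by bulk sector. The $\eC_\alpha$ and $\eD_\beta$ sectors factorize because the multimodule coherence lets F-moves in different sectors commute.

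\textbf{Identity (ii).} This is the main obstacle. My first attempt would be to write $\lambda$ explicitly as $\sum_{a_i,b_j}\prod d_{a_i}\prod d_{b_j}/\dim$ times the tube with full loops. Then $\delta_{\eM,\eN,\eM}(\lambda)$ is obtained by inserting a resolved $\eN$-wall, i.e.\ $\sum_{s}$ together with F-moves.

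Left-multiplying the first factor by $X$ corresponds to stacking $X$ outside the reversed tube. I would then use the key invariance property of $\lambda$: the integral absorbs any tube, $h\lambda=\varepsilon_t(h)\lambda$. Pushed through $\delta$, this gives a sliding relation. Since $\delta$ preserves multiplication, for $X\in A$ and $s(X)$ one compares $\delta_{\eM,\eN,\eM}(s(X)\cdot Z)$ with $\delta(Z\cdot s(X))$.

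Combining this with the anti-multiplicativity of $s$ and the compatibility of successive factorizations, I expect to move $X$ from the left leg to the right leg. If the algebraic bookkeeping becomes tangled, I would fall back on the graphical argument. In the glued annulus representing $\Upsilon$, the closed bulk loops of $\lambda$ in each sector act as projectors (the $\omega$-loop or "handle-slide" lemma). This lets the strings of $X$ slide across the inserted $\eN$-wall, one sector at a time, each by the same F-move and loop-removal computation as in the domain wall case.

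\textbf{Conclusion.} With (i) and (ii) established, $A$ has a separability idempotent and is therefore separable. Over $\Cbb$, separability implies semisimplicity by the standard averaging argument. Given an $A$-submodule $W\subseteq V$ and any linear projection $p:V\to W$, the map $\sum \Upsilon^{(1)}\,p\,\Upsilon^{(2)}$ is an $A$-linear projection onto $W$, so every submodule has a complement.
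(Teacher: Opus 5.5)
Your overall architecture matches the paper's. You normalize by transporting $\varepsilon_s(\lambda)=1$ to $A$, you get centrality from Haar invariance plus the structural identities of $\delta$ and $s$, and you finish with the averaging argument. Step (i) is fine. The relevant map is the source-type one, $\varepsilon'_R(h):=s(h^{\langle1\rangle})h^{\langle2\rangle}$, which is $\varepsilon_R(h)$ with the outer wall labels resummed over $\Irr(\eN)$; so $\varepsilon_R(\lambda)=1$ forces $\varepsilon'_R(\lambda)=1_A$, and calling it ``target-type'' is a slip. The semisimplicity step is also fine.

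The gap is in (ii), which is the heart of the theorem. The comparison you propose, $\delta_{\eM,\eN,\eM}(s(X)\cdot Z)$ versus $\delta(Z\cdot s(X))$, is ill-typed. Here $s(X)=s_{\eM,\eN}(X)\in\Tube(\eN;\eM)$, tubes multiply only inside a single algebra $\Tube(\eK;\eK')$, and $\delta_{\eM,\eN,\eM}$ only accepts elements of $\Tube(\eM;\eM)$. Likewise, the absorption $h\lambda=\varepsilon_t(h)\lambda$ applies only to $h\in\Tube(\eM)$, whereas $X\in A$. So nothing in your sketch lets $\lambda$ act on $X$, and the identities that close the argument are not on your list. ``I expect to move $X$ from the left leg to the right leg'' is exactly what needs proof.

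The missing link is the $\Tube(\eM)$-coaction, combined with the \emph{right} invariance $\lambda h=\lambda\varepsilon_R(h)$. The left invariance you quote leads instead to the twisted relation $(1\otimes X)\Upsilon=\Upsilon(s^2(X)\otimes 1)$. For $x\in\Tube(\eN;\eM)$, set $x^{[-1]}\otimes x^{[0]}:=\delta_{\eN,\eM,\eM}(x)\in\Tube(\eM)\otimes\Tube(\eN;\eM)$ and evaluate one expression in two ways:
\[
\lambda^{\langle1\rangle}x\otimes\lambda^{\langle2\rangle}
=\delta_{\eM,\eN,\eM}(\lambda x^{[-1]})\,(1\otimes s(x^{[0]}))
=\lambda^{\langle1\rangle}\otimes\lambda^{\langle2\rangle}s(x).
\]
The first equality uses multiplicativity of $\delta$ (including $\delta(\lambda)\delta(1)=\delta(\lambda)$) and the unit--counit identity $\delta(x^{[-1]})(1\otimes s(x^{[0]}))=1^{\langle1\rangle}x\otimes 1^{\langle2\rangle}$. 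The second uses $\lambda x^{[-1]}=\lambda\varepsilon_R(x^{[-1]})$, then $\delta(\varepsilon_R(h))=1^{\langle1\rangle}\otimes 1^{\langle2\rangle}\varepsilon'_R(h)$ and $\varepsilon'_R(x^{[-1]})s(x^{[0]})=s(x)$.

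Applying $s_{\eN,\eM}\otimes\id$ and anti-multiplicativity gives $(X\otimes 1)\Upsilon=\Upsilon(1\otimes X)$ for $X=s(x)$. Since $s_{\eN,\eM}$ is bijective (it reverses basis tubes up to nonzero scalars), this covers all of $A$. These unit--counit identities are exactly the ``structural identities'' the paper's proof invokes. Your handle-slide fallback may be workable, but as stated it is a heuristic, not a substitute for this computation.
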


\begin{proof}
    The proof is parallel to those of boundary and domain wall cases. The invariance of the Haar integral, together with the structural identities of the factorization map and the antipode-like map, gives the centrality
    condition. The normalization is verified by composing the two tensor factors. The calculation is the same as in the boundary and
    domain wall cases, except that the resolutions of the identity and the corresponding loop removals are performed in every bulk sector. The mutual coherence of the multimodule actions allows these reductions to be carried out independently, and the product of their coefficients is precisely the coefficient appearing in the Haar integral $\lambda$. Hence $\Upsilon$ is a separability
    idempotent. Since $\Tube({}_{\{\eC_\alpha\}}\eM_{\{\eD_\beta\}};{}_{\{\eC_\alpha\}}\eN_{\{\eD_\beta\}})$ is finite-dimensional,
    separability implies semisimplicity.
\end{proof}

The preceding theorem supplies the averaging element required for generalized Schur orthogonality and, at the same time, ensures the semisimplicity of the multimodule defect tube algebra. We now turn to its finite-dimensional representations. As in the boundary and domain wall settings, a multimodule functor gives rise to a disk space on which the defect tube algebra acts. We will then identify intertwiners of these representations with multimodule natural transformations and establish the resulting equivalence of categories. To be precise, for a multimodule functor $\mathfrak f:\eM\to\eN$, define the disk space
\begin{equation}
    \mathcal{H}^\fff = \bigoplus_{s\in \Irr(\eM)}\bigoplus_{t\in \Irr(\eN)} \Hom_{\eN}(\fff(s),t) = \operatorname{span}\left\{\begin{aligned}
        \begin{tikzpicture}[scale=0.85]
            \filldraw[black!60, fill=gray!15, dotted] (0,0) circle (1); 
            \draw[cyan, line width=1pt] (0,0) -- (0,1); 
            \draw[cyan, line width=1pt, -latex] (0,0) -- (0,0.66); 
            \draw[violet, line width=1pt] (0,-1) -- (0,0); 
            \draw[violet, line width=1pt, -latex] (0,-1) -- (0,-0.4); 
            \draw[black, line width=0.6pt, -latex] (1,0) -- (0.55,0); 
            \draw[black, line width=0.6pt] (1,0) -- (0.75,0); 
            \draw[decorate, line width=0.6pt, decoration={snake, amplitude=2pt, segment length=6pt}] (0,0) -- (0.6,0);
            \node[ line width=0.6pt, dashed, draw opacity=0.5] at (-0.2,0.5){$\scriptstyle t$};
            \node[ line width=0.6pt, dashed, draw opacity=0.5] at (-0.2,-0.6){$\scriptstyle s$};
            \node[ line width=0.6pt, dashed, draw opacity=0.5] at (0.6,-0.3){$\scriptstyle \mathfrak{f}$};
            \node[ line width=0.6pt, dashed, draw opacity=0.5] at (-0.3,0){$\scriptstyle \alpha$};
            \node[ line width=0.6pt, dashed, draw opacity=0.5] at (0,0){$\scriptstyle \bullet$};
        \end{tikzpicture}
    \end{aligned}\right\}.
\end{equation}
As before, its elements can be represented in a disk as shown above. For simplicity, we omit arrows in what follows. 
The action of the multimodule defect tube algebra $\mathbf{Tube}({_{\{\eC_{\alpha}\}}} \eM_{\{\eD_{\beta}\}};{_{\{\eC_{\alpha}\}}} \eN_{\{\eD_{\beta}\}})$ on the disk vector space $\mathcal{H}^\mathfrak{f}$ is the same as that defined before. More precisely, by filling the disk representation of the morphism inside the tube representation of an algebra element, it is given as
\begin{equation}
    \notag
    \begin{aligned}
        \begin{tikzpicture}[scale=0.6]
            \filldraw[black!60, fill=gray!15, dotted, even odd rule] (0,0) circle[radius=0.5] (0,0) circle[radius=2.6];
            \draw[line width=1pt,cyan] (0,0.5)--(0,2.6);
            \draw[line width=1pt,violet] (0,-0.5)--(0,-2.6);
            \draw[line width=0.6pt,red] (0,0.8) arc[start angle=90, end angle=270, radius=0.8];
            \draw[line width=0.6pt,red] (0,1.5) arc[start angle=90, end angle=270, radius=1.5];
            \draw[line width=0.6pt,blue] (0,1.7) arc[start angle=90, end angle=-90, radius=1.7];
            \draw[line width=0.6pt,blue] (0,2.4) arc[start angle=90, end angle=-90, radius=2.4];
            \node[ line width=0.6pt, dashed, draw opacity=0.5] (a) at (-0.3,1.7){$\scriptstyle \zeta_1$};
            \node[ line width=0.6pt, dashed, draw opacity=0.5] (a) at (-0.3,2.4){$\scriptstyle \zeta_l$};
            \node[ line width=0.6pt, dashed, draw opacity=0.5] (a) at (-0.3,-1.75){$\scriptstyle \mu_1$};
            \node[ line width=0.6pt, dashed, draw opacity=0.5] (a) at (-0.3,-2.4){$\scriptstyle \mu_l$};
            \node[ line width=0.6pt, dashed, draw opacity=0.5] (a) at (-1.1,0){{\color{red}$\scriptstyle \cdots$}};
            \node[ line width=0.6pt, dashed, draw opacity=0.5] (a) at (1.4,0){$\scriptstyle b_1$};
            \node[ line width=0.6pt, dashed, draw opacity=0.5] (a) at (2.1,0){{\color{blue}$\scriptstyle \cdots$}};
            \node[ line width=0.6pt, dashed, draw opacity=0.5] (a) at (2.7,0){$\scriptstyle b_l$};
            \node[ line width=0.6pt, dashed, draw opacity=0.5] (a) at (-.4,0){$\scriptstyle a_1$};
            \node[ line width=0.6pt, dashed, draw opacity=0.5] (a) at (-1.8,0){$\scriptstyle a_k$};
            \node[ line width=0.6pt, dashed, draw opacity=0.5] (a) at (0.3,-1.5){$\scriptstyle \nu_k$};
            \node[ line width=0.6pt, dashed, draw opacity=0.5] (a) at (0.3,-0.8){$\scriptstyle \nu_1$};
            \node[ line width=0.6pt, dashed, draw opacity=0.5] (a) at (0,-0.3){$\scriptstyle z_1$};
            \node[ line width=0.6pt, dashed, draw opacity=0.5] (a) at (0,-2.8){$\scriptstyle w_{l+1}$};
            \node[ line width=0.6pt, dashed, draw opacity=0.5] (a) at (0,0.3){$\scriptstyle u_1$};
             \node[ line width=0.6pt, dashed, draw opacity=0.5] (a) at (0,2.8){$\scriptstyle v_{l+1}$};
            \node[ line width=0.6pt, dashed, draw opacity=0.5] (a) at (0.3,1.4){$\scriptstyle \gamma_k$};
            \node[ line width=0.6pt, dashed, draw opacity=0.5] (a) at (0.3,0.8){$\scriptstyle \gamma_1$};
        \end{tikzpicture}
    \end{aligned}
    \;\triangleright \; \begin{aligned}
        \begin{tikzpicture}[scale=0.85]
            \filldraw[black!60, fill=gray!15, dotted] (0,0) circle (1); 
            \draw[cyan, line width=1pt] (0,0) -- (0,1);  
            \draw[violet, line width=1pt] (0,-1) -- (0,0);  
            \draw[decorate, line width=0.6pt, decoration={snake, amplitude=2pt, segment length=6pt}] (0,0) -- (1,0);
            \node[ line width=0.6pt, dashed, draw opacity=0.5] at (-0.2,0.5){$\scriptstyle t$};
            \node[ line width=0.6pt, dashed, draw opacity=0.5] at (-0.2,-0.6){$\scriptstyle s$};
            \node[ line width=0.6pt, dashed, draw opacity=0.5] at (0.6,-0.3){$\scriptstyle \mathfrak{f}$};
            \node[ line width=0.6pt, dashed, draw opacity=0.5] at (-0.3,0){$\scriptstyle \alpha$};
            \node[ line width=0.6pt, dashed, draw opacity=0.5] at (0,0){$\scriptstyle \bullet$};
        \end{tikzpicture}
    \end{aligned} \; = \delta_{u_1,t}\delta_{z_1,s} \; \begin{aligned}
        \begin{tikzpicture}[scale=0.6]
            \filldraw[black!60, fill=gray!15, dotted] (0,0) circle[radius=2.6];
            \draw[line width=1pt,cyan] (0,0)--(0,2.6);
            \draw[line width=1pt,violet] (0,0)--(0,-2.6);
            \draw[line width=0.6pt,red] (0,0.8) arc[start angle=90, end angle=270, radius=0.8];
            \draw[line width=0.6pt,red] (0,1.5) arc[start angle=90, end angle=270, radius=1.5];
            \draw[line width=0.6pt,blue] (0,1.7) arc[start angle=90, end angle=-90, radius=1.7];
            \draw[line width=0.6pt,blue] (0,2.4) arc[start angle=90, end angle=-90, radius=2.4];
            \draw[decorate, line width=0.6pt, decoration={snake, amplitude=2pt, segment length=6pt}] (0,0) -- (1.2,0); 
            \node[ line width=0.6pt, dashed, draw opacity=0.5] (a) at (-0.3,1.7){$\scriptstyle \zeta_1$};
            \node[ line width=0.6pt, dashed, draw opacity=0.5] (a) at (-0.3,2.4){$\scriptstyle \zeta_l$};
            \node[ line width=0.6pt, dashed, draw opacity=0.5] (a) at (-0.3,-1.75){$\scriptstyle \mu_1$};
            \node[ line width=0.6pt, dashed, draw opacity=0.5] (a) at (-0.3,-2.4){$\scriptstyle \mu_l$};
            \node[ line width=0.6pt, dashed, draw opacity=0.5] (a) at (-1.1,0){{\color{red}$\scriptstyle \cdots$}};
            \node[ line width=0.6pt, dashed, draw opacity=0.5] (a) at (1.4,0.4){$\scriptstyle b_1$};
            \node[ line width=0.6pt, dashed, draw opacity=0.5] (a) at (2.1,0){{\color{blue}$\scriptstyle \cdots$}};
            \node[ line width=0.6pt, dashed, draw opacity=0.5] (a) at (2.7,0){$\scriptstyle b_l$};
            \node[ line width=0.6pt, dashed, draw opacity=0.5] (a) at (-0.9,0.6){$\scriptstyle a_1$};
            \node[ line width=0.6pt, dashed, draw opacity=0.5] (a) at (-1.8,0){$\scriptstyle a_k$};
            \node[ line width=0.6pt, dashed, draw opacity=0.5] (a) at (0.3,-1.5){$\scriptstyle \nu_k$};
            \node[ line width=0.6pt, dashed, draw opacity=0.5] (a) at (0.3,-0.8){$\scriptstyle \nu_1$};
            \node[ line width=0.6pt, dashed, draw opacity=0.5] (a) at (-0.25,-0.4){$\scriptstyle z_1$};
            \node[ line width=0.6pt, dashed, draw opacity=0.5] (a) at (0,-2.8){$\scriptstyle w_{l+1}$};
            \node[ line width=0.6pt, dashed, draw opacity=0.5] (a) at (-0.25,0.4){$\scriptstyle u_1$};
             \node[ line width=0.6pt, dashed, draw opacity=0.5] (a) at (0,2.8){$\scriptstyle v_{l+1}$};
            \node[ line width=0.6pt, dashed, draw opacity=0.5] (a) at (0.3,1.4){$\scriptstyle \gamma_k$};
            \node[ line width=0.6pt, dashed, draw opacity=0.5] (a) at (0.3,0.8){$\scriptstyle \gamma_1$};            
            \node[ line width=0.6pt, dashed, draw opacity=0.5] at (0.6,-0.4){$\scriptstyle \mathfrak{f}$};
            \node[ line width=0.6pt, dashed, draw opacity=0.5] at (-0.3,0){$\scriptstyle \alpha$};
            \node[ line width=0.6pt, dashed, draw opacity=0.5] at (0,0){$\scriptstyle \bullet$};
        \end{tikzpicture}
    \end{aligned} \;. 
\end{equation}
We denote this action by $\rho_\mathfrak{f}$. To evaluate the right-hand side, we need to operate the topological move inside the multifusion category $\eP$ which is constructed similar as before. In the case here, one introduces the folded fusion category $\eB:=(\boxtimes_{\alpha\in I}\eC_{\alpha} ) \boxtimes (\boxtimes_{\beta\in J}\eD_{\beta}^{\mathrm{rev}})$. 
Both $\eM$ and $\eN$ are treated as left $\eB$-module categories, and
$\mathfrak f$ becomes a $\eB$-module functor.  All labels involved in
the filled diagram can therefore be treated inside the multifusion
category
\begin{equation*}
    \eP := \bigoplus_{i,j=0}^{2} \Fun_{\eB}(\eM_i,\eM_j), \quad \eM_0=\eB,\quad \eM_1=\eM,\quad \eM_2=\eN.
\end{equation*}

\begin{proposition}
    The map $\rho_\mathfrak{f}:\mathbf{Tube}({_{\{\eC_{\alpha}\}}} \eM_{\{\eD_{\beta}\}};{_{\{\eC_{\alpha}\}}} \eN_{\{\eD_{\beta}\}}) \to \End(\mathcal{H}^\mathfrak{f})$ endows $\mathcal{H}^\mathfrak{f}$ with a structure of a representation over the multimodule defect tube algebra. 
\end{proposition}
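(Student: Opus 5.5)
To show that $\rho_\fff$ is a representation, I would check two things: that it is unital, $\rho_\fff(1)=\id_{\mathcal{H}^\fff}$, and that it is multiplicative, $\rho_\fff(XX')=\rho_\fff(X)\rho_\fff(X')$ on basis tubes $X,X'$ of the multimodule defect tube algebra. The cleanest route is to realize all diagrams inside the multifusion category $\eP=\bigoplus_{i,j}\Fun_{\eB}(\eM_i,\eM_j)$ introduced just above. There, the tube labels, the wall labels in $\eM$ and $\eN$, and the defect string $\fff$ are all objects of one category. The filled disk evaluates to a well-defined vector of $\Hom_\eN(\fff(s'),t')$ by the usual graphical calculus: resolve the identity on the strands crossing the wavy string, then apply the module F-moves and the half-braiding-type isomorphism coming from the $\eB$-module functor structure of $\fff$. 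Folding via $\eB=(\boxtimes_\alpha\eC_\alpha)\boxtimes(\boxtimes_\beta\eD_\beta^{\rm rev})$ reduces the multimodule situation to the boundary-defect case treated earlier. The only extra ingredient is that all $k+l$ bulk arcs are absorbed one after another, and the coherence of the multimodule structure guarantees the order does not matter.

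For unitality, substitute the unit \eqref{eq:mul_defect_tube_unit}. All arcs are dotted tensor-unit strings, so they can be erased. The sum over $x\in\Irr(\eM)$, $z\in\Irr(\eN)$ paired with the Kronecker deltas $\delta_{u_1,t}\delta_{z_1,s}$ selects exactly the component of the disk vector. The result is the vector itself.

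For multiplicativity, I would use the locality/topological invariance of the evaluation map:
\begin{itemize}
\item Acting with $X'$ first produces a disk whose inner region is the filled $X'$-tube. Its evaluation is a linear combination of standard disk vectors obtained from local moves supported inside the annulus of $X'$.
\item Acting with $X$ then glues the $X$-tube outside this disk.
\end{itemize}
Since the local moves inside the inner annulus commute with the presence of the outer tube, $\rho_\fff(X)\rho_\fff(X')(v)$ equals the evaluation of the disk filled by the concatenated tube of \eqref{eq:Ntube_prod_1}, with the same matching deltas $\delta_{u_1,v'_{l+1}}\delta_{z_1,w'_{l+1}}$. If the product in the tube basis is defined by further reducing the concatenated tube by F-moves, the reduction again consists of local moves inside the annulus and so does not change the evaluation. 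This gives $\rho_\fff(XX')$.

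The main obstacle is making ``evaluation is invariant under local moves'' rigorous in the presence of the defect string. Concretely, one must show that an F-move performed on the arcs of the tube yields the same vector whether or not those arcs must later be pulled across $\fff$. I would handle this with the naturality of the module structure constraint $\fff(b\triangleright m)\cong b\triangleright\fff(m)$ together with its pentagon-type compatibility with the associator of $\eB$. These are precisely the axioms of a $\eB$-module functor, and they imply that sliding a fused pair of strings across $\fff$ equals sliding them one at a time. Once this is established, the argument is identical to the boundary defect case, with the multimodule coherence ensuring independence of the order across sectors.
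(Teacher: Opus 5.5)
Your proposal is correct and follows essentially the same route as the paper. You evaluate the filled diagrams inside the multifusion category $\eP$ built from the folded category $\eB$, and reduce $\rho_{\mathfrak{f}}(XX')=\rho_{\mathfrak{f}}(X)\rho_{\mathfrak{f}}(X')$ to coherence of the associators in $\eP$, which include the module-functor constraints of $\mathfrak{f}$; this is precisely the paper's appeal to pentagon coherence, and your explicit unitality check via \eqref{eq:mul_defect_tube_unit} fills in a step the paper only asserts.
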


\begin{proof}
    It only needs to illustrate $\rho_{\mathfrak f}(XY)= \rho_{\mathfrak f}(X)\rho_{\mathfrak f}(Y)$.  Let $X$ and $Y$ be composable tube algebra elements.  Filling the product $XY$ by a disk amounts to first gluing the two annuli and then evaluating the resulting diagram.  Filling by $Y$ first and subsequently by $X$ gives the same composite multimodule diagram. Algebraically, the two evaluations differ only by a sequence of associators in $\eP$, and their equality follows from the pentagon coherence of that category. Consequently, this proves that $\rho_{\mathfrak f}$ is a unital algebra homomorphism.
\end{proof}

The disk space assignment reverses natural transformations of functors.  If $\eta:\mathfrak g\Rightarrow\mathfrak f$ is a multimodule natural transformation, define
\begin{equation*}
    T_{\eta}: \mathcal H^{\mathfrak f}\to\mathcal H^{\mathfrak g},
    \quad T_{\eta}(\alpha) := \alpha\circ\eta_s \quad \text{for } \alpha\in\Hom_{\eN}(\mathfrak f(s),t).
\end{equation*}
The multimodule naturality of $\eta$ permits its coupon to pass
through every bulk string of a tube, so $T_{\eta}$ commutes with the
action defined by $\rho_\mathfrak{f}$.

\begin{proposition}
    \label{prop:Hom-Nat-MMDW}
    For multimodule functors $\mathfrak f,\mathfrak g:\eM\to\eN$,
    disk space intertwiners are naturally identified with
    multimodule natural transformations in the reverse direction:
    \begin{equation}
        \Hom_{\mathbf{Tube}({_{\{\eC_{\alpha}\}}} \eM_{\{\eD_{\beta}\}};{_{\{\eC_{\alpha}\}}} \eN_{\{\eD_{\beta}\}})}
        \bigl(\mathcal H^{\mathfrak f},\mathcal H^{\mathfrak g}
        \bigr) \simeq\operatorname{Nat}_{\{\eC_\alpha\}|\{\eD_\beta\}} (\mathfrak g,\mathfrak f).
        \label{eq:Hom-Nat-MMDW}
    \end{equation}
    Therefore, the functor $\Phi_{\mathrm{multi}}: \Fun_{\{\eC_\alpha\}|\{\eD_\beta\}}(\eM,\eN) \to \Rep(\mathbf{Tube}({_{\{\eC_{\alpha}\}}} \eM_{\{\eD_{\beta}\}};{_{\{\eC_{\alpha}\}}} \eN_{\{\eD_{\beta}\}}))$, $\mathfrak{f} \mapsto (\mathcal{H}^{\mathfrak{f}},\rho_\mathfrak{f})$, is fully faithful and contravariant.
\end{proposition}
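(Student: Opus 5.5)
The plan is to construct mutually inverse linear maps between $\operatorname{Nat}_{\{\eC_\alpha\}|\{\eD_\beta\}}(\mathfrak g,\mathfrak f)$ and the space of intertwiners $\mathcal H^{\mathfrak f}\to\mathcal H^{\mathfrak g}$, following the proof of Proposition~\ref{prop:Hom-Nat} in the boundary case. The computations can be carried out inside the folded multifusion category $\eP$ built from $\eB=(\boxtimes_\alpha\eC_\alpha)\boxtimes(\boxtimes_\beta\eD_\beta^{\rm rev})$. In this setting a multimodule functor is the same as a $\eB$-module functor. A multimodule natural transformation is the same as a $\eB$-module natural transformation, because the coherence data of $\eM$ and $\eN$ identify the several compatible actions with a single $\eB$-action.

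\emph{Forward direction.} Given $\eta:\mathfrak g\Rightarrow\mathfrak f$, define $T_\eta(\alpha)=\alpha\circ\eta_s$ as in the text. To show that $T_\eta$ is an intertwiner, I would act with a basis tube on $\alpha$ and compare the two resulting disk diagrams. By naturality of $\eta$ and the module-functor condition (the compatibility of $\eta$ with the structure isomorphisms $\mathfrak f(a\otimes x)\cong a\otimes\mathfrak f(x)$, applied once for each bulk sector), the coupon $\eta$ on the wavy defect string slides through every bulk arc $a_i$ and $b_j$. Injectivity of $\eta\mapsto T_\eta$ is then easy. If $T_\eta=0$, then $\eta_s$ is killed by every $\alpha\in\Hom(\mathfrak f(s),t)$ for $t\in\Irr(\eN)$. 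By semisimplicity these $\alpha$ jointly separate morphisms into $\mathfrak f(s)$, so $\eta_s=0$ for all simple $s$, and hence $\eta=0$.

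\emph{Backward direction, and the main obstacle.} Given an intertwiner $T$, I would first use the unit-type idempotents of the tube algebra (tubes with all bulk arcs trivial and fixed wall labels $s,t$). These show that $T$ preserves the $(s,t)$-grading, so $T$ restricts to maps $\Hom(\mathfrak f(s),t)\to\Hom(\mathfrak g(s),t)$. Taking, for fixed $s$, the tubes with trivial bulk arcs and an arbitrary morphism between $t$ and $t'$ placed on the $\eN$-wall shows that these maps commute with postcomposition. By Yoneda in the semisimple category $\eN$, they are then induced by a unique $\eta_s:\mathfrak g(s)\to\mathfrak f(s)$. Naturality in $s$ follows in the same way, using the tubes with trivial bulk arcs that carry a morphism on the $\eM$-wall.

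The hard step is showing that $\eta$ is a multimodule transformation, i.e. compatible with each action of $\eC_\alpha$ and $\eD_\beta$. For this I would use tubes with a single nontrivial arc $a_i$ (or $b_j$), all other arcs set to the unit. Commutation of $T$ with such a tube, after evaluating with F-moves in $\eP$, gives exactly the module-compatibility square for $\eta$ with respect to $a_i$, with the generic $\alpha$ again stripped off by Yoneda. The main risk is bookkeeping. One must check that the dimension factors and F-symbols appear identically on both sides, and that the left actions and the folded right actions (via $\eD_\beta^{\rm rev}$) give the correct orientation of the square. Compatibility for general labels then follows from compatibility for each sector separately, by the coherence of the multimodule structure. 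The two constructions are mutually inverse by construction.

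Finally, contravariance of $\Phi_{\rm multi}$ is immediate from $T_{\eta\circ\eta'}=T_{\eta'}\circ T_\eta$, and full faithfulness is the bijection just established.
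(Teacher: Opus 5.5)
Your proof is correct, but it is organized differently from the paper's. The paper does not rebuild $\Phi$ and $\Psi$ for multimodule tubes. It folds all bulk regions into $\eB=(\boxtimes_\alpha\eC_\alpha)\boxtimes(\boxtimes_\beta\eD_\beta^{\rm rev})$ and identifies multimodule functors and transformations with $\eB$-module ones. It then asserts that the multimodule defect tube algebra and the boundary defect tube algebra $\Tube({}_\eB\eM;{}_\eB\eN)$ have the same intertwiners between $\mathcal H^{\mathfrak f}$ and $\mathcal H^{\mathfrak g}$, and applies Proposition~\ref{prop:Hom-Nat} to $\eB$ as a black box.

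You instead rerun the boundary argument directly in the multimodule tube basis, sector by sector. The grading idempotents give the blocks $T_{s,t}$, and semisimplicity gives $\eta_s$. Tubes with a single nontrivial arc $a_i$ or $b_j$ then give the compatibility square for each $\eC_\alpha$- and $\eD_\beta$-action separately, which is precisely the definition of a multimodule transformation.

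The paper's route is shorter, but it rests on a folding identification that is stated rather than checked. A multimodule basis tube has $N$ nested arcs with intermediate wall labels, while a $\eB$-tube has a single arc labeled by a simple object of $\eB$. The two algebras, and their actions on the disk spaces, match only after merging the nested vertices using F-moves and the multimodule coherence. Your route bypasses that identification and shows that single-sector tubes already detect multimodule naturality. The cost is redoing the F-symbol bookkeeping yourself, including getting the orientation of the square right for the right $\eD_\beta$-actions.

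One small correction to your backward step. Since wall labels are simple, a tube with all bulk arcs trivial can only carry a scalar on each wall. So there are no such tubes implementing morphisms $t\to t'$ with $t\neq t'$, nor $s\to s'$ on the $\eM$-wall. This does not hurt the argument. Once $T$ preserves the $(s,t)$-grading, any family of linear maps $T_{s,t}$ over simple $t$ is automatically of the form $\alpha\mapsto\alpha\circ\eta_s$ for a unique $\eta_s$; this is exactly the paper's matrix-unit construction of $\eta^T_s$. Naturality in $s$ is likewise automatic on simples and extends by additivity. All the real content therefore sits in the single-arc intertwining relations.
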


\begin{proof}
    For simplicity, denote $A^{\rm multi}_{\eM,\eN} := \mathbf{Tube}({_{\{\eC_{\alpha}\}}} \eM_{\{\eD_{\beta}\}};{_{\{\eC_{\alpha}\}}} \eN_{\{\eD_{\beta}\}})$. Simultaneously fold all bulk regions meeting the wall.  The multimodule categories $\eM$ and $\eN$ then become left module categories over the fusion category $\eB=(\boxtimes_{\alpha\in I}\eC_{\alpha}) \boxtimes ( \boxtimes_{\beta\in J}\eD_{\beta}^{\mathrm{rev}} )$, and one obtains a canonical equivalence
    \begin{equation*}
        \Fun_{\{\eC_\alpha\}|\{\eD_\beta\}}(\eM,\eN) \simeq \Fun_{\eB}(\eM,\eN).
    \end{equation*}
    The same folding turns a multimodule tube into the corresponding
    boundary tube for the $\eB$-module categories; we denote the resulting tube algebra as $A_\partial:=\Tube({}_\eB\eM;{}_\eB\eN)$. Proposition~\ref{prop:Hom-Nat}, applied to
    $\eB$, therefore gives
    \begin{equation*}
        \Hom_{A^{\rm multi}_{\eM,\eN}}
        \bigl(\mathcal H^{\mathfrak f},\mathcal H^{\mathfrak g}\bigr) 
        \simeq  \Hom_{A_\partial} \bigl(\mathcal H^{\mathfrak f},\mathcal H^{\mathfrak g}\bigr)  \simeq \operatorname{Nat}_{\eB}(\mathfrak g,\mathfrak f) 
        \simeq \operatorname{Nat}_{\{\eC_\alpha\}|\{\eD_\beta\}}(\mathfrak g,\mathfrak f).
    \end{equation*}
    This is the assertion. 
\end{proof}

\begin{proposition}[Generalized Schur orthogonality]
    \label{prop:Schur-orth-MMDW}
    Let $\mathfrak f$ and $\mathfrak g$ be simple objects of
    multimodule functors $\eM\to \eN$.  Then $\langle \chi_{\mathfrak f},\chi_{\mathfrak g}\rangle = \delta_{\mathfrak f,\mathfrak g}$.  In particular, the disk representation associated with a simple multimodule functor is irreducible.
\end{proposition}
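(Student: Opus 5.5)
The plan is to reduce the statement to the boundary case by folding, exactly as in the proof of Proposition~\ref{prop:Hom-Nat-MMDW}, and then combine the separability idempotent of Theorem~\ref{thm:sep-idem-multimodule} with the intertwiner identification. Write $A^{\rm multi}_{\eM,\eN}$ for the multimodule defect tube algebra. Here $\chi_{\mathfrak f}(X)=\Tr_{\mathcal H^{\mathfrak f}}\rho_{\mathfrak f}(X)$. The pairing is the one used in the boundary and domain wall cases, namely
\begin{equation*}
\langle \chi_{\mathfrak f},\chi_{\mathfrak g}\rangle=\sum_{\langle\Upsilon\rangle}\chi_{\mathfrak f}\bigl(\Upsilon^{(1)}\bigr)\,\chi_{\mathfrak g}\bigl(\Upsilon^{(2)}\bigr),
\end{equation*}
where $\Upsilon=\sum s_{\eN,\eM}(\lambda^{\langle1\rangle})\otimes\lambda^{\langle2\rangle}$. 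If needed, one of the two factors is composed with the $*$-structure or with $s$ so that the pairing is sesquilinear in the same convention as before.

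\textbf{Step 1: express the pairing as the dimension of an intertwiner space.} For finite-dimensional representations $V,W$ of a separable algebra with separability idempotent $\Upsilon$, the averaging map
\begin{equation*}
E(T)=\sum \rho_W(\Upsilon^{(2)})\,T\,\rho_V(\Upsilon^{(1)}),
\end{equation*}
or the version with the order matching the convention of Theorem~\ref{thm:sep-idem-multimodule}, is a projection from $\Hom_{\Cbb}(V,W)$ onto $\Hom_{A}(V,W)$. Both properties follow from the centrality and normalization conditions $\sum \Upsilon^{(1)}\Upsilon^{(2)}=1$. Taking the trace of $E$ on $\Hom_\Cbb(V,W)\cong V^*\otimes W$ gives
\begin{equation*}
\Tr E=\sum\chi_V\bigl(s(\Upsilon^{(1)})\text{-type term}\bigr)\,\chi_W\bigl(\Upsilon^{(2)}\bigr)=\dim\Hom_A(V,W).
\end{equation*}
The delicate point is matching the trace on $V^*$ with the pairing. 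The trace on $V^*$ appears because the first factor acts by transpose, so the pairing must be defined, or shown to coincide, with this trace expression. I will check this against the boundary-case definition and adjust by the antipode-like map $s_{\eM,\eN}$ if necessary. The fact that $s$ reverses multiplication is exactly what makes the transpose action on $V^*$ well defined.

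\textbf{Step 2: apply the intertwiner identification.} Set $V=\mathcal H^{\mathfrak f}$ and $W=\mathcal H^{\mathfrak g}$. Proposition~\ref{prop:Hom-Nat-MMDW} gives
\begin{equation*}
\dim\Hom_{A^{\rm multi}_{\eM,\eN}}(\mathcal H^{\mathfrak f},\mathcal H^{\mathfrak g})=\dim\operatorname{Nat}_{\{\eC_\alpha\}|\{\eD_\beta\}}(\mathfrak g,\mathfrak f).
\end{equation*}
Through the folding equivalence $\Fun_{\{\eC_\alpha\}|\{\eD_\beta\}}(\eM,\eN)\simeq\Fun_\eB(\eM,\eN)$, this is a semisimple category because $\eB$ is fusion and $\eM,\eN$ are semisimple. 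By Schur's lemma the dimension therefore equals $\delta_{\mathfrak f,\mathfrak g}$ for simple $\mathfrak f,\mathfrak g$. Taking $\mathfrak f=\mathfrak g$ shows that $\End_A(\mathcal H^{\mathfrak f})=\Cbb$. Since $A^{\rm multi}_{\eM,\eN}$ is semisimple by Theorem~\ref{thm:sep-idem-multimodule}, this implies that $\mathcal H^{\mathfrak f}$ is irreducible.

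\textbf{Main obstacle.} The main obstacle is Step 1. One has to verify that the concrete $\Upsilon$ built from the Haar integral, via $\delta_{\eM,\eN,\eM}$ and $s_{\eN,\eM}$, satisfies the bimodule and centrality identities in the precise form needed for $E$ to be an $A$-linear projection. It also has to be normalized so that no stray factor such as $\dim\eB$ or the quantum dimensions of $\eM,\eN$ survives. As a shortcut, I would first try to avoid redoing the multimodule diagrammatics. The folding map identifies $A^{\rm multi}_{\eM,\eN}$ with $\Tube({}_\eB\eM;{}_\eB\eN)$ as algebras, and it carries $\lambda$, $\delta$ and $s$ to their boundary counterparts. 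The generalized Schur orthogonality already established in the boundary case, applied to the fusion category $\eB$, would then transport directly to the multimodule case. The only thing left to check is that the characters are preserved, which is immediate since the disk spaces coincide.
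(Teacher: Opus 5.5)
Your proposal is correct and follows the paper's proof: the pairing built from the separability idempotent $\Upsilon$ of Theorem~\ref{thm:sep-idem-multimodule} equals $\dim\Hom_{A^{\rm multi}_{\eM,\eN}}(\mathcal H^{\mathfrak f},\mathcal H^{\mathfrak g})$, Proposition~\ref{prop:Hom-Nat-MMDW} converts this to $\dim\operatorname{Nat}_{\{\eC_\alpha\}|\{\eD_\beta\}}(\mathfrak g,\mathfrak f)$, and Schur's lemma gives the Kronecker delta. One simplification: no antipode enters Step~1, and $s_{\eM,\eN}$ would in fact not even land in $A^{\rm multi}_{\eM,\eN}$. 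The trace of $T\mapsto\rho_W(x)\,T\,\rho_V(y)$ on $\Hom_{\Cbb}(V,W)$ is simply $\chi_W(x)\chi_V(y)$, and this trace identity holds for any separability idempotent, so the ``main obstacle'' you name is already discharged by citing Theorem~\ref{thm:sep-idem-multimodule}.
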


\begin{proof}
    The inner product, under which the choice of an orthonormal basis $\{\tilde{\alpha}\}$ for $\mathcal{H}^{\mathfrak{f}}$, is defined the same as before. Using this, one computes the character $\chi_\mathfrak{f}(X) = \operatorname{tr}(\rho_{\mathfrak{f}}) = \sum_{\tilde{\alpha}}\langle \rho_{\mathfrak{f}}(X)\tilde{\alpha},\tilde{\alpha}\rangle$. 
    For the separability idempotent $\Upsilon$ of the multimodule defect tube algebra, the inner product of characters is equal to 
    \begin{equation}
        \bigl\langle \chi_{\mathfrak f},\chi_{\mathfrak g}\bigr\rangle = \sum_{\langle\Upsilon\rangle} \chi_{\mathfrak f} \bigl(\Upsilon^{\langle1\rangle}\bigr)\chi_{\mathfrak g}
        \bigl(\Upsilon^{\langle2\rangle}\bigr) = \dim
        \Hom_{A^{\mathrm{multi}}_{\eM,\eN}} \bigl(\mathcal H^{\mathfrak f}, \mathcal H^{\mathfrak g} \bigr).
    \end{equation}
    Proposition~\ref{prop:Hom-Nat-MMDW} identifies the last space with $\operatorname{Nat}_{\{\eC_\alpha\}|\{\eD_\beta\}} (\mathfrak g,\mathfrak f)$.  Schur's lemma for the semisimple multimodule functor category now gives the stated Kronecker
    delta.
\end{proof}

Denote 
\[
    A^{\rm multi}_{\eM,\eN} := \mathbf{Tube}({_{\{\eC_{\alpha}\}}} \eM_{\{\eD_{\beta}\}};{_{\{\eC_{\alpha}\}}} \eN_{\{\eD_{\beta}\}}). 
\]
The preceding results define a functor
\begin{equation}
    \Phi_{\mathrm{multi}}:
    \Fun_{\{\eC_\alpha\}|\{\eD_\beta\}}(\eM,\eN)^{\rm op}
    \to 
    \Rep\bigl(A_{\eM,\eN}^{\mathrm{multi}}\bigr),
    \qquad
    \mathfrak f
    \longmapsto
    \bigl(
        \mathcal H^{\mathfrak f},
        \rho_{\mathfrak f}
    \bigr).
    \label{eq:multimodule-disk-functor}
\end{equation}
Proposition~\ref{prop:Hom-Nat-MMDW} supplies its full faithfulness.  Essential surjectivity follows from folding, as recorded next.

\begin{theorem}
    \label{thm:Fun-Rep-equiv-MMDW}
    The functor $\Phi_{\mathrm{multi}}$ possesses a
    quasi-inverse
    \begin{equation}
        \Psi_{\mathrm{multi}}:
        \Rep\bigl(A_{\eM,\eN}^{\mathrm{multi}}\bigr)
        \to 
        \Fun_{\{\eC_\alpha\}|\{\eD_\beta\}}(\eM,\eN)^{\rm op}.
    \end{equation}
    Hence there is an equivalence 
    \begin{equation}
        \Rep\bigl(\Tube({}_{\{\eC_\alpha\}}\eM_{\{\eD_\beta\}};{}_{\{\eC_\alpha\}}\eN_{\{\eD_\beta\}})\bigr) \simeq \Fun_{\{\eC_\alpha\}|\{\eD_\beta\}}(\eM,\eN)^{\rm op}.
        \label{eq:multimodule-representation-equivalence}
    \end{equation}
\end{theorem}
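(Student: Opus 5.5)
The plan is to reduce the statement, by folding, to the corresponding equivalence already established for boundary defect tube algebras. We then transport the quasi-inverse constructed there back along the identifications used in the proof of Proposition~\ref{prop:Hom-Nat-MMDW}.

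\emph{Folding.} Fold all bulk regions onto one side of the wall. This makes $\eM$ and $\eN$ left module categories over $\eB=(\boxtimes_{\alpha}\eC_\alpha)\boxtimes(\boxtimes_\beta\eD_\beta^{\rm rev})$, and gives the equivalence $\Fun_{\{\eC_\alpha\}|\{\eD_\beta\}}(\eM,\eN)\simeq\Fun_\eB(\eM,\eN)$ used before. At the level of algebras, the folded tube algebra is $A_\partial=\Tube({}_\eB\eM;{}_\eB\eN)$. Its tube basis uses a single $\eB$-string labeled by a simple $a_1\boxtimes\cdots\boxtimes a_k\boxtimes b_1\boxtimes\cdots\boxtimes b_l$. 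Sequential F-moves merge the $N$ concentric arcs of a multimodule tube into one arc, and they identify $\Tube({}_\eB\eM;{}_\eB\eN)$ with a tube space $\mathbf{T}^{\mathbf{s},\mathbf{r};\mathbf{m},\mathbf{n}}$ of the type in the Morita remark. The arcs live in mutually commuting sectors, and the multimodule coherence data guarantee that the order of these moves is irrelevant. So we obtain an algebra isomorphism, or at least a Morita equivalence, $\Phi_{\rm fold}:A^{\rm multi}_{\eM,\eN}\to A_\partial$. This isomorphism intertwines the disk actions $\rho_{\mathfrak f}$ on $\mathcal H^{\mathfrak f}$, because the disk space $\bigoplus\Hom_\eN(\mathfrak f(s),t)$ is unchanged by folding.

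\emph{Transport.} The boundary version of the theorem gives $\Rep(A_\partial)\simeq\Fun_\eB(\eM,\eN)^{\rm op}$ via the disk functor, together with a quasi-inverse $\Psi_\partial$. Define
\begin{equation*}
\Psi_{\rm multi}:=\iota\circ\Psi_\partial\circ(\Phi_{\rm fold}^{-1})^{*},
\end{equation*}
where $(\Phi_{\rm fold}^{-1})^{*}$ is restriction of scalars, or the Morita equivalence functor, and $\iota$ is the folding equivalence of functor categories. The natural isomorphisms $\Psi_{\rm multi}\Phi_{\rm multi}\cong\id$ and $\Phi_{\rm multi}\Psi_{\rm multi}\cong\id$ then follow from the boundary case. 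This uses the commutativity of the square formed by $\Phi_{\rm multi}$, $\Phi_\partial$, $\iota$ and $(\Phi_{\rm fold}^{-1})^*$, which holds on objects because the disk spaces agree, and on morphisms by Proposition~\ref{prop:Hom-Nat-MMDW}.

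\emph{Direct cross-check.} As a check independent of folding, one can argue intrinsically. By Theorem~\ref{thm:sep-idem-multimodule}, $A^{\rm multi}_{\eM,\eN}$ is semisimple. By Propositions~\ref{prop:Hom-Nat-MMDW} and~\ref{prop:Schur-orth-MMDW}, $\Phi_{\rm multi}$ is fully faithful and sends simples to pairwise non-isomorphic irreducibles. Essential surjectivity then reduces to a dimension count: $\sum_{\mathfrak f}(\dim\mathcal H^{\mathfrak f})^2=\dim A^{\rm multi}_{\eM,\eN}$, summed over simple multimodule functors. This count is verified by computing $\bigoplus_{\mathfrak f}\mathcal H^{\mathfrak f}\otimes(\mathcal H^{\mathfrak f})^*$ as the regular representation via the idempotent decomposition $1=\sum\Upsilon$-type projectors. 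With full faithfulness and semisimplicity, a bijection on simples then yields the equivalence, with $\Psi_{\rm multi}$ given by choosing preimages of isotypic components.

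\emph{Main obstacle.} I expect the hard part to be making $\Phi_{\rm fold}$ precise. One must show that the folded boundary tube, with its single $\eB$-string, matches the multimodule tube up to isomorphism or Morita equivalence. The multimodule basis orders the arcs in a fixed way, the $\eD_\beta$ actions are reversed, and the F-symbols of $\eB$ acting on $\eM$ must be expressed through the mixed multimodule associators. I would first try to establish a Morita context using the generalized tube spaces $\mathbf{T}^{\mathbf{s},\mathbf{r};\mathbf{m},\mathbf{n}}$, rather than an honest isomorphism. Only representation categories enter the theorem, so this suffices.
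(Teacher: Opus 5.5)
Your proposal is correct and follows the paper's proof essentially step for step. You fold to $\eB=(\boxtimes_\alpha\eC_\alpha)\boxtimes(\boxtimes_\beta\eD_\beta^{\rm rev})$, transport representations along the folding algebra isomorphism (or Morita equivalence), define $\Psi_{\rm multi}$ as the boundary quasi-inverse $\Psi_\partial^{\eB}$ conjugated by the two folding equivalences, and conclude from $\mathsf{Fold}_{\rm multi}\circ\Phi_{\rm multi}\simeq\Phi_\partial^{\eB}\circ\mathsf{Fold}^{\rm op}$, which holds because filling and folding commute; your extra intrinsic cross-check is not needed, and its identity $\sum_{\mathfrak f}(\dim\mathcal H^{\mathfrak f})^2=\dim A^{\rm multi}_{\eM,\eN}$ would need a real argument (e.g.\ identifying the tube algebra with the endomorphism algebra of a generator of the functor category) rather than the separability idempotent alone.
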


\begin{proof}
    Folding all $\eC_{\alpha}$- and $\eD_{\beta}$-regions onto the same side of the wall produces the fusion category $\eB := ( \boxtimes_{\alpha\in I}\eC_{\alpha})\boxtimes (\boxtimes_{\beta\in J}\eD_{\beta}^{\mathrm{rev}})$. 
    Folding identifies multimodule functors and their natural transformations with $\eB$-module functors and their natural transformations. It therefore induces an equivalence between the corresponding opposite functor categories
    \begin{equation}
        \notag
        \mathsf{Fold}^{\mathrm{op}}:
        \Fun_{\{\eC_\alpha\}|\{\eD_\beta\}}(\eM,\eN)^{\mathrm{op}}
        \xrightarrow{\simeq}
        \Fun_{\eB}(\eM,\eN)^{\mathrm{op}}.
    \end{equation}
    Folding also yields an equivalence
    \begin{equation}
        \notag
        \mathsf{Fold}_{\mathrm{multi}}:
        \Rep\bigl(A_{\eM,\eN}^{\mathrm{multi}}\bigr)
        \xrightarrow{\simeq}
        \Rep\bigl(\Tube({}_{\eB}\eM;{}_{\eB}\eN) \bigr).
    \end{equation}
    When the chosen graphical presentations are identified strictly, this functor is obtained by transporting representations along the corresponding algebra isomorphism. More generally, it is the representation equivalence supplied by the Morita context of the folded tube algebras.

    By Theorem~\ref{thm:Fun-Rep-equiv}, let $\Phi_{\partial}^{\eB}$ and $\Psi_{\partial}^{\eB}$ denote the boundary disk space functor associated with $\eB$ and its quasi-inverse, respectively:
    \begin{equation}
        \notag
        \begin{tikzcd}[column sep=1.7em]
            \Fun_{\eB}(\eM,\eN)^{\mathrm{op}}
            \arrow[r,shift left=0.7ex,"\Phi_{\partial}^{\eB}"]
            &
            \Rep\bigl(\Tube({}_{\eB}\eM;{}_{\eB}\eN)\bigr).
            \arrow[l,shift left=0.7ex,"\Psi_{\partial}^{\eB}"]
        \end{tikzcd}
    \end{equation}
    Filling a multimodule tube before folding gives the same linear map as folding first and filling the resulting boundary tube. Consequently, the two disk space constructions are related by a natural isomorphism
    \begin{equation}
        \mathsf{Fold}_{\mathrm{multi}}
        \circ
        \Phi_{\mathrm{multi}}
        \simeq
        \Phi_{\partial}^{\eB}
        \circ
        \mathsf{Fold}^{\mathrm{op}}.
        \label{eq:folding-compatible-with-multimodule-disk}
    \end{equation}

    Define
    \begin{equation*}
        \Psi_{\mathrm{multi}}:=\bigl(\mathsf{Fold}^{\mathrm{op}}\bigr)^{-1}\circ \Psi_{\partial}^{\eB}\circ\mathsf{Fold}_{\mathrm{multi}}.
    \end{equation*}
    This is a covariant functor
    \begin{equation*}
        \Psi_{\mathrm{multi}}:\Rep\bigl(A_{\eM,\eN}^{\mathrm{multi}}\bigr)\to\Fun_{\{\eC_\alpha\}|\{\eD_\beta\}} (\eM,\eN)^{\mathrm{op}}.
    \end{equation*}
    Using \eqref{eq:folding-compatible-with-multimodule-disk} and the quasi-inverse relations for $\Phi_{\partial}^{\eB}$ and $\Psi_{\partial}^{\eB}$, we obtain
    \begin{equation*}
        \Psi_{\mathrm{multi}}\circ\Phi_{\mathrm{multi}}
        \simeq\id_{\Fun_{\{\eC_\alpha\}|\{\eD_\beta\}} (\eM,\eN)^{\mathrm{op}}}.
    \end{equation*}
    Similarly, since $\mathsf{Fold}_{\mathrm{multi}}$ is an equivalence, we have
    \begin{equation*}
        \Phi_{\mathrm{multi}}\circ\Psi_{\mathrm{multi}}
        \simeq\id_{\Rep(A_{\eM,\eN}^{\mathrm{multi}})}.
    \end{equation*}
    Thus $\Psi_{\mathrm{multi}}$ is a quasi-inverse of $\Phi_{\mathrm{multi}}$, establishing \eqref{eq:multimodule-representation-equivalence}.
\end{proof}

\begin{corollary}
    \label{cor:multimodule-wall-excitations}
    Setting $\eN=\eM$ identifies the category of excitations on the multimodule wall with the representation category of its tube
    algebra:
    \begin{equation}
        \Rep\bigl(\Tube({}_{\{\eC_\alpha\}}\eM_{\{\eD_\beta\}})\bigr)\simeq
        \Fun_{\{\eC_\alpha\}|\{\eD_\beta\}}(\eM,\eM)^{\rm op}.
    \end{equation}
    Equipping $\Fun_{\{\eC_\alpha\}|\{\eD_\beta\}}(\eM,\eM)^{\rm op}$ with the monoidal structure induced by composition of multimodule endofunctors, and the representation category with the monoidal structure determined by the weak Hopf algebra coproduct, the above equivalence becomes a monoidal equivalence. 
\end{corollary}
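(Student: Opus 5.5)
The plan is to reduce the monoidal statement to two facts: the underlying equivalence of Theorem~\ref{thm:Fun-Rep-equiv-MMDW} at $\eN=\eM$, and the identification of the coproduct of $\Tube({}_{\{\eC_\alpha\}}\eM_{\{\eD_\beta\}})$ with the factorization map $\delta_{\eM,\eM,\eM}$. The first part of the corollary is immediate from Theorem~\ref{thm:Fun-Rep-equiv-MMDW}. For the second part, I would build a monoidal structure $J_{\fff,\mathfrak g}$ on the disk functor $\Phi_{\rm multi}$ and check the coherence axioms.

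\textbf{Step 1: identify the coproduct.} When $\eK=\eN=\eM$, the factorization map $\delta_{\eM,\eM,\eM}$ and the weak Hopf coproduct $\Delta$ of Ref.~\cite{jia2025tube} are defined by the same operation: insert an intermediate wall labeled by $\sum_{s}s$, resolve the identity in each bulk sector, and cut. So I take $\Delta=\delta_{\eM,\eM,\eM}$. The tensor product of $V,W\in\Rep(\Tube({}_{\{\eC_\alpha\}}\eM_{\{\eD_\beta\}}))$ is then $\Delta(1)\cdot(V\otimes W)$ with the action through $\Delta$.

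\textbf{Step 2: build $J_{\fff,\mathfrak g}$.} Using the convention $\fff\otimes\mathfrak g=\mathfrak g\circ\fff$, I would define
\[
J_{\fff,\mathfrak g}:\Delta(1)\bigl(\mathcal H^{\fff}\otimes\mathcal H^{\mathfrak g}\bigr)\to\mathcal H^{\mathfrak g\circ\fff},
\qquad
\alpha\otimes\beta\mapsto \beta\circ\mathfrak g(\alpha),
\]
for $\alpha\in\Hom_\eM(\fff(s),r)$ and $\beta\in\Hom_\eM(\mathfrak g(r),t)$. Diagrammatically, this stacks the two disks along the shared wall label $r$ and merges them into one disk carrying two dangling defect strings. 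These strings are then fused into a single $\mathfrak g\circ\fff$ string, using the multifusion category $\eP$ defined above.

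\begin{itemize}
\item \emph{Bijectivity.} Semisimplicity of $\eM$ gives the decomposition $\bigoplus_r\Hom(\fff(s),r)\otimes\Hom(\mathfrak g(r),t)\cong\Hom(\mathfrak g\fff(s),t)$. The idempotent $\Delta(1)$ imposes exactly the matching $\delta$ on the middle label $r$, so $J_{\fff,\mathfrak g}$ is bijective onto its image.
\item \emph{Intertwining.} I would verify that $J_{\fff,\mathfrak g}$ commutes with the tube action by the graphical argument used in the boundary and domain wall cases. Acting by $X$ on the merged disk is computed by inserting the resolution of identity on the middle wall. This reproduces $\delta(X)=\sum X^{\langle1\rangle}\otimes X^{\langle2\rangle}$ acting on the two factors separately, with the F-moves performed sector by sector.
\end{itemize}

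\textbf{Step 3: coherence and the unit.} Naturality of $J$ in $\fff$ and $\mathfrak g$ follows because natural transformations slide through coupons (Proposition~\ref{prop:Hom-Nat-MMDW}). The associativity hexagon reduces to associativity of composing morphisms in $\eM$, together with coassociativity of $\Delta$. For the unit, the identity functor gives $\mathcal H^{\id}=\bigoplus_{s,t}\Hom(s,t)$. I would show this is the trivial representation, i.e.\ the target counital subalgebra with its canonical action.

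\textbf{Main obstacle.} I expect the main obstacle to be the intertwining property of $J_{\fff,\mathfrak g}$ in the multimodule setting. It requires that the coefficients $\sqrt{d_{t_{l+1}}/(d_{a_1}\cdots d_{b_l}d_{s_1})}$ in \eqref{eq:tube_coprod} are exactly those produced by the loop removals in each bulk sector. To avoid redoing the bookkeeping, I would first try to transport the whole question through folding. The folding isomorphism to $\Tube({}_\eB\eM)$ should be an isomorphism of weak Hopf algebras, since both coproducts come from the same cut. Then \eqref{eq:folding-compatible-with-multimodule-disk} reduces monoidality to the boundary case of Theorem~\ref{thm:Fun-Rep-equiv}, where the monoidal equivalence $\sfEnd_{\eB}(\eM)^{\rm op}\simeq\Rep(\Tube({}_\eB\eM))$ is already known.
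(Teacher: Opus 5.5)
Your overall route is in line with what the paper relies on: take the $\eN=\eM$ case of Theorem~\ref{thm:Fun-Rep-equiv-MMDW}, then obtain monoidality either directly from $\delta_{\eM,\eM,\eM}$ or by folding to the boundary case. The paper itself states this corollary with no separate argument. However, Step~2 has the two tensor factors backwards.

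With $\Delta=\delta_{\eM,\eM,\eM}$ as in \eqref{eq:tube_coprod}, the first factor keeps the upper (outgoing) half of the tube and the second keeps the lower (incoming) half. Write $E_{z|r}$ for the tube with trivial bulk strings, upper wall label $z$ and lower wall label $r$. Then $\Delta(1)=\sum_{x,r,z}E_{z|r}\otimes E_{r|x}$, so on $\mathcal H^{\fff}\otimes\mathcal H^{\mathfrak g}$ it forces the \emph{incoming} label of $\alpha$ to equal the \emph{outgoing} label of $\beta$:
$\Delta(1)(\mathcal H^{\fff}\otimes\mathcal H^{\mathfrak g})=\bigoplus_{x,r,z}\Hom(\fff(r),z)\otimes\Hom(\mathfrak g(x),r)$.
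On this space $\beta\circ\mathfrak g(\alpha)$ does not even typecheck. Your $J$ is really defined on $\Delta(1)(\mathcal H^{\mathfrak g}\otimes\mathcal H^{\fff})$, so your bijectivity and intertwining claims concern the wrong module. The map compatible with $\Delta$ is $\alpha\otimes\beta\mapsto\alpha\circ\fff(\beta)\in\mathcal H^{\fff\circ\mathfrak g}$; geometrically, the first tensor factor is the excitation sitting higher on the wall. This gives $\Phi_{\rm multi}(\fff)\otimes\Phi_{\rm multi}(\mathfrak g)\cong\Phi_{\rm multi}(\fff\circ\mathfrak g)$, which in the paper's convention $\fff\otimes\mathfrak g=\mathfrak g\circ\fff$ is $\Phi_{\rm multi}(\mathfrak g\otimes\fff)$.

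So the monoidal claim holds only after you fix the order. Either give $\Fun_{\{\eC_\alpha\}|\{\eD_\beta\}}(\eM,\eM)^{\rm op}$ the product $\fff\circ\mathfrak g$ (which ``induced by composition'' permits), or use $\Delta^{\rm op}$ on the representation side. This is the kind of ordering convention the paper has to fix separately (cf.\ Remark~\ref{rmk:optubeConvention}). With $J$ corrected, your naturality, associativity (via coassociativity) and unit ($\mathcal H^{\id}$ as the trivial module) checks go through.

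For the folding fallback, note that the paper uses folding only as a Morita-type representation equivalence, and that alone does not transport tensor products. You genuinely need the weak Hopf isomorphism $\Tube({}_{\{\eC_\alpha\}}\eM_{\{\eD_\beta\}})\cong\Tube({}_{\eB}\eM)$ that you postulate. It does hold. Summing over intermediate wall labels identifies the vertex chains with $\Hom_{\eM}(x\triangleright u_1,v_{l+1})$ for $x=a_1\boxtimes\cdots\boxtimes a_k\boxtimes b_1\boxtimes\cdots\boxtimes b_l\in\Irr(\eB)$. The coefficient in \eqref{eq:tube_coprod} is then $\sqrt{d_{t_{l+1}}/(d_x d_{s_1})}$ with $d_x=d_{a_1}\cdots d_{a_k}d_{b_1}\cdots d_{b_l}$, exactly as for the boundary tube algebra. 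But the boundary monoidal equivalence you import carries the same ordering convention, so the issue above must be settled there as well rather than being bypassed.
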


\begin{corollary}
The category $\Rep(\Tube({}_{\{\eC_\alpha\}}\eM_{\{\eD_\beta\}};{}_{\{\eC_\alpha\}}\eN_{\{\eD_\beta\}}))$ is a bimodule category over the UFCs $\Rep(\Tube({}_{\{\eC_\alpha\}}\eM_{\{\eD_\beta\}}))$ and $\Rep(\Tube({}_{\{\eC_\alpha\}}\eN_{\{\eD_\beta\}}))$. Moreover, under the convention adopted in Remark~\ref{rmk:optubeConvention},  defining the subcategories
$\eP_{1,1}= \Rep(\Tube({}_{\{\eC_\alpha\}}\eM_{\{\eD_\beta\}}))$,
$\eP_{1,2}= \Rep(\Tube({}_{\{\eC_\alpha\}}\eM_{\{\eD_\beta\}};{}_{\{\eC_\alpha\}}\eN_{\{\eD_\beta\}}))$,
$\eP_{2,1}= \Rep(\Tube({}_{\{\eC_\alpha\}}\eN_{\{\eD_\beta\}};{}_{\{\eC_\alpha\}}\eM_{\{\eD_\beta\}}))$, and
$\eP_{2,2}= \Rep(\Tube({}_{\{\eC_\alpha\}}\eN_{\{\eD_\beta\}}))$,
then  $\eP=\bigoplus_{i,j=1}^{2}\eP_{i,j}$ forms a multifusion category.
\end{corollary}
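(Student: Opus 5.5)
The plan is to reduce the statement to its folded boundary counterpart and then transport both the module structures and the multifusion structure back through the equivalence of Theorem~\ref{thm:Fun-Rep-equiv-MMDW}. Set $\eB=(\boxtimes_{\alpha}\eC_\alpha)\boxtimes(\boxtimes_\beta\eD_\beta^{\rm rev})$ and regard $\eM,\eN$ as left $\eB$-module categories. By Theorem~\ref{thm:Fun-Rep-equiv-MMDW}, applied to each of the four pairs $(\eM,\eM),(\eM,\eN),(\eN,\eM),(\eN,\eN)$, we get
\[
\eP_{i,j}\simeq \Fun_{\{\eC_\alpha\}|\{\eD_\beta\}}(\eM_i,\eM_j)^{\rm op}\simeq \Fun_{\eB}(\eM_i,\eM_j)^{\rm op}, \qquad \eM_1=\eM,\ \eM_2=\eN.
\]
By Corollary~\ref{cor:multimodule-wall-excitations}, the diagonal equivalences are monoidal, with the tensor product induced by composition of functors (convention $\ff\otimes\mathfrak g=\mathfrak g\circ\ff$ of Remark~\ref{rmk:optubeConvention}). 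It is standard that $\bigoplus_{i,j}\Fun_{\eB}(\eM_i,\eM_j)$ is a multifusion category: it is $\Fun_{\eB}(\eM\oplus\eN,\eM\oplus\eN)$, the dual of the fusion category $\eB$ with respect to the semisimple module $\eM\oplus\eN$. This gives rigidity, semisimplicity, finiteness and the decomposition of the unit $\one=\id_\eM\oplus\id_\eN$. Passing to the opposite category keeps it multifusion.

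The first real step is to show that the bimodule structure on $\Rep(\Tube(\eM;\eN))$ is defined intrinsically by the coactions $\beta,\rho$. For $V\in\Rep(\Tube(\eN))$ and $W\in\Rep(\Tube(\eM;\eN))$ I would set $V\otimes W:=\beta$-pulled-back action on $\beta(1)\cdot(V\otimes_{\Cbb}W)$. The truncation by $\beta(1)$ is needed because the unit is not grouplike in the weak Hopf setting. The right action is defined symmetrically using $\rho$. The comodule-algebra property (the preceding bicomodule theorem) makes these into representations. Coassociativity of the coaction together with the weak Hopf coproduct gives the associativity constraints, and compatibility of $\beta$ with $\rho$ gives the middle associativity. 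All of these constraints can be taken to be identity maps restricted to the truncated spaces, as for weak Hopf algebras.

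Next I would check that, under $\Phi_{\rm multi}$, these tensor products match composition of functors. Concretely, I would construct a natural isomorphism $\mathcal H^{X}\otimes\mathcal H^{\ff}\cong\mathcal H^{\ff\circ X}$. Graphically it is given by gluing two disks along the wall edge labelled $s$ and summing over that label. This is the same insertion of $\sum_{s}s$ and F-moves that defines $\beta$, so the isomorphism intertwines the actions. The off-diagonal compositions $\eP_{1,2}\times\eP_{2,1}\to\eP_{1,1}$ are handled analogously with the factorization map $\delta_{\eM,\eN,\eM}$, which supplies the action of $\Tube(\eM)$ on the glued disk space. With these isomorphisms in hand, all structure on $\eP$ is transported from the folded multifusion category, so the associators satisfy the pentagon automatically.

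The main obstacle is this last compatibility step. One must show that the isomorphisms $\mathcal H^X\otimes\mathcal H^{\ff}\cong\mathcal H^{\ff\circ X}$ are genuine intertwiners and are coherent with the associators, i.e.\ that the F-symbol coefficients in $\beta$ match the loop-removal normalizations. I would first verify this in the folded boundary case using the explicit boundary coaction. Then I would invoke the folding compatibility \eqref{eq:folding-compatible-with-multimodule-disk}, which lifts the verification to the multimodule setting.
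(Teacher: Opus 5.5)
Your first paragraph is already a complete proof, and it is the argument implicit in the paper. The paper states this corollary without a separate proof, immediately after Theorem~\ref{thm:Fun-Rep-equiv-MMDW} and Corollary~\ref{cor:multimodule-wall-excitations}. Each block $\eP_{i,j}$ is equivalent to $\Fun_{\eB}(\eM_i,\eM_j)^{\rm op}$, their sum is the multifusion category $\Fun_{\eB}(\eM\oplus\eN,\eM\oplus\eN)^{\rm op}$, and one transports tensor products, duals and coherence data along the four block equivalences. Because the diagonal equivalences are monoidal, the transported actions are actions of $\Rep(\Tube({}_{\{\eC_\alpha\}}\eM_{\{\eD_\beta\}}))$ and $\Rep(\Tube({}_{\{\eC_\alpha\}}\eN_{\{\eD_\beta\}}))$ with their weak Hopf tensor products, so nothing further is needed for the statement as written.

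Where you go beyond the paper is in requiring these actions to coincide with the ones defined intrinsically by $\beta$, $\rho$ and $\delta_{\eM,\eN,\eM}$. That is a stronger and physically more informative result: it makes precise the claim that fusion with wall excitations is captured by the coactions. But the corollary does not depend on it, so calling it the ``first real step'' and the ``main obstacle'' overstates what is required.

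If you keep that extra layer, two points need repair.

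First, your composite is the wrong way round. $\beta$ makes $\Rep(\Tube(\eM;\eN))$ a left module over $\Rep(\Tube(\eN))$. For $X\in\Fun(\eN,\eN)$ and $\ff\in\Fun(\eM,\eN)$, gluing the $X$-disk above the $\ff$-disk along the intermediate $\eN$-edge gives $\mathcal H^{X}\otimes\mathcal H^{\ff}\cong\mathcal H^{X\circ\ff}$; the expression $\ff\circ X$ is not defined. With $\ff\otimes\mathfrak g=\mathfrak g\circ\ff$ this is $\ff\otimes X$, i.e.\ the right action of $\eP_{2,2}$ on $\eP_{1,2}$. Likewise $\delta_{\eM,\eN,\eM}$ produces $\Rep(\Tube(\eN;\eM))\times\Rep(\Tube(\eM;\eN))\to\Rep(\Tube(\eM))$. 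So the products induced by $\Delta$, $\beta$, $\rho$ and $\delta$ are the reverse of those of $\eP$. This is harmless, since the reverse of a multifusion category is multifusion, but it must be tracked consistently on every block.

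Second, \eqref{eq:folding-compatible-with-multimodule-disk} is only a natural isomorphism between functors of plain categories; it carries no information about coactions or factorization maps. A computation done with the folded boundary coaction therefore does not transfer to the multimodule setting. You would also need to show that the folding identification of tube algebras intertwines $\beta$, $\rho$, $\delta$ and the weak Hopf structures of the excitation tube algebras. The paper never states this; it only describes folding as an algebra isomorphism or a Morita context. It is more direct to check $\mathcal H^{X}\otimes\mathcal H^{\ff}\cong\mathcal H^{X\circ\ff}$ in the multimodule tube itself, inserting $\sum_{s_1\in\Irr(\eN)}s_1$ and resolving every bulk sector as in \eqref{eq:tube_coprod}.
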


\begin{remark}
    The number of bulk phases affects the number of strings and
    multimodule associators entering a basis tube, but it does not
    introduce additional representation-theoretic sectors beyond the
    multimodule functors.  Formula
    \eqref{eq:multimodule-representation-equivalence} therefore
    extends the boundary and ordinary domain wall classifications to
    an arbitrary finite collection of phases meeting along the same
    wall.  The cases
    \begin{equation}
        |I|+|J|=1
        \qquad\text{and}\qquad
        |I|+|J|=2
    \end{equation}
    recover, respectively, the boundary and the
    domain wall constructions.
\end{remark}

\subsection{$N$-tuple algebra of comodule tube algebras}
\label{sec:Ntuplealge}

The multimodule tube algebra retains information from every bulk phase incident on the wall.  Its lower order pieces are obtained by making some of those phases invisible to the tube, or equivalently, by requiring the corresponding bulk strings to carry tensor unit labels.  For an ordinary multimodule wall this procedure produces a hierarchy of weak Hopf subalgebras.  We now show that the same mechanism survives for defects between two multimodule walls, with weak Hopf algebras replaced by bicomodule tube algebras.

The general setup is the same: suppose ${}_{\{\eC_{\alpha}\}_{\alpha\in I}}\eM_{\{\eD_{\beta}\}_{\beta\in J}}$ and ${}_{\{\eC_{\alpha}\}_{\alpha\in I}}\eN_{\{\eD_{\beta}\}_{\beta\in J}}$ are two multimodule category over $\{\eC_{\alpha}\}_{\alpha\in I}|\{\eD_{\beta}\}_{\beta\in J}$, with $N=|I|+|J|$. 
For subsets $I'\subseteq I$ and $J'\subseteq J$, after forgetting the actions indexed by $(I\setminus I')\sqcup(J\setminus J')$, the categories $\eM$ and $\eN$ remain $\{\eC_{\alpha}\}_{\alpha\in I'}|\{\eD_{\beta}\}_{\beta\in J'}$-multimodule categories.  Thus the multimodule defect tube algebra 
\[
    \Tube({}_{\{\eC_{\alpha}\}_{I'}}\eM_{\{\eD_{\beta}\}_{J'}};{}_{\{\eC_{\alpha}\}_{I'}}\eN_{\{\eD_{\beta}\}_{J'}})
\]
is a bicomodule algebra over the two weak Hopf algebras
\[
    \Tube({}_{\{\eC_{\alpha}\}_{I'}}\eN_{\{\eD_{\beta}\}_{J'}})\quad \text{and}\quad \Tube({}_{\{\eC_{\alpha}\}_{I'}}\eM_{\{\eD_{\beta}\}_{J'}}). 
\]
We denote its coactions by $\beta_{I',J'}$ and $\rho_{I',J'}$.

Suppose that $I'\subseteq I''$ and $J'\subseteq J''$. 
Given a basis element 
\[
    X\in \Tube({}_{\{\eC_{\alpha}\}_{I'}}\eM_{\{\eD_{\beta}\}_{J'}};{}_{\{\eC_{\alpha}\}_{I'}}\eN_{\{\eD_{\beta}\}_{J'}}),
\]
insert a string labeled by $\mathbb{1}_{\alpha}\in\eC_{\alpha}$ for every
$\alpha\in I''\setminus I'$ and a string labeled by
$\mathbb{1}_{\beta}\in\eD_{\beta}$ for every
$\beta\in J''\setminus J'$.  The new vertices are decorated with the canonical unit constraints of the relevant multimodule category.
This operation defines a linear map
\begin{equation*}
    \iota_{I',J'}^{I'',J''}:
    \Tube({}_{\{\eC_{\alpha}\}_{I'}}\eM_{\{\eD_{\beta}\}_{J'}};{}_{\{\eC_{\alpha}\}_{I'}}\eN_{\{\eD_{\beta}\}_{J'}})\to \Tube({}_{\{\eC_{\alpha}\}_{I''}}\eM_{\{\eD_{\beta}\}_{J''}};{}_{\{\eC_{\alpha}\}_{I''}}\eN_{\{\eD_{\beta}\}_{J''}}).
\end{equation*}
The same prescription applied to an excitation tube gives weak Hopf algebra morphisms
\begin{equation*}
    \jmath_{I',J';\mathcal X}^{I'',J''}:
    \Tube({}_{\{\eC_{\alpha}\}_{I'}}\eN_{\{\eD_{\beta}\}_{J'}})
    \to 
    \Tube({}_{\{\eC_{\alpha}\}_{I'}}\eN_{\{\eD_{\beta}\}_{J'}}),
    \quad
    \mathcal X\in\{\eM,\eN\}.
\end{equation*}
Deleting the newly inserted unit strings recovers the original basis tube.  Hence both $\iota_{I',J'}^{I'',J''}$ and
$\jmath_{I',J';\mathcal X}^{I'',J''}$ are injective.

\begin{proposition}
    \label{prop:unit-insertion-bicomodule-embedding}
    The map $\iota_{I',J'}^{I'',J''}$ is a unital algebra embedding. Moreover, it intertwines the two coactions in the sense that
    \begin{align}
        \beta_{I'',J''} \circ \iota_{I',J'}^{I'',J''} & = \left(\jmath_{I',J';\eN}^{I'',J''} \otimes \iota_{I',J'}^{I'',J''}\right) \circ \beta_{I',J'}, \label{eq:coaction-compatible-embeddings-1}\\
        \rho_{I'',J''} \circ \iota_{I',J'}^{I'',J''} & = \left(\iota_{I',J'}^{I'',J''} \otimes \jmath_{I',J';\eM}^{I'',J''}\right) \circ \rho_{I',J'}. \label{eq:coaction-compatible-embeddings-2}
    \end{align}
    Consequently, the linear map $\iota_{I',J'}^{I'',J''}$ is an embedding of bicomodule algebras relative to the embedded domain wall tube algebras. 
\end{proposition}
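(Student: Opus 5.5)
The proof will work directly on the tube basis. All three claims reduce to one locality principle: a string labeled by a tensor unit is transparent. It can be created or removed at no cost, and the F-symbols with a unit leg are the unit constraints of the relevant multimodule category. By the coherence theorem for multimodule categories, these constraints may be normalized to identities in the chosen skeletal presentation.

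\emph{Step 1 (unital algebra map).} First, the unit \eqref{eq:mul_defect_tube_unit} of the $(I',J')$-algebra consists of dotted strings only. Inserting further dotted strings for $(I''\setminus I')\sqcup(J''\setminus J')$ gives exactly the unit of the $(I'',J'')$-algebra, so $\iota_{I',J'}^{I'',J''}$ is unital.

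For multiplicativity, take basis tubes $X,X'$ and compare $\iota(X)\iota(X')$ with $\iota(XX')$ using the gluing rule \eqref{eq:Ntube_prod_1}. The delta factors $\delta_{u_1,v'_{l+1}}\delta_{z_1,w'_{l+1}}$ agree on both sides, because unit insertion does not change the outermost and innermost wall labels. The glued diagram for $\iota(X)\iota(X')$ contains two concentric unit circles in each new sector. These can be merged into a single unit circle without coefficient: the relevant fusion channel is forced to be $\mathbb{1}\otimes\mathbb{1}=\mathbb{1}$ with quantum dimension $1$. Moving the unit strings into the position prescribed by $\iota$ uses only unit-constraint F-moves, which are trivial. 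The reduced diagram then coincides with $\iota(XX')$.

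The main care here is ordering. The reduction of a product into the basis \eqref{eq:mul_defect_tube_basis} reorders strings of different sectors past one another via multimodule associators. When one of the two strings is a unit, these associators are the mixed unit constraints, so they are identities. Injectivity of $\iota$ was already noted before the statement.

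\emph{Step 2 (coaction compatibility).} I will verify \eqref{eq:coaction-compatible-embeddings-1}; \eqref{eq:coaction-compatible-embeddings-2} is identical with the roles of left and right exchanged.

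Apply $\beta_{I'',J''}$ from \eqref{eq:tube_coprod} to $\iota(X)$. The sum runs over intermediate wall labels $s_i,t_j$ and vertices $\sigma_i,\rho_j$ for every sector, including the new unit sectors. For a unit sector, the Hom space $\Hom_\eN(\mathbb{1}\otimes s,s')$ is nonzero only when $s'=s$, and it is one-dimensional, spanned by the unit constraint. So the sum over the new intermediate labels collapses to one term whose vertices are unit constraints. The prefactor $\sqrt{d_{t_{l+1}}/(d_{a_1}\cdots d_{s_1}\cdots d_{b_l})}$ acquires only extra factors $d_{\mathbb{1}}=1$. The surviving term is precisely $\jmath_{\eN}(X_{(1)})\otimes\iota(X_{(2)})$, summed as in $\beta_{I',J'}(X)$.

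An equivalent and more conceptual route is the description in Remark~\ref{rmk:coaction}: $\beta$ is obtained by inserting a wall $\sum_{s_1}s_1$ and F-moving, and unit strings commute with this insertion.

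\emph{Step 3 (conclusion).} Injectivity together with Steps 1--2 shows that $\iota$ is an embedding of bicomodule algebras along $\jmath_\eN$ and $\jmath_\eM$. Two points will need to be checked, and they are where I expect the real obstacle to lie.

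First, the $\jmath$ must be weak Hopf morphisms. The coproduct part follows from the Step 2 computation with $\eM=\eN$, and the counit and antipode are handled similarly.

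Second, the normalization convention for unit constraints must make them literally identity vertices. If the paper's gauge does not guarantee this, I would instead carry the unit-constraint scalars explicitly. They appear once on each side of each identity and cancel by the triangle axiom of the multimodule structure.
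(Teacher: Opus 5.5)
Your proposal is correct and uses the same idea as the paper's proof: tensor-unit strings are transparent, so they pass unchanged through the stacking that defines the product and through the intermediate-wall insertion and F-moves that define $\beta$ and $\rho$, with every move involving them reducing to a unit move. You carry this out more explicitly in the tube basis, by matching the delta factors, collapsing the one-dimensional unit-sector Hom spaces and noting the trivial $d_{\mathbb{1}}$ factors. You also flag two points the paper takes for granted: the unit-normalization gauge, and the fact that the maps $\jmath$ are weak Hopf morphisms.
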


\begin{proof}
    Multiplication in every algebra is defined by stacking compatible tube.  Tensor unit strings pass through this stacking operation by the unit axioms, and their unit coupons compose to unit coupons.  It follows that
    \begin{equation*}
        \iota_{I',J'}^{I'',J''}(XY) = \iota_{I',J'}^{I'',J''}(X) \iota_{I',J'}^{I'',J''}(Y).
    \end{equation*}
    Inserting unit strings also gives
    \begin{equation*}
        \iota_{I',J'}^{I'',J''} \bigl(1\bigr) = 1.
    \end{equation*}
    To check the first identity \eqref{eq:coaction-compatible-embeddings-1}, compute the left coaction by inserting an intermediate $\eN$-wall and resolving its simple labels.  An inactive bulk line remains labeled by its tensor unit throughout this operation, and every F-move involving that line reduces to a unit move.  One therefore obtains
    the same pair of tubes whether the coaction is performed before or after unit insertion.  This is precisely the first equality \eqref{eq:coaction-compatible-embeddings-1}.  Resolving an intermediate $\eM$-wall instead proves the second equality \eqref{eq:coaction-compatible-embeddings-2}.
\end{proof}

 For each $\alpha\in I$, the boundary defect tube algebra $\Tube_{\rm bd}(_{\eC_\alpha}\eM;{}_{\eC_\alpha}\eN)$ embeds via the embedding $\iota_{\{\alpha\},\emptyset}^{I,J}$ into the multimodule defect tube algebra, yielding a bicomodule subalgebra $\Tube_{L}(_{\eC_\alpha}\eM;{}_{\eC_\alpha}\eN)$:
\[
    \Tube_{\rm bd}({}_{\eC_\alpha}\eM;{}_{\eC_\alpha}\eN)\simeq \Tube_{L}({}_{\eC_\alpha}\eM;{}_{\eC_\alpha}\eN)  \xhookrightarrow{\hspace{0.4cm}} \Tube({}_{\{\eC_{\alpha}\}_{I}}\eM_{\{\eD_{\beta}\}_{J}};{}_{\{\eC_{\alpha}\}_{I}}\eN_{\{\eD_{\beta}\}_{J}}).
\]
Similarly for each $\beta\in J$, we obtain an embedding of bicomodule subalgebras 
\[
    \Tube_{\rm bd}(\eM_{\eD_\beta};\eN_{\eD_\beta})\simeq \Tube_{R}(\eM_{\eD_\beta};\eN_{\eD_\beta})  \xhookrightarrow{\hspace{0.4cm}} \Tube({}_{\{\eC_{\alpha}\}_{I}}\eM_{\{\eD_{\beta}\}_{J}};{}_{\{\eC_{\alpha}\}_{I}}\eN_{\{\eD_{\beta}\}_{J}}).
\]
By virtue of these embeddings, we are able to define an $N$-tuple generalization of the gluing map, analogous to \eqref{eq:gluing-map-tuple}, which is realized by taking the product of the embedded elements. The resulting comodule algebra is named as \emph{N-tuple algebra of comodule tube algebras} associated with $(I,J)$, and is denoted as 
\begin{equation*}
    (\bicrosslr_{\alpha\in I}\Tube_{L}({}_{\eC_\alpha}\eM;{}_{\eC_\alpha}\eN))\bicrosslr(\bicrosslr_{\beta\in J}\Tube_{R}(\eM_{\eD_\beta};\eN_{\eD_\beta})). 
\end{equation*}

\begin{theorem}[Hierarchy structure]
    \label{thm:hierarchy-multi-module-defect-tubes}
    For each $I'\subseteq I$ and $J'\subseteq J$, denote $r=|I'|+|J'|$ and $N=|I|+|J|$. Then the $r$-tuple algebra
    associated with $(I',J')$
    embeds into the $N$-tuple algebra associated with $(I,J)$ as bicomodule subalgebra. 
    Hence the $N$-tuple algebra contains a canonical $r$-tuple bicomodule subalgebra for every choice of $r$ incident bulk phases, where $0\leq r\leq N$. 
    In particular, the $N$-tuple algebra admits a hierarchical structure with respect to these bicomodule subalgebras.
\end{theorem}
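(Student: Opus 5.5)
The plan is to deduce the hierarchy theorem from Proposition~\ref{prop:unit-insertion-bicomodule-embedding}. The idea is that unit-string insertion is functorial in the pair of index sets, and that it carries each embedded boundary factor of the $r$-tuple algebra onto the corresponding embedded boundary factor of the $N$-tuple algebra.

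First, record the transitivity of unit insertion. For $I'\subseteq I''\subseteq I$ and $J'\subseteq J''\subseteq J$ we expect
\[
\iota_{I'',J''}^{I,J}\circ\iota_{I',J'}^{I'',J''}=\iota_{I',J'}^{I,J},
\]
and likewise for the maps $\jmath$. This holds because both sides insert the same family of tensor unit strings decorated by unit coupons. The multimodule coherence (unit constraints commuting across the different actions) guarantees that the order of insertion does not change the resulting basis tube. In particular, for $\alpha\in I'$ we get
\[
\iota_{I',J'}^{I,J}\circ\iota_{\{\alpha\},\emptyset}^{I',J'}=\iota_{\{\alpha\},\emptyset}^{I,J}.
\]
Hence $\iota_{I',J'}^{I,J}$ maps $\Tube_{L}({}_{\eC_\alpha}\eM;{}_{\eC_\alpha}\eN)$, taken inside the $(I',J')$ algebra, onto the same-named subalgebra inside the $(I,J)$ algebra. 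The analogous statement holds for $\Tube_R$ with $\beta\in J'$.

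Second, identify the $r$-tuple algebra with the subalgebra of $\Tube({}_{\{\eC_\alpha\}_{I'}}\eM_{\{\eD_\beta\}_{J'}};{}_{\{\eC_\alpha\}_{I'}}\eN_{\{\eD_\beta\}_{J'}})$ generated, via the gluing map, by products of the embedded factors $\Tube_L$ for $\alpha\in I'$ and $\Tube_R$ for $\beta\in J'$. By Proposition~\ref{prop:unit-insertion-bicomodule-embedding}, $\iota_{I',J'}^{I,J}$ is an injective unital algebra map. It therefore sends such products to the corresponding products of the images, which by the first step are products of the $N$-tuple factors indexed by $I'\sqcup J'\subseteq I\sqcup J$. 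Thus the $r$-tuple algebra lands inside the $N$-tuple algebra, and injectivity gives an embedding of algebras.

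Third, the bicomodule compatibility. Restrict the intertwining identities \eqref{eq:coaction-compatible-embeddings-1} and \eqref{eq:coaction-compatible-embeddings-2} to the $r$-tuple subalgebra. This shows that $\beta_{I,J}$ and $\rho_{I,J}$, evaluated on the image, factor through $\jmath_{I',J';\eN}^{I,J}\otimes\iota$ and $\iota\otimes\jmath_{I',J';\eM}^{I,J}$ respectively. So the image is a bicomodule subalgebra relative to the embedded weak Hopf tube algebras.

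The main obstacle is that the coaction of a product of embedded factors must stay inside the tuple subalgebra. The second factor of the coaction (or the first, for $\beta$) lies in the tuple subalgebra only if the coactions of the individual factors do. I would first check this for a single factor $\Tube_L$, where the coaction inserts only the intermediate wall and an inactive unit line stays trivial. I would then use multiplicativity of the coaction (the comodule algebra property) to extend it to products.

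Finally, the edge cases. The case $r=0$ gives the unit subalgebra spanned by wall-only tubes, i.e. the image of the unit. The case $r=N$ is the identity. Every intermediate $r$ arises from a choice of subsets, which gives the asserted hierarchy.
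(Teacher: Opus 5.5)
Your argument is correct and uses the same two ingredients as the paper's proof: Proposition~\ref{prop:unit-insertion-bicomodule-embedding}, and the transitivity $\iota_T^U\circ\iota_S^T=\iota_S^U$, $\iota_S^S=\id$, which the paper uses to show that the embedded copy does not depend on the chain of subsets. The one difference is that the paper tacitly identifies the $r$-tuple algebra with the whole multimodule defect tube algebra for $(I',J')$ and applies the proposition to it directly, so the closure-under-coaction issue you flag never comes up there; your treatment of the tuple algebra as the subalgebra generated by the embedded $\Tube_L$ and $\Tube_R$ factors, with closure obtained from the single-factor case of the proposition plus multiplicativity of the coactions, is valid and somewhat more careful, since it does not depend on the gluing map being onto.
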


\begin{proof}
    Proposition~\ref{prop:unit-insertion-bicomodule-embedding} gives an injective bicomodule algebra morphism for every inclusion $I'\subseteq I$ and $J'\subseteq J$. Let us write $S=(I',J')$, $T=(I'',J'')$  and define $S\subset T$ if $I'\subset I''$, $J'\subset J''$. If $S\subseteq T\subseteq U$, it is easy to see that the embeddings satisfy
    \[
        \iota_{T}^{U}\circ\iota_{S}^{T}= \iota_{S}^{U},\quad\iota_{S}^{S}=
        \id.
    \]
    Thus the embedding is independent of the order in which the missing bulk strings are added.  Therefore every chain of subsets gives the same embedded copy inside the $N$-tuple algebra.  
\end{proof}

%% file: Ch-Examples.tex
\section{Examples of comodule tube algebras}
\label{sec:exmp}
In this section, we give several examples of defect tube algebras. We will mainly focus on the boundary defect tube algebra, the bulk and domain wall case can be obtained similarly.

\subsection{Defect tube algebra for toric code}
\label{sec:ExpTubeToric}

In the Levin--Wen realization of the toric code, one may take the input fusion category to be $\eC=\Rep(\mathbb Z_2)$. 
We consider a gapped boundary of toric-code phase. The input fusion category for the bulk is $\eC=\Rep(\mathbb Z_2)$. Its irreducible objects are $\Irr(\eC)=\{\mathbb{1},\varepsilon\}$, where $\varepsilon$ is the nontrivial one-dimensional sign representation over $\mathbb{Z}_2$. The fusion rule is $\varepsilon\otimes \varepsilon = \mathbb{1}$. 
The category $\eM_s=\Rep(\mathbb{Z}_2)$, viewing as a $\Rep(\mathbb Z_2)$-module category with actions given by tensor product in $\Rep(\mathbb{Z}_2)$, describes the smooth boundary. 
The category $\eM_r=\Vect$ describes the rough boundary, whose action is provided by $a\triangleright x=F(a)\otimes_{\mathbb{C}} x$, with $F:\Rep(\mathbb Z_2)\to \Vect$ the forgetful functor. Clearly, $\mathbb{1}=\mathbb{C}$ is the unique irreducible object of $\Vect$. 

The boundary defect tube algebra $\mathbf{Tube}({}_\eC\eM_r,{}_\eC\eM_s)$ is spanned by 
\begin{equation}
    \begin{aligned}
    \begin{tikzpicture}[scale=0.65]
        \fill[gray!15] (0,1.5) arc[start angle=90, end angle=270, radius=1.5] -- (0,-0.5) arc[start angle=270, end angle=90, radius=0.5] -- cycle;
        \draw[dotted] (0,1.5) arc[start angle=90, end angle=270, radius=1.5]; 
        \draw[dotted] (0,0.5) arc[start angle=90, end angle=270, radius=0.5];
        \draw[red] (0,1) arc[start angle=90, end angle=270, radius=1]; 
        \draw[line width=1pt,cyan] (0,0.5)--(0,1.5);
        \draw[line width=1pt,violet] (0,-0.5)--(0,-1.5);
        \node[ line width=0.6pt, dashed, draw opacity=0.5] (a) at (-1.2,0){$\scriptstyle \mathbb{1}$};
        \node[ line width=0.6pt, dashed, draw opacity=0.5] (a) at (0.25,-1.3){$\scriptstyle \mathbb{1}$};
        \node[ line width=0.6pt, dashed, draw opacity=0.5] (a) at (0.25,-0.6){$\scriptstyle \mathbb{1}$};
        \node[ line width=0.6pt, dashed, draw opacity=0.5] (a) at (0.25,0.6){$\scriptstyle \mathbb{1}$};
        \node[ line width=0.6pt, dashed, draw opacity=0.5] (a) at (0.25,1.3){$\scriptstyle \mathbb{1}$};
    \end{tikzpicture}
    \end{aligned}\;, \quad 
    \begin{aligned}
    \begin{tikzpicture}[scale=0.65]
        \fill[gray!15] (0,1.5) arc[start angle=90, end angle=270, radius=1.5] -- (0,-0.5) arc[start angle=270, end angle=90, radius=0.5] -- cycle;
        \draw[dotted] (0,1.5) arc[start angle=90, end angle=270, radius=1.5]; 
        \draw[dotted] (0,0.5) arc[start angle=90, end angle=270, radius=0.5];
        \draw[red] (0,1) arc[start angle=90, end angle=270, radius=1]; 
        \draw[line width=1pt,cyan] (0,0.5)--(0,1.5);
        \draw[line width=1pt,violet] (0,-0.5)--(0,-1.5);
        \node[ line width=0.6pt, dashed, draw opacity=0.5] (a) at (-1.2,0){$\scriptstyle \mathbb{1}$};
        \node[ line width=0.6pt, dashed, draw opacity=0.5] (a) at (0.25,-1.3){$\scriptstyle \mathbb{1}$};
        \node[ line width=0.6pt, dashed, draw opacity=0.5] (a) at (0.25,-0.6){$\scriptstyle \mathbb{1}$};
        \node[ line width=0.6pt, dashed, draw opacity=0.5] (a) at (0.25,0.6){$\scriptstyle \varepsilon$};
        \node[ line width=0.6pt, dashed, draw opacity=0.5] (a) at (0.25,1.3){$\scriptstyle \varepsilon$};
    \end{tikzpicture}
    \end{aligned}\;, \quad 
    \begin{aligned}
    \begin{tikzpicture}[scale=0.65]
        \fill[gray!15] (0,1.5) arc[start angle=90, end angle=270, radius=1.5] -- (0,-0.5) arc[start angle=270, end angle=90, radius=0.5] -- cycle;
        \draw[dotted] (0,1.5) arc[start angle=90, end angle=270, radius=1.5]; 
        \draw[dotted] (0,0.5) arc[start angle=90, end angle=270, radius=0.5];
        \draw[red] (0,1) arc[start angle=90, end angle=270, radius=1]; 
        \draw[line width=1pt,cyan] (0,0.5)--(0,1.5);
        \draw[line width=1pt,violet] (0,-0.5)--(0,-1.5);
        \node[ line width=0.6pt, dashed, draw opacity=0.5] (a) at (-1.2,0){$\scriptstyle \varepsilon$};
        \node[ line width=0.6pt, dashed, draw opacity=0.5] (a) at (0.25,-1.3){$\scriptstyle \mathbb{1}$};
        \node[ line width=0.6pt, dashed, draw opacity=0.5] (a) at (0.25,-0.6){$\scriptstyle \mathbb{1}$};
        \node[ line width=0.6pt, dashed, draw opacity=0.5] (a) at (0.25,0.6){$\scriptstyle \mathbb{1}$};
        \node[ line width=0.6pt, dashed, draw opacity=0.5] (a) at (0.25,1.3){$\scriptstyle \varepsilon$};
    \end{tikzpicture}
    \end{aligned}\;, \quad 
    \begin{aligned}
    \begin{tikzpicture}[scale=0.65]
        \fill[gray!15] (0,1.5) arc[start angle=90, end angle=270, radius=1.5] -- (0,-0.5) arc[start angle=270, end angle=90, radius=0.5] -- cycle;
        \draw[dotted] (0,1.5) arc[start angle=90, end angle=270, radius=1.5]; 
        \draw[dotted] (0,0.5) arc[start angle=90, end angle=270, radius=0.5];
        \draw[red] (0,1) arc[start angle=90, end angle=270, radius=1]; 
        \draw[line width=1pt,cyan] (0,0.5)--(0,1.5);
        \draw[line width=1pt,violet] (0,-0.5)--(0,-1.5);
        \node[ line width=0.6pt, dashed, draw opacity=0.5] (a) at (-1.2,0){$\scriptstyle \varepsilon$};
        \node[ line width=0.6pt, dashed, draw opacity=0.5] (a) at (0.25,-1.3){$\scriptstyle \mathbb{1}$};
        \node[ line width=0.6pt, dashed, draw opacity=0.5] (a) at (0.25,-0.6){$\scriptstyle \mathbb{1}$};
        \node[ line width=0.6pt, dashed, draw opacity=0.5] (a) at (0.25,0.6){$\scriptstyle \varepsilon$};
        \node[ line width=0.6pt, dashed, draw opacity=0.5] (a) at (0.25,1.3){$\scriptstyle \mathbb{1}$};
    \end{tikzpicture}
    \end{aligned}\;. 
\end{equation}
For later use, we denote the four basis elements above by
\begin{equation}
    T_{a,u}:=
    \begin{tikzpicture}[baseline=-0.45ex,scale=0.55]
        \fill[gray!15] (0,1.5) arc[start angle=90, end angle=270, radius=1.5] -- (0,-0.5) arc[start angle=270, end angle=90, radius=0.5] -- cycle;
        \draw[dotted] (0,1.5) arc[start angle=90, end angle=270, radius=1.5]; 
        \draw[dotted] (0,0.5) arc[start angle=90, end angle=270, radius=0.5];
        \draw[red] (0,1) arc[start angle=90, end angle=270, radius=1]; 
        \draw[line width=1pt,cyan] (0,0.5)--(0,1.5);
        \draw[line width=1pt,violet] (0,-0.5)--(0,-1.5);
        \node at (-1.2,0){$\scriptstyle a$};
        \node at (0.25,-1.3){$\scriptstyle \mathbb{1}$};
        \node at (0.25,-0.6){$\scriptstyle \mathbb{1}$};
        \node at (0.25,0.6){$\scriptstyle u$};
        \node at (0.4,1.3){$\scriptstyle au$};
    \end{tikzpicture},
    \quad a,u\in \{\mathbb{1},\varepsilon\}.
\end{equation}
Here $au$ denotes the tensor product in $\Rep(\mathbb Z_2)$. 
The multiplication is obtained from the gluing rule in the boundary defect tube algebra.
Since all relevant Hom-spaces are one-dimensional and all F-symbols are trivial, one gets
\begin{equation}
    T_{a,u}\,T_{a',u'}
    =
    \delta_{u,a'u'}\,T_{aa',u'}.
\end{equation}
Equivalently, the unit is
\begin{equation}
    1_{\mathbf{Tube}({}_\eC\eM_r;{}_\eC\eM_s)}
    =
    T_{\mathbb{1},\mathbb{1}}+T_{\mathbb{1},\varepsilon}.
\end{equation}
With the ordered basis $T_{\mathbb{1},\mathbb{1}}, T_{\mathbb{1},\varepsilon}, T_{\varepsilon,\mathbb{1}}, T_{\varepsilon,\varepsilon},$,
the multiplication table is
\begin{equation}
\begin{array}{c|cccc}
     & T_{\mathbb{1},\mathbb{1}} & T_{\mathbb{1},\varepsilon} 
     & T_{\varepsilon,\mathbb{1}} & T_{\varepsilon,\varepsilon} \\ \hline
T_{\mathbb{1},\mathbb{1}} 
     & T_{\mathbb{1},\mathbb{1}} & 0 & 0 & T_{\varepsilon,\varepsilon} \\
T_{\mathbb{1},\varepsilon} 
     & 0 & T_{\mathbb{1},\varepsilon} & T_{\varepsilon,\mathbb{1}} & 0 \\
T_{\varepsilon,\mathbb{1}} 
     & T_{\varepsilon,\mathbb{1}} & 0 & 0 & T_{\mathbb{1},\varepsilon} \\
T_{\varepsilon,\varepsilon} 
     & 0 & T_{\varepsilon,\varepsilon} & T_{\mathbb{1},\mathbb{1}} & 0 
\end{array}
\end{equation}
In particular, if we identify $T_{a,u}\leftrightarrow E_{au,u}$, 
where $E_{x,y}$ denotes the matrix unit indexed by $x,y\in\{\mathbb{1},\varepsilon\}$,
then $\Tube({}_\eC\eM_r;{}_\eC\eM_s)$ can be identified with the algebra $M_2(\mathbb{C})$ of $2\times 2$ matrices.

Next we describe the bicomodule structure explicitly. First consider the rough boundary
tube algebra $\mathbf{Tube}({}_\eC\eM_r)$. Since $\eM_r=\Vect$ has only one simple object, it has
basis
\begin{equation}
    R_a:=\begin{tikzpicture}[baseline=-0.45ex,scale=0.55]
        \fill[gray!15] (0,1.5) arc[start angle=90, end angle=270, radius=1.5] -- (0,-0.5) arc[start angle=270, end angle=90, radius=0.5] -- cycle;
        \draw[dotted] (0,1.5) arc[start angle=90, end angle=270, radius=1.5]; 
        \draw[dotted] (0,0.5) arc[start angle=90, end angle=270, radius=0.5];
        \draw[red] (0,1) arc[start angle=90, end angle=270, radius=1]; 
        \draw[line width=1pt,violet] (0,0.5)--(0,1.5);
        \draw[line width=1pt,violet] (0,-0.5)--(0,-1.5);
        \node at (-1.2,0){$\scriptstyle a$};
        \node at (0.25,-1.3){$\scriptstyle \mathbb{1}$};
        \node at (0.25,-0.6){$\scriptstyle \mathbb{1}$};
        \node at (0.25,0.6){$\scriptstyle \mathbb{1}$};
        \node at (0.25,1.3){$\scriptstyle \mathbb{1}$};
    \end{tikzpicture},\quad a\in\{\mathbb{1},\varepsilon\}.
\end{equation}
Its product, coproduct and antipode are
\begin{equation}
    R_aR_{a'}=R_{aa'},\quad
    \Delta_r(R_a)=R_a\otimes R_a,\quad
    \varepsilon_r(R_a)=1, \quad S(R_a) = R_a. 
\end{equation}
For the smooth boundary tube algebra $\mathbf{Tube}({}_\eC\eM_s)$, we use the basis
\begin{equation}
    S_{a;i,u}:=\begin{tikzpicture}[baseline=-0.45ex,scale=0.55]
        \fill[gray!15] (0,1.5) arc[start angle=90, end angle=270, radius=1.5] -- (0,-0.5) arc[start angle=270, end angle=90, radius=0.5] -- cycle;
        \draw[dotted] (0,1.5) arc[start angle=90, end angle=270, radius=1.5]; 
        \draw[dotted] (0,0.5) arc[start angle=90, end angle=270, radius=0.5];
        \draw[red] (0,1) arc[start angle=90, end angle=270, radius=1]; 
        \draw[line width=1pt,cyan] (0,0.5)--(0,1.5);
        \draw[line width=1pt,cyan] (0,-0.5)--(0,-1.5);
        \node at (-1.2,0){$\scriptstyle a$};
        \node at (0.25,-1.3){$\scriptstyle i$};
        \node at (0.4,-0.6){$\scriptstyle ai$};
        \node at (0.25,0.6){$\scriptstyle u$};
        \node at (0.4,1.3){$\scriptstyle au$};
    \end{tikzpicture},
    \quad a,i,u\in\{\mathbb{1},\varepsilon\}.
\end{equation}
Again, all multiplicity spaces are one-dimensional.

The left coaction
\begin{equation}
    \beta:\mathbf{Tube}({}_\eC\eM_r;{}_\eC\eM_s)
    \to
    \mathbf{Tube}({}_\eC\eM_s)\otimes \mathbf{Tube}({}_\eC\eM_r;{}_\eC\eM_s)
\end{equation}
is given by inserting a smooth boundary tube on the smooth side. Specializing the general
formula to the present case gives
\begin{equation}
    \beta(T_{a,u})
    =
    \sum_{i\in\{\mathbb{1},\varepsilon\}}
    S_{a;i,u}\otimes T_{a,i}.
\end{equation}
Explicitly,
\begin{equation}
\begin{aligned}
    \beta(T_{\mathbb{1},\mathbb{1}})
    &=
    S_{\mathbb{1};\mathbb{1},\mathbb{1}}\otimes T_{\mathbb{1},\mathbb{1}}
    +
    S_{\mathbb{1};\varepsilon,\mathbb{1}}\otimes T_{\mathbb{1},\varepsilon},\\
    \beta(T_{\mathbb{1},\varepsilon})
    &=
    S_{\mathbb{1};\mathbb{1},\varepsilon}\otimes T_{\mathbb{1},\mathbb{1}}
    +
    S_{\mathbb{1};\varepsilon,\varepsilon}\otimes T_{\mathbb{1},\varepsilon},\\
    \beta(T_{\varepsilon,\mathbb{1}})
    &=
    S_{\varepsilon;\mathbb{1},\mathbb{1}}\otimes T_{\varepsilon,\mathbb{1}}
    +
    S_{\varepsilon;\varepsilon,\mathbb{1}}\otimes T_{\varepsilon,\varepsilon},\\
    \beta(T_{\varepsilon,\varepsilon})
    &=
    S_{\varepsilon;\mathbb{1},\varepsilon}\otimes T_{\varepsilon,\mathbb{1}}
    +
    S_{\varepsilon;\varepsilon,\varepsilon}\otimes T_{\varepsilon,\varepsilon}.
\end{aligned}
\end{equation}

The right coaction
\begin{equation}
    \rho:\mathbf{Tube}({}_\eC\eM_r;{}_\eC\eM_s)
    \to
    \mathbf{Tube}({}_\eC\eM_r;{}_\eC\eM_s)\otimes \mathbf{Tube}({}_\eC\eM_r)
\end{equation}
is even simpler, because the rough boundary has only one simple object. One obtains
\begin{equation}
    \rho(T_{a,u})=T_{a,u}\otimes R_a.
\end{equation}
Therefore,
\begin{equation}
\begin{aligned}
    \rho(T_{\mathbb{1},\mathbb{1}})
    =T_{\mathbb{1},\mathbb{1}}\otimes R_{\mathbb{1}},\;
    \rho(T_{\mathbb{1},\varepsilon})
    =T_{\mathbb{1},\varepsilon}\otimes R_{\mathbb{1}},\;
    \rho(T_{\varepsilon,\mathbb{1}})
    =T_{\varepsilon,\mathbb{1}}\otimes R_{\varepsilon},\;
    \rho(T_{\varepsilon,\varepsilon})
    =T_{\varepsilon,\varepsilon}\otimes R_{\varepsilon}.
\end{aligned}
\end{equation}

It is immediate from the formula
\begin{equation}
    T_{a,u}T_{a',u'}=\delta_{u,a'u'}T_{aa',u'}
\end{equation}
that the right coaction is multiplicative:
\begin{equation}
    \rho(T_{a,u}T_{a',u'})
    =
    \rho(T_{a,u})\rho(T_{a',u'}).
\end{equation}
The same conclusion holds for the left coaction:
\begin{equation}
    \beta(T_{a,u}T_{a',u'})
    =
    \beta(T_{a,u})\beta(T_{a',u'}).
\end{equation}
Thus $\mathbf{Tube}({}_\eC\eM_r;{}_\eC\eM_s)$ is a
$\mathbf{Tube}({}_\eC\eM_s)|\mathbf{Tube}({}_\eC\eM_r)$ bicomodule algebra.

Finally, since
\begin{equation}
    \mathbf{Tube}({}_\eC\eM_r;{}_\eC\eM_s)\cong M_2(\mathbb C),
\end{equation}
it has a unique irreducible representation. This agrees with the categorical description
\begin{equation}
    \Rep\bigl(\mathbf{Tube}({}_\eC\eM_r;{}_\eC\eM_s)\bigr)
    \simeq
    \Fun_\eC(\eM_r,\eM_s),
\end{equation}
which contains a single simple boundary defect, namely the rough-to-smooth boundary
defect. Physically this is the familiar toric-code boundary-changing defect between the
rough and smooth boundaries, whose quantum dimension is $\sqrt{2}$.

\subsection{Defect tube algebra for finite group}
 
Associated to a finite group $G$ is a fusion category $\underline{G}=\Vect_G$. Its simple objects are labeled by elements of $G$, and we write the simple object corresponding to $g\in G$ again as $g$. The morphism spaces are given by $\Hom(g,h) = \delta_{g,h}\mathbb{C}$, and the tensor product is determined by the group multiplication $g\otimes h=gh$. The tensor unit is the identity element $e\in G$, and the dual of $g$ is $g^{-1}$. 

An indecomposable left $\underline{G}$-module category is determined by a pair $(L,\psi)$ where $L\leq G$ is a subgroup and $\psi\in Z^2(L,\mathbb{C}^\times)$ is a 2-cocycle on $L$ \cite[Example 7.4.10]{etingof2016tensor}. We denote it as $\eM(L,\psi)$. The set of simple objects is indexed by the set of cosets $G/L$ with the action given by $g \triangleright g'L = gg'L$. The component $\psi$ determines the associativity constraint of the module structure. To illustrate this, we fix a set of representatives for $G/L = \{\overline{g_1},\cdots,\overline{g_r}\}$, with $\overline{g_i}=g_iL$. For each $g\in G$ and $g_i$, the coset $\overline{gg_i} = \overline{g_j}$ for some representative $g_j$. Then there is a unique $\ell_{g,\overline{g_i}}\in L$ such that $gg_i=g_j\ell_{g,\overline{g_i}}$. 
Then the associator $\alpha_{g,h,\overline{g_i}}:(g\triangleright h)\triangleright \overline{g_i}\to (g\triangleright (h\triangleright \overline{g_i}))$ is given by the scalar $\alpha_{g,g,\overline{g_i}}=\psi(\ell_{g,\overline{hg_i}},\ell_{h,\overline{g_i}})$.

Let us consider the boundary defect tube algebra between boundaries determined by $\eM(L,\psi)$ and $\eN(K,\phi)$. 
Suppose $G/L=\{\overline{g_1},\cdots,\overline{g_r}\}$, $G/K=\{\overline{h_1},\cdots,\overline{h_s}\}$, where $\overline{g_i}=g_iL,\overline{h_j}=h_jK$ with representatives $g_i,h_j\in G$ respectively.  
Then the corresponding boundary defect tube algebra is given as 
\begin{equation}
\mathbf{Tube}({}_{\underline{G}}\eM(L,\psi);{}_{\underline{G}}\eN(K,\phi)) 
=\operatorname{span}
\left\{    \begin{aligned}

    \end{aligned} \; :\; g\in G, 1\leq i\leq r, 1\leq j\leq s
    \right\}\,,
\end{equation}
which has dimension $|G||G/K||G/L|=\frac{|G|^3}{|K||L|}$. 
The weak Hopf algebra structure on the boundary tube algebra $\mathbf{Tube}(\eM(L,\psi))$ is 
\begin{align}
    1 & = \sum_{g\in G}\sum_{i=1}^r\sum_{j=1}^r\; \begin{aligned}
    %
    \end{aligned} \,. 
\end{align}
The weak Hopf algebra structure on $\mathbf{Tube}(\eN(K,\phi))$ can be explicitly written down similarly. 

As an algebra, the boundary defect tube algebra $\mathbf{Tube}({}_{\underline{G}}\eM(L,\psi);{}_{\underline{G}}\eN(K,\phi))$ has unit 
\begin{equation}
    1 = \sum_{g\in G}\sum_{i=1}^r\sum_{j=1}^s\; \begin{aligned}
    %
    \end{aligned}\;. 
\end{equation}
Here $\kappa_{h,\overline{h_l}} \in K$ is the unique element satisfying $hh_l=h_{n}\kappa_{h,\overline{h_l}}$ with $h_n$ the chosen representative for $hh_lK$.
Now it is ready to write down its coactions with respect to $\mathbf{Tube}({}_{\underline{G}}\eM(L,\psi))$ and $\mathbf{Tube}({}_{\underline{G}}\eN(K,\phi))$. To be precise, the left $\mathbf{Tube}({}_{\underline{G}}\eN(K,\phi))$-coaction is
\begin{equation}
    \beta\left(\begin{aligned}
    %
    \end{aligned}\;,
\end{equation}
and the right $\mathbf{Tube}({}_{\underline{G}}\eM(L,\psi))$-coaction is 
\begin{equation}
    \rho\left(\begin{aligned}
    %
    \end{aligned}\;. 
\end{equation}

For the domain wall case, we can use the folding trick to turn a $\underline{G}|\underline{G}$ domain wall into a $\underline{G\times G^{\rm op}}$ boundary. Here $G^{\rm op}$ is the opposite group. Thus the domain wall case is obtained from the boundary case by replacing $G$ with $G\times G^{\rm op}$. In particular, indecomposable domain walls are parametrized by pairs $(L,\psi)$, where $L\subset G\times G^{\rm op}$ and $\psi\in Z^2(L,\mathbb C^\times)$, and the simple objects are labeled by cosets $(G\times G^{\rm op})/L$.

%% file: Ch-Appendix.tex
\section{Review of string-net model}
\label{sec:SNreview}

Let us briefly review some basic aspects of the Levin-Wen string-net model, for more details, see, e.g., \cite{Levin2005,kirillov2011stringnet,lan2014topological,Lin2021generalized,jia2024weakTube,Kitaev2012boundary,hu2018boundary}. Let $\eD$ be the input UFC (or unitary multifusion category). In the diagrammatic representation, each edge is drawn as an upward-oriented line, analogous to anyon worldlines. For the fusion and splitting operations, one defines the vector spaces $V_c^{ab}=\Hom_{\EuScript{D}}(c,a\otimes b)$ and $V_{ab}^c=\Hom_{\EuScript{D}}(a\otimes b,c)$, which are spanned by the corresponding fusion and splitting vertex vectors. 

To normalize these local vertex vectors, we introduce a factor $Y_{c}^{ab}$ (which is usually chosen as $\sqrt{\frac{d_ad_b}{d_c}}$) in the following manner:
\begin{align} 
    \notag
	(Y_{c}^{ab})^{-1/2}
	\begin{aligned}
		\begin{tikzpicture}
			\draw[-latex,line width=.6pt,black] (0,-0.5) -- (0,-.1);
			\draw[line width=.6pt,black] (0,-0.4) -- (0,0);
			\draw[-latex,line width=.6pt,black](0,0)--(-0.3,0.3);
			\draw[line width=.6pt,black](-0.1,0.1)--(-0.4,0.4);
			\draw[-latex,line width=.6pt,black](0,0)--(0.3,0.3);
			\draw[line width=.6pt,black](0.1,0.1)--(0.4,0.4);
			\node[ line width=0.6pt, dashed, draw opacity=0.5] (a) at (-0.5,0.6){$a$};
			\node[ line width=0.6pt, dashed, draw opacity=0.5] (a) at (0.5,0.6){$b$};
			\node[ line width=0.6pt, dashed, draw opacity=0.5] (a) at (0,-0.7){$c$};
			\node[ line width=0.6pt, dashed, draw opacity=0.5] (a) at (0.3,-0.1){$\alpha$};
		\end{tikzpicture}
	\end{aligned}
	&=|c\to a,b;\alpha\rangle, \quad \quad \quad 
	 (Y_{c}^{ab})^{-1/2}
	\begin{aligned}
		\begin{tikzpicture}
			\draw[-latex,line width=.6pt,black] (0.4,-0.4) -- (0.1,-0.1);
			\draw[line width=.6pt,black] (0.3,-0.3) -- (0,0);
			\draw[-latex,line width=.6pt,black] (-0.4,-0.4) -- (-0.1,-0.1);
			\draw[line width=.6pt,black] (-0.3,-0.3) -- (0,0);
			\draw[-latex,line width=.6pt,black](0,0)--(0,0.3);
			\draw[line width=.6pt,black](0,0.1)--(0,0.5);
			\node[ line width=0.6pt, dashed, draw opacity=0.5] (a) at (0.4,-0.6){$b$};
			\node[ line width=0.6pt, dashed, draw opacity=0.5] (a) at (0,0.7){$c$};
			\node[ line width=0.6pt, dashed, draw opacity=0.5] (a) at (-0.4,-0.6){$a$};
			\node[ line width=0.6pt, dashed, draw opacity=0.5] (a) at (0.3,0.1){$\beta$};
		\end{tikzpicture}
	\end{aligned}
	=\langle a,b\to c;\beta|. \label{eq:vertexvec2}
\end{align}
The dimension $\dim V^{ab}_{c}=\dim V_{ab}^{c} =N_{ab}^c$ is called fusion multiplicity, and quantum dimension $d_a$ is the positive eigenvalue of fusion multiplicity matrix.

There are three types of basic topological moves that are crucial for constructing generalized tube algebras and for building the string-net lattice model:

\begin{enumerate}
    \item \emph{Loop move.}
There is a pairing between $V_{c}^{ab}$ and $V_{ab}^c$ given by 
\begin{equation}\label{eq:inner}
	\langle c\to a,b;\alpha | a,b\to c';\beta \rangle = \delta_{c,c'}\delta_{\alpha,\beta}.
\end{equation}
Diagrammatically, this is characterized by the loop move
\begin{equation} 
    \notag
\text{loop move:}\quad	\begin{aligned}
		\begin{tikzpicture}
			\draw[-latex,line width=.6pt,black](0,0)--(0,0.4);
			\draw[line width=.6pt,black](0,0.1)--(0,0.5);
			\draw[line width=.6pt,black](0,-0.3) circle (0.3);
			\draw[-latex,line width=.6pt,black](-0.3,-0.3)--(-.3,-.2);
			\draw[-latex,line width=.6pt,black](0.3,-0.3)--(.3,-.2);
			\draw[-latex,line width=.6pt,black](0,-1.1)--(0,-0.7);
			\draw[line width=.6pt,black](0,-1.1)--(0,-0.6);
			\node[ line width=0.6pt, dashed, draw opacity=0.5] (a) at (0.6,-0.3){$b$};
			\node[ line width=0.6pt, dashed, draw opacity=0.5] (a) at (0,0.75){$c'$};
			\node[ line width=0.6pt, dashed, draw opacity=0.5] (a) at (0,-1.3){$c$};
			\node[ line width=0.6pt, dashed, draw opacity=0.5] (a) at (-0.6,-0.3){$a$};
			\node[ line width=0.6pt, dashed, draw opacity=0.5] (a) at (0.3,0.2){$\beta$};
			\node[ line width=0.6pt, dashed, draw opacity=0.5] (a) at (-0.3,-0.8){$\alpha$};
		\end{tikzpicture}
	\end{aligned}
	=
	\delta_{c,c'}\delta_{\alpha,\beta} {Y_c^{ab}}
	\begin{aligned}
		\begin{tikzpicture}
			\draw[line width=.6pt,black](0,-1.1)--(0,0.5);
			\draw[-latex,line width=.6pt,black](0,-0.3)--(0,0);
			\node[ line width=0.6pt, dashed, draw opacity=0.5] (a) at (0,0.7){$c$};
		\end{tikzpicture}
	\end{aligned}. \label{eq:loopev}
\end{equation}
The left-hand side of Eq.~\eqref{eq:inner} is to be read from right to left, while the diagrams are to be read from bottom to top.

\item  \emph{Parallel move.}
The completeness relation for the fusion basis takes the form
\begin{equation}
    \notag
	\mathds{I}=\sum_{c,\alpha} |c\to a,b;\alpha\rangle \langle a,b \to c;\alpha|,
\end{equation}
where $\mathds{I}$ is identity over the linear space $\Hom_{\eD}(a\otimes b,a\otimes b)$.
Diagrammatically, this corresponds to the parallel move
 \begin{equation} 
     \notag
	\text{parallel move:}\quad\begin{aligned}
		\begin{tikzpicture}
			\draw[line width=.6pt,black](0,-1.1)--(0,0.5);
			\draw[-latex,line width=.6pt,black](0,-0.3)--(0,0);
			\node[ line width=0.6pt, dashed, draw opacity=0.5] (a) at (0,0.7){$a$};
			\draw[line width=.6pt,black](0.6,-1.1)--(0.6,0.5);
			\draw[-latex,line width=.6pt,black](0.6,-0.3)--(0.6,0);
			\node[ line width=0.6pt, dashed, draw opacity=0.5] (a) at (0.6,0.7){$b$};
		\end{tikzpicture}
	\end{aligned}
	=
	\sum_{c,\alpha}\frac{1}{Y_c^{ab}}
	\begin{aligned}
		\begin{tikzpicture}
			\draw[-latex,line width=.6pt,black] (0.4,-0.4) -- (0.1,-0.1);
			\draw[line width=.6pt,black] (0.3,-0.3) -- (0,0);
			\draw[-latex,line width=.6pt,black] (-0.4,-0.4) -- (-0.1,-0.1);
			\draw[line width=.6pt,black] (-0.3,-0.3) -- (0,0);
			\draw[-latex,line width=.6pt,black](0,0)--(0,0.45);
			\draw[line width=.6pt,black](0,0.1)--(0,0.7);
			\draw[line width=.6pt,black](-0.4,1.1)--(0,0.7);
			\draw[line width=.6pt,black](0.4,1.1)--(0,0.7);
			\draw[-latex,line width=.6pt,black](0,0.7)--(0.3,1);
			\draw[line width=.6pt,black](0.4,1.1)--(0,0.7);
			\draw[-latex,line width=.6pt,black](0,0.7)--(-0.3,1);
			\node[ line width=0.6pt, dashed, draw opacity=0.5] (a) at (0.4,-0.6){$b$};
			\node[ line width=0.6pt, dashed, draw opacity=0.5] (a) at (-0.4,-0.6){$a$};
			\node[ line width=0.6pt, dashed, draw opacity=0.5] (a) at (0.3,0.1){$\alpha$};
			\node[ line width=0.6pt, dashed, draw opacity=0.5] (a) at (0.3,0.6){$\alpha$};
			\node[ line width=0.6pt, dashed, draw opacity=0.5] (a) at (-0.2,0.3){$c$};
			\node[ line width=0.6pt, dashed, draw opacity=0.5] (a) at (0.4,1.3){$b$};
			\node[ line width=0.6pt, dashed, draw opacity=0.5] (a) at (-0.4,1.3){$a$};
		\end{tikzpicture}
	\end{aligned}.
\end{equation}

\item \emph{F-move.}
The associativity isomorphism $a: (i\otimes j)\otimes k \to i \otimes (j\otimes k)$ induces a linear map $F_{l}^{ijk}:\Hom_{\eD}(l,(i\otimes j)\otimes k)\to \Hom_{\eD}(l,i\otimes (j\otimes k))$. Similarly, $a^{\dagger}$ induces $(F_{l}^{ijk})^{\dagger}$. 
Diagrammatically, these maps are realized as F-moves:
\begin{gather*}
    \begin{aligned}
		\begin{tikzpicture}[scale=0.95]
			\draw[-latex,line width=.6pt,black] (0,-0.5) -- (0,-.1);
			\draw[line width=.6pt,black] (0,-0.4) -- (0,0);
			\draw[-latex,line width=.6pt,black](0,0)--(-0.7,0.7);
			\draw[line width=.6pt,black](0,0)--(-0.8,0.8);
			\draw[-latex,line width=.6pt,black](0.4,0.4)--(0.1,0.7);
			\draw[line width=.6pt,black](0.4,0.4)--(0,0.8);
			\draw[-latex,line width=.6pt,black](0,0)--(0.7,0.7);
			\draw[-latex,line width=.6pt,black](0,0)--(0.3,0.3);
			\draw[line width=.6pt,black](0.1,0.1)--(0.8,0.8);
			\node[ line width=0.6pt, dashed, draw opacity=0.5] (a) at (-0.8,1){$\scriptstyle i$};
			\node[ line width=0.6pt, dashed, draw opacity=0.5] (a) at (0,1){$\scriptstyle j$};
			\node[ line width=0.6pt, dashed, draw opacity=0.5] (a) at (0.8,1){$\scriptstyle k$};
			\node[ line width=0.6pt, dashed, draw opacity=0.5] (a) at (0,-0.7){$\scriptstyle l$};
			\node[ line width=0.6pt, dashed, draw opacity=0.5] (a) at (0.3,-0.1){$\scriptstyle \mu$};
			\node[ line width=0.6pt, dashed, draw opacity=0.5] (a) at (0,0.3){$\scriptstyle n$};
				\node[ line width=0.6pt, dashed, draw opacity=0.5] (a) at (0.6,0.3){$\scriptstyle \nu$};
		\end{tikzpicture}
	\end{aligned}
	 =\sum_{m,\alpha,\beta} [F^{ijk}_{l}]_{n\mu\nu}^{m\alpha\beta}
	\begin{aligned}
        \begin{tikzpicture}[scale=0.95]
			\draw[-latex,line width=.6pt,black] (0,-0.5) -- (0,-.1);
			\draw[line width=.6pt,black] (0,-0.4) -- (0,0);
			\draw[-latex,line width=.6pt,black](0,0)--(-0.3,0.3);
			\draw[line width=.6pt,black](-0.1,0.1)--(-0.4,0.4);
			\draw[-latex,line width=.6pt,black](-0.4,0.4)--(-0.7,0.7);
			\draw[line width=.6pt,black](-0.4,0.4)--(-0.8,0.8);
			\draw[-latex,line width=.6pt,black](-0.4,0.4)--(-0.1,0.7);
			\draw[line width=.6pt,black](-0.4,0.4)--(-0,0.8);
			\draw[-latex,line width=.6pt,black](0,0)--(0.7,0.7);
			\draw[line width=.6pt,black](0.1,0.1)--(0.8,0.8);
			\node[ line width=0.6pt, dashed, draw opacity=0.5] (a) at (-0.8,1){$\scriptstyle i$};
			\node[ line width=0.6pt, dashed, draw opacity=0.5] (a) at (0,1){$\scriptstyle j$};
			\node[ line width=0.6pt, dashed, draw opacity=0.5] (a) at (0.8,1){$\scriptstyle k$};
			\node[ line width=0.6pt, dashed, draw opacity=0.5] (a) at (0,-0.7){$\scriptstyle l$};
			\node[ line width=0.6pt, dashed, draw opacity=0.5] (a) at (0.3,-0.1){$\scriptstyle \alpha$};
			\node[ line width=0.6pt, dashed, draw opacity=0.5] (a) at (-0.7,0.3){$\scriptstyle \beta$};
			\node[ line width=0.6pt, dashed, draw opacity=0.5] (a) at (-0.4,0){$\scriptstyle m$};
		\end{tikzpicture}
	\end{aligned},\quad 
	\begin{aligned}
		\begin{tikzpicture}[scale=0.95]
			\draw[-latex,line width=.6pt,black] (0,-0.5) -- (0,-.1);
			\draw[line width=.6pt,black] (0,-0.4) -- (0,0);
			\draw[-latex,line width=.6pt,black](0,0)--(-0.3,0.3);
			\draw[line width=.6pt,black](-0.1,0.1)--(-0.4,0.4);
			\draw[-latex,line width=.6pt,black](-0.4,0.4)--(-0.7,0.7);
			\draw[line width=.6pt,black](-0.4,0.4)--(-0.8,0.8);
			\draw[-latex,line width=.6pt,black](-0.4,0.4)--(-0.1,0.7);
			\draw[line width=.6pt,black](-0.4,0.4)--(-0,0.8);
			\draw[-latex,line width=.6pt,black](0,0)--(0.7,0.7);
			\draw[line width=.6pt,black](0.1,0.1)--(0.8,0.8);
			\node[ line width=0.6pt, dashed, draw opacity=0.5] (a) at (-0.8,1){$\scriptstyle{i}$};
			\node[ line width=0.6pt, dashed, draw opacity=0.5] (a) at (0,1){$\scriptstyle{j}$};
			\node[ line width=0.6pt, dashed, draw opacity=0.5] (a) at (0.8,1){$\scriptstyle{k}$};
			\node[ line width=0.6pt, dashed, draw opacity=0.5] (a) at (0,-0.7){$\scriptstyle{l}$};
			\node[ line width=0.6pt, dashed, draw opacity=0.5] (a) at (0.3,-0.1){$\scriptstyle{\alpha}$};
			\node[ line width=0.6pt, dashed, draw opacity=0.5] (a) at (-0.7,0.3){$\scriptstyle{\beta}$};
			\node[ line width=0.6pt, dashed, draw opacity=0.5] (a) at (-0.4,0){$\scriptstyle m$};
		\end{tikzpicture}
	\end{aligned}
	 =\sum_{n,\mu,\nu} [(F^{ijk}_{l})^{-1}]^{n\mu\nu}_{m\alpha\beta}
	\begin{aligned}
		\begin{tikzpicture}[scale=0.95]
			\draw[-latex,line width=.6pt,black] (0,-0.5) -- (0,-.1);
			\draw[line width=.6pt,black] (0,-0.4) -- (0,0);
			\draw[-latex,line width=.6pt,black](0,0)--(-0.7,0.7);
			\draw[line width=.6pt,black](0,0)--(-0.8,0.8);
			\draw[-latex,line width=.6pt,black](0.4,0.4)--(0.1,0.7);
			\draw[line width=.6pt,black](0.4,0.4)--(0,0.8);
			\draw[-latex,line width=.6pt,black](0,0)--(0.7,0.7);
			\draw[-latex,line width=.6pt,black](0,0)--(0.3,0.3);
			\draw[line width=.6pt,black](0.1,0.1)--(0.8,0.8);
			\node[ line width=0.6pt, dashed, draw opacity=0.5] (a) at (-0.8,1){$\scriptstyle i$};
			\node[ line width=0.6pt, dashed, draw opacity=0.5] (a) at (0,1){$\scriptstyle j$};
			\node[ line width=0.6pt, dashed, draw opacity=0.5] (a) at (0.8,1){$\scriptstyle k$};
			\node[ line width=0.6pt, dashed, draw opacity=0.5] (a) at (0,-0.7){$\scriptstyle l$};
			\node[ line width=0.6pt, dashed, draw opacity=0.5] (a) at (0.3,-0.1){$\scriptstyle \mu$};
			\node[ line width=0.6pt, dashed, draw opacity=0.5] (a) at (0,0.3){$\scriptstyle n$};
				\node[ line width=0.6pt, dashed, draw opacity=0.5] (a) at (0.6,0.3){$\scriptstyle \nu$};
		\end{tikzpicture}
	\end{aligned}, \label{eq:F-move1} \\
      \begin{aligned}
        \begin{tikzpicture}[scale=0.95]
			\draw[-latex,line width=.6pt,black] (0,0) -- (0,0.4);
			\draw[line width=.6pt,black] (0,0.5) -- (0,0.1);
			\draw[-latex,line width=.6pt,black](-0.8,-0.8)--(-0.5,-0.5);
			\draw[line width=.6pt,black](-0.7,-0.7)--(0,0);
			\draw[-latex,line width=.6pt,black](0,-0.8)--(0.3,-0.5);
			\draw[line width=.6pt,black](0.4,-0.4)--(0,-0.8);
			\draw[-latex,line width=.6pt,black](0.8,-0.8)--(0.5,-0.5);
			\draw[line width=.6pt,black](0,0)--(0.7,-0.7);
			\node[ line width=0.6pt, dashed, draw opacity=0.5] (a) at (-0.8,-1){$\scriptstyle i$};
			\node[ line width=0.6pt, dashed, draw opacity=0.5] (a) at (0,-1){$\scriptstyle j$};
			\node[ line width=0.6pt, dashed, draw opacity=0.5] (a) at (0.8,-1){$\scriptstyle k$};
			\node[ line width=0.6pt, dashed, draw opacity=0.5] (a) at (0,0.7){$\scriptstyle l$};
			\node[ line width=0.6pt, dashed, draw opacity=0.5] (a) at (0.3,0.1){$\scriptstyle \mu$};
			\node[ line width=0.6pt, dashed, draw opacity=0.5] (a) at (0.65,-0.3){$\scriptstyle \nu$};
			\node[ line width=0.6pt, dashed, draw opacity=0.5] (a) at (0.05,-0.35){$\scriptstyle n$};
		\end{tikzpicture}
	\end{aligned}
	 =\sum_{m,\alpha,\beta} [F_{ijk}^{l}]_{n\mu\nu}^{m\alpha\beta}
	\begin{aligned}
		\begin{tikzpicture}[scale=0.95]
			\draw[-latex,line width=.6pt,black] (0,0) -- (0,0.4);
			\draw[line width=.6pt,black] (0,0.5) -- (0,0.1);
			\draw[-latex,line width=.6pt,black](-0.4,-0.4)--(-0.1,-0.1);
			\draw[line width=.6pt,black](-0.3,-0.3)--(0,0);
			\draw[-latex,line width=.6pt,black](-0.8,-0.8)--(-0.5,-0.5);
			\draw[line width=.6pt,black](-0.7,-0.7)--(-0.4,-0.4);
			\draw[-latex,line width=.6pt,black](0,-0.8)--(-0.3,-0.5);
			\draw[line width=.6pt,black](-0.4,-0.4)--(-0.1,-0.7);
			\draw[-latex,line width=.6pt,black](0.8,-0.8)--(0.5,-0.5);
			\draw[line width=.6pt,black](0,0)--(0.7,-0.7);
			\node[ line width=0.6pt, dashed, draw opacity=0.5] (a) at (-0.8,-1){$\scriptstyle i$};
			\node[ line width=0.6pt, dashed, draw opacity=0.5] (a) at (0,-1){$\scriptstyle j$};
			\node[ line width=0.6pt, dashed, draw opacity=0.5] (a) at (0.8,-1){$\scriptstyle k$};
			\node[ line width=0.6pt, dashed, draw opacity=0.5] (a) at (0,0.7){$\scriptstyle l$};
			\node[ line width=0.6pt, dashed, draw opacity=0.5] (a) at (0.3,0.1){$\scriptstyle \alpha$};
			\node[ line width=0.6pt, dashed, draw opacity=0.5] (a) at (-0.7,-0.3){$\scriptstyle \beta$};
			\node[ line width=0.6pt, dashed, draw opacity=0.5] (a) at (-0.4,0){$\scriptstyle m$};
		\end{tikzpicture}
	\end{aligned}, \quad 
 \begin{aligned}
		\begin{tikzpicture}[scale=0.95]
			\draw[-latex,line width=.6pt,black] (0,0) -- (0,0.4);
			\draw[line width=.6pt,black] (0,0.5) -- (0,0.1);
			\draw[-latex,line width=.6pt,black](-0.4,-0.4)--(-0.1,-0.1);
			\draw[line width=.6pt,black](-0.3,-0.3)--(0,0);
			\draw[-latex,line width=.6pt,black](-0.8,-0.8)--(-0.5,-0.5);
			\draw[line width=.6pt,black](-0.7,-0.7)--(-0.4,-0.4);
			\draw[-latex,line width=.6pt,black](0,-0.8)--(-0.3,-0.5);
			\draw[line width=.6pt,black](-0.4,-0.4)--(-0.1,-0.7);
			\draw[-latex,line width=.6pt,black](0.8,-0.8)--(0.5,-0.5);
			\draw[line width=.6pt,black](0,0)--(0.7,-0.7);
			\node[ line width=0.6pt, dashed, draw opacity=0.5] (a) at (-0.8,-1){$\scriptstyle i$};
			\node[ line width=0.6pt, dashed, draw opacity=0.5] (a) at (0,-1){$\scriptstyle j$};
			\node[ line width=0.6pt, dashed, draw opacity=0.5] (a) at (0.8,-1){$\scriptstyle k$};
			\node[ line width=0.6pt, dashed, draw opacity=0.5] (a) at (0,0.7){$\scriptstyle l$};
			\node[ line width=0.6pt, dashed, draw opacity=0.5] (a) at (0.3,0.1){$\scriptstyle \alpha$};
			\node[ line width=0.6pt, dashed, draw opacity=0.5] (a) at (-0.7,-0.3){$\scriptstyle \beta$};
			\node[ line width=0.6pt, dashed, draw opacity=0.5] (a) at (-0.4,0){$\scriptstyle m$};
		\end{tikzpicture}
	\end{aligned}
	 =\sum_{n,\mu,\nu} [(F_{ijk}^{l})^{-1}]^{n\mu\nu}_{m\alpha\beta}
	\begin{aligned}
		\begin{tikzpicture}[scale=0.95]
			\draw[-latex,line width=.6pt,black] (0,0) -- (0,0.4);
			\draw[line width=.6pt,black] (0,0.5) -- (0,0.1);
			\draw[-latex,line width=.6pt,black](-0.8,-0.8)--(-0.5,-0.5);
			\draw[line width=.6pt,black](-0.7,-0.7)--(0,0);
			\draw[-latex,line width=.6pt,black](0,-0.8)--(0.3,-0.5);
			\draw[line width=.6pt,black](0.4,-0.4)--(0,-0.8);
			\draw[-latex,line width=.6pt,black](0.8,-0.8)--(0.5,-0.5);
			\draw[line width=.6pt,black](0,0)--(0.7,-0.7);
			\node[ line width=0.6pt, dashed, draw opacity=0.5] (a) at (-0.8,-1){$\scriptstyle i$};
			\node[ line width=0.6pt, dashed, draw opacity=0.5] (a) at (0,-1){$\scriptstyle j$};
			\node[ line width=0.6pt, dashed, draw opacity=0.5] (a) at (0.8,-1){$\scriptstyle k$};
			\node[ line width=0.6pt, dashed, draw opacity=0.5] (a) at (0,0.7){$\scriptstyle l$};
			\node[ line width=0.6pt, dashed, draw opacity=0.5] (a) at (0.3,0.1){$\scriptstyle \mu$};
			\node[ line width=0.6pt, dashed, draw opacity=0.5] (a) at (0.65,-0.3){$\scriptstyle \nu$};
			\node[ line width=0.6pt, dashed, draw opacity=0.5] (a) at (0.05,-0.35){$\scriptstyle n$};
		\end{tikzpicture}
	\end{aligned}. 
\end{gather*}
\end{enumerate}
The topological local moves can be generalized in a straightforward manner to topological boundaries and domain walls by modifying the string labels and morphisms to take values in their respective categories.

For a string-net on a closed surface, there are two local stabilizers, $Q_v$ and $B_f$, assigned to the vertices and faces, respectively. These are all mutually commuting local projectors, and the Hamiltonian is given by
\begin{equation}
    \notag
    H=-\sum_v Q_v -\sum_f B_f.
\end{equation}
For a given vertex, we define $\delta_{a,b\to c}=1$ if $V_{ab}^c\neq 0$; otherwise, we set $\delta_{a,b\to c}=0$. Similarly, $\delta_{c\to a,b}$ is defined analogously. Using these, we introduce a vertex projector $Q_v$ as follows:
\begin{equation}
    \notag
    Q_v \big{|}  \begin{aligned}
		\begin{tikzpicture}
			\draw[-latex,line width=.6pt,black] (0,-0.5) -- (0,-.1);
			\draw[line width=.6pt,black] (0,-0.4) -- (0,0);
			\draw[-latex,line width=.6pt,black](0,0)--(-0.3,0.3);
			\draw[line width=.6pt,black](-0.1,0.1)--(-0.4,0.4);
			\draw[-latex,line width=.6pt,black](0,0)--(0.3,0.3);
			\draw[line width=.6pt,black](0.1,0.1)--(0.4,0.4);
			\node[ line width=0.6pt, dashed, draw opacity=0.5] (a) at (-0.5,0.6){$a$};
			\node[ line width=0.6pt, dashed, draw opacity=0.5] (a) at (0.5,0.6){$b$};
			\node[ line width=0.6pt, dashed, draw opacity=0.5] (a) at (0,-0.7){$c$};
			\node[ line width=0.6pt, dashed, draw opacity=0.5] (a) at (0.3,-0.1){$\alpha$};
		\end{tikzpicture}
	\end{aligned} \big{\rangle} =\delta_{c\to a,b} \,\big{|}\begin{aligned}
		\begin{tikzpicture}
			\draw[-latex,line width=.6pt,black] (0,-0.5) -- (0,-.1);
			\draw[line width=.6pt,black] (0,-0.4) -- (0,0);
			\draw[-latex,line width=.6pt,black](0,0)--(-0.3,0.3);
			\draw[line width=.6pt,black](-0.1,0.1)--(-0.4,0.4);
			\draw[-latex,line width=.6pt,black](0,0)--(0.3,0.3);
			\draw[line width=.6pt,black](0.1,0.1)--(0.4,0.4);
			\node[ line width=0.6pt, dashed, draw opacity=0.5] (a) at (-0.5,0.6){$a$};
			\node[ line width=0.6pt, dashed, draw opacity=0.5] (a) at (0.5,0.6){$b$};
			\node[ line width=0.6pt, dashed, draw opacity=0.5] (a) at (0,-0.7){$c$};
			\node[ line width=0.6pt, dashed, draw opacity=0.5] (a) at (0.3,-0.1){$\alpha$};
		\end{tikzpicture}
	\end{aligned}\big{\rangle}\,.
\end{equation}
The condition $Q_v=1$ corresponds to satisfying Gauss's law at the vertex $v$ in an analogy with lattice gauge theory.
The face operator $B_f$ is more involved and is expressed as a linear combination of elementary face operators $B^k_f$:
\begin{equation}
    \notag
    B_f=\sum_{k\in \Irr(\eD)}w_k B^k_f,
\end{equation}
where $w_k=Y^{k^* k}_{\one}/\sum_{l\in \Irr(\eD)} d_l^2$. 
The operator $B_f^k$ acts on the face $f$ by inserting a closed $k$-loop into it, which is depicted diagrammatically as
\begin{gather}
    \notag
    B_f^k\, \Big{|} 
    \begin{aligned}
        \begin{tikzpicture}[scale=0.9]
            \draw[line width=.6pt,black] (0,0) -- (0.87,0.5);
            \draw[line width=.6pt,black] (0.87,1.5) -- (0.87,0.5);
            \draw[line width=.6pt,black] (0.87,1.5) -- (0,2);
            \draw[line width=.6pt,black] (0,0) -- (-0.87,0.5);
            \draw[line width=.6pt,black] (-0.87,1.5) -- (-0.87,0.5);
            \draw[line width=.6pt,black] (-0.87,1.5) -- (0,2);
            \draw[-latex,line width=.6pt,black] (0,0) -- (0.5742,0.33);
            \draw[-latex,line width=.6pt,black] (0.87,0.5) -- (0.87,1.1);
            \draw[-latex,line width=.6pt,black] (0.87,1.5) -- (0.2958,1.83);
            \draw[-latex,line width=.6pt,black] (0,0) -- (-0.5742,0.33);
            \draw[-latex,line width=.6pt,black] (-0.87,0.5) -- (-0.87,1.1);
            \draw[-latex,line width=.6pt,black] (-0.87,1.5) -- (-0.2958,1.83);
            \draw[line width=.6pt,black] (0,-0.6) -- (0,0);
            \draw[line width=.6pt,black] (0,2) -- (0,2.6);
            \draw[line width=.6pt,black] (0.87,0.5) -- (1.5225,0.125);
            \draw[line width=.6pt,black] (0.87,1.5) -- (1.5225,1.875);
            \draw[line width=.6pt,black] (-0.87,0.5) -- (-1.5225,0.125);
            \draw[line width=.6pt,black] (-0.87,1.5) -- (-1.5225,1.875);
            \draw[-latex,line width=.6pt,black] (0,-0.6) -- (0,-0.2);
            \draw[-latex,line width=.6pt,black] (0,2) -- (0,2.4);
            \draw[-latex,line width=.6pt,black] (1.5225,0.125) -- (1.1,0.3678);
            \draw[-latex,line width=.6pt,black] (0.87,1.5) -- (1.2925,1.7428);
            \draw[-latex,line width=.6pt,black] (-1.5225,0.125) -- (-1.1,0.3678);
            \draw[-latex,line width=.6pt,black] (-0.87,1.5) -- (-1.2925,1.7428);
            \node[ line width=0.6pt, dashed, draw opacity=0.5] (a) at (1.2,1){$j_1$};
            \node[ line width=0.6pt, dashed, draw opacity=0.5] (a) at (0.6,2.1){$j_2$};
            \node[ line width=0.6pt, dashed, draw opacity=0.5] (a) at (-0.6,2.05){$j_3$};
            \node[ line width=0.6pt, dashed, draw opacity=0.5] (a) at (-1.2,1){$j_4$};
            \node[ line width=0.6pt, dashed, draw opacity=0.5] (a) at (-0.55,0){$j_5$};
            \node[ line width=0.6pt, dashed, draw opacity=0.5] (a) at (0.55,0){$j_6$};
            \node[ line width=0.6pt, dashed, draw opacity=0.5] (a) at (1.8,2){$i_1$};
            \node[ line width=0.6pt, dashed, draw opacity=0.5] (a) at (0,2.9){$i_2$};
            \node[ line width=0.6pt, dashed, draw opacity=0.5] (a) at (-1.8,2){$i_3$};
            \node[ line width=0.6pt, dashed, draw opacity=0.5] (a) at (-1.8,0){$i_4$};
            \node[ line width=0.6pt, dashed, draw opacity=0.5] (a) at (0,-0.9){$i_5$};
            \node[ line width=0.6pt, dashed, draw opacity=0.5] (a) at (1.8,0){$i_6$};
            \node[ line width=0.6pt, dashed, draw opacity=0.5] (a) at (0.65,1.35){$\alpha_1$};
            \node[ line width=0.6pt, dashed, draw opacity=0.5] (a) at (0,1.7){$\alpha_2$};
            \node[ line width=0.6pt, dashed, draw opacity=0.5] (a) at (-0.55,1.35){$\alpha_3$};
            \node[ line width=0.6pt, dashed, draw opacity=0.5] (a) at (-0.55,0.55){$\alpha_4$};
            \node[ line width=0.6pt, dashed, draw opacity=0.5] (a) at (0,0.3){$\alpha_5$};
            \node[ line width=0.6pt, dashed, draw opacity=0.5] (a) at (0.62,0.65){$\alpha_6$};
        \end{tikzpicture}
    \end{aligned}
    \Big{\rangle} = 
    \Big{|}
    \begin{aligned}
        \begin{tikzpicture}[scale=0.9]
            \draw[line width=.6pt,black] (0,0) -- (0.87,0.5);
            \draw[line width=.6pt,black] (0.87,1.5) -- (0.87,0.5);
            \draw[line width=.6pt,black] (0.87,1.5) -- (0,2);
            \draw[line width=.6pt,black] (0,0) -- (-0.87,0.5);
            \draw[line width=.6pt,black] (-0.87,1.5) -- (-0.87,0.5);
            \draw[line width=.6pt,black] (-0.87,1.5) -- (0,2);
            \draw[-latex,line width=.6pt,black] (0,0) -- (0.5742,0.33);
            \draw[-latex,line width=.6pt,black] (0.87,0.5) -- (0.87,1.1);
            \draw[-latex,line width=.6pt,black] (0.87,1.5) -- (0.2958,1.83);
            \draw[-latex,line width=.6pt,black] (0,0) -- (-0.5742,0.33);
            \draw[-latex,line width=.6pt,black] (-0.87,0.5) -- (-0.87,1.1);
            \draw[-latex,line width=.6pt,black] (-0.87,1.5) -- (-0.2958,1.83);
            \draw[line width=.6pt,black] (0,-0.6) -- (0,0);
            \draw[line width=.6pt,black] (0,2) -- (0,2.6);
            \draw[line width=.6pt,black] (0.87,0.5) -- (1.5225,0.125);
            \draw[line width=.6pt,black] (0.87,1.5) -- (1.5225,1.875);
            \draw[line width=.6pt,black] (-0.87,0.5) -- (-1.5225,0.125);
            \draw[line width=.6pt,black] (-0.87,1.5) -- (-1.5225,1.875);
            \draw[-latex,line width=.6pt,black] (0,-0.6) -- (0,-0.2);
            \draw[-latex,line width=.6pt,black] (0,2) -- (0,2.4);
            \draw[-latex,line width=.6pt,black] (1.5225,0.125) -- (1.1,0.3678);
            \draw[-latex,line width=.6pt,black] (0.87,1.5) -- (1.2925,1.7428);
            \draw[-latex,line width=.6pt,black] (-1.5225,0.125) -- (-1.1,0.3678);
            \draw[-latex,line width=.6pt,black] (-0.87,1.5) -- (-1.2925,1.7428);
            \node[ line width=0.6pt, dashed, draw opacity=0.5] (a) at (1.2,1){$j_1$};
            \node[ line width=0.6pt, dashed, draw opacity=0.5] (a) at (0.6,2.1){$j_2$};
            \node[ line width=0.6pt, dashed, draw opacity=0.5] (a) at (-0.6,2.05){$j_3$};
            \node[ line width=0.6pt, dashed, draw opacity=0.5] (a) at (-1.2,1){$j_4$};
            \node[ line width=0.6pt, dashed, draw opacity=0.5] (a) at (-0.55,0){$j_5$};
            \node[ line width=0.6pt, dashed, draw opacity=0.5] (a) at (0.55,0){$j_6$};
            \node[ line width=0.6pt, dashed, draw opacity=0.5] (a) at (1.8,2){$i_1$};
            \node[ line width=0.6pt, dashed, draw opacity=0.5] (a) at (0,2.9){$i_2$};
            \node[ line width=0.6pt, dashed, draw opacity=0.5] (a) at (-1.8,2){$i_3$};
            \node[ line width=0.6pt, dashed, draw opacity=0.5] (a) at (-1.8,0){$i_4$};
            \node[ line width=0.6pt, dashed, draw opacity=0.5] (a) at (0,-0.9){$i_5$};
            \node[ line width=0.6pt, dashed, draw opacity=0.5] (a) at (1.8,0){$i_6$};
            \node[ line width=0.6pt, dashed, draw opacity=0.5] (a) at (0.65,1.35){$\alpha_1$};
            \node[ line width=0.6pt, dashed, draw opacity=0.5] (a) at (0,1.7){$\alpha_2$};
            \node[ line width=0.6pt, dashed, draw opacity=0.5] (a) at (-0.55,1.35){$\alpha_3$};
            \node[ line width=0.6pt, dashed, draw opacity=0.5] (a) at (-0.55,0.55){$\alpha_4$};
            \node[ line width=0.6pt, dashed, draw opacity=0.5] (a) at (0,0.3){$\alpha_5$};
            \node[ line width=0.6pt, dashed, draw opacity=0.5] (a) at (0.62,0.65){$\alpha_6$};
            \draw[line width=0.8pt,black] (0.35,1.05) .. controls +(0,0.4) and +(0,0.4) .. (-0.35,1.05);
            \draw[line width=0.8pt,black] (0.35,0.95) .. controls +(0,-0.4) and +(0,-0.4) .. (-0.35,0.95);
            \draw[line width=0.8pt,black](0.35,1.05) -- (0.35,0.95);
            \draw[line width=0.8pt,black](-0.35,1.05) -- (-0.35,0.95);
            \draw[-latex,line width=0.8pt,black](0.35,0.9) -- (0.35,1.12);
            \node[ line width=0.1pt, dashed, draw opacity=0.5] (a) at (0.1,1){$k$};
        \end{tikzpicture}
    \end{aligned}
    \Big{\rangle}\;.
\end{gather}
The condition $B_f=1$ can be interpreted as enforcing that no flux threads through the face $f$.
By replacing the relevant data with those of a module category and a bimodule category, the string-net construction can be extended to incorporate boundaries and domain walls.

\section{Some technical results on factorization and antipode-like maps} 
\label{app:tech_proofs}

In this section, we provide a detailed proof of some properties of the factorization map and the antipode-like map for defect tube algebras. We divide it into two parts. The first part proves the relevant properties of these maps in the boundary case, while the second part establishes the corresponding results in the domain wall case.

\subsubsection*{Boundary case.}

\textit{Proof of} \eqref{eq:coassociativity}. For a basis tube $X\in \Tube({}_\eC\eN;{}_\eC\eM)$, both expressions, applied to $X$, are equal to 
    \begin{equation}
        \notag
        \sum_{i,j,\sigma} \sqrt{\frac{d_i}{d_ad_j}} \sum_{t,s,\varrho} \sqrt{\frac{d_s}{d_ad_t}} \; \begin{aligned}
            \begin{tikzpicture}[scale=0.65]
                \fill[gray!15] (0,1.5) arc[start angle=90, end angle=270, radius=1.5] -- (0,-0.5) arc[start angle=270, end angle=90, radius=0.5] -- cycle;
        \draw[dotted] (0,1.5) arc[start angle=90, end angle=270, radius=1.5]; 
        \draw[dotted] (0,0.5) arc[start angle=90, end angle=270, radius=0.5];
            \draw[line width=1pt,violet] (0,0.5)--(0,1.5);
            \draw[line width=1pt,cyan] (0,-0.5)--(0,-1.5);
             \draw[red] (0,0.8) arc[start angle=90, end angle=270, radius=0.8];
            \node[ line width=0.6pt, dashed, draw opacity=0.5] (a) at (-1,0){$\scriptstyle a$};
            \node[ line width=0.6pt, dashed, draw opacity=0.5] (a) at (-0.2,-1.1){$\scriptstyle i$};
            \node[ line width=0.6pt, dashed, draw opacity=0.5] (a) at (0.2,-0.8){$\scriptstyle \sigma$};
            \node[ line width=0.6pt, dashed, draw opacity=0.5] (a) at (0,-0.25){$\scriptstyle j$};
            \node[ line width=0.6pt, dashed, draw opacity=0.5] (a) at (0,0.3){$\scriptstyle y$};
            \node[ line width=0.6pt, dashed, draw opacity=0.5] (a) at (-0.2,1.1){$\scriptstyle z$};
            \node[ line width=0.6pt, dashed, draw opacity=0.5] (a) at (0.2,0.8){$\scriptstyle \nu$};
        \end{tikzpicture}
    \end{aligned} \otimes \; \begin{aligned}
            \begin{tikzpicture}[scale=0.65]
                \fill[gray!15] (0,1.5) arc[start angle=90, end angle=270, radius=1.5] -- (0,-0.5) arc[start angle=270, end angle=90, radius=0.5] -- cycle;
        \draw[dotted] (0,1.5) arc[start angle=90, end angle=270, radius=1.5]; 
        \draw[dotted] (0,0.5) arc[start angle=90, end angle=270, radius=0.5];
            \draw[line width=1pt,cyan] (0,0.5)--(0,1.5);
            \draw[line width=1pt,violet] (0,-0.5)--(0,-1.5);
             \draw[red] (0,0.8) arc[start angle=90, end angle=270, radius=0.8];
            \node[ line width=0.6pt, dashed, draw opacity=0.5] (a) at (-1,0){$\scriptstyle a$};
            \node[ line width=0.6pt, dashed, draw opacity=0.5] (a) at (-0.2,-1.1){$\scriptstyle s$};
            \node[ line width=0.6pt, dashed, draw opacity=0.5] (a) at (0.2,-0.8){$\scriptstyle \varepsilon'$};
            \node[ line width=0.6pt, dashed, draw opacity=0.5] (a) at (0,-0.3){$\scriptstyle t$};
            \node[ line width=0.6pt, dashed, draw opacity=0.5] (a) at (0,0.3){$\scriptstyle j$};
            \node[ line width=0.6pt, dashed, draw opacity=0.5] (a) at (-0.2,1.1){$\scriptstyle i$};
            \node[ line width=0.6pt, dashed, draw opacity=0.5] (a) at (0.2,0.8){$\scriptstyle \sigma$};
        \end{tikzpicture}
    \end{aligned} \otimes \; \begin{aligned}
            \begin{tikzpicture}[scale=0.65]
                \fill[gray!15] (0,1.5) arc[start angle=90, end angle=270, radius=1.5] -- (0,-0.5) arc[start angle=270, end angle=90, radius=0.5] -- cycle;
        \draw[dotted] (0,1.5) arc[start angle=90, end angle=270, radius=1.5]; 
        \draw[dotted] (0,0.5) arc[start angle=90, end angle=270, radius=0.5];
            \draw[line width=1pt,violet] (0,0.5)--(0,1.5);
            \draw[line width=1pt,cyan] (0,-0.5)--(0,-1.5);
             \draw[red] (0,0.8) arc[start angle=90, end angle=270, radius=0.8];
            \node[ line width=0.6pt, dashed, draw opacity=0.5] (a) at (-1,0){$\scriptstyle a$};
            \node[ line width=0.6pt, dashed, draw opacity=0.5] (a) at (-0.2,-1.1){$\scriptstyle u$};
            \node[ line width=0.6pt, dashed, draw opacity=0.5] (a) at (0.2,-0.8){$\scriptstyle \zeta$};
            \node[ line width=0.6pt, dashed, draw opacity=0.5] (a) at (0,-0.3){$\scriptstyle v$};
            \node[ line width=0.6pt, dashed, draw opacity=0.5] (a) at (0,0.3){$\scriptstyle t$};
            \node[ line width=0.6pt, dashed, draw opacity=0.5] (a) at (-0.2,1.1){$\scriptstyle s$};
            \node[ line width=0.6pt, dashed, draw opacity=0.5] (a) at (0.2,0.8){$\scriptstyle \varrho$};
        \end{tikzpicture}
    \end{aligned}  \;. 
    \end{equation}

\noindent 
\textit{Proof of} \eqref{eq:unit-trifold}. By a direct calculation, both expressions are equal to  
    \begin{equation}
        \notag
        \sum_{x,z\in\Irr(\eM)}\sum_{u,v\in \Irr(\eN)}\;  \begin{aligned}
        \begin{tikzpicture}[scale=0.65]
            \fill[gray!15] (0,1.5) arc[start angle=90, end angle=270, radius=1.5] -- (0,-0.5) arc[start angle=270, end angle=90, radius=0.5] -- cycle;
            \draw[dotted] (0,1.5) arc[start angle=90, end angle=270, radius=1.5]; 
            \draw[dotted] (0,0.5) arc[start angle=90, end angle=270, radius=0.5]; 
            \draw[line width=1pt,cyan] (0,0.5)--(0,1.5);
            \draw[line width=1pt,violet] (0,-0.5)--(0,-1.5);
            \draw[red,dotted,thick] (0,0.8) arc[start angle=90, end angle=270, radius=0.8];
            \node[ line width=0.6pt, dashed, draw opacity=0.5] (a) at (-1,0){$\scriptstyle \mathbb{1}$};
            \node[ line width=0.6pt, dashed, draw opacity=0.5] (a) at (0.25,-1.2){$\scriptstyle x$};
            \node[ line width=0.6pt, dashed, draw opacity=0.5] (a) at (0.25,1.2){$\scriptstyle u$};
        \end{tikzpicture}
    \end{aligned} \otimes \; \begin{aligned}
        \begin{tikzpicture}[scale=0.65]
            \fill[gray!15] (0,1.5) arc[start angle=90, end angle=270, radius=1.5] -- (0,-0.5) arc[start angle=270, end angle=90, radius=0.5] -- cycle;
            \draw[dotted] (0,1.5) arc[start angle=90, end angle=270, radius=1.5]; 
            \draw[dotted] (0,0.5) arc[start angle=90, end angle=270, radius=0.5]; 
            \draw[line width=1pt,violet] (0,0.5)--(0,1.5);
            \draw[line width=1pt,cyan] (0,-0.5)--(0,-1.5);
            \draw[red,dotted,thick] (0,0.8) arc[start angle=90, end angle=270, radius=0.8];
            \node[ line width=0.6pt, dashed, draw opacity=0.5] (a) at (-1,0){$\scriptstyle \mathbb{1}$};
            \node[ line width=0.6pt, dashed, draw opacity=0.5] (a) at (0.25,-1.2){$\scriptstyle v$};
            \node[ line width=0.6pt, dashed, draw opacity=0.5] (a) at (0.25,1.2){$\scriptstyle x$};
        \end{tikzpicture}
    \end{aligned} \otimes \; \begin{aligned}
        \begin{tikzpicture}[scale=0.65]
            \fill[gray!15] (0,1.5) arc[start angle=90, end angle=270, radius=1.5] -- (0,-0.5) arc[start angle=270, end angle=90, radius=0.5] -- cycle;
            \draw[dotted] (0,1.5) arc[start angle=90, end angle=270, radius=1.5]; 
            \draw[dotted] (0,0.5) arc[start angle=90, end angle=270, radius=0.5]; 
            \draw[line width=1pt,cyan] (0,0.5)--(0,1.5);
            \draw[line width=1pt,violet] (0,-0.5)--(0,-1.5);
            \draw[red,dotted,thick] (0,0.8) arc[start angle=90, end angle=270, radius=0.8];
            \node[ line width=0.6pt, dashed, draw opacity=0.5] (a) at (-1,0){$\scriptstyle \mathbb{1}$};
            \node[ line width=0.6pt, dashed, draw opacity=0.5] (a) at (0.25,-1.2){$\scriptstyle z$};
            \node[ line width=0.6pt, dashed, draw opacity=0.5] (a) at (0.25,1.2){$\scriptstyle v$};
        \end{tikzpicture}
    \end{aligned}\;.
    \end{equation}


\noindent
\textit{Proof of} \eqref{eq:preserve-product}. Take $X, Y$ as Eq.~\eqref{eq:basis-diagram-XY}. First, by Eq.~\eqref{eq:product-basis} we have 
    \begin{align}
    \delta(XY) & = \delta_{u,v'}\delta_{z,y'}\sum_{t,\beta}\sum_{\alpha}[{_{\eN}}F_{aa'u'}^{v}]_{u\zeta\zeta'}^{t\alpha\beta}\sum_{\theta}[{_{\eM}}F_y^{aa'z'}]_{z\nu\nu'}^{t\theta\beta}\sqrt{\frac{d_{a}d_{a'}}{d_t}} \nonumber \\
    & \quad \quad \sum_{i,k\in\Irr(\eK)}\sum_{\sigma}\sqrt{\frac{d_k}{d_td_i}}\;\begin{aligned}
        \begin{tikzpicture}[scale=0.65]
        \begin{scope}
            \fill[gray!15]
                (0,1.5) arc[start angle=90, end angle=270, radius=1.5] -- 
                (0,-0.5) arc[start angle=270, end angle=90, radius=0.5] -- cycle;
        \end{scope}
        \draw[dotted] (0,1.5) arc[start angle=90, end angle=270, radius=1.5]; 
        \draw[dotted] (0,0.5) arc[start angle=90, end angle=270, radius=0.5]; 
            \draw[line width=1pt,cyan] (0,0.5)--(0,1.5);
            \draw[line width=1pt,teal] (0,-0.5)--(0,-1.5);
             \draw[red] (0,0.8) arc[start angle=90, end angle=270, radius=0.8];
            \node[ line width=0.6pt, dashed, draw opacity=0.5] (a) at (-1,0){$\scriptstyle t$};
            \node[ line width=0.6pt, dashed, draw opacity=0.5] (a) at (-0.2,-1.1){$\scriptstyle k$};
            \node[ line width=0.6pt, dashed, draw opacity=0.5] (a) at (0.25,-0.8){$\scriptstyle \sigma$};
            \node[ line width=0.6pt, dashed, draw opacity=0.5] (a) at (0,-0.3){$\scriptstyle i$};
            \node[ line width=0.6pt, dashed, draw opacity=0.5] (a) at (0,0.3){$\scriptstyle u'$};
            \node[ line width=0.6pt, dashed, draw opacity=0.5] (a) at (-0.2,1.1){$\scriptstyle v$};
            \node[ line width=0.6pt, dashed, draw opacity=0.5] (a) at (0.25,0.8){$\scriptstyle \alpha$};
        \end{tikzpicture}
    \end{aligned}\otimes \;
    \begin{aligned}
        \begin{tikzpicture}[scale=0.65]
        \begin{scope}
            \fill[gray!15]
                (0,1.5) arc[start angle=90, end angle=270, radius=1.5] -- 
                (0,-0.5) arc[start angle=270, end angle=90, radius=0.5] -- cycle;
        \end{scope}
        \draw[dotted] (0,1.5) arc[start angle=90, end angle=270, radius=1.5]; 
        \draw[dotted] (0,0.5) arc[start angle=90, end angle=270, radius=0.5]; 
            \draw[line width=1pt,teal] (0,0.5)--(0,1.5);
            \draw[line width=1pt,violet] (0,-0.5)--(0,-1.5);
             \draw[red] (0,0.8) arc[start angle=90, end angle=270, radius=0.8];
            \node[ line width=0.6pt, dashed, draw opacity=0.5] (a) at (-1,0){$\scriptstyle t$};
            \node[ line width=0.6pt, dashed, draw opacity=0.5] (a) at (-0.2,-1.1){$\scriptstyle y$};
            \node[ line width=0.6pt, dashed, draw opacity=0.5] (a) at (0.25,-0.8){$\scriptstyle \theta$};
            \node[ line width=0.6pt, dashed, draw opacity=0.5] (a) at (0,-0.25){$\scriptstyle z'$};
            \node[ line width=0.6pt, dashed, draw opacity=0.5] (a) at (0,0.3){$\scriptstyle i$};
            \node[ line width=0.6pt, dashed, draw opacity=0.5] (a) at (-0.2,1.1){$\scriptstyle k$};
            \node[ line width=0.6pt, dashed, draw opacity=0.5] (a) at (0.25,0.8){$\scriptstyle \sigma$};
        \end{tikzpicture}
    \end{aligned}.     \notag
\end{align} 
On the other hand, one obtains 
\begin{align}
    &\quad \delta(X)\delta(Y) \nonumber \\
    & = \sum_{i,k\in\Irr(\eK)}\sum_{\sigma}\sqrt{\frac{d_k}{d_ad_i}}\sum_{i',k'\in\Irr(\eK)}\sum_{\sigma'}\sqrt{\frac{d_{k'}}{d_{a'}d_{i'}}}\;\begin{aligned}
        \begin{tikzpicture}[scale=0.65]
        \begin{scope}
            \fill[gray!15]
                (0,1.5) arc[start angle=90, end angle=270, radius=1.5] -- 
                (0,-0.5) arc[start angle=270, end angle=90, radius=0.5] -- cycle;
        \end{scope}
        \draw[dotted] (0,1.5) arc[start angle=90, end angle=270, radius=1.5]; 
        \draw[dotted] (0,0.5) arc[start angle=90, end angle=270, radius=0.5]; 
            \draw[line width=1pt,cyan] (0,0.5)--(0,1.5);
            \draw[line width=1pt,teal] (0,-0.5)--(0,-1.5);
             \draw[red] (0,0.8) arc[start angle=90, end angle=270, radius=0.8];
            \node[ line width=0.6pt, dashed, draw opacity=0.5] (a) at (-1,0){$\scriptstyle a$};
            \node[ line width=0.6pt, dashed, draw opacity=0.5] (a) at (-0.2,-1.1){$\scriptstyle k$};
            \node[ line width=0.6pt, dashed, draw opacity=0.5] (a) at (0.25,-0.8){$\scriptstyle \sigma$};
            \node[ line width=0.6pt, dashed, draw opacity=0.5] (a) at (0,-0.3){$\scriptstyle i$};
            \node[ line width=0.6pt, dashed, draw opacity=0.5] (a) at (0,0.3){$\scriptstyle u$};
            \node[ line width=0.6pt, dashed, draw opacity=0.5] (a) at (-0.2,1.1){$\scriptstyle v$};
            \node[ line width=0.6pt, dashed, draw opacity=0.5] (a) at (0.25,0.8){$\scriptstyle \zeta$};
        \end{tikzpicture}
    \end{aligned}
    \begin{aligned}
        \begin{tikzpicture}[scale=0.65]
        \begin{scope}
            \fill[gray!15]
                (0,1.5) arc[start angle=90, end angle=270, radius=1.5] -- 
                (0,-0.5) arc[start angle=270, end angle=90, radius=0.5] -- cycle;
        \end{scope}
        \draw[dotted] (0,1.5) arc[start angle=90, end angle=270, radius=1.5]; 
        \draw[dotted] (0,0.5) arc[start angle=90, end angle=270, radius=0.5]; 
            \draw[line width=1pt,cyan] (0,0.5)--(0,1.5);
            \draw[line width=1pt,teal] (0,-0.5)--(0,-1.5);
             \draw[red] (0,0.8) arc[start angle=90, end angle=270, radius=0.8];
            \node[ line width=0.6pt, dashed, draw opacity=0.5] (a) at (-1,0){$\scriptstyle a'$};
            \node[ line width=0.6pt, dashed, draw opacity=0.5] (a) at (-0.2,-1.1){$\scriptstyle k'$};
            \node[ line width=0.6pt, dashed, draw opacity=0.5] (a) at (0.25,-0.8){$\scriptstyle \sigma'$};
            \node[ line width=0.6pt, dashed, draw opacity=0.5] (a) at (0,-0.25){$\scriptstyle i'$};
            \node[ line width=0.6pt, dashed, draw opacity=0.5] (a) at (0,0.3){$\scriptstyle u'$};
            \node[ line width=0.6pt, dashed, draw opacity=0.5] (a) at (-0.2,1.1){$\scriptstyle v'$};
            \node[ line width=0.6pt, dashed, draw opacity=0.5] (a) at (0.25,0.8){$\scriptstyle \zeta'$};
        \end{tikzpicture}
    \end{aligned}\otimes \;
    \begin{aligned}
        \begin{tikzpicture}[scale=0.65]
        \begin{scope}
            \fill[gray!15]
                (0,1.5) arc[start angle=90, end angle=270, radius=1.5] -- 
                (0,-0.5) arc[start angle=270, end angle=90, radius=0.5] -- cycle;
        \end{scope}
        \draw[dotted] (0,1.5) arc[start angle=90, end angle=270, radius=1.5]; 
        \draw[dotted] (0,0.5) arc[start angle=90, end angle=270, radius=0.5]; 
            \draw[line width=1pt,teal] (0,0.5)--(0,1.5);
            \draw[line width=1pt,violet] (0,-0.5)--(0,-1.5);
             \draw[red] (0,0.8) arc[start angle=90, end angle=270, radius=0.8];
            \node[ line width=0.6pt, dashed, draw opacity=0.5] (a) at (-1,0){$\scriptstyle a$};
            \node[ line width=0.6pt, dashed, draw opacity=0.5] (a) at (-0.2,-1.1){$\scriptstyle y$};
            \node[ line width=0.6pt, dashed, draw opacity=0.5] (a) at (0.25,-0.8){$\scriptstyle \nu$};
            \node[ line width=0.6pt, dashed, draw opacity=0.5] (a) at (0,-0.3){$\scriptstyle z$};
            \node[ line width=0.6pt, dashed, draw opacity=0.5] (a) at (0,0.3){$\scriptstyle i$};
            \node[ line width=0.6pt, dashed, draw opacity=0.5] (a) at (-0.2,1.1){$\scriptstyle k$};
            \node[ line width=0.6pt, dashed, draw opacity=0.5] (a) at (0.25,0.8){$\scriptstyle \sigma$};
        \end{tikzpicture}
    \end{aligned}
    \begin{aligned}
        \begin{tikzpicture}[scale=0.65]
        \begin{scope}
            \fill[gray!15]
                (0,1.5) arc[start angle=90, end angle=270, radius=1.5] -- 
                (0,-0.5) arc[start angle=270, end angle=90, radius=0.5] -- cycle;
        \end{scope}
        \draw[dotted] (0,1.5) arc[start angle=90, end angle=270, radius=1.5]; 
        \draw[dotted] (0,0.5) arc[start angle=90, end angle=270, radius=0.5]; 
            \draw[line width=1pt,teal] (0,0.5)--(0,1.5);
            \draw[line width=1pt,violet] (0,-0.5)--(0,-1.5);
             \draw[red] (0,0.8) arc[start angle=90, end angle=270, radius=0.8];
            \node[ line width=0.6pt, dashed, draw opacity=0.5] (a) at (-1,0){$\scriptstyle a'$};
            \node[ line width=0.6pt, dashed, draw opacity=0.5] (a) at (-0.2,-1.1){$\scriptstyle y'$};
            \node[ line width=0.6pt, dashed, draw opacity=0.5] (a) at (0.3,-0.8){$\scriptstyle \nu'$};
            \node[ line width=0.6pt, dashed, draw opacity=0.5] (a) at (0,-0.25){$\scriptstyle z'$};
            \node[ line width=0.6pt, dashed, draw opacity=0.5] (a) at (0,0.3){$\scriptstyle i'$};
            \node[ line width=0.6pt, dashed, draw opacity=0.5] (a) at (-0.2,1.1){$\scriptstyle k'$};
            \node[ line width=0.6pt, dashed, draw opacity=0.5] (a) at (0.3,0.8){$\scriptstyle \sigma'$};
        \end{tikzpicture}
    \end{aligned} \nonumber \\
    & = \sum_{i,k\in\Irr(\eK)}\sum_{\sigma}\sqrt{\frac{d_k}{d_ad_i}}\sum_{i',k'\in\Irr(\eK)}\sum_{\sigma'}\sqrt{\frac{d_{k'}}{d_{a'}d_{i'}}}\delta_{u,v'}\delta_{z,y'}\delta_{i,k'} \sum_{t,\beta}\sum_{\alpha}[{_{\eN}}F_{aa'u'}^{v}]_{u\zeta\zeta'}^{t\alpha\beta}  \nonumber \\
    & \quad\sum_{\theta}[{_{\eK}}F_k^{aa'i'}]_{i\sigma\sigma'}^{t\theta\beta}\sqrt{\frac{d_{a}d_{a'}}{d_t}}
     \sum_{t',\beta'}\sum_{\alpha'}[{_{\eK}}F_{aa'i'}^{k}]_{i\sigma\sigma'}^{t'\alpha'\beta'}\sum_{\theta'}[{_{\eM}}F_y^{aa'z'}]_{z\nu\nu'}^{t'\theta'\beta'}\sqrt{\frac{d_{a}d_{a'}}{d_{t'}}} \;\begin{aligned}
        \begin{tikzpicture}[scale=0.65]
        \begin{scope}
            \fill[gray!15]
                (0,1.5) arc[start angle=90, end angle=270, radius=1.5] -- 
                (0,-0.5) arc[start angle=270, end angle=90, radius=0.5] -- cycle;
        \end{scope}
        \draw[dotted] (0,1.5) arc[start angle=90, end angle=270, radius=1.5]; 
        \draw[dotted] (0,0.5) arc[start angle=90, end angle=270, radius=0.5]; 
            \draw[line width=1pt,cyan] (0,0.5)--(0,1.5);
            \draw[line width=1pt,teal] (0,-0.5)--(0,-1.5);
             \draw[red] (0,0.8) arc[start angle=90, end angle=270, radius=0.8];
            \node[ line width=0.6pt, dashed, draw opacity=0.5] (a) at (-1,0){$\scriptstyle t$};
            \node[ line width=0.6pt, dashed, draw opacity=0.5] (a) at (-0.2,-1.1){$\scriptstyle k$};
            \node[ line width=0.6pt, dashed, draw opacity=0.5] (a) at (0.25,-0.8){$\scriptstyle \theta$};
            \node[ line width=0.6pt, dashed, draw opacity=0.5] (a) at (0,-0.25){$\scriptstyle i'$};
            \node[ line width=0.6pt, dashed, draw opacity=0.5] (a) at (0,0.3){$\scriptstyle u'$};
            \node[ line width=0.6pt, dashed, draw opacity=0.5] (a) at (-0.2,1.1){$\scriptstyle v$};
            \node[ line width=0.6pt, dashed, draw opacity=0.5] (a) at (0.25,0.8){$\scriptstyle \alpha$};
        \end{tikzpicture}
    \end{aligned}\otimes \; 
    \begin{aligned}
        \begin{tikzpicture}[scale=0.65]
        \begin{scope}
            \fill[gray!15]
                (0,1.5) arc[start angle=90, end angle=270, radius=1.5] -- 
                (0,-0.5) arc[start angle=270, end angle=90, radius=0.5] -- cycle;
        \end{scope}
        \draw[dotted] (0,1.5) arc[start angle=90, end angle=270, radius=1.5]; 
        \draw[dotted] (0,0.5) arc[start angle=90, end angle=270, radius=0.5]; 
            \draw[line width=1pt,teal] (0,0.5)--(0,1.5);
            \draw[line width=1pt,violet] (0,-0.5)--(0,-1.5);
             \draw[red] (0,0.8) arc[start angle=90, end angle=270, radius=0.8];
            \node[ line width=0.6pt, dashed, draw opacity=0.5] (a) at (-1,0){$\scriptstyle t'$};
            \node[ line width=0.6pt, dashed, draw opacity=0.5] (a) at (-0.2,-1.1){$\scriptstyle y$};
            \node[ line width=0.6pt, dashed, draw opacity=0.5] (a) at (0.3,-0.8){$\scriptstyle \theta'$};
            \node[ line width=0.6pt, dashed, draw opacity=0.5] (a) at (0,-0.25){$\scriptstyle z'$};
            \node[ line width=0.6pt, dashed, draw opacity=0.5] (a) at (0,0.3){$\scriptstyle i'$};
            \node[ line width=0.6pt, dashed, draw opacity=0.5] (a) at (-0.2,1.1){$\scriptstyle k$};
            \node[ line width=0.6pt, dashed, draw opacity=0.5] (a) at (0.3,0.8){$\scriptstyle \alpha'$};
        \end{tikzpicture}
    \end{aligned} \nonumber \\
    & = \delta_{u,v'}\delta_{z,y'}\sum_{k}\sum_{i'}\sum_{t,\alpha,\beta}\sum_{t',\alpha',\beta'}\sum_{\theta}\sum_{\theta'}\underbrace{\sum_{i,\sigma,\sigma'}[{_{\eK}}F_k^{aa'i'}]_{i\sigma\sigma'}^{t\theta\beta}[{_{\eK}}F_{aa'i'}^{k}]_{i\sigma\sigma'}^{t'\alpha'\beta'}}_{\delta_{t,t'}\delta_{\theta,\alpha'}\delta_{\beta,\beta'}} \nonumber  \\
    & \quad\;
     [{_{\eN}}F_{aa'u'}^{v}]_{u\zeta\zeta'}^{t\alpha\beta}[{_{\eM}}F_y^{aa'z'}]_{z\nu\nu'}^{t'\theta'\beta'}\sqrt{\frac{d_{a}d_{a'}}{d_t}}\sqrt{\frac{d_{a}d_{a'}}{d_{t'}}}\sqrt{\frac{d_k}{d_ad_i}}\sqrt{\frac{d_i}{d_{a'}d_{i'}}} \;\begin{aligned}
        \begin{tikzpicture}[scale=0.65]
        \begin{scope}
            \fill[gray!15]
                (0,1.5) arc[start angle=90, end angle=270, radius=1.5] -- 
                (0,-0.5) arc[start angle=270, end angle=90, radius=0.5] -- cycle;
        \end{scope}
        \draw[dotted] (0,1.5) arc[start angle=90, end angle=270, radius=1.5]; 
        \draw[dotted] (0,0.5) arc[start angle=90, end angle=270, radius=0.5]; 
            \draw[line width=1pt,cyan] (0,0.5)--(0,1.5);
            \draw[line width=1pt,teal] (0,-0.5)--(0,-1.5);
             \draw[red] (0,0.8) arc[start angle=90, end angle=270, radius=0.8];
            \node[ line width=0.6pt, dashed, draw opacity=0.5] (a) at (-1,0){$\scriptstyle t$};
            \node[ line width=0.6pt, dashed, draw opacity=0.5] (a) at (-0.2,-1.1){$\scriptstyle k$};
            \node[ line width=0.6pt, dashed, draw opacity=0.5] (a) at (0.25,-0.8){$\scriptstyle \theta$};
            \node[ line width=0.6pt, dashed, draw opacity=0.5] (a) at (0,-0.25){$\scriptstyle i'$};
            \node[ line width=0.6pt, dashed, draw opacity=0.5] (a) at (0,0.3){$\scriptstyle u'$};
            \node[ line width=0.6pt, dashed, draw opacity=0.5] (a) at (-0.2,1.1){$\scriptstyle v$};
            \node[ line width=0.6pt, dashed, draw opacity=0.5] (a) at (0.25,0.8){$\scriptstyle \alpha$};
        \end{tikzpicture}
    \end{aligned}\otimes \; 
    \begin{aligned}
        \begin{tikzpicture}[scale=0.65]
        \begin{scope}
            \fill[gray!15]
                (0,1.5) arc[start angle=90, end angle=270, radius=1.5] -- 
                (0,-0.5) arc[start angle=270, end angle=90, radius=0.5] -- cycle;
        \end{scope}
        \draw[dotted] (0,1.5) arc[start angle=90, end angle=270, radius=1.5]; 
        \draw[dotted] (0,0.5) arc[start angle=90, end angle=270, radius=0.5]; 
            \draw[line width=1pt,teal] (0,0.5)--(0,1.5);
            \draw[line width=1pt,violet] (0,-0.5)--(0,-1.5);
             \draw[red] (0,0.8) arc[start angle=90, end angle=270, radius=0.8];
            \node[ line width=0.6pt, dashed, draw opacity=0.5] (a) at (-1,0){$\scriptstyle t'$};
            \node[ line width=0.6pt, dashed, draw opacity=0.5] (a) at (-0.2,-1.1){$\scriptstyle y$};
            \node[ line width=0.6pt, dashed, draw opacity=0.5] (a) at (0.3,-0.8){$\scriptstyle \theta'$};
            \node[ line width=0.6pt, dashed, draw opacity=0.5] (a) at (0,-0.25){$\scriptstyle z'$};
            \node[ line width=0.6pt, dashed, draw opacity=0.5] (a) at (0,0.3){$\scriptstyle i'$};
            \node[ line width=0.6pt, dashed, draw opacity=0.5] (a) at (-0.2,1.1){$\scriptstyle k$};
            \node[ line width=0.6pt, dashed, draw opacity=0.5] (a) at (0.3,0.8){$\scriptstyle \alpha'$};
        \end{tikzpicture}
    \end{aligned} \nonumber \\
    & = \delta_{u,v'}\delta_{z,y'}\sum_{t,\beta}\sum_{\alpha}\sum_{\theta'} [{_{\eN}}F_{aa'u'}^{v}]_{u\zeta\zeta'}^{t\alpha\beta}[{_{\eM}}F_y^{aa'z'}]_{z\nu\nu'}^{t\theta'\beta}\sqrt{\frac{d_ad_{a'}}{d_t}}  \nonumber  \\
    & \quad \sum_{i',k\in\Irr(\eK)}\sum_{\theta}\sqrt{\frac{d_k}{d_td_{i'}}} \;\begin{aligned}
        \begin{tikzpicture}[scale=0.65]
        \begin{scope}
            \fill[gray!15]
                (0,1.5) arc[start angle=90, end angle=270, radius=1.5] -- 
                (0,-0.5) arc[start angle=270, end angle=90, radius=0.5] -- cycle;
        \end{scope}
        \draw[dotted] (0,1.5) arc[start angle=90, end angle=270, radius=1.5]; 
        \draw[dotted] (0,0.5) arc[start angle=90, end angle=270, radius=0.5]; 
            \draw[line width=1pt,cyan] (0,0.5)--(0,1.5);
            \draw[line width=1pt,teal] (0,-0.5)--(0,-1.5);
             \draw[red] (0,0.8) arc[start angle=90, end angle=270, radius=0.8];
            \node[ line width=0.6pt, dashed, draw opacity=0.5] (a) at (-1,0){$\scriptstyle t$};
            \node[ line width=0.6pt, dashed, draw opacity=0.5] (a) at (-0.2,-1.1){$\scriptstyle k$};
            \node[ line width=0.6pt, dashed, draw opacity=0.5] (a) at (0.25,-0.8){$\scriptstyle \theta$};
            \node[ line width=0.6pt, dashed, draw opacity=0.5] (a) at (0,-0.25){$\scriptstyle i'$};
            \node[ line width=0.6pt, dashed, draw opacity=0.5] (a) at (0,0.3){$\scriptstyle u'$};
            \node[ line width=0.6pt, dashed, draw opacity=0.5] (a) at (-0.2,1.1){$\scriptstyle v$};
            \node[ line width=0.6pt, dashed, draw opacity=0.5] (a) at (0.25,0.8){$\scriptstyle \alpha$};
        \end{tikzpicture}
    \end{aligned}\otimes \; 
    \begin{aligned}
        \begin{tikzpicture}[scale=0.65]
        \begin{scope}
            \fill[gray!15]
                (0,1.5) arc[start angle=90, end angle=270, radius=1.5] -- 
                (0,-0.5) arc[start angle=270, end angle=90, radius=0.5] -- cycle;
        \end{scope}
        \draw[dotted] (0,1.5) arc[start angle=90, end angle=270, radius=1.5]; 
        \draw[dotted] (0,0.5) arc[start angle=90, end angle=270, radius=0.5]; 
            \draw[line width=1pt,teal] (0,0.5)--(0,1.5);
            \draw[line width=1pt,violet] (0,-0.5)--(0,-1.5);
             \draw[red] (0,0.8) arc[start angle=90, end angle=270, radius=0.8];
            \node[ line width=0.6pt, dashed, draw opacity=0.5] (a) at (-1,0){$\scriptstyle t$};
            \node[ line width=0.6pt, dashed, draw opacity=0.5] (a) at (-0.2,-1.1){$\scriptstyle y$};
            \node[ line width=0.6pt, dashed, draw opacity=0.5] (a) at (0.3,-0.8){$\scriptstyle \theta'$};
            \node[ line width=0.6pt, dashed, draw opacity=0.5] (a) at (0,-0.25){$\scriptstyle z'$};
            \node[ line width=0.6pt, dashed, draw opacity=0.5] (a) at (0,0.3){$\scriptstyle i'$};
            \node[ line width=0.6pt, dashed, draw opacity=0.5] (a) at (-0.2,1.1){$\scriptstyle k$};
            \node[ line width=0.6pt, dashed, draw opacity=0.5] (a) at (0.3,0.8){$\scriptstyle \theta$};
        \end{tikzpicture}
    \end{aligned} = \delta(XY).     \notag
\end{align}

\noindent
\textit{Proof of} \eqref{eq:epsilon-rho-1} and \eqref{eq:epsilon-rho-2}. We prove only the first line; the second line follows similarly. 
    For basis tube $X$, one has 
    \begin{equation}
        \notag
        \varepsilon_R(X) = \delta_{z,u}\delta_{y,v}\delta_{\nu,\zeta} \sqrt{\frac{d_ad_u}{d_v}} \sum_k\;  
        \begin{aligned}
            \begin{tikzpicture}[scale=0.65]
                \fill[gray!15] (0,1.2) arc[start angle=90, end angle=270, radius=1.2] -- (0,-0.5) arc[start angle=270, end angle=90, radius=0.5] -- cycle;
        \draw[dotted] (0,1.2) arc[start angle=90, end angle=270, radius=1.2]; 
        \draw[dotted] (0,0.5) arc[start angle=90, end angle=270, radius=0.5];
            \draw[line width=1pt,violet] (0,0.5)--(0,1.2);
            \draw[line width=1pt,violet] (0,-0.5)--(0,-1.2);
            \draw[red] (0,0.8) arc[start angle=90, end angle=270, radius=0.8];
            \node[ line width=0.6pt, dashed, draw opacity=0.5] (a) at (-1,0){$\scriptstyle {\mathbb{1}}$};
            \node[ line width=0.6pt, dashed, draw opacity=0.5] (a) at (0.25,-0.8){$\scriptstyle u$};
            \node[ line width=0.6pt, dashed, draw opacity=0.5] (a) at (0.25,0.8){$\scriptstyle k$};
        \end{tikzpicture}
    \end{aligned}\;. 
    \end{equation}
    A similar expression holds for $\varepsilon'_R(X)$, but the tube element on the right-hand side lies in $\Tube({}_\eC\eM;{}_\eC\eN)$.  Hence by definition, we obtain 
    \begin{equation}
        \delta(\varepsilon_R(X))  =  \delta_{z,u}\delta_{y,v}\delta_{\nu,\zeta} \sqrt{\frac{d_ad_u}{d_v}} \sum_k\sum_t\;  
        \begin{aligned}
            \begin{tikzpicture}[scale=0.65]
                \fill[gray!15] (0,1.2) arc[start angle=90, end angle=270, radius=1.2] -- (0,-0.5) arc[start angle=270, end angle=90, radius=0.5] -- cycle;
        \draw[dotted] (0,1.2) arc[start angle=90, end angle=270, radius=1.2]; 
        \draw[dotted] (0,0.5) arc[start angle=90, end angle=270, radius=0.5];
            \draw[line width=1pt,violet] (0,0.5)--(0,1.2);
            \draw[line width=1pt,cyan] (0,-0.5)--(0,-1.2);
            \draw[red] (0,0.8) arc[start angle=90, end angle=270, radius=0.8];
            \node[ line width=0.6pt, dashed, draw opacity=0.5] (a) at (-1,0){$\scriptstyle {\mathbb{1}}$};
            \node[ line width=0.6pt, dashed, draw opacity=0.5] (a) at (0.25,-0.8){$\scriptstyle t$};
            \node[ line width=0.6pt, dashed, draw opacity=0.5] (a) at (0.25,0.8){$\scriptstyle k$};
        \end{tikzpicture}
    \end{aligned}\otimes \; \begin{aligned}
            \begin{tikzpicture}[scale=0.65]
                \fill[gray!15] (0,1.2) arc[start angle=90, end angle=270, radius=1.2] -- (0,-0.5) arc[start angle=270, end angle=90, radius=0.5] -- cycle;
        \draw[dotted] (0,1.2) arc[start angle=90, end angle=270, radius=1.2]; 
        \draw[dotted] (0,0.5) arc[start angle=90, end angle=270, radius=0.5];
            \draw[line width=1pt,cyan] (0,0.5)--(0,1.2);
            \draw[line width=1pt,violet] (0,-0.5)--(0,-1.2);
            \draw[red] (0,0.8) arc[start angle=90, end angle=270, radius=0.8];
            \node[ line width=0.6pt, dashed, draw opacity=0.5] (a) at (-1,0){$\scriptstyle {\mathbb{1}}$};
            \node[ line width=0.6pt, dashed, draw opacity=0.5] (a) at (0.25,-0.8){$\scriptstyle u$};
            \node[ line width=0.6pt, dashed, draw opacity=0.5] (a) at (0.25,0.8){$\scriptstyle t$};
        \end{tikzpicture}
    \end{aligned}\;, 
    \end{equation}
    which is clearly equal to $1^{\langle 1 \rangle}\otimes 1^{\langle 2 \rangle}\varepsilon'_R(X)$.


\noindent
\textit{Proof of} \eqref{eq:unit-counit-1}-\eqref{eq:unit-counit-4}. Let us prove the first line. The second equality of \eqref{eq:unit-counit-1} is just from definition. It suffices to show the first equality:
    \begin{equation} \label{eq:delta-unit-identity}
        1^{\langle 1\rangle} X\otimes 1^{\langle 2\rangle}= X^{\langle 1\rangle} \otimes X^{\langle 2\rangle} s(X^{\langle 3\rangle}). 
    \end{equation}
    Consider the basis tube 
    \begin{equation}
        \notag
        X = \begin{aligned}

    \end{aligned}\;. 
    \end{equation}
    Then the right-hand side of \eqref{eq:delta-unit-identity} is equal to
\begin{align}
\delta(X^{[-1]})\bigl(1\otimes s(X^{[0]})\bigr) & = \sum_{i,j,\sigma}\sqrt{\frac{d_i}{d_{a}d_j}} \,\delta\left(\begin{aligned}
        %
    \end{aligned}\;. \label{eq:delta-unit-expanded}
\end{align}
By inverse parallel move, the multiplication is 
\begin{align}
    \sum_{j,\sigma}\sqrt{\frac{d_j}{d_{a}d_i}}\;
    \begin{aligned}
        %
    \end{aligned}\;. \label{eq:inverse-parallel}
\end{align}
Plugging this into Eq.~\eqref{eq:delta-unit-expanded}, we obtain
\begin{align}
    \delta(X^{[-1]})\bigl(1\otimes s(X^{[0]})\bigr) & = \sum_{s,t,\theta}\sqrt{\frac{d_s}{d_{a}d_t}} \sum_i \delta_{t,v}\delta_{s,u}\delta_{\theta,\zeta} \sqrt{\frac{d_{a}d_{v}}{d_{u}}} \; \begin{aligned}
        %
    \end{aligned}\;. \nonumber
\end{align}
It is ready to see that the expression above is just $\delta(1)(X\otimes 1)$ as desired. The remaining three lines can be showed similarly.

\noindent
\textit{Proof of} \eqref{eq:antipode-antimultiplicative}. The first assertion $s(1)=1$ is immediate. Let us prove the second one. 
    For general $X,Y$, using Eq.~\eqref{eq:product-basis}, one has
    \begin{equation}
        \notag
        s(XY)=\delta_{u,v'}\delta_{z,y'}\sum_{t,\beta}\sum_{\alpha}[{_{\eN}}F_{aa'u'}^{v}]_{u\zeta\zeta'}^{t\alpha\beta}\sum_{\theta}[{_{\eM}}F_y^{aa'z'}]_{z\nu\nu'}^{t\theta\beta}\sqrt{\frac{d_{a}d_{a'}}{d_t}} \frac{d_{u'}}{d_v} \;\begin{aligned}

    \end{aligned}\;. 
    \end{equation}
    On the other hand, one computes 
    \begin{align} 
    s(Y)s(X)&=\frac{d_{u'}}{d_{v'}}\;\begin{aligned}%
    \end{aligned}, \nonumber
\end{align}
which is equal to $s(XY)$.

\noindent
\textit{Proof of} \eqref{eq:delta-antipode-compatibility}. The assertion follows by direct computation:
    \begin{align}
        \delta(s(X)) & = \frac{d_u}{d_v}\sum_{t,s,\theta} \sqrt{\frac{d_s}{d_ad_t}}\;\begin{aligned}%
    \end{aligned}\;.     \notag
    \end{align}
    Two expressions are equal by inspection.

\noindent
\textit{Proof of} \eqref{eq:decomp-antipode}. By definition, $s(X^{\langle 1\rangle})X^{\langle 2\rangle}s(X^{\langle 3\rangle}) = \varepsilon'_R(X^{[-1]})S(X^{[0]})$.
    It is easy to compute that 
    \begin{equation}
        \notag
        \varepsilon_R'\left(\begin{aligned}
            %
    \end{aligned}\;, 
    \end{equation}
    whose proof is similar to that of \eqref{eq:inverse-parallel}. Therefore, one has 
    \begin{align}
        \varepsilon'_R(X^{[-1]})s(X^{[0]}) & = \sum_{i,j,\sigma}\sqrt{\frac{d_i}{d_ad_j}}  \; \varepsilon_R'\left(\begin{aligned}
            %
    \end{aligned} = s(X).     \notag
    \end{align}

\noindent
\textit{Proof of} \eqref{eq:antipode-counital}. It follows from $\delta(s(X)) = \tau\circ (s\otimes s)\circ\delta(X)$ that $\delta\bigl(s(X)\bigr) = s\bigl(X^{\langle 2 \rangle}\bigr) \otimes s\bigl(X^{\langle 1 \rangle}\bigr)$. Then we have 
    \begin{align}
        \varepsilon'_L(s(X)) &= s\bigl(X^{\langle 2 \rangle}\bigr) s\bigl(s\bigl(X^{\langle 1 \rangle}\bigr)\bigr) = s\bigl(s\bigl(X^{\langle 1 \rangle}\bigr)X^{\langle 2 \rangle}\bigr) = s\bigl(\varepsilon'_R(X)\bigr),     \notag\\
        \varepsilon'_R(s(X)) &= s\bigl(s\bigl(X^{\langle 2 \rangle}\bigr)\bigr) s\bigl(X^{\langle 1 \rangle}\bigr) = s\bigl(X^{\langle 1 \rangle}s\bigl(X^{\langle 2 \rangle}\bigr)\bigr) = s\bigl(\varepsilon'_L(X)\bigr),     \notag
    \end{align}
    for any $X\in \mathbf{Tube}({}_\eC\eM)$. Hence the assertion follows.

\subsubsection*{Domain wall case.}

\noindent
\textit{Proof of} \eqref{eq:coassociativity-DW}. For a basis tube $X\in \Tube({}_\eC\eN_\eD;{}_\eC\eM_\eD)$, both expressions, applied to $X$, are equal to 
    \begin{equation}
        \notag
        \sum_{i,j,k,\sigma,\varrho} \sqrt{\frac{d_i}{d_ad_kd_b}} \sum_{r,t,s,\phi,\psi} \sqrt{\frac{d_r}{d_ad_td_b}} \; \begin{aligned}
        \begin{tikzpicture}[scale=0.65]
        \filldraw[black!60, fill=gray!15, dotted, even odd rule] (0,0) circle[radius=0.5] (0,0) circle[radius=1.5];
             \draw[line width=1pt,violet] (0,0.5)--(0,1.5);
             \draw[line width=1pt,cyan] (0,-0.5)--(0,-1.5);
             \draw[red] (0,0.8) arc[start angle=90, end angle=270, radius=0.8];
             \draw[blue] (0,1.3) arc[start angle=90, end angle=-90, radius=1.3];
            \node[ line width=0.6pt, dashed, draw opacity=0.5] (a) at (0,1.7){$\scriptstyle w$};
             \node[ line width=0.6pt, dashed, draw opacity=0.5] (a) at (0,-1.7){$\scriptstyle i$};
            \node[ line width=0.6pt, dashed, draw opacity=0.5] (a) at (-1,0){$\scriptstyle a$};
            \node[ line width=0.6pt, dashed, draw opacity=0.5] (a) at (1.1,0){$\scriptstyle b$};
            \node[ line width=0.6pt, dashed, draw opacity=0.5] (a) at (-0.2,-1){$\scriptstyle j$};
            \node[ line width=0.6pt, dashed, draw opacity=0.5] (a) at (-0.2,-1.35){$\scriptstyle \sigma$};
            \node[ line width=0.6pt, dashed, draw opacity=0.5] (a) at (0.2,-0.8){$\scriptstyle \varrho$};
            \node[ line width=0.6pt, dashed, draw opacity=0.5] (a) at (0,-0.3){$\scriptstyle k$};
            \node[ line width=0.6pt, dashed, draw opacity=0.5] (a) at (0,0.3){$\scriptstyle u$};
            \node[ line width=0.6pt, dashed, draw opacity=0.5] (a) at (-0.2,1){$\scriptstyle v$};
            \node[ line width=0.6pt, dashed, draw opacity=0.5] (a) at (-0.2,1.3){$\scriptstyle \gamma$};
            \node[ line width=0.6pt, dashed, draw opacity=0.5] (a) at (0.2,0.8){$\scriptstyle \zeta$};
        \end{tikzpicture}
    \end{aligned} \otimes \begin{aligned}
        \begin{tikzpicture}[scale=0.65]
        \filldraw[black!60, fill=gray!15, dotted, even odd rule] (0,0) circle[radius=0.5] (0,0) circle[radius=1.5];
             \draw[line width=1pt,cyan] (0,0.5)--(0,1.5);
             \draw[line width=1pt,violet] (0,-0.5)--(0,-1.5);
             \draw[red] (0,0.8) arc[start angle=90, end angle=270, radius=0.8];
             \draw[blue] (0,1.3) arc[start angle=90, end angle=-90, radius=1.3];
            \node[ line width=0.6pt, dashed, draw opacity=0.5] (a) at (0,1.7){$\scriptstyle i$};
             \node[ line width=0.6pt, dashed, draw opacity=0.5] (a) at (0,-1.7){$\scriptstyle r$};
            \node[ line width=0.6pt, dashed, draw opacity=0.5] (a) at (-1,0){$\scriptstyle a$};
            \node[ line width=0.6pt, dashed, draw opacity=0.5] (a) at (1.1,0){$\scriptstyle b$};
            \node[ line width=0.6pt, dashed, draw opacity=0.5] (a) at (-0.2,-1){$\scriptstyle s$};
            \node[ line width=0.6pt, dashed, draw opacity=0.5] (a) at (-0.2,-1.35){$\scriptstyle \psi$};
            \node[ line width=0.6pt, dashed, draw opacity=0.5] (a) at (0.2,-0.8){$\scriptstyle \phi$};
            \node[ line width=0.6pt, dashed, draw opacity=0.5] (a) at (0,-0.3){$\scriptstyle t$};
            \node[ line width=0.6pt, dashed, draw opacity=0.5] (a) at (0,0.3){$\scriptstyle k$};
            \node[ line width=0.6pt, dashed, draw opacity=0.5] (a) at (-0.2,1){$\scriptstyle j$};
            \node[ line width=0.6pt, dashed, draw opacity=0.5] (a) at (-0.2,1.3){$\scriptstyle \sigma$};
            \node[ line width=0.6pt, dashed, draw opacity=0.5] (a) at (0.2,0.8){$\scriptstyle \varrho$};
        \end{tikzpicture}
    \end{aligned} \otimes \begin{aligned}
        \begin{tikzpicture}[scale=0.65]
        \filldraw[black!60, fill=gray!15, dotted, even odd rule] (0,0) circle[radius=0.5] (0,0) circle[radius=1.5];
             \draw[line width=1pt,violet] (0,0.5)--(0,1.5);
             \draw[line width=1pt,cyan] (0,-0.5)--(0,-1.5);
             \draw[red] (0,0.8) arc[start angle=90, end angle=270, radius=0.8];
             \draw[blue] (0,1.3) arc[start angle=90, end angle=-90, radius=1.3];
            \node[ line width=0.6pt, dashed, draw opacity=0.5] (a) at (0,1.7){$\scriptstyle r$};
             \node[ line width=0.6pt, dashed, draw opacity=0.5] (a) at (0,-1.7){$\scriptstyle x$};
            \node[ line width=0.6pt, dashed, draw opacity=0.5] (a) at (-1,0){$\scriptstyle a$};
            \node[ line width=0.6pt, dashed, draw opacity=0.5] (a) at (1.1,0){$\scriptstyle b$};
            \node[ line width=0.6pt, dashed, draw opacity=0.5] (a) at (-0.2,-1){$\scriptstyle y$};
            \node[ line width=0.6pt, dashed, draw opacity=0.5] (a) at (-0.2,-1.35){$\scriptstyle \mu$};
            \node[ line width=0.6pt, dashed, draw opacity=0.5] (a) at (0.2,-0.8){$\scriptstyle \nu$};
            \node[ line width=0.6pt, dashed, draw opacity=0.5] (a) at (0,-0.3){$\scriptstyle z$};
            \node[ line width=0.6pt, dashed, draw opacity=0.5] (a) at (0,0.3){$\scriptstyle t$};
            \node[ line width=0.6pt, dashed, draw opacity=0.5] (a) at (-0.2,1){$\scriptstyle s$};
            \node[ line width=0.6pt, dashed, draw opacity=0.5] (a) at (-0.2,1.3){$\scriptstyle \psi$};
            \node[ line width=0.6pt, dashed, draw opacity=0.5] (a) at (0.2,0.8){$\scriptstyle \phi$};
        \end{tikzpicture}
    \end{aligned} \;. 
    \end{equation}

\noindent
\textit{Proof of} \eqref{eq:unit-trifold-DW}. By a direct calculation, both expressions are equal to  
    \begin{equation}
        \notag
        \sum_{x,z\in\Irr(\eM)}\sum_{u,v\in \Irr(\eN)}\;  \begin{aligned}
        \begin{tikzpicture}[scale=0.65]
            \filldraw[black!60, fill=gray!15, dotted, even odd rule] (0,0) circle[radius=0.5] (0,0) circle[radius=1.5];
            \draw[line width=1pt,cyan] (0,0.5)--(0,1.5);
            \draw[line width=1pt,violet] (0,-0.5)--(0,-1.5);
            \draw[red, dotted, thick] (0,0.8) arc[start angle=90, end angle=270, radius=0.8];
            \draw[blue, dotted, thick] (0,1.3) arc[start angle=90, end angle=-90, radius=1.3];
            \node[ line width=0.6pt, dashed, draw opacity=0.5] (a) at (0,-0.3){$\scriptstyle z$};
            \node[ line width=0.6pt, dashed, draw opacity=0.5] (a) at (0,0.3){$\scriptstyle u$};
        \end{tikzpicture}
    \end{aligned} \otimes \begin{aligned}
        \begin{tikzpicture}[scale=0.65]
            \filldraw[black!60, fill=gray!15, dotted, even odd rule] (0,0) circle[radius=0.5] (0,0) circle[radius=1.5];
            \draw[line width=1pt,violet] (0,0.5)--(0,1.5);
            \draw[line width=1pt,cyan] (0,-0.5)--(0,-1.5);
            \draw[red, dotted, thick] (0,0.8) arc[start angle=90, end angle=270, radius=0.8];
            \draw[blue, dotted, thick] (0,1.3) arc[start angle=90, end angle=-90, radius=1.3];
            \node[ line width=0.6pt, dashed, draw opacity=0.5] (a) at (0,-0.3){$\scriptstyle v$};
            \node[ line width=0.6pt, dashed, draw opacity=0.5] (a) at (0,0.3){$\scriptstyle z$};
        \end{tikzpicture}
    \end{aligned} \otimes \begin{aligned}
        \begin{tikzpicture}[scale=0.65]
            \filldraw[black!60, fill=gray!15, dotted, even odd rule] (0,0) circle[radius=0.5] (0,0) circle[radius=1.5];
            \draw[line width=1pt,cyan] (0,0.5)--(0,1.5);
            \draw[line width=1pt,violet] (0,-0.5)--(0,-1.5);
            \draw[red, dotted, thick] (0,0.8) arc[start angle=90, end angle=270, radius=0.8];
            \draw[blue, dotted, thick] (0,1.3) arc[start angle=90, end angle=-90, radius=1.3];
            \node[ line width=0.6pt, dashed, draw opacity=0.5] (a) at (0,-0.3){$\scriptstyle x$};
            \node[ line width=0.6pt, dashed, draw opacity=0.5] (a) at (0,0.3){$\scriptstyle v$};
        \end{tikzpicture}
    \end{aligned}\;.
    \end{equation}

\noindent
\textit{Proof of} \eqref{eq:preserve-product-DW}. 
Using definition of $\delta$ and expansion of multiplication as in \eqref{eq:bi_tube_prod_2}, $\delta(X)\delta(Y)$ is equal to 
\begin{align}
        & \; \sum_{i,j,k\in \Irr(\eK)}\sum_{\sigma,\rho}\sqrt{\frac{d_i}{d_ad_kd_b}} \sum_{i',j',k'\in \Irr(\eK)}\sum_{\sigma',\rho'}\sqrt{\frac{d_{i'}}{d_{a'}d_{k'}d_{b'}}} \; \nonumber\\
        &\quad \quad \begin{aligned}
        \begin{tikzpicture}[scale=0.65]
        \filldraw[black!60, fill=gray!15, dotted, even odd rule] (0,0) circle[radius=0.5] (0,0) circle[radius=1.5];
             \draw[line width=1pt,cyan] (0,0.5)--(0,1.5);
             \draw[line width=1pt,teal] (0,-0.5)--(0,-1.5);
             \draw[red] (0,0.8) arc[start angle=90, end angle=270, radius=0.8];
             \draw[blue] (0,1.3) arc[start angle=90, end angle=-90, radius=1.3];
            \node[ line width=0.6pt, dashed, draw opacity=0.5] (a) at (0,1.7){$\scriptstyle w$};
             \node[ line width=0.6pt, dashed, draw opacity=0.5] (a) at (0,-1.7){$\scriptstyle i$};
            \node[ line width=0.6pt, dashed, draw opacity=0.5] (a) at (-1,0){$\scriptstyle a$};
            \node[ line width=0.6pt, dashed, draw opacity=0.5] (a) at (1.1,0){$\scriptstyle b$};
            \node[ line width=0.6pt, dashed, draw opacity=0.5] (a) at (-0.2,-1){$\scriptstyle j$};
            \node[ line width=0.6pt, dashed, draw opacity=0.5] (a) at (-0.2,-1.35){$\scriptstyle \sigma$};
            \node[ line width=0.6pt, dashed, draw opacity=0.5] (a) at (0.2,-0.8){$\scriptstyle \rho$};
            \node[ line width=0.6pt, dashed, draw opacity=0.5] (a) at (0,-0.3){$\scriptstyle k$};
            \node[ line width=0.6pt, dashed, draw opacity=0.5] (a) at (0,0.3){$\scriptstyle u$};
            \node[ line width=0.6pt, dashed, draw opacity=0.5] (a) at (-0.2,1){$\scriptstyle v$};
            \node[ line width=0.6pt, dashed, draw opacity=0.5] (a) at (-0.2,1.3){$\scriptstyle \gamma$};
            \node[ line width=0.6pt, dashed, draw opacity=0.5] (a) at (0.2,0.8){$\scriptstyle \zeta$};
        \end{tikzpicture}
    \end{aligned} \;
    \begin{aligned}
        \begin{tikzpicture}[scale=0.65]
        \filldraw[black!60, fill=gray!15, dotted, even odd rule] (0,0) circle[radius=0.5] (0,0) circle[radius=1.5];
             \draw[line width=1pt,cyan] (0,0.5)--(0,1.5);
             \draw[line width=1pt,teal] (0,-0.5)--(0,-1.5);
             \draw[red] (0,0.8) arc[start angle=90, end angle=270, radius=0.8];
             \draw[blue] (0,1.3) arc[start angle=90, end angle=-90, radius=1.3];
            \node[ line width=0.6pt, dashed, draw opacity=0.5] (a) at (0,1.76){$\scriptstyle w'$};
             \node[ line width=0.6pt, dashed, draw opacity=0.5] (a) at (0,-1.7){$\scriptstyle i'$};
            \node[ line width=0.6pt, dashed, draw opacity=0.5] (a) at (-1,0){$\scriptstyle a'$};
            \node[ line width=0.6pt, dashed, draw opacity=0.5] (a) at (1.1,0){$\scriptstyle b'$};
            \node[ line width=0.6pt, dashed, draw opacity=0.5] (a) at (-0.2,-1){$\scriptstyle j'$};
            \node[ line width=0.6pt, dashed, draw opacity=0.5] (a) at (-0.2,-1.35){$\scriptstyle \sigma'$};
            \node[ line width=0.6pt, dashed, draw opacity=0.5] (a) at (0.3,-0.8){$\scriptstyle \rho'$};
            \node[ line width=0.6pt, dashed, draw opacity=0.5] (a) at (0,-0.25){$\scriptstyle k'$};
            \node[ line width=0.6pt, dashed, draw opacity=0.5] (a) at (0,0.3){$\scriptstyle u'$};
            \node[ line width=0.6pt, dashed, draw opacity=0.5] (a) at (-0.2,1){$\scriptstyle v'$};
            \node[ line width=0.6pt, dashed, draw opacity=0.5] (a) at (-0.2,1.3){$\scriptstyle \gamma'$};
            \node[ line width=0.6pt, dashed, draw opacity=0.5] (a) at (0.3,0.8){$\scriptstyle \zeta'$};
        \end{tikzpicture}
    \end{aligned} \otimes \begin{aligned}
        \begin{tikzpicture}[scale=0.65]
        \filldraw[black!60, fill=gray!15, dotted, even odd rule] (0,0) circle[radius=0.5] (0,0) circle[radius=1.5];
             \draw[line width=1pt,teal] (0,0.5)--(0,1.5);
             \draw[line width=1pt,violet] (0,-0.5)--(0,-1.5);
             \draw[red] (0,0.8) arc[start angle=90, end angle=270, radius=0.8];
             \draw[blue] (0,1.3) arc[start angle=90, end angle=-90, radius=1.3];
            \node[ line width=0.6pt, dashed, draw opacity=0.5] (a) at (0,1.7){$\scriptstyle i$};
             \node[ line width=0.6pt, dashed, draw opacity=0.5] (a) at (0,-1.7){$\scriptstyle x$};
            \node[ line width=0.6pt, dashed, draw opacity=0.5] (a) at (-1,0){$\scriptstyle a$};
            \node[ line width=0.6pt, dashed, draw opacity=0.5] (a) at (1.1,0){$\scriptstyle b$};
            \node[ line width=0.6pt, dashed, draw opacity=0.5] (a) at (-0.2,-1){$\scriptstyle y$};
            \node[ line width=0.6pt, dashed, draw opacity=0.5] (a) at (-0.2,-1.35){$\scriptstyle \mu$};
            \node[ line width=0.6pt, dashed, draw opacity=0.5] (a) at (0.2,-0.8){$\scriptstyle \nu$};
            \node[ line width=0.6pt, dashed, draw opacity=0.5] (a) at (0,-0.3){$\scriptstyle z$};
            \node[ line width=0.6pt, dashed, draw opacity=0.5] (a) at (0,0.3){$\scriptstyle k$};
            \node[ line width=0.6pt, dashed, draw opacity=0.5] (a) at (-0.2,1){$\scriptstyle j$};
            \node[ line width=0.6pt, dashed, draw opacity=0.5] (a) at (-0.2,1.3){$\scriptstyle \sigma$};
            \node[ line width=0.6pt, dashed, draw opacity=0.5] (a) at (0.2,0.8){$\scriptstyle \rho$};
        \end{tikzpicture}
    \end{aligned}\;
    \begin{aligned}
        \begin{tikzpicture}[scale=0.65]
        \filldraw[black!60, fill=gray!15, dotted, even odd rule] (0,0) circle[radius=0.5] (0,0) circle[radius=1.5];
             \draw[line width=1pt,teal] (0,0.5)--(0,1.5);
             \draw[line width=1pt,violet] (0,-0.5)--(0,-1.5);
             \draw[red] (0,0.8) arc[start angle=90, end angle=270, radius=0.8];
             \draw[blue] (0,1.3) arc[start angle=90, end angle=-90, radius=1.3];
            \node[ line width=0.6pt, dashed, draw opacity=0.5] (a) at (0,1.75){$\scriptstyle i'$};
             \node[ line width=0.6pt, dashed, draw opacity=0.5] (a) at (0,-1.7){$\scriptstyle x'$};
            \node[ line width=0.6pt, dashed, draw opacity=0.5] (a) at (-1,0){$\scriptstyle a'$};
            \node[ line width=0.6pt, dashed, draw opacity=0.5] (a) at (1.1,0){$\scriptstyle b'$};
            \node[ line width=0.6pt, dashed, draw opacity=0.5] (a) at (-0.2,-1){$\scriptstyle y'$};
            \node[ line width=0.6pt, dashed, draw opacity=0.5] (a) at (-0.2,-1.35){$\scriptstyle \mu'$};
            \node[ line width=0.6pt, dashed, draw opacity=0.5] (a) at (0.3,-0.8){$\scriptstyle \nu'$};
            \node[ line width=0.6pt, dashed, draw opacity=0.5] (a) at (0,-0.25){$\scriptstyle z'$};
            \node[ line width=0.6pt, dashed, draw opacity=0.5] (a) at (0,0.3){$\scriptstyle k'$};
            \node[ line width=0.6pt, dashed, draw opacity=0.5] (a) at (-0.2,1){$\scriptstyle j'$};
            \node[ line width=0.6pt, dashed, draw opacity=0.5] (a) at (-0.2,1.3){$\scriptstyle \sigma'$};
            \node[ line width=0.6pt, dashed, draw opacity=0.5] (a) at (0.3,0.8){$\scriptstyle \rho'$};
        \end{tikzpicture}
    \end{aligned} \nonumber  \\
    & = \sum_{i,j,k\in \Irr(\eK)}\sum_{\sigma,\rho}\sqrt{\frac{d_i}{d_ad_kd_b}} \sum_{i',j',k'\in \Irr(\eK)}\sum_{\sigma',\rho'}\sqrt{\frac{d_{i'}}{d_{a'}d_{k'}d_{b'}}} \delta_{k,i'}\delta_{u,w'} \delta_{k,i'}\delta_{z,x'}\nonumber\\
    & \quad \quad \times \sum_{l,\alpha,\beta}[{_{\eN}}F_{av'b'}^v]^{l\alpha\beta}_{u\gamma'\zeta}\sum_{n,\kappa,\psi}[{_{\eK}}F_{j}^{aj'b'}]^{n\kappa\psi}_{k\sigma'\rho} \sum_{m,\tau,\varepsilon} [{_{\eN}}F^l_{aa'u'}]^{m\tau\varepsilon}_{v'\zeta'\alpha}\sum_{\theta}[{_{\eK}}F_n^{aa'k'}]^{m\tau\theta}_{j'\rho'\kappa} \nonumber \\
    & \quad \quad \times  \sum_{s,\delta,\phi}[{_{\eN}}F^w_{lb'b}]^{s\delta\phi}_{v\beta\gamma}\sum_{\xi}[{_{\eK}}F_i^{nb'b}]^{s\delta\xi}_{j\psi\sigma} \sqrt{\frac{d_ad_{a'}}{d_m}}\sqrt{\frac{d_bd_{b'}}{d_s}} \;\begin{aligned}
    \begin{tikzpicture}[scale=0.65]
    \filldraw[black!60, fill=gray!15, dotted, even odd rule] (0,0) circle[radius=0.5] (0,0) circle[radius=1.5];
         \draw[line width=1pt,cyan] (0,0.5)--(0,1.5);
         \draw[line width=1pt,teal] (0,-0.5)--(0,-1.5);
         \draw[red] (0,0.8) arc[start angle=90, end angle=270, radius=0.8];
         \draw[blue] (0,1.3) arc[start angle=90, end angle=-90, radius=1.3];
        \node[ line width=0.6pt, dashed, draw opacity=0.5] (a) at (0,1.7){$\scriptstyle w$};
         \node[ line width=0.6pt, dashed, draw opacity=0.5] (a) at (0,-1.7){$\scriptstyle i$};
        \node[ line width=0.6pt, dashed, draw opacity=0.5] (a) at (-1.1,0){$\scriptstyle m$};
        \node[ line width=0.6pt, dashed, draw opacity=0.5] (a) at (1.1,0){$\scriptstyle s$};
        \node[ line width=0.6pt, dashed, draw opacity=0.5] (a) at (-0.2,-1){$\scriptstyle n$};
        \node[ line width=0.6pt, dashed, draw opacity=0.5] (a) at (-0.2,-1.4){$\scriptstyle \xi$};
        \node[ line width=0.6pt, dashed, draw opacity=0.5] (a) at (0.2,-0.8){$\scriptstyle \theta$};
        \node[ line width=0.6pt, dashed, draw opacity=0.5] (a) at (0,-0.2){$\scriptstyle k'$};
        \node[ line width=0.6pt, dashed, draw opacity=0.5] (a) at (0,0.3){$\scriptstyle u'$};
        \node[ line width=0.6pt, dashed, draw opacity=0.5] (a) at (-0.2,1){$\scriptstyle l$};
        \node[ line width=0.6pt, dashed, draw opacity=0.5] (a) at (-0.2,1.3){$\scriptstyle \phi$};
        \node[ line width=0.6pt, dashed, draw opacity=0.5] (a) at (0.2,0.8){$\scriptstyle \varepsilon$};
    \end{tikzpicture}
    \end{aligned}  \nonumber \\ 
    & \quad \quad \otimes  \sum_{l',\alpha',\beta'}[{_{\eK}}F_{aj'b'}^j]^{l'\alpha'\beta'}_{k\sigma'\rho}\sum_{n',\kappa',\psi'}[{_{\eM}}F_{y}^{ay'b'}]^{n'\kappa'\psi'}_{z\mu'\nu} \sum_{m',\tau',\varepsilon'} [{_{\eK}}F^{l'}_{aa'k'}]^{m'\tau'\varepsilon'}_{j'\rho'\alpha'}\sum_{\theta'}[{_{\eM}}F_{n'}^{aa'z'}]^{m'\tau'\theta'}_{y'\nu'\kappa'} \nonumber \\
    & \quad \quad \times  \sum_{s',\delta',\phi'}[{_{\eK}}F^i_{l'b'b}]^{s'\delta'\phi'}_{j\beta'\sigma}\sum_{\xi'}[{_{\eM}}F_x^{n'b'b}]^{s'\delta'\xi'}_{y\psi'\mu} \sqrt{\frac{d_ad_{a'}}{d_{m'}}}\sqrt{\frac{d_bd_{b'}}{d_{s'}}} \;\begin{aligned}
    \begin{tikzpicture}[scale=0.65]
    \filldraw[black!60, fill=gray!15, dotted, even odd rule] (0,0) circle[radius=0.5] (0,0) circle[radius=1.5];
         \draw[line width=1pt,teal] (0,0.5)--(0,1.5);
         \draw[line width=1pt,violet] (0,-0.5)--(0,-1.5);
         \draw[red] (0,0.8) arc[start angle=90, end angle=270, radius=0.8];
         \draw[blue] (0,1.3) arc[start angle=90, end angle=-90, radius=1.3];
        \node[ line width=0.6pt, dashed, draw opacity=0.5] (a) at (0,1.7){$\scriptstyle i$};
         \node[ line width=0.6pt, dashed, draw opacity=0.5] (a) at (0,-1.7){$\scriptstyle x$};
        \node[ line width=0.6pt, dashed, draw opacity=0.5] (a) at (-1.1,0){$\scriptstyle m'$};
        \node[ line width=0.6pt, dashed, draw opacity=0.5] (a) at (1.1,0){$\scriptstyle s'$};
        \node[ line width=0.6pt, dashed, draw opacity=0.5] (a) at (-0.2,-1){$\scriptstyle n'$};
        \node[ line width=0.6pt, dashed, draw opacity=0.5] (a) at (-0.2,-1.4){$\scriptstyle \xi'$};
        \node[ line width=0.6pt, dashed, draw opacity=0.5] (a) at (0.3,-0.8){$\scriptstyle \theta'$};
        \node[ line width=0.6pt, dashed, draw opacity=0.5] (a) at (0,-0.2){$\scriptstyle z'$};
        \node[ line width=0.6pt, dashed, draw opacity=0.5] (a) at (0,0.3){$\scriptstyle k'$};
        \node[ line width=0.6pt, dashed, draw opacity=0.5] (a) at (-0.2,1){$\scriptstyle l'$};
        \node[ line width=0.6pt, dashed, draw opacity=0.5] (a) at (-0.2,1.3){$\scriptstyle \phi'$};
        \node[ line width=0.6pt, dashed, draw opacity=0.5] (a) at (0.3,0.8){$\scriptstyle \varepsilon'$};
    \end{tikzpicture}
    \end{aligned}\;. \nonumber 
    \end{align}
    Recollecting coefficients and using property of F-symbols yields 
    \begin{align} 
    & = \sum_{i,j\in \Irr(\eK)}\sum_{\sigma} \sum_{j',k'\in \Irr(\eK)}\sum_{\rho'}\delta_{u,w'} \delta_{z,x'} \sqrt{\frac{d_i}{d_{k'}}}\sqrt{\frac{d_ad_{a'}}{d_{m'}}}\sqrt{\frac{d_bd_{b'}}{d_{s'}}} \sqrt{\frac{1}{d_md_s}}\nonumber\\
    & \quad \quad \times \sum_{l,\alpha,\beta}[{_{\eN}}F_{av'b'}^v]^{l\alpha\beta}_{u\gamma'\zeta} \sum_{n,\kappa,\psi}\sum_{l',\alpha',\beta'} \underbrace{\sum_{k,\sigma',\rho}[{_{\eK}}F_{j}^{aj'b'}]^{n\kappa\psi}_{k\sigma'\rho}[{_{\eK}}F_{aj'b'}^j]^{l'\alpha'\beta'}_{k\sigma'\rho}}_{\delta_{n,l'}\delta_{\kappa,\alpha'}\delta_{\psi,\beta'}} \nonumber \\
    & \quad \quad \times  \sum_{m,\tau,\varepsilon} [{_{\eN}}F^l_{aa'u'}]^{m\tau\varepsilon}_{v'\zeta'\alpha}\sum_{\theta}[{_{\eK}}F_n^{aa'k'}]^{m\tau\theta}_{j'\rho'\kappa}   \sum_{s,\delta,\phi}[{_{\eN}}F^w_{lb'b}]^{s\delta\phi}_{v\beta\gamma}\sum_{\xi}[{_{\eK}}F_i^{nb'b}]^{s\delta\xi}_{j\psi\sigma}   \nonumber \\ 
    & \quad \quad \times  \sum_{n',\kappa',\psi'}[{_{\eM}}F_{y}^{ay'b'}]^{n'\kappa'\psi'}_{z\mu'\nu} \sum_{m',\tau',\varepsilon'} [{_{\eK}}F^{l'}_{aa'k'}]^{m'\tau'\varepsilon'}_{j'\rho'\alpha'}\sum_{\theta'}[{_{\eM}}F_{n'}^{aa'z'}]^{m'\tau'\theta'}_{y'\nu'\kappa'} \nonumber \\
    & \quad \quad \times  \sum_{s',\delta',\phi'}[{_{\eK}}F^i_{l'b'b}]^{s'\delta'\phi'}_{j\beta'\sigma}\sum_{\xi'}[{_{\eM}}F_x^{n'b'b}]^{s'\delta'\xi'}_{y\psi'\mu}  \;\begin{aligned}
    \begin{tikzpicture}[scale=0.65]
    \filldraw[black!60, fill=gray!15, dotted, even odd rule] (0,0) circle[radius=0.5] (0,0) circle[radius=1.5];
         \draw[line width=1pt,cyan] (0,0.5)--(0,1.5);
         \draw[line width=1pt,teal] (0,-0.5)--(0,-1.5);
         \draw[red] (0,0.8) arc[start angle=90, end angle=270, radius=0.8];
         \draw[blue] (0,1.3) arc[start angle=90, end angle=-90, radius=1.3];
        \node[ line width=0.6pt, dashed, draw opacity=0.5] (a) at (0,1.7){$\scriptstyle w$};
         \node[ line width=0.6pt, dashed, draw opacity=0.5] (a) at (0,-1.7){$\scriptstyle i$};
        \node[ line width=0.6pt, dashed, draw opacity=0.5] (a) at (-1.1,0){$\scriptstyle m$};
        \node[ line width=0.6pt, dashed, draw opacity=0.5] (a) at (1.1,0){$\scriptstyle s$};
        \node[ line width=0.6pt, dashed, draw opacity=0.5] (a) at (-0.2,-1){$\scriptstyle n$};
        \node[ line width=0.6pt, dashed, draw opacity=0.5] (a) at (-0.2,-1.4){$\scriptstyle \xi$};
        \node[ line width=0.6pt, dashed, draw opacity=0.5] (a) at (0.2,-0.8){$\scriptstyle \theta$};
        \node[ line width=0.6pt, dashed, draw opacity=0.5] (a) at (0,-0.2){$\scriptstyle k'$};
        \node[ line width=0.6pt, dashed, draw opacity=0.5] (a) at (0,0.3){$\scriptstyle u'$};
        \node[ line width=0.6pt, dashed, draw opacity=0.5] (a) at (-0.2,1){$\scriptstyle l$};
        \node[ line width=0.6pt, dashed, draw opacity=0.5] (a) at (-0.2,1.3){$\scriptstyle \phi$};
        \node[ line width=0.6pt, dashed, draw opacity=0.5] (a) at (0.2,0.8){$\scriptstyle \varepsilon$};
    \end{tikzpicture}
    \end{aligned} \otimes \begin{aligned}
    \begin{tikzpicture}[scale=0.65]
    \filldraw[black!60, fill=gray!15, dotted, even odd rule] (0,0) circle[radius=0.5] (0,0) circle[radius=1.5];
         \draw[line width=1pt,teal] (0,0.5)--(0,1.5);
         \draw[line width=1pt,violet] (0,-0.5)--(0,-1.5);
         \draw[red] (0,0.8) arc[start angle=90, end angle=270, radius=0.8];
         \draw[blue] (0,1.3) arc[start angle=90, end angle=-90, radius=1.3];
        \node[ line width=0.6pt, dashed, draw opacity=0.5] (a) at (0,1.7){$\scriptstyle i$};
         \node[ line width=0.6pt, dashed, draw opacity=0.5] (a) at (0,-1.7){$\scriptstyle x$};
        \node[ line width=0.6pt, dashed, draw opacity=0.5] (a) at (-1.1,0){$\scriptstyle m'$};
        \node[ line width=0.6pt, dashed, draw opacity=0.5] (a) at (1.1,0){$\scriptstyle s'$};
        \node[ line width=0.6pt, dashed, draw opacity=0.5] (a) at (-0.2,-1){$\scriptstyle n'$};
        \node[ line width=0.6pt, dashed, draw opacity=0.5] (a) at (-0.2,-1.4){$\scriptstyle \xi'$};
        \node[ line width=0.6pt, dashed, draw opacity=0.5] (a) at (0.3,-0.8){$\scriptstyle \theta'$};
        \node[ line width=0.6pt, dashed, draw opacity=0.5] (a) at (0,-0.2){$\scriptstyle z'$};
        \node[ line width=0.6pt, dashed, draw opacity=0.5] (a) at (0,0.3){$\scriptstyle k'$};
        \node[ line width=0.6pt, dashed, draw opacity=0.5] (a) at (-0.2,1){$\scriptstyle l'$};
        \node[ line width=0.6pt, dashed, draw opacity=0.5] (a) at (-0.2,1.3){$\scriptstyle \phi'$};
        \node[ line width=0.6pt, dashed, draw opacity=0.5] (a) at (0.3,0.8){$\scriptstyle \varepsilon'$};
    \end{tikzpicture}
    \end{aligned}  \nonumber \\
    & = \sum_{i\in \Irr(\eK)}\sum_{k'\in \Irr(\eK)}\delta_{u,w'} \delta_{z,x'} \sqrt{\frac{d_i}{d_{k'}}}\sqrt{\frac{d_ad_{a'}}{d_{m'}}}  \sqrt{\frac{d_bd_{b'}}{d_{s'}}} \sqrt{\frac{1}{d_md_s}}\nonumber\\
    & \quad \quad \times \sum_{l,\alpha,\beta}[{_{\eN}}F_{av'b'}^v]^{l\alpha\beta}_{u\gamma'\zeta} \sum_{n}  \sum_{\theta} \sum_{m,\tau,\varepsilon} [{_{\eN}}F^l_{aa'u'}]^{m\tau\varepsilon}_{v'\zeta'\alpha} \sum_{m',\tau',\varepsilon'}\underbrace{\sum_{j',\rho',\kappa}[{_{\eK}}F_n^{aa'k'}]^{m\tau\theta}_{j'\rho'\kappa} [{_{\eK}}F^{n}_{aa'k'}]^{m'\tau'\varepsilon'}_{j'\rho'\kappa}}_{\delta_{m,m'}\delta_{\tau,\tau'}\delta_{\theta,\varepsilon'}}   \nonumber \\
    & \quad \quad \times   \sum_{s,\delta,\phi}[{_{\eN}}F^w_{lb'b}]^{s\delta\phi}_{v\beta\gamma}\sum_{\xi} \sum_{s',\delta',\phi'}\underbrace{\sum_{j,\psi,\sigma}[{_{\eK}}F_i^{nb'b}]^{s\delta\xi}_{j\psi\sigma} [{_{\eK}}F^i_{nb'b}]^{s'\delta'\phi'}_{j\psi\sigma}}_{\delta_{s,s'}\delta_{\delta,\delta'}\delta_{\xi,\phi'}}  \sum_{n',\kappa',\psi'}[{_{\eM}}F_{y}^{ay'b'}]^{n'\kappa'\psi'}_{z\mu'\nu}   \nonumber \\
    & \quad \quad \times  \sum_{\theta'}[{_{\eM}}F_{n'}^{aa'z'}]^{m'\tau'\theta'}_{y'\nu'\kappa'}\sum_{\xi'}[{_{\eM}}F_x^{n'b'b}]^{s'\delta'\xi'}_{y\psi'\mu}  \;\begin{aligned}
    \begin{tikzpicture}[scale=0.65]
    \filldraw[black!60, fill=gray!15, dotted, even odd rule] (0,0) circle[radius=0.5] (0,0) circle[radius=1.5];
         \draw[line width=1pt,cyan] (0,0.5)--(0,1.5);
         \draw[line width=1pt,teal] (0,-0.5)--(0,-1.5);
         \draw[red] (0,0.8) arc[start angle=90, end angle=270, radius=0.8];
         \draw[blue] (0,1.3) arc[start angle=90, end angle=-90, radius=1.3];
        \node[ line width=0.6pt, dashed, draw opacity=0.5] (a) at (0,1.7){$\scriptstyle w$};
         \node[ line width=0.6pt, dashed, draw opacity=0.5] (a) at (0,-1.7){$\scriptstyle i$};
        \node[ line width=0.6pt, dashed, draw opacity=0.5] (a) at (-1.1,0){$\scriptstyle m$};
        \node[ line width=0.6pt, dashed, draw opacity=0.5] (a) at (1.1,0){$\scriptstyle s$};
        \node[ line width=0.6pt, dashed, draw opacity=0.5] (a) at (-0.2,-1){$\scriptstyle n$};
        \node[ line width=0.6pt, dashed, draw opacity=0.5] (a) at (-0.2,-1.4){$\scriptstyle \xi$};
        \node[ line width=0.6pt, dashed, draw opacity=0.5] (a) at (0.2,-0.8){$\scriptstyle \theta$};
        \node[ line width=0.6pt, dashed, draw opacity=0.5] (a) at (0,-0.2){$\scriptstyle k'$};
        \node[ line width=0.6pt, dashed, draw opacity=0.5] (a) at (0,0.3){$\scriptstyle u'$};
        \node[ line width=0.6pt, dashed, draw opacity=0.5] (a) at (-0.2,1){$\scriptstyle l$};
        \node[ line width=0.6pt, dashed, draw opacity=0.5] (a) at (-0.2,1.3){$\scriptstyle \phi$};
        \node[ line width=0.6pt, dashed, draw opacity=0.5] (a) at (0.2,0.8){$\scriptstyle \varepsilon$};
    \end{tikzpicture}
    \end{aligned} \otimes \begin{aligned}
    \begin{tikzpicture}[scale=0.65]
    \filldraw[black!60, fill=gray!15, dotted, even odd rule] (0,0) circle[radius=0.5] (0,0) circle[radius=1.5];
         \draw[line width=1pt,teal] (0,0.5)--(0,1.5);
         \draw[line width=1pt,violet] (0,-0.5)--(0,-1.5);
         \draw[red] (0,0.8) arc[start angle=90, end angle=270, radius=0.8];
         \draw[blue] (0,1.3) arc[start angle=90, end angle=-90, radius=1.3];
        \node[ line width=0.6pt, dashed, draw opacity=0.5] (a) at (0,1.7){$\scriptstyle i$};
         \node[ line width=0.6pt, dashed, draw opacity=0.5] (a) at (0,-1.7){$\scriptstyle x$};
        \node[ line width=0.6pt, dashed, draw opacity=0.5] (a) at (-1.1,0){$\scriptstyle m'$};
        \node[ line width=0.6pt, dashed, draw opacity=0.5] (a) at (1.1,0){$\scriptstyle s'$};
        \node[ line width=0.6pt, dashed, draw opacity=0.5] (a) at (-0.2,-1){$\scriptstyle n'$};
        \node[ line width=0.6pt, dashed, draw opacity=0.5] (a) at (-0.2,-1.4){$\scriptstyle \xi'$};
        \node[ line width=0.6pt, dashed, draw opacity=0.5] (a) at (0.3,-0.8){$\scriptstyle \theta'$};
        \node[ line width=0.6pt, dashed, draw opacity=0.5] (a) at (0,-0.2){$\scriptstyle z'$};
        \node[ line width=0.6pt, dashed, draw opacity=0.5] (a) at (0,0.3){$\scriptstyle k'$};
        \node[ line width=0.6pt, dashed, draw opacity=0.5] (a) at (-0.2,1){$\scriptstyle n$};
        \node[ line width=0.6pt, dashed, draw opacity=0.5] (a) at (-0.2,1.3){$\scriptstyle \phi'$};
        \node[ line width=0.6pt, dashed, draw opacity=0.5] (a) at (0.3,0.8){$\scriptstyle \varepsilon'$};
    \end{tikzpicture}
    \end{aligned}  \nonumber \\
    & = \delta_{u,w'} \delta_{z,x'}\sum_{l,\alpha,\beta}[{_{\eN}}F_{av'b'}^v]^{l\alpha\beta}_{u\gamma'\zeta} \sum_{n',\kappa',\psi'}[{_{\eM}}F_{y}^{ay'b'}]^{n'\kappa'\psi'}_{z\mu'\nu}  \sum_{m,\tau,\varepsilon} [{_{\eN}}F^l_{aa'u'}]^{m\tau\varepsilon}_{v'\zeta'\alpha} \sum_{\theta'}[{_{\eM}}F_{n'}^{aa'z'}]^{m\tau\theta'}_{y'\nu'\kappa'} \nonumber \\
    & \quad \quad \times \sum_{s,\delta,\phi}[{_{\eN}}F^w_{lb'b}]^{s\delta\phi}_{v\beta\gamma} \sum_{\xi'}[{_{\eM}}F_x^{n'b'b}]^{s\delta\xi'}_{y\psi'\mu} \sqrt{\frac{d_ad_{a'}}{d_m}}\sqrt{\frac{d_bd_{b'}}{d_s}} \nonumber  \\
    & \quad \quad \times \sum_{i,n,k'\in \Irr(\eK)} \sum_{\theta,\xi} \sqrt{\frac{d_i}{d_md_{k'}d_s}}  \;\begin{aligned}
    \begin{tikzpicture}[scale=0.65]
    \filldraw[black!60, fill=gray!15, dotted, even odd rule] (0,0) circle[radius=0.5] (0,0) circle[radius=1.5];
         \draw[line width=1pt,cyan] (0,0.5)--(0,1.5);
         \draw[line width=1pt,teal] (0,-0.5)--(0,-1.5);
         \draw[red] (0,0.8) arc[start angle=90, end angle=270, radius=0.8];
         \draw[blue] (0,1.3) arc[start angle=90, end angle=-90, radius=1.3];
        \node[ line width=0.6pt, dashed, draw opacity=0.5] (a) at (0,1.7){$\scriptstyle w$};
         \node[ line width=0.6pt, dashed, draw opacity=0.5] (a) at (0,-1.7){$\scriptstyle i$};
        \node[ line width=0.6pt, dashed, draw opacity=0.5] (a) at (-1.1,0){$\scriptstyle m$};
        \node[ line width=0.6pt, dashed, draw opacity=0.5] (a) at (1.1,0){$\scriptstyle s$};
        \node[ line width=0.6pt, dashed, draw opacity=0.5] (a) at (-0.2,-1){$\scriptstyle n$};
        \node[ line width=0.6pt, dashed, draw opacity=0.5] (a) at (-0.2,-1.4){$\scriptstyle \xi$};
        \node[ line width=0.6pt, dashed, draw opacity=0.5] (a) at (0.2,-0.8){$\scriptstyle \theta$};
        \node[ line width=0.6pt, dashed, draw opacity=0.5] (a) at (0,-0.2){$\scriptstyle k'$};
        \node[ line width=0.6pt, dashed, draw opacity=0.5] (a) at (0,0.3){$\scriptstyle u'$};
        \node[ line width=0.6pt, dashed, draw opacity=0.5] (a) at (-0.2,1){$\scriptstyle l$};
        \node[ line width=0.6pt, dashed, draw opacity=0.5] (a) at (-0.2,1.3){$\scriptstyle \phi$};
        \node[ line width=0.6pt, dashed, draw opacity=0.5] (a) at (0.2,0.8){$\scriptstyle \varepsilon$};
    \end{tikzpicture}
    \end{aligned} \otimes \begin{aligned}
    \begin{tikzpicture}[scale=0.65]
    \filldraw[black!60, fill=gray!15, dotted, even odd rule] (0,0) circle[radius=0.5] (0,0) circle[radius=1.5];
         \draw[line width=1pt,teal] (0,0.5)--(0,1.5);
         \draw[line width=1pt,violet] (0,-0.5)--(0,-1.5);
         \draw[red] (0,0.8) arc[start angle=90, end angle=270, radius=0.8];
         \draw[blue] (0,1.3) arc[start angle=90, end angle=-90, radius=1.3];
        \node[ line width=0.6pt, dashed, draw opacity=0.5] (a) at (0,1.7){$\scriptstyle i$};
         \node[ line width=0.6pt, dashed, draw opacity=0.5] (a) at (0,-1.7){$\scriptstyle x$};
        \node[ line width=0.6pt, dashed, draw opacity=0.5] (a) at (-1,0){$\scriptstyle m$};
        \node[ line width=0.6pt, dashed, draw opacity=0.5] (a) at (1.1,0){$\scriptstyle s$};
        \node[ line width=0.6pt, dashed, draw opacity=0.5] (a) at (-0.2,-1){$\scriptstyle n'$};
        \node[ line width=0.6pt, dashed, draw opacity=0.5] (a) at (-0.2,-1.4){$\scriptstyle \xi'$};
        \node[ line width=0.6pt, dashed, draw opacity=0.5] (a) at (0.3,-0.8){$\scriptstyle \theta'$};
        \node[ line width=0.6pt, dashed, draw opacity=0.5] (a) at (0,-0.2){$\scriptstyle z'$};
        \node[ line width=0.6pt, dashed, draw opacity=0.5] (a) at (0,0.3){$\scriptstyle k'$};
        \node[ line width=0.6pt, dashed, draw opacity=0.5] (a) at (-0.2,0.94){$\scriptstyle n$};
        \node[ line width=0.6pt, dashed, draw opacity=0.5] (a) at (-0.2,1.3){$\scriptstyle \xi$};
        \node[ line width=0.6pt, dashed, draw opacity=0.5] (a) at (0.2,0.8){$\scriptstyle \theta$};
    \end{tikzpicture}
    \end{aligned}\; , \nonumber 
    \end{align}
    which is equal to $\delta(XY)$ by applying $\delta$ to \eqref{eq:bi_tube_prod_2}.

\noindent
\textit{Proof of} \eqref{eq:epsilon-rho-DW-1} and \eqref{eq:epsilon-rho-DW-2}. We prove only the first line; the second line follows similarly. 
    For basis tube $X$, one has 
    \begin{equation}
        \notag
        \varepsilon_R(X) = \delta_{z,u}\delta_{y,v}\delta_{x,w}\delta_{\mu,\gamma}\delta_{\nu,\zeta} \sqrt{\frac{d_ad_zd_b}{d_x}} \sum_k\;  
        \begin{aligned}
        \begin{tikzpicture}[scale=0.65]
            \filldraw[black!60, fill=gray!15, dotted, even odd rule] (0,0) circle[radius=0.5] (0,0) circle[radius=1.5];
            \draw[line width=1pt,violet] (0,0.5)--(0,1.5);
            \draw[line width=1pt,violet] (0,-0.5)--(0,-1.5);
            \draw[red, dotted, thick] (0,0.8) arc[start angle=90, end angle=270, radius=0.8];
            \draw[blue, dotted, thick] (0,1.3) arc[start angle=90, end angle=-90, radius=1.3];
            \node[ line width=0.6pt, dashed, draw opacity=0.5] (a) at (0,-0.3){$\scriptstyle u$};
            \node[ line width=0.6pt, dashed, draw opacity=0.5] (a) at (0,0.3){$\scriptstyle k$};
        \end{tikzpicture}
    \end{aligned}\;. 
    \end{equation}
    Hence by definition, we obtain 
    \begin{equation}
        \notag
        \delta(\varepsilon_R(X))  =  \delta_{z,u}\delta_{y,v}\delta_{x,w}\delta_{\mu,\gamma}\delta_{\nu,\zeta} \sqrt{\frac{d_ad_zd_b}{d_x}} \sum_k\sum_t\;  \begin{aligned}
        \begin{tikzpicture}[scale=0.65]
            \filldraw[black!60, fill=gray!15, dotted, even odd rule] (0,0) circle[radius=0.5] (0,0) circle[radius=1.5];
            \draw[line width=1pt,violet] (0,0.5)--(0,1.5);
            \draw[line width=1pt,cyan] (0,-0.5)--(0,-1.5);
            \draw[red, dotted, thick] (0,0.8) arc[start angle=90, end angle=270, radius=0.8];
            \draw[blue, dotted, thick] (0,1.3) arc[start angle=90, end angle=-90, radius=1.3];
            \node[ line width=0.6pt, dashed, draw opacity=0.5] (a) at (0,-0.3){$\scriptstyle t$};
            \node[ line width=0.6pt, dashed, draw opacity=0.5] (a) at (0,0.3){$\scriptstyle k$};
        \end{tikzpicture}
    \end{aligned} \otimes 
    \begin{aligned}
        \begin{tikzpicture}[scale=0.65]
            \filldraw[black!60, fill=gray!15, dotted, even odd rule] (0,0) circle[radius=0.5] (0,0) circle[radius=1.5];
            \draw[line width=1pt,cyan] (0,0.5)--(0,1.5);
            \draw[line width=1pt,violet] (0,-0.5)--(0,-1.5);
            \draw[red, dotted, thick] (0,0.8) arc[start angle=90, end angle=270, radius=0.8];
            \draw[blue, dotted, thick] (0,1.3) arc[start angle=90, end angle=-90, radius=1.3];
            \node[ line width=0.6pt, dashed, draw opacity=0.5] (a) at (0,-0.3){$\scriptstyle u$};
            \node[ line width=0.6pt, dashed, draw opacity=0.5] (a) at (0,0.3){$\scriptstyle t$};
        \end{tikzpicture}
    \end{aligned}\;, 
    \end{equation}
    which is clearly equal to $1^{\langle 1 \rangle}\otimes 1^{\langle 2 \rangle}\varepsilon'_R(X)$.

\noindent
\textit{Proof of} \eqref{eq:unit-counit-DW-1}-\eqref{eq:unit-counit-DW-4}. Let us prove the first line. The second equality of \eqref{eq:unit-counit-DW-1} is just from definition. It suffices to show the first equality:
    \begin{equation} \label{eq:delta-unit-identity-DW}
        1^{\langle 1\rangle} X\otimes 1^{\langle 2\rangle}= X^{\langle 1\rangle} \otimes X^{\langle 2\rangle} s(X^{\langle 3\rangle}). 
    \end{equation}
    Consider the basis tube 
    \begin{equation}
        \notag
        X = \begin{aligned}

    \end{aligned}\;. 
    \end{equation}
    Then the right-hand side of \eqref{eq:delta-unit-identity-DW} is equal to
    \begin{align}
    &\quad \delta(X^{[-1]})\bigl(1\otimes s(X^{[0]})\bigr) \nonumber \\
     & = \sum_{i,j,k,\sigma,\varrho}\sqrt{\frac{d_i}{d_{a}d_kd_b}} \,\delta\left(\begin{aligned}
        %
    \end{aligned}\;. \label{eq:delta-unit-expanded-DW}
\end{align}
Recollecting the coefficients and using inverse parallel move, the multiplication is 
\begin{align}
    & \sum_{j,\sigma}\sqrt{\frac{d_j}{d_id_b}}\sum_{k,\varrho}\sqrt{\frac{d_k}{d_ad_j}}\;
    \begin{aligned}
        %
    \end{aligned}\;. \nonumber
\end{align}
Plugging this into Eq.~\eqref{eq:delta-unit-expanded-DW}, we obtain
\begin{align}
    & \quad \delta(X^{[-1]})\bigl(1\otimes s(X^{[0]})\bigr) \nonumber\\
    & = \sum_{r,s,t,\phi,\psi}\sqrt{\frac{d_r}{d_{a}d_td_b}} \sum_i \delta_{t,z}\delta_{r,x}\delta_{s,y}\delta_{\phi,\mu}\delta_{\psi,\nu} \sqrt{\frac{d_{a}d_{z}d_b}{d_{x}}} \; \begin{aligned}
        %
    \end{aligned}\;.  \nonumber
\end{align}
It is readily to see that the expression above is just $\delta(1)(X\otimes 1)$ as desired. The remaining three lines can be showed similarly.

\noindent
\textit{Proof of} \eqref{eq:antipode-antimultiplicative-DW}. The first assertion $s(1)=1$ is immediate. Let us prove the second one. 
    For general $X,Y$, using Eq.~\eqref{eq:bi_tube_prod_2}, one has
    \begin{align} 
        s(XY)& = \delta_{u,w'}\delta_{z,x'}\sum_{l,\alpha,\beta}[{_{\eN}}F_{av'b'}^v]^{l\alpha\beta}_{u\gamma'\zeta}\sum_{n,\rho,\sigma}[{_{\eM}}F_{y}^{ay'b'}]^{n\rho\sigma}_{z\mu'\nu} \sum_{m,\tau,\varepsilon} [{_{\eN}}F^l_{aa'u'}]^{m\tau\varepsilon}_{v'\zeta'\alpha}\sum_{\theta}[{_{\eM}}F_n^{aa'z'}]^{m\tau\theta}_{y'\nu'\rho} \nonumber \\
        & \quad \quad \times  \sum_{s,\delta,\phi}[{_{\eN}}F^w_{lb'b}]^{s\delta\phi}_{v\beta\gamma}\sum_{\xi}[{_{\eM}}F_x^{nb'b}]^{s\delta\xi}_{y\sigma\mu} \sqrt{\frac{d_ad_{a'}}{d_m}}\sqrt{\frac{d_bd_{b'}}{d_s}}\frac{d_{u'}}{d_w}\; 
        \begin{aligned}
        \begin{tikzpicture}[scale=0.65]
            \filldraw[black!60, fill=gray!15, dotted, even odd rule] (0,0) circle[radius=0.5] (0,0) circle[radius=1.5];
            \draw[line width=1pt,violet] (0,0.5)--(0,1.5);
            \draw[line width=1pt,cyan] (0,-0.5)--(0,-1.5);
            \draw[blue] (0,0.8) arc[start angle=90, end angle=-90, radius=0.8];
            \draw[red] (0,1.3) arc[start angle=90, end angle=270, radius=1.3];
            \node[ line width=0.6pt, dashed, draw opacity=0.5] (a) at (0,1.7){$\scriptstyle z'$};
            \node[ line width=0.6pt, dashed, draw opacity=0.5] (a) at (0,-1.7){$\scriptstyle u'$};
            \node[ line width=0.6pt, dashed, draw opacity=0.5] (a) at (-1,0){$\scriptstyle \bar{m}$};
            \node[ line width=0.6pt, dashed, draw opacity=0.5] (a) at (1,0){$\scriptstyle \bar{s}$};
            \node[ line width=0.6pt, dashed, draw opacity=0.5] (a) at (0.2,-1){$\scriptstyle l$};
            \node[ line width=0.6pt, dashed, draw opacity=0.5] (a) at (0.2,-1.35){$\scriptstyle \varepsilon$};
            \node[ line width=0.6pt, dashed, draw opacity=0.5] (a) at (-0.2,-0.8){$\scriptstyle \phi$};
            \node[ line width=0.6pt, dashed, draw opacity=0.5] (a) at (0,-0.3){$\scriptstyle w$};
            \node[ line width=0.6pt, dashed, draw opacity=0.5] (a) at (0,0.3){$\scriptstyle x$};
            \node[ line width=0.6pt, dashed, draw opacity=0.5] (a) at (0.2,1){$\scriptstyle n$};
            \node[ line width=0.6pt, dashed, draw opacity=0.5] (a) at (0.2,1.3){$\scriptstyle \theta$};
            \node[ line width=0.6pt, dashed, draw opacity=0.5] (a) at (-0.2,0.8){$\scriptstyle \xi$};
        \end{tikzpicture}
    \end{aligned}\;.  \nonumber 
    \end{align}
    On the other hand, one computes 
    \begin{align} 
    s(Y)s(X)&=\frac{d_{u'}}{d_{w'}}\;\begin{aligned}
        \begin{tikzpicture}[scale=0.65]
            \filldraw[black!60, fill=gray!15, dotted, even odd rule] (0,0) circle[radius=0.5] (0,0) circle[radius=1.5];
            \draw[line width=1pt,violet] (0,0.5)--(0,1.5);
            \draw[line width=1pt,cyan] (0,-0.5)--(0,-1.5);
            \draw[blue] (0,0.8) arc[start angle=90, end angle=-90, radius=0.8];
            \draw[red] (0,1.3) arc[start angle=90, end angle=270, radius=1.3];
            \node[ line width=0.6pt, dashed, draw opacity=0.5] (a) at (0,1.7){$\scriptstyle z'$};
            \node[ line width=0.6pt, dashed, draw opacity=0.5] (a) at (0,-1.65){$\scriptstyle u'$};
            \node[ line width=0.6pt, dashed, draw opacity=0.5] (a) at (-1,0){$\scriptstyle \bar{a}'$};
            \node[ line width=0.6pt, dashed, draw opacity=0.5] (a) at (1.06,0){$\scriptstyle \bar{b}'$};
            \node[ line width=0.6pt, dashed, draw opacity=0.5] (a) at (0.3,-1){$\scriptstyle v'$};
            \node[ line width=0.6pt, dashed, draw opacity=0.5] (a) at (0.3,-1.35){$\scriptstyle \zeta'$};
            \node[ line width=0.6pt, dashed, draw opacity=0.5] (a) at (-0.3,-0.8){$\scriptstyle \gamma'$};
            \node[ line width=0.6pt, dashed, draw opacity=0.5] (a) at (0,-0.2){$\scriptstyle w'$};
            \node[ line width=0.6pt, dashed, draw opacity=0.5] (a) at (0,0.3){$\scriptstyle x'$};
            \node[ line width=0.6pt, dashed, draw opacity=0.5] (a) at (0.3,1.06){$\scriptstyle y'$};
            \node[ line width=0.6pt, dashed, draw opacity=0.5] (a) at (0.3,1.4){$\scriptstyle \nu'$};
            \node[ line width=0.6pt, dashed, draw opacity=0.5] (a) at (-0.3,0.8){$\scriptstyle \mu'$};
        \end{tikzpicture}
    \end{aligned}\cdot \frac{d_u}{d_w}\;\begin{aligned}
        \begin{tikzpicture}[scale=0.65]
            \filldraw[black!60, fill=gray!15, dotted, even odd rule] (0,0) circle[radius=0.5] (0,0) circle[radius=1.5];
            \draw[line width=1pt,violet] (0,0.5)--(0,1.5);
            \draw[line width=1pt,cyan] (0,-0.5)--(0,-1.5);
            \draw[blue] (0,0.8) arc[start angle=90, end angle=-90, radius=0.8];
            \draw[red] (0,1.3) arc[start angle=90, end angle=270, radius=1.3];
            \node[ line width=0.6pt, dashed, draw opacity=0.5] (a) at (0,1.7){$\scriptstyle z$};
            \node[ line width=0.6pt, dashed, draw opacity=0.5] (a) at (0,-1.7){$\scriptstyle u$};
            \node[ line width=0.6pt, dashed, draw opacity=0.5] (a) at (-1.1,0){$\scriptstyle \bar{a}$};
            \node[ line width=0.6pt, dashed, draw opacity=0.5] (a) at (1,0){$\scriptstyle \bar{b}$};
            \node[ line width=0.6pt, dashed, draw opacity=0.5] (a) at (0.2,-1){$\scriptstyle v$};
            \node[ line width=0.6pt, dashed, draw opacity=0.5] (a) at (0.2,-1.35){$\scriptstyle \zeta$};
            \node[ line width=0.6pt, dashed, draw opacity=0.5] (a) at (-0.2,-0.8){$\scriptstyle \gamma$};
            \node[ line width=0.6pt, dashed, draw opacity=0.5] (a) at (0,-0.3){$\scriptstyle w$};
            \node[ line width=0.6pt, dashed, draw opacity=0.5] (a) at (0,0.3){$\scriptstyle x$};
            \node[ line width=0.6pt, dashed, draw opacity=0.5] (a) at (0.2,1){$\scriptstyle y$};
            \node[ line width=0.6pt, dashed, draw opacity=0.5] (a) at (0.2,1.3){$\scriptstyle \nu$};
            \node[ line width=0.6pt, dashed, draw opacity=0.5] (a) at (-0.2,0.8){$\scriptstyle \mu$};
        \end{tikzpicture}
    \end{aligned}  =\frac{d_{u'}}{d_{w'}}\frac{d_u}{d_w}\delta_{x',z}\delta_{w',u}\; \begin{aligned}
        \begin{tikzpicture}[scale=0.6]
            \filldraw[black!60, fill=gray!15, dotted, even odd rule] (0,0) circle[radius=0.5] (0,0) circle[radius=2.5];
            \draw[line width=1pt,violet] (0,0.5)--(0,2.5);
            \draw[line width=1pt,cyan] (0,-0.5)--(0,-2.5);
            \draw[blue] (0,0.8) arc[start angle=90, end angle=-90, radius=0.8];
            \draw[red] (0,1.2) arc[start angle=90, end angle=270, radius=1.2];
            \node[ line width=0.6pt, dashed, draw opacity=0.5] (a) at (-0.2,1.4){$\scriptstyle z$};
            \node[ line width=0.6pt, dashed, draw opacity=0.5] (a) at (-0.2,-1.4){$\scriptstyle u$}; 
            \node[ line width=0.6pt, dashed, draw opacity=0.5] (a) at (-1,0){$\scriptstyle \bar{a}$};
            \node[ line width=0.6pt, dashed, draw opacity=0.5] (a) at (1,0){$\scriptstyle \bar{b}$};
            \node[ line width=0.6pt, dashed, draw opacity=0.5] (a) at (0.2,-0.95){$\scriptstyle v$};
            \node[ line width=0.6pt, dashed, draw opacity=0.5] (a) at (0.2,-1.3){$\scriptstyle \zeta$};
            \node[ line width=0.6pt, dashed, draw opacity=0.5] (a) at (-0.25,-0.8){$\scriptstyle \gamma$};
            \node[ line width=0.6pt, dashed, draw opacity=0.5] (a) at (0,-0.25){$\scriptstyle w$};
            \node[ line width=0.6pt, dashed, draw opacity=0.5] (a) at (0,0.3){$\scriptstyle x$};
            \node[ line width=0.6pt, dashed, draw opacity=0.5] (a) at (0.2,1){$\scriptstyle y$};
            \node[ line width=0.6pt, dashed, draw opacity=0.5] (a) at (0.25,1.3){$\scriptstyle \nu$};
            \node[ line width=0.6pt, dashed, draw opacity=0.5] (a) at (-0.25,0.8){$\scriptstyle \mu$}; 
            \draw[blue] (0,1.6) arc[start angle=90, end angle=-90, radius=1.6];
            \node[ line width=0.6pt, dashed, draw opacity=0.5] (a) at (-0.25,1.75){$\scriptstyle \mu'$};
            \node[ line width=0.6pt, dashed, draw opacity=0.5] (a) at (-0.3,-1.77){$\scriptstyle \gamma'$};
            \draw[red] (0,2.1) arc[start angle=90, end angle=270, radius=2.1];
            \node[ line width=0.6pt, dashed, draw opacity=0.5] (a) at (-1.75,0.4){$\scriptstyle \bar{a}'$};
            \node[ line width=0.6pt, dashed, draw opacity=0.5] (a) at (1.85,0.4){$\scriptstyle \bar{b}'$};
            \node[ line width=0.6pt, dashed, draw opacity=0.5] (a) at (0.3,1.9){$\scriptstyle y'$};
            \node[ line width=0.6pt, dashed, draw opacity=0.5] (a) at (0.3,-1.8){$\scriptstyle v'$};
            \node[ line width=0.6pt, dashed, draw opacity=0.5] (a) at (0.3,2.2){$\scriptstyle \nu'$};
            \node[ line width=0.6pt, dashed, draw opacity=0.5] (a) at (0.3,-2.2){$\scriptstyle \zeta'$};
            \node[ line width=0.6pt, dashed, draw opacity=0.5] (a) at (0.3,2.6){$\scriptstyle z'$};
            \node[ line width=0.6pt, dashed, draw opacity=0.5] (a) at (0.3,-2.6){$\scriptstyle u'$};
        \end{tikzpicture}
    \end{aligned} \nonumber \\
    & =\frac{d_{u'}}{d_{w}}\delta_{x',z}\delta_{w',u}\sum_{l,\alpha,\beta}[{_{\eN}}F_{av'b'}^v]^{l\alpha\beta}_{u\gamma'\zeta}\sum_{n,\rho,\sigma}[{_{\eM}}F_{y}^{ay'b'}]^{n\rho\sigma}_{z\mu'\nu} \sum_{m,\tau,\varepsilon} [{_{\eN}}F^l_{aa'u'}]^{m\tau\varepsilon}_{v'\zeta'\alpha}\sum_{\theta}[{_{\eM}}F_n^{aa'z'}]^{m\tau\theta}_{y'\nu'\rho} \nonumber \\
    & \quad \quad \times  \sum_{s,\delta,\phi}[{_{\eN}}F^w_{lb'b}]^{s\delta\phi}_{v\beta\gamma}\sum_{\xi}[{_{\eM}}F_x^{nb'b}]^{s\delta\xi}_{y\sigma\mu} \sqrt{\frac{d_ad_{a'}}{d_m}}\sqrt{\frac{d_bd_{b'}}{d_s}}\; 
    \begin{aligned}
    \begin{tikzpicture}[scale=0.65]
        \filldraw[black!60, fill=gray!15, dotted, even odd rule] (0,0) circle[radius=0.5] (0,0) circle[radius=1.5];
        \draw[line width=1pt,violet] (0,0.5)--(0,1.5);
        \draw[line width=1pt,cyan] (0,-0.5)--(0,-1.5);
        \draw[blue] (0,0.8) arc[start angle=90, end angle=-90, radius=0.8];
        \draw[red] (0,1.3) arc[start angle=90, end angle=270, radius=1.3];
        \node[ line width=0.6pt, dashed, draw opacity=0.5] (a) at (0,1.7){$\scriptstyle z'$};
        \node[ line width=0.6pt, dashed, draw opacity=0.5] (a) at (0,-1.7){$\scriptstyle u'$};
        \node[ line width=0.6pt, dashed, draw opacity=0.5] (a) at (-1,0){$\scriptstyle \bar{m}$};
        \node[ line width=0.6pt, dashed, draw opacity=0.5] (a) at (1,0){$\scriptstyle \bar{s}$};
        \node[ line width=0.6pt, dashed, draw opacity=0.5] (a) at (0.2,-1){$\scriptstyle l$};
        \node[ line width=0.6pt, dashed, draw opacity=0.5] (a) at (0.2,-1.35){$\scriptstyle \varepsilon$};
        \node[ line width=0.6pt, dashed, draw opacity=0.5] (a) at (-0.2,-0.8){$\scriptstyle \phi$};
        \node[ line width=0.6pt, dashed, draw opacity=0.5] (a) at (0,-0.3){$\scriptstyle w$};
        \node[ line width=0.6pt, dashed, draw opacity=0.5] (a) at (0,0.3){$\scriptstyle x$};
        \node[ line width=0.6pt, dashed, draw opacity=0.5] (a) at (0.2,1){$\scriptstyle n$};
        \node[ line width=0.6pt, dashed, draw opacity=0.5] (a) at (0.2,1.3){$\scriptstyle \theta$};
        \node[ line width=0.6pt, dashed, draw opacity=0.5] (a) at (-0.2,0.8){$\scriptstyle \xi$};
    \end{tikzpicture}
    \end{aligned}\;,  \nonumber 
\end{align}
which is equal to $s(XY)$.

\noindent
\textit{Proof of} \eqref{eq:delta-antipode-compatibility-DW}. The assertion follows by direct computation:
    \begin{align}
        \delta(s(X)) & = \frac{d_u}{d_w}\sum_{r,s,t,\phi,\psi } \sqrt{\frac{d_r}{d_ad_td_b}}\;\begin{aligned}
        \begin{tikzpicture}[scale=0.65]
            \filldraw[black!60, fill=gray!15, dotted, even odd rule] (0,0) circle[radius=0.5] (0,0) circle[radius=1.5];
            \draw[line width=1pt,violet] (0,0.5)--(0,1.5);
            \draw[line width=1pt,cyan] (0,-0.5)--(0,-1.5);
            \draw[blue] (0,0.8) arc[start angle=90, end angle=-90, radius=0.8];
            \draw[red] (0,1.3) arc[start angle=90, end angle=270, radius=1.3];
            \node[ line width=0.6pt, dashed, draw opacity=0.5] (a) at (0,1.7){$\scriptstyle z$};
            \node[ line width=0.6pt, dashed, draw opacity=0.5] (a) at (0,-1.7){$\scriptstyle r$};
            \node[ line width=0.6pt, dashed, draw opacity=0.5] (a) at (-1.1,0){$\scriptstyle \bar{a}$};
            \node[ line width=0.6pt, dashed, draw opacity=0.5] (a) at (1,0){$\scriptstyle \bar{b}$};
            \node[ line width=0.6pt, dashed, draw opacity=0.5] (a) at (0.2,-1){$\scriptstyle s$};
            \node[ line width=0.6pt, dashed, draw opacity=0.5] (a) at (0.2,-1.35){$\scriptstyle \phi$};
            \node[ line width=0.6pt, dashed, draw opacity=0.5] (a) at (-0.2,-0.8){$\scriptstyle \psi$};
            \node[ line width=0.6pt, dashed, draw opacity=0.5] (a) at (0,-0.3){$\scriptstyle t$};
            \node[ line width=0.6pt, dashed, draw opacity=0.5] (a) at (0,0.3){$\scriptstyle x$};
            \node[ line width=0.6pt, dashed, draw opacity=0.5] (a) at (0.2,1){$\scriptstyle y$};
            \node[ line width=0.6pt, dashed, draw opacity=0.5] (a) at (0.2,1.3){$\scriptstyle \nu$};
            \node[ line width=0.6pt, dashed, draw opacity=0.5] (a) at (-0.2,0.8){$\scriptstyle \mu$};
        \end{tikzpicture}
    \end{aligned} \otimes 
    \begin{aligned}
        \begin{tikzpicture}[scale=0.65]
            \filldraw[black!60, fill=gray!15, dotted, even odd rule] (0,0) circle[radius=0.5] (0,0) circle[radius=1.5];
            \draw[line width=1pt,cyan] (0,0.5)--(0,1.5);
            \draw[line width=1pt,violet] (0,-0.5)--(0,-1.5);
            \draw[blue] (0,0.8) arc[start angle=90, end angle=-90, radius=0.8];
            \draw[red] (0,1.3) arc[start angle=90, end angle=270, radius=1.3];
            \node[ line width=0.6pt, dashed, draw opacity=0.5] (a) at (0,1.7){$\scriptstyle r$};
            \node[ line width=0.6pt, dashed, draw opacity=0.5] (a) at (0,-1.7){$\scriptstyle u$};
            \node[ line width=0.6pt, dashed, draw opacity=0.5] (a) at (-1.1,0){$\scriptstyle \bar{a}$};
            \node[ line width=0.6pt, dashed, draw opacity=0.5] (a) at (1,0){$\scriptstyle \bar{b}$};
            \node[ line width=0.6pt, dashed, draw opacity=0.5] (a) at (0.2,-1){$\scriptstyle v$};
            \node[ line width=0.6pt, dashed, draw opacity=0.5] (a) at (0.2,-1.35){$\scriptstyle \zeta$};
            \node[ line width=0.6pt, dashed, draw opacity=0.5] (a) at (-0.2,-0.8){$\scriptstyle \gamma$};
            \node[ line width=0.6pt, dashed, draw opacity=0.5] (a) at (0,-0.3){$\scriptstyle w$};
            \node[ line width=0.6pt, dashed, draw opacity=0.5] (a) at (0,0.3){$\scriptstyle t$};
            \node[ line width=0.6pt, dashed, draw opacity=0.5] (a) at (0.2,1){$\scriptstyle s$};
            \node[ line width=0.6pt, dashed, draw opacity=0.5] (a) at (0.2,1.3){$\scriptstyle \phi$};
            \node[ line width=0.6pt, dashed, draw opacity=0.5] (a) at (-0.2,0.8){$\scriptstyle \psi$};
        \end{tikzpicture}
    \end{aligned}\;,     \notag \\
        \tau(s\otimes s)\circ \delta(X)  & = \sum_{r,s,t,\phi,\psi} \sqrt{\frac{d_r}{d_ad_td_b}} \frac{d_t}{d_r}\;\begin{aligned}
        \begin{tikzpicture}[scale=0.65]
            \filldraw[black!60, fill=gray!15, dotted, even odd rule] (0,0) circle[radius=0.5] (0,0) circle[radius=1.5];
            \draw[line width=1pt,violet] (0,0.5)--(0,1.5);
            \draw[line width=1pt,cyan] (0,-0.5)--(0,-1.5);
            \draw[blue] (0,0.8) arc[start angle=90, end angle=-90, radius=0.8];
            \draw[red] (0,1.3) arc[start angle=90, end angle=270, radius=1.3];
            \node[ line width=0.6pt, dashed, draw opacity=0.5] (a) at (0,1.7){$\scriptstyle z$};
            \node[ line width=0.6pt, dashed, draw opacity=0.5] (a) at (0,-1.7){$\scriptstyle t$};
            \node[ line width=0.6pt, dashed, draw opacity=0.5] (a) at (-1.1,0){$\scriptstyle \bar{a}$};
            \node[ line width=0.6pt, dashed, draw opacity=0.5] (a) at (1,0){$\scriptstyle \bar{b}$};
            \node[ line width=0.6pt, dashed, draw opacity=0.5] (a) at (0.2,-1){$\scriptstyle s$};
            \node[ line width=0.6pt, dashed, draw opacity=0.5] (a) at (0.2,-1.35){$\scriptstyle \psi$};
            \node[ line width=0.6pt, dashed, draw opacity=0.5] (a) at (-0.2,-0.8){$\scriptstyle \phi$};
            \node[ line width=0.6pt, dashed, draw opacity=0.5] (a) at (0,-0.3){$\scriptstyle r$};
            \node[ line width=0.6pt, dashed, draw opacity=0.5] (a) at (0,0.3){$\scriptstyle x$};
            \node[ line width=0.6pt, dashed, draw opacity=0.5] (a) at (0.2,1){$\scriptstyle y$};
            \node[ line width=0.6pt, dashed, draw opacity=0.5] (a) at (0.2,1.3){$\scriptstyle \nu$};
            \node[ line width=0.6pt, dashed, draw opacity=0.5] (a) at (-0.2,0.8){$\scriptstyle \mu$};
        \end{tikzpicture}
    \end{aligned} \otimes 
    \frac{d_u}{d_w}\;\begin{aligned}
        \begin{tikzpicture}[scale=0.65]
            \filldraw[black!60, fill=gray!15, dotted, even odd rule] (0,0) circle[radius=0.5] (0,0) circle[radius=1.5];
            \draw[line width=1pt,cyan] (0,0.5)--(0,1.5);
            \draw[line width=1pt,violet] (0,-0.5)--(0,-1.5);
            \draw[blue] (0,0.8) arc[start angle=90, end angle=-90, radius=0.8];
            \draw[red] (0,1.3) arc[start angle=90, end angle=270, radius=1.3];
            \node[ line width=0.6pt, dashed, draw opacity=0.5] (a) at (0,1.7){$\scriptstyle t$};
            \node[ line width=0.6pt, dashed, draw opacity=0.5] (a) at (0,-1.7){$\scriptstyle u$};
            \node[ line width=0.6pt, dashed, draw opacity=0.5] (a) at (-1.1,0){$\scriptstyle \bar{a}$};
            \node[ line width=0.6pt, dashed, draw opacity=0.5] (a) at (1,0){$\scriptstyle \bar{b}$};
            \node[ line width=0.6pt, dashed, draw opacity=0.5] (a) at (0.2,-1){$\scriptstyle v$};
            \node[ line width=0.6pt, dashed, draw opacity=0.5] (a) at (0.2,-1.35){$\scriptstyle \zeta$};
            \node[ line width=0.6pt, dashed, draw opacity=0.5] (a) at (-0.2,-0.8){$\scriptstyle \gamma$};
            \node[ line width=0.6pt, dashed, draw opacity=0.5] (a) at (0,-0.3){$\scriptstyle w$};
            \node[ line width=0.6pt, dashed, draw opacity=0.5] (a) at (0,0.3){$\scriptstyle r$};
            \node[ line width=0.6pt, dashed, draw opacity=0.5] (a) at (0.2,1){$\scriptstyle s$};
            \node[ line width=0.6pt, dashed, draw opacity=0.5] (a) at (0.2,1.3){$\scriptstyle \psi$};
            \node[ line width=0.6pt, dashed, draw opacity=0.5] (a) at (-0.2,0.8){$\scriptstyle \phi$};
        \end{tikzpicture}
    \end{aligned}\;.     \notag
    \end{align}
    Two expressions are equal by inspection.

\noindent
\textit{Proof of} \eqref{eq:decomp-antipode-DW}. By definition, $s(X^{\langle 1\rangle})X^{\langle 2\rangle}s(X^{\langle 3\rangle}) = \varepsilon'_R(X^{[-1]})S(X^{[0]})$.
    It is easy to compute that 
    \begin{equation}
        \notag
        \varepsilon_R'\left(\begin{aligned}
        \begin{tikzpicture}[scale=0.65]
            \filldraw[black!60, fill=gray!15, dotted, even odd rule] (0,0) circle[radius=0.5] (0,0) circle[radius=1.5];
            \draw[line width=1pt,cyan] (0,0.5)--(0,1.5);
            \draw[line width=1pt,cyan] (0,-0.5)--(0,-1.5);
            \draw[red] (0,0.8) arc[start angle=90, end angle=270, radius=0.8];
            \draw[blue] (0,1.3) arc[start angle=90, end angle=-90, radius=1.3];
            \node[ line width=0.6pt, dashed, draw opacity=0.5] (a) at (0,1.7){$\scriptstyle w$};
            \node[ line width=0.6pt, dashed, draw opacity=0.5] (a) at (0,-1.7){$\scriptstyle i$};
            \node[ line width=0.6pt, dashed, draw opacity=0.5] (a) at (-1,0){$\scriptstyle a$};
            \node[ line width=0.6pt, dashed, draw opacity=0.5] (a) at (1.1,0){$\scriptstyle b$};
            \node[ line width=0.6pt, dashed, draw opacity=0.5] (a) at (-0.2,-1){$\scriptstyle j$};
            \node[ line width=0.6pt, dashed, draw opacity=0.5] (a) at (-0.2,-1.35){$\scriptstyle \sigma$};
            \node[ line width=0.6pt, dashed, draw opacity=0.5] (a) at (0.2,-0.8){$\scriptstyle \varrho$};
            \node[ line width=0.6pt, dashed, draw opacity=0.5] (a) at (0,-0.3){$\scriptstyle k$};
            \node[ line width=0.6pt, dashed, draw opacity=0.5] (a) at (0,0.3){$\scriptstyle u$};
            \node[ line width=0.6pt, dashed, draw opacity=0.5] (a) at (-0.2,1){$\scriptstyle v$};
            \node[ line width=0.6pt, dashed, draw opacity=0.5] (a) at (-0.2,1.3){$\scriptstyle \gamma$};
            \node[ line width=0.6pt, dashed, draw opacity=0.5] (a) at (0.2,0.8){$\scriptstyle \zeta$};
        \end{tikzpicture}
    \end{aligned}\right) =\delta_{k,u}\delta_{j,v}\delta_{i,w}\delta_{\varrho,\zeta}\delta_{\sigma,\gamma} \sqrt{\frac{d_ad_ud_b}{d_w}} \sum_t\;  
        \begin{aligned}
        \begin{tikzpicture}[scale=0.65]
            \filldraw[black!60, fill=gray!15, dotted, even odd rule] (0,0) circle[radius=0.5] (0,0) circle[radius=1.5];
            \draw[line width=1pt,violet] (0,0.5)--(0,1.5);
            \draw[line width=1pt,cyan] (0,-0.5)--(0,-1.5);
            \draw[red, dotted, thick] (0,0.8) arc[start angle=90, end angle=270, radius=0.8];
            \draw[blue, dotted, thick] (0,1.3) arc[start angle=90, end angle=-90, radius=1.3];
            \node[ line width=0.6pt, dashed, draw opacity=0.5] (a) at (0,-0.3){$\scriptstyle u$};
            \node[ line width=0.6pt, dashed, draw opacity=0.5] (a) at (0,0.3){$\scriptstyle t$};
        \end{tikzpicture}
    \end{aligned}\;. 
    \end{equation}
    Therefore, one has 
    \begin{align}
        &\quad \varepsilon'_R(X^{[-1]})s(X^{[0]}) \nonumber \\
        & = \sum_{i,j,k,\sigma,\varrho}\sqrt{\frac{d_i}{d_ad_kd_b}}  \; \varepsilon_R'\left(\begin{aligned}
        \begin{tikzpicture}[scale=0.65]
            \filldraw[black!60, fill=gray!15, dotted, even odd rule] (0,0) circle[radius=0.5] (0,0) circle[radius=1.5];
            \draw[line width=1pt,cyan] (0,0.5)--(0,1.5);
            \draw[line width=1pt,cyan] (0,-0.5)--(0,-1.5);
            \draw[red] (0,0.8) arc[start angle=90, end angle=270, radius=0.8];
            \draw[blue] (0,1.3) arc[start angle=90, end angle=-90, radius=1.3];
            \node[ line width=0.6pt, dashed, draw opacity=0.5] (a) at (0,1.7){$\scriptstyle w$};
            \node[ line width=0.6pt, dashed, draw opacity=0.5] (a) at (0,-1.7){$\scriptstyle i$};
            \node[ line width=0.6pt, dashed, draw opacity=0.5] (a) at (-1,0){$\scriptstyle a$};
            \node[ line width=0.6pt, dashed, draw opacity=0.5] (a) at (1.1,0){$\scriptstyle b$};
            \node[ line width=0.6pt, dashed, draw opacity=0.5] (a) at (-0.2,-1){$\scriptstyle j$};
            \node[ line width=0.6pt, dashed, draw opacity=0.5] (a) at (-0.2,-1.35){$\scriptstyle \sigma$};
            \node[ line width=0.6pt, dashed, draw opacity=0.5] (a) at (0.2,-0.8){$\scriptstyle \varrho$};
            \node[ line width=0.6pt, dashed, draw opacity=0.5] (a) at (0,-0.3){$\scriptstyle k$};
            \node[ line width=0.6pt, dashed, draw opacity=0.5] (a) at (0,0.3){$\scriptstyle u$};
            \node[ line width=0.6pt, dashed, draw opacity=0.5] (a) at (-0.2,1){$\scriptstyle v$};
            \node[ line width=0.6pt, dashed, draw opacity=0.5] (a) at (-0.2,1.3){$\scriptstyle \gamma$};
            \node[ line width=0.6pt, dashed, draw opacity=0.5] (a) at (0.2,0.8){$\scriptstyle \zeta$};
        \end{tikzpicture}
    \end{aligned}\right) s\left(\begin{aligned}
        \begin{tikzpicture}[scale=0.65]
            \filldraw[black!60, fill=gray!15, dotted, even odd rule] (0,0) circle[radius=0.5] (0,0) circle[radius=1.5];
            \draw[line width=1pt,cyan] (0,0.5)--(0,1.5);
            \draw[line width=1pt,violet] (0,-0.5)--(0,-1.5);
            \draw[red] (0,0.8) arc[start angle=90, end angle=270, radius=0.8];
            \draw[blue] (0,1.3) arc[start angle=90, end angle=-90, radius=1.3];
            \node[ line width=0.6pt, dashed, draw opacity=0.5] (a) at (0,1.7){$\scriptstyle i$};
            \node[ line width=0.6pt, dashed, draw opacity=0.5] (a) at (0,-1.7){$\scriptstyle x$};
            \node[ line width=0.6pt, dashed, draw opacity=0.5] (a) at (-1,0){$\scriptstyle a$};
            \node[ line width=0.6pt, dashed, draw opacity=0.5] (a) at (1.1,0){$\scriptstyle b$};
            \node[ line width=0.6pt, dashed, draw opacity=0.5] (a) at (-0.2,-1){$\scriptstyle y$};
            \node[ line width=0.6pt, dashed, draw opacity=0.5] (a) at (-0.2,-1.35){$\scriptstyle \mu$};
            \node[ line width=0.6pt, dashed, draw opacity=0.5] (a) at (0.2,-0.8){$\scriptstyle \nu$};
            \node[ line width=0.6pt, dashed, draw opacity=0.5] (a) at (0,-0.3){$\scriptstyle z$};
            \node[ line width=0.6pt, dashed, draw opacity=0.5] (a) at (0,0.3){$\scriptstyle k$};
            \node[ line width=0.6pt, dashed, draw opacity=0.5] (a) at (-0.2,1){$\scriptstyle j$};
            \node[ line width=0.6pt, dashed, draw opacity=0.5] (a) at (-0.2,1.3){$\scriptstyle \sigma$};
            \node[ line width=0.6pt, dashed, draw opacity=0.5] (a) at (0.2,0.8){$\scriptstyle \varrho$};
        \end{tikzpicture}
    \end{aligned}\right) \nonumber \\
        & = \sum_{i,j,k,\sigma,\varrho}\sqrt{\frac{d_i}{d_ad_kd_b}} \delta_{k,u}\delta_{j,v}\delta_{i,w}\delta_{\varrho,\zeta}\delta_{\sigma,\gamma} \sqrt{\frac{d_ad_ud_b}{d_w}} \sum_t\;  
        \begin{aligned}
        \begin{tikzpicture}[scale=0.65]
            \filldraw[black!60, fill=gray!15, dotted, even odd rule] (0,0) circle[radius=0.5] (0,0) circle[radius=1.5];
            \draw[line width=1pt,violet] (0,0.5)--(0,1.5);
            \draw[line width=1pt,cyan] (0,-0.5)--(0,-1.5);
            \draw[red, dotted, thick] (0,0.8) arc[start angle=90, end angle=270, radius=0.8];
            \draw[blue, dotted, thick] (0,1.3) arc[start angle=90, end angle=-90, radius=1.3];
            \node[ line width=0.6pt, dashed, draw opacity=0.5] (a) at (0,-0.3){$\scriptstyle u$};
            \node[ line width=0.6pt, dashed, draw opacity=0.5] (a) at (0,0.3){$\scriptstyle t$};
        \end{tikzpicture}
    \end{aligned} \cdot  \frac{d_k}{d_i}\; \begin{aligned}
        \begin{tikzpicture}[scale=0.65]
            \filldraw[black!60, fill=gray!15, dotted, even odd rule] (0,0) circle[radius=0.5] (0,0) circle[radius=1.5];
            \draw[line width=1pt,violet] (0,0.5)--(0,1.5);
            \draw[line width=1pt,cyan] (0,-0.5)--(0,-1.5);
            \draw[blue] (0,0.8) arc[start angle=90, end angle=-90, radius=0.8];
            \draw[red] (0,1.3) arc[start angle=90, end angle=270, radius=1.3];
            \node[ line width=0.6pt, dashed, draw opacity=0.5] (a) at (0,1.7){$\scriptstyle z$};
            \node[ line width=0.6pt, dashed, draw opacity=0.5] (a) at (0,-1.7){$\scriptstyle k$};
            \node[ line width=0.6pt, dashed, draw opacity=0.5] (a) at (-1.1,0){$\scriptstyle \bar{a}$};
            \node[ line width=0.6pt, dashed, draw opacity=0.5] (a) at (1,0){$\scriptstyle \bar{b}$};
            \node[ line width=0.6pt, dashed, draw opacity=0.5] (a) at (0.2,-1){$\scriptstyle j$};
            \node[ line width=0.6pt, dashed, draw opacity=0.5] (a) at (0.2,-1.35){$\scriptstyle \varrho$};
            \node[ line width=0.6pt, dashed, draw opacity=0.5] (a) at (-0.2,-0.8){$\scriptstyle \sigma$};
            \node[ line width=0.6pt, dashed, draw opacity=0.5] (a) at (0,-0.3){$\scriptstyle i$};
            \node[ line width=0.6pt, dashed, draw opacity=0.5] (a) at (0,0.3){$\scriptstyle x$};
            \node[ line width=0.6pt, dashed, draw opacity=0.5] (a) at (0.2,1){$\scriptstyle y$};
            \node[ line width=0.6pt, dashed, draw opacity=0.5] (a) at (0.2,1.3){$\scriptstyle \nu$};
            \node[ line width=0.6pt, dashed, draw opacity=0.5] (a) at (-0.2,0.8){$\scriptstyle \mu$};
        \end{tikzpicture}
    \end{aligned} \nonumber \\
    & = \frac{d_u}{d_w}\; \begin{aligned}
        \begin{tikzpicture}[scale=0.65]
            \filldraw[black!60, fill=gray!15, dotted, even odd rule] (0,0) circle[radius=0.5] (0,0) circle[radius=1.5];
            \draw[line width=1pt,violet] (0,0.5)--(0,1.5);
            \draw[line width=1pt,cyan] (0,-0.5)--(0,-1.5);
            \draw[blue] (0,0.8) arc[start angle=90, end angle=-90, radius=0.8];
            \draw[red] (0,1.3) arc[start angle=90, end angle=270, radius=1.3];
            \node[ line width=0.6pt, dashed, draw opacity=0.5] (a) at (0,1.7){$\scriptstyle z$};
            \node[ line width=0.6pt, dashed, draw opacity=0.5] (a) at (0,-1.7){$\scriptstyle u$};
            \node[ line width=0.6pt, dashed, draw opacity=0.5] (a) at (-1.1,0){$\scriptstyle \bar{a}$};
            \node[ line width=0.6pt, dashed, draw opacity=0.5] (a) at (1,0){$\scriptstyle \bar{b}$};
            \node[ line width=0.6pt, dashed, draw opacity=0.5] (a) at (0.2,-1){$\scriptstyle v$};
            \node[ line width=0.6pt, dashed, draw opacity=0.5] (a) at (0.2,-1.35){$\scriptstyle \zeta$};
            \node[ line width=0.6pt, dashed, draw opacity=0.5] (a) at (-0.2,-0.8){$\scriptstyle \gamma$};
            \node[ line width=0.6pt, dashed, draw opacity=0.5] (a) at (0,-0.3){$\scriptstyle w$};
            \node[ line width=0.6pt, dashed, draw opacity=0.5] (a) at (0,0.3){$\scriptstyle x$};
            \node[ line width=0.6pt, dashed, draw opacity=0.5] (a) at (0.2,1){$\scriptstyle y$};
            \node[ line width=0.6pt, dashed, draw opacity=0.5] (a) at (0.2,1.3){$\scriptstyle \nu$};
            \node[ line width=0.6pt, dashed, draw opacity=0.5] (a) at (-0.2,0.8){$\scriptstyle \mu$};
        \end{tikzpicture}
    \end{aligned} = s(X).  \nonumber 
    \end{align}

\noindent
\textit{Proof of} \eqref{eq:antipode-counital-DW}. The proof is the same as that for boundary case.

\section{Proof of Proposition~\ref{prop:Hom-Nat}}
\label{sec:Hom-Nat}

For simplicity, denote $A:={\mathbf{Tube}({}_\eC\eM;{}_\eC\eN)}$. 
In this part, we will show that there is a natural vector space isomorphism 
\begin{equation}
    \notag
    \Hom_A(\mathcal{H}^{\mathfrak{f}},\mathcal{H}^{\mathfrak{g}})  \simeq \operatorname{Nat}_{\eC}(\mathfrak{g},\mathfrak{f}),
\end{equation}
where $\operatorname{Nat}_{\eC}(\mathfrak{g},\mathfrak{f})$ is the set of natural transformations of $\eC$-module functors. The strategy is to construct two linear maps 
\begin{equation}
    \notag
    \Phi:\operatorname{Nat}_{\eC}(\mathfrak{g},\mathfrak{f}) \to \Hom_A(\mathcal{H}^{\mathfrak{f}},\mathcal{H}^{\mathfrak{g}}),\quad \Psi:\Hom_A(\mathcal{H}^{\mathfrak{f}},\mathcal{H}^{\mathfrak{g}})\to \operatorname{Nat}_{\eC}(\mathfrak{g},\mathfrak{f}),
\end{equation}
such that $\Psi\circ \Phi = \id$, $\Phi\circ \Psi = \id$. 
As usual, we denote 
\begin{equation}
    \notag
    \mathcal{H}^{\mathfrak f} = \bigoplus_{s\in\operatorname{Irr}(\eM)}\bigoplus_{t\in\operatorname{Irr}(\eN)}V^t_{s\mathfrak f}, \quad V^t_{s\mathfrak f} := \operatorname{Hom}_{\eN}(\mathfrak f(s),t).
\end{equation}
Similar for $\mathcal{H}^\mathfrak{g}$. 

We first construct $\Phi$. Let $\eta:\mathfrak{g} \Rightarrow \mathfrak{f}$ be a natural transformations of $\eC$-module functors. For every $z\in \Irr(\eM)$, denote $\eta_z:\mathfrak{g}(z)\to \mathfrak{f}(z)$. 

\begin{lemma}
    Define
    \begin{equation}
        \notag
        T_\eta:\mathcal{H}^{\mathfrak f}\to \mathcal{H}^{\mathfrak g},\quad V^u_{z\mathfrak f} \ni \alpha \mapsto \alpha\circ\eta_s\in V^u_{z\mathfrak g}. 
    \end{equation}
    Then $T_\eta$ is an $A$-module intertwiner. 
\end{lemma}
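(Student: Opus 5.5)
The plan is to check directly on a basis tube $X\in A$ and a basis vector $\alpha\in V^u_{z\mathfrak f}$ that
\[
T_\eta\bigl(\rho_{\mathfrak f}(X)\alpha\bigr)=\rho_{\mathfrak g}(X)\,T_\eta(\alpha).
\]
By linearity this is enough. Throughout, $X$ carries bulk label $a\in\Irr(\eC)$, upper wall labels $u,v\in\Irr(\eN)$ with vertex $\zeta$, and lower wall labels $z,y\in\Irr(\eM)$ with vertex $\nu$.

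Both sides vanish unless the inner labels of $X$ match the labels of $\alpha$. So I will assume $X$ has inner labels $(z,u)$.

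\textbf{Step 1: unpack $\rho_{\mathfrak f}(X)\alpha$.} Filling the disk $\alpha$ into $X$ produces a morphism $\mathfrak f(y)\to v$. It is obtained by composing, in order:
\begin{enumerate}
\item $\mathfrak f(\nu):\mathfrak f(y)\to\mathfrak f(a\triangleright z)$;
\item the module functor structure $s^{\mathfrak f}_{a,z}:\mathfrak f(a\triangleright z)\xrightarrow{\sim}a\triangleright\mathfrak f(z)$;
\item $\id_a\triangleright\alpha$;
\item finally $\zeta$.
\end{enumerate}
Up to the normalization factors $\sqrt{d}$ that come from the vertex conventions, and possibly a final expansion in the basis of $V^v_{y\mathfrak f}$. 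This is exactly the evaluation in the multifusion category $\eP$: the bulk $a$-string passes the defect coupon using the module-functor constraint of $\mathfrak f$. The same description holds for $\rho_{\mathfrak g}$, with the coefficients and basis expansions identical because they depend only on $a,u,v,y,z,\zeta,\nu$.

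\textbf{Step 2: precompose with $\eta_y$.} Applying $T_\eta$ to the output precomposes the morphism with $\eta_y$. I then move $\eta$ step by step toward the disk:
\begin{enumerate}
\item Naturality of $\eta$ with respect to $\nu$ gives $\mathfrak f(\nu)\circ\eta_y=\eta_{a\triangleright z}\circ\mathfrak g(\nu)$.
\item The module natural transformation axiom gives $s^{\mathfrak f}_{a,z}\circ\eta_{a\triangleright z}=(\id_a\triangleright\eta_z)\circ s^{\mathfrak g}_{a,z}$.
\item Interchange gives $(\id_a\triangleright\alpha)\circ(\id_a\triangleright\eta_z)=\id_a\triangleright(\alpha\circ\eta_z)$.
\end{enumerate}
Diagrammatically, the $\eta$ coupon slides along the wavy defect line, through the bulk $a$-string, into the disk. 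The result is precisely $\rho_{\mathfrak g}(X)(\alpha\circ\eta_z)=\rho_{\mathfrak g}(X)T_\eta(\alpha)$.

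\textbf{Main obstacle.} The one delicate point is the final basis expansion. When $\rho_{\mathfrak f}(X)\alpha$ is rewritten as a linear combination of basis vectors of $V^v_{y\mathfrak f}$ using F-moves in $\eP$, those F-moves involve only the wall, bulk and defect strings, never the vertex contents. Since precomposition with $\eta_y$ is linear and acts only at the bottom of the defect line, it commutes with these re-expansions.

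I therefore intend to avoid coordinates altogether: I will treat $\rho_{\mathfrak f}(X)\alpha$ as a single morphism in $\Hom_{\eN}(\mathfrak f(y),v)$ before any basis expansion. This is legitimate because the disk space is the full Hom-space, so no projection is involved. With that choice, the three identities in Step 2 finish the proof.
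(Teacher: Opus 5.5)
Your proposal is correct and matches the paper's proof essentially step for step. The paper also writes the filled disk coordinate-free as the composite $\Theta_{\mathfrak f}(\nu,\alpha,\zeta)=\zeta\circ(\id_a\otimes\alpha)\circ\varphi^{\mathfrak f}_{a,z}\circ\mathfrak f(\nu)$, and slides $\eta_y$ through it using ordinary naturality, then module naturality, then interchange, to obtain $\Theta_{\mathfrak g}(\nu,\alpha\circ\eta_z,\zeta)$. One small correction to your Step 1: the basis-expansion coefficients (the symbols $[F^v_{az\mathfrak f}]$) do depend on $\mathfrak f$ versus $\mathfrak g$. Only the $\sqrt{d}$ prefactors coincide, but your decision to work with the unexpanded morphism already sidesteps this.
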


\begin{proof}
    Let us denote 
    \begin{equation}
        \notag
         X:=\begin{aligned}
        \begin{tikzpicture}[scale=0.65]
            \fill[gray!15] (0,1.5) arc[start angle=90, end angle=270, radius=1.5] -- (0,-0.5) arc[start angle=270, end angle=90, radius=0.5] -- cycle;
            \draw[dotted] (0,1.5) arc[start angle=90, end angle=270, radius=1.5]; 
            \draw[dotted] (0,0.5) arc[start angle=90, end angle=270, radius=0.5];
            \draw[line width=1pt,cyan] (0,0.5)--(0,1.5);
            \draw[line width=1pt,violet] (0,-0.5)--(0,-1.5);
            \draw[red] (0,0.8) arc[start angle=90, end angle=270, radius=0.8];
            \node[ line width=0.6pt, dashed, draw opacity=0.5] (a) at (-1,0){$\scriptstyle a$};
            \node[ line width=0.6pt, dashed, draw opacity=0.5] (a) at (-0.2,-1.1){$\scriptstyle y$};
            \node[ line width=0.6pt, dashed, draw opacity=0.5] (a) at (0.2,-0.8){$\scriptstyle \nu$};
            \node[ line width=0.6pt, dashed, draw opacity=0.5] (a) at (0,-0.3){$\scriptstyle z$};
            \node[ line width=0.6pt, dashed, draw opacity=0.5] (a) at (0,0.3){$\scriptstyle u$};
            \node[ line width=0.6pt, dashed, draw opacity=0.5] (a) at (-0.2,1.1){$\scriptstyle v$};
            \node[ line width=0.6pt, dashed, draw opacity=0.5] (a) at (0.2,0.8){$\scriptstyle \zeta$};
        \end{tikzpicture}
        \end{aligned}\;, \quad |z,u,\alpha \rangle_{\mathfrak{f}}:=
        \begin{aligned}
        \begin{tikzpicture}[scale=0.85]
            \filldraw[black!60, fill=gray!15, dotted] (0,0) -- (0,1) arc[start angle=90, end angle=270, radius=1] -- cycle;
            \draw[cyan, line width=1pt] (0,0) -- (0,1); 
            \draw[violet, line width=1pt] (0,-1) -- (0,0); 
            \draw[decorate, line width=.6pt, decoration={snake, amplitude=2pt, segment length=6pt}] (0,0) -- (1,0); 
            \node[ line width=0.6pt, dashed, draw opacity=0.5] at (-0.2,0.5){$\scriptstyle u$};
            \node[ line width=0.6pt, dashed, draw opacity=0.5] at (-0.2,-0.6){$\scriptstyle z$};
            \node[ line width=0.6pt, dashed, draw opacity=0.5] at (0.6,-0.3){$\scriptstyle \mathfrak{f}$};
            \node[ line width=0.6pt, dashed, draw opacity=0.5] at (-0.3,0){$\scriptstyle \alpha$};
            \node[ line width=0.6pt, dashed, draw opacity=0.5] at (0,0){$\scriptstyle \bullet$};
        \end{tikzpicture}
    \end{aligned} \quad = 
    \begin{aligned}
    \begin{tikzpicture}[scale=0.78]
    \draw[violet,line width=1pt] (-0.85,0) to[out=90,in=215] node[pos=0.25,left] {$\scriptstyle z$} (0,1.35);
    \draw[teal,line width=1pt] (0.85,0) to[out=90,in=-35] node[pos=0.28,right] {$\scriptstyle\mathfrak f$} (0,1.35);
    \draw[cyan,line width=1pt] (0,1.35)--(0,2.35);
    \node[circle,fill=black,inner sep=1.5pt] at (0,1.35) {};
    \node[left=3pt] at (0,1.35) {$\scriptstyle\alpha$};
    \node[above=2pt] at (0,2.35) {$\scriptstyle u$};
    \end{tikzpicture}
    \end{aligned}\;. 
    \end{equation}
    Therefore, $T_\eta|z,u,\alpha \rangle_{\mathfrak{f}} = |z,u,\alpha\circ\eta_z \rangle_{\mathfrak{g}}$, which we denote as
    \begin{equation}
        \notag
        |z,u,\alpha\circ\eta_z \rangle_{\mathfrak{g}}:= \begin{aligned}
        \begin{tikzpicture}[scale=0.78]
        \draw[violet,line width=1pt] (-0.85,0) to[out=90,in=215] node[pos=0.25,left] {$\scriptstyle z$} (0,1.35);
        \draw[teal,line width=1pt] (0.85,0) to[out=90,in=-35] (0,1.35);
        \draw[cyan,line width=1pt] (0,1.35)--(0,2.35);
        \node[circle,fill=black,inner sep=1.5pt] at (0,1.35) {};
        \node[left=3pt] at (0,1.35) {$\scriptstyle\alpha$};
        \node[above=2pt] at (0,2.35) {$\scriptstyle u$};
        \node[right,teal] at (0.8,0.3) {$\scriptstyle\mathfrak g$};
        \node[right,teal] at (0.3,1.3) {$\scriptstyle\mathfrak f$};
        \draw[fill=white, line width=0.6pt] (0.45,0.5) rectangle (0.95,0.9);
    \node at (0.7,0.7) {$\scriptstyle\eta_z$};
        \end{tikzpicture}
        \end{aligned}\;.
    \end{equation}
    Let
    \begin{equation}
        \notag
        \varphi^{\mathfrak f}_{a,z}:\mathfrak f(a\otimes z) \to a\otimes\mathfrak f(z), \quad \varphi^{\mathfrak g}_{a,z}: \mathfrak g(a\otimes z) \to a\otimes\mathfrak g(z)
    \end{equation}
    be the structure isomorphisms of module functors. For the labels
    $\nu,\alpha,\zeta$, define
    \begin{equation}
        \notag
        \Theta_{\mathfrak f} (\nu,\alpha,\zeta) := \zeta \circ (\id_a\otimes\,\alpha) \circ \varphi^{\mathfrak f}_{a,z} \circ \mathfrak f(\nu) = \begin{aligned}
    \begin{tikzpicture}[scale=0.7]
    \draw[violet,line width=1pt](-1.2,0)--(-1.2,1.0);
    \draw[red,line width=1pt] (-1.2,1.0) to[out=155,in=200]node[pos=0.48,left] {$\scriptstyle a$}(-0.15,3.35);
    \draw[violet,line width=1pt] (-1.2,1.0)to[out=25,in=210]
    node[pos=0.53,above] {$\scriptstyle z$} (0.35,2.25);
    \draw[teal,line width=1pt] (1.2,0) to[out=90,in=-25]
    node[pos=0,right] {$\scriptstyle\mathfrak f$}(0.35,2.25);
    \draw[cyan,line width=1pt] (0.35,2.25) to[out=90,in=-20] node[pos=0.55,right] {$\scriptstyle u$} (-0.15,3.35);
    \draw[cyan,line width=1pt] (-0.15,3.35)--(-0.15,4.25);
    \node[circle,fill=black,inner sep=1.5pt] at (-1.2,1.0) {};
    \node[circle,fill=black,inner sep=1.5pt] at (0.35,2.25) {};
    \node[circle,fill=black,inner sep=1.5pt] at (-0.15,3.35) {};
    \node[left=3pt] at (-1.2,1.0) {$\scriptstyle\nu$};
    \node[right=3pt] at (0.35,2.25) {$\scriptstyle\alpha$};
    \node[right=3pt] at (-0.15,3.35) {$\scriptstyle\zeta$};
    \node[left] at (-1.2,0) {$\scriptstyle y$};
    \node[left] at (-0.15,4.25) {$\scriptstyle v$};
    \end{tikzpicture}
    \end{aligned} \in \operatorname{Hom}_{\eN}(\mathfrak f(y),v). 
    \end{equation}
    Equivalently, in the multifusion category $\eP$, this is the     composite
    \begin{equation}
        \notag
        y\otimes\mathfrak f
        \xrightarrow{\nu\otimes\id_{\mathfrak f}}
        (a\otimes z)\otimes\mathfrak f
        \xrightarrow{\mathfrak a_{a,z,\mathfrak f}}
        a\otimes(z\otimes\mathfrak f)
        \xrightarrow{\id_a\otimes\alpha}
        a\otimes u
        \xrightarrow{\zeta}
        v, 
    \end{equation}
    where $y\otimes\mathfrak f=\mathfrak f(y)$, $z\otimes\mathfrak f=\mathfrak f(z)$, and the associator
    $\mathfrak a_{a,z,\mathfrak f}$ is identified with the structure isomorphism $\varphi^{\mathfrak f}_{a,z}$. Similarly we can define $\Theta_{\mathfrak g} (\nu,\beta,\zeta)$.

    Because $\eta$ is an ordinary natural transformation, we have
    \begin{equation}
        \notag
        \mathfrak f(\nu)\circ\eta_y
        =
        \eta_{a\otimes z}\circ\mathfrak g(\nu).
    \end{equation}
    Because $\eta$ is a $\eC$-module natural transformation, we also have
    \begin{equation}
        \notag
        \varphi^{\mathfrak f}_{a,z}
        \circ
        \eta_{a\otimes z}
        =
        (\id_a\otimes\eta_z)
        \circ
        \varphi^{\mathfrak g}_{a,z}. 
    \end{equation}
    It follows that
    \begin{align}
        \Theta_{\mathfrak f}(\nu,\alpha,\zeta)
        \circ\eta_y
        &=
        \zeta
        \circ(\id_a\otimes\alpha)
        \circ\varphi^{\mathfrak f}_{a,z}
        \circ\mathfrak f(\nu)
        \circ\eta_y
        \nonumber\\
        &=
        \zeta
        \circ(\id_a\otimes\alpha)
        \circ\varphi^{\mathfrak f}_{a,z}
        \circ\eta_{a\otimes z}
        \circ\mathfrak g(\nu)
        \nonumber\\
        &=
        \zeta
        \circ(\id_a\otimes\alpha)
        \circ(\id_a\otimes\eta_z)
        \circ\varphi^{\mathfrak g}_{a,z}
        \circ\mathfrak g(\nu)
        \nonumber\\
        &=
        \zeta
        \circ
        \bigl(
        \id_a\otimes(\alpha\circ\eta_z)
        \bigr)
        \circ
        \varphi^{\mathfrak g}_{a,z}
        \circ
        \mathfrak g(\nu)
        \nonumber\\
        &=
        \Theta_{\mathfrak g}
        \bigl(\nu,\alpha\circ\eta_z,\zeta\bigr). \nonumber
    \end{align}
    In diagrammatic representation, this is equivalent to
    \begin{equation}
        \notag
        \begin{aligned}
        \begin{tikzpicture}[scale=0.7]
        \draw[violet,line width=1pt](-1.2,0)--(-1.2,1.0);
        \draw[red,line width=1pt] (-1.2,1.0) to[out=155,in=200]node[pos=0.48,left] {$\scriptstyle a$}(-0.15,3.35);
        \draw[violet,line width=1pt] (-1.2,1.0)to[out=25,in=210]
        node[pos=0.53,above] {$\scriptstyle z$} (0.35,2.25);
        \draw[teal,line width=1pt] (1.2,0) to[out=90,in=-25]
        node[pos=0,right] {$\scriptstyle\mathfrak g$}(0.35,2.25);
        \draw[cyan,line width=1pt] (0.35,2.25) to[out=90,in=-20] node[pos=0.55,right] {$\scriptstyle u$} (-0.15,3.35);
        \draw[cyan,line width=1pt] (-0.15,3.35)--(-0.15,4.25);
        \node[circle,fill=black,inner sep=1.5pt] at (-1.2,1.0) {};
        \node[circle,fill=black,inner sep=1.5pt] at (0.35,2.25) {};
        \node[circle,fill=black,inner sep=1.5pt] at (-0.15,3.35) {};
        \node[left=3pt] at (-1.2,1.0) {$\scriptstyle\nu$};
        \node[right=3pt] at (0.35,2.25) {$\scriptstyle\alpha$};
        \node[right=3pt] at (-0.15,3.35) {$\scriptstyle\zeta$};
        \node[left] at (-1.2,0) {$\scriptstyle y$};
        \node[left] at (-0.15,4.25) {$\scriptstyle v$};
        \draw[fill=white, line width=0.6pt] (0.95,0.4) rectangle (1.45,0.8);
        \node at (1.2,0.6) {$\scriptstyle\eta_y$};
        \node[teal] at (1.3,1.3) {$\scriptstyle\mathfrak f$};
        \end{tikzpicture}
        \end{aligned} 
        \quad = \quad 
        \begin{aligned}
        \begin{tikzpicture}[scale=0.7]
        \draw[violet,line width=1pt](-1.2,0)--(-1.2,1.0);
        \draw[red,line width=1pt] (-1.2,1.0) to[out=155,in=200]node[pos=0.48,left] {$\scriptstyle a$}(-0.15,3.35);
        \draw[violet,line width=1pt] (-1.2,1.0)to[out=25,in=210]
        node[pos=0.53,above] {$\scriptstyle z$} (0.35,2.25);
        \draw[teal,line width=1pt] (1.2,0) to[out=90,in=-25]
        node[pos=0,right] {$\scriptstyle\mathfrak g$}(0.35,2.25);
        \draw[cyan,line width=1pt] (0.35,2.25) to[out=90,in=-20] node[pos=0.55,right] {$\scriptstyle u$} (-0.15,3.35);
        \draw[cyan,line width=1pt] (-0.15,3.35)--(-0.15,4.25);
        \node[circle,fill=black,inner sep=1.5pt] at (-1.2,1.0) {};
        \node[circle,fill=black,inner sep=1.5pt] at (0.35,2.25) {};
        \node[circle,fill=black,inner sep=1.5pt] at (-0.15,3.35) {};
        \node[left=3pt] at (-1.2,1.0) {$\scriptstyle\nu$};
        \node[right=3pt] at (0.35,2.25) {$\scriptstyle\alpha$};
        \node[right=3pt] at (-0.15,3.35) {$\scriptstyle\zeta$};
        \node[left] at (-1.2,0) {$\scriptstyle y$};
        \node[left] at (-0.15,4.25) {$\scriptstyle v$};
        \draw[fill=white, line width=0.6pt] (0.75,1.3) rectangle (1.25,1.7);
        \node at (1,1.5) {$\scriptstyle\eta_z$};
        \node[teal] at (0.5,1.8) {$\scriptstyle\mathfrak f$};
        \end{tikzpicture}
        \end{aligned} \;.
    \end{equation}  
    This is precisely 
    \begin{equation}
        \notag
        T_\eta\rho_{\mathfrak f}(X)|z,u,\alpha \rangle_{\mathfrak{f}} = \rho_{\mathfrak g}(X)T_\eta|z,u,\alpha \rangle_{\mathfrak{f}}.
    \end{equation}
    Hence $T_\eta \in \operatorname{Hom}_{A}(\mathcal{H}^{\mathfrak f},\mathcal{H}^{\mathfrak g})$. 
\end{proof}

We have constructed a linear map
\begin{equation}
    \notag
    \Phi: \operatorname{Nat}_{\eC}(\mathfrak g,\mathfrak f) \to
    \operatorname{Hom}_{A}(\mathcal{H}^{\mathfrak f},\mathcal{H}^{\mathfrak g}), \quad
    \Phi(\eta)=T_\eta\,.
\end{equation}
Next we reconstruction of a natural transformation from an
intertwiner. Let $T:\mathcal{H}^{\mathfrak f}\to \mathcal{H}^{\mathfrak g}$ be an intertwiner. Then $T$ decomposes into linear maps $T_{s,t}: V^t_{s\mathfrak f} \to V^t_{s\mathfrak g}$. 

For each fixed $s$, choose matrix-unit morphisms
\[
p^{\mathfrak f}_{s,t,r}:
\mathfrak f(s)\longrightarrow t,
\qquad
\iota^{\mathfrak f}_{s,t,r}:
t\longrightarrow\mathfrak f(s),
\]
where $t\in\operatorname{Irr}(\eN)$ and $r$ runs over the multiplicity
of $t$ in $\mathfrak f(s)$, such that
\begin{equation}
p^{\mathfrak f}_{s,t,r}
\circ
\iota^{\mathfrak f}_{s,t',r'}
=
\delta_{t,t'}\delta_{r,r'}\id_t,\quad \sum_{t,r}
\iota^{\mathfrak f}_{s,t,r}
\circ
p^{\mathfrak f}_{s,t,r}
=
\id_{\mathfrak f(s)}.
\label{eq:antiFF-matrix-unit-orthogonality}
\end{equation}
Such matrix units exist because $\eN$ is finite semisimple.
Define
\begin{equation}
\eta^T_s
:=
\sum_{t,r}
\iota^{\mathfrak f}_{s,t,r}
\circ
T_{s,t}
\bigl(p^{\mathfrak f}_{s,t,r}\bigr).
\label{eq:antiFF-eta-from-T}
\end{equation}
Since $T_{s,t} (p^{\mathfrak f}_{s,t,r}) \in \operatorname{Hom}_{\eN}(\mathfrak g(s),t)$, every summand above has type $\mathfrak g(s)\to t\to\mathfrak f(s)$. 
Thus $\eta^T_s:\mathfrak g(s)\to \mathfrak f(s)$. 

We now prove that
\begin{equation}
    T_{s,t}(\alpha) = \alpha\circ\eta^T_s,\quad \forall~\alpha\in \operatorname{Hom}_{\eN}(\mathfrak f(s),t). 
    \label{eq:antiFF-T-is-precomposition}
\end{equation} 
It suffices to verify this for
$\alpha=p^{\mathfrak f}_{s,t,r_0}$. Using
\eqref{eq:antiFF-eta-from-T} and
\eqref{eq:antiFF-matrix-unit-orthogonality}, we obtain
\begin{equation}
    \notag
    p^{\mathfrak f}_{s,t,r_0}\circ\eta^T_s = \sum_{t',r}
    p^{\mathfrak f}_{s,t,r_0}
    \circ
    \iota^{\mathfrak f}_{s,t',r}
    \circ
    T_{s,t'}
    \bigl(p^{\mathfrak f}_{s,t',r}\bigr)
    =
    \sum_{t',r}
    \delta_{t,t'}\delta_{r_0,r}
    T_{s,t'}
    \bigl(p^{\mathfrak f}_{s,t',r}\bigr)
    =
    T_{s,t}
    \bigl(p^{\mathfrak f}_{s,t,r_0}\bigr).
\end{equation}
By linearity, \eqref{eq:antiFF-T-is-precomposition} holds for every
$\alpha\in V^t_{s\mathfrak f}$.

The components $\eta^T_s$ defined on simple objects extend uniquely,
by additivity, to an ordinary natural transformation $\eta^T:\mathfrak g\Rightarrow\mathfrak f$. 
Indeed, if $x\simeq\oplus_s m_ss$, $x'\simeq\oplus_s n_ss$, then a morphism $h:x\to x'$ is, under these decompositions, a matrix
whose entries are of the form $h^s_{\mu,\nu} \id_s$. 
Defining
\[
    \eta^T_x := \bigoplus_s m_s \bigl(\eta^T_s\bigr),
\]
we have, entrywise,
\begin{equation}
    \notag
    \bigl[
    \mathfrak f(h)\circ\eta^T_x
    \bigr]^s_{\mu,\nu}
    =
    h^s_{\mu,\nu}\eta^T_s
    =
    \bigl[
    \eta^T_{x'}\circ\mathfrak g(h)
    \bigr]^s_{\mu,\nu}. 
\end{equation}
Therefore
\begin{equation}
    \notag
    \mathfrak f(h)\circ\eta^T_x = \eta^T_{x'}\circ\mathfrak g(h). 
\end{equation}

\begin{lemma}
The transformation $\eta^T:\mathfrak g\Rightarrow\mathfrak f$ constructed above is a $\eC$-module natural transformation.
\end{lemma}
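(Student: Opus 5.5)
We need to show that $\eta^T$ is compatible with the module structures, i.e.
\[
\varphi^{\mathfrak f}_{a,z}\circ\eta^T_{a\otimes z}=(\id_a\otimes\eta^T_z)\circ\varphi^{\mathfrak g}_{a,z}
\]
for all $a\in\Irr(\eC)$ and $z\in\Irr(\eM)$. Both sides are morphisms $\mathfrak g(a\otimes z)\to a\otimes\mathfrak f(z)$. By naturality of $\eta^T$ and additivity, it suffices to check the identity after precomposing with the inclusions $\mathfrak g(\nu^\dagger)$ of simple summands $y\subset a\otimes z$. Equivalently, we precompose with $\mathfrak g(\nu)$ read in the other direction, using the basis $\nu\in\Hom_\eM(y,a\otimes z)$ together with $\sum_{y,\nu}\nu\circ\nu^\dagger=\id_{a\otimes z}$.

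Since $\eN$ is semisimple, it is also enough to check equality after postcomposing with every morphism $\zeta\circ(\id_a\otimes\alpha)$ for $\alpha\in\Hom_\eN(\mathfrak f(z),u)$ and $\zeta\in\Hom_\eN(a\otimes u,v)$. These span $\Hom(a\otimes\mathfrak f(z),v)$ as $u,\alpha,\zeta$ vary. The plan is therefore to reduce the claim to the scalar identity
\[
\Theta_{\mathfrak f}(\nu,\alpha,\zeta)\circ\eta^T_y=\Theta_{\mathfrak g}(\nu,\alpha\circ\eta^T_z,\zeta)
\]
for all labels. This is exactly the computation from the previous lemma, now run in reverse.

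The key input is that $T$ is an intertwiner. Take the basis tube $X$ with labels $(a,y,\nu,z,u,v,\zeta)$ and let it act on $|z,u,\alpha\rangle_{\mathfrak f}$. The filled diagram evaluates, via the module associator $\varphi^{\mathfrak f}$ identified with $\mathfrak a_{a,z,\mathfrak f}$ in $\eP$, to $\rho_{\mathfrak f}(X)|z,u,\alpha\rangle_{\mathfrak f}=c\,|y,v,\Theta_{\mathfrak f}(\nu,\alpha,\zeta)\rangle_{\mathfrak f}$. Here $c$ is a normalization constant depending only on dimensions, and it is the same for $\mathfrak f$ and $\mathfrak g$. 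From $T\rho_{\mathfrak f}(X)=\rho_{\mathfrak g}(X)T$ and the formula $T_{s,t}(\alpha)=\alpha\circ\eta^T_s$ established above, applying $T$ to both sides gives
\[
T_{y,v}\bigl(\Theta_{\mathfrak f}(\nu,\alpha,\zeta)\bigr)=\Theta_{\mathfrak f}(\nu,\alpha,\zeta)\circ\eta^T_y
\]
on the left, and $\Theta_{\mathfrak g}(\nu,\alpha\circ\eta^T_z,\zeta)$ on the right. This is the desired identity.

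It remains to unwind $\Theta$. Using ordinary naturality, $\mathfrak f(\nu)\circ\eta^T_y=\eta^T_{a\otimes z}\circ\mathfrak g(\nu)$, the identity becomes
\[
\zeta\circ(\id_a\otimes\alpha)\circ\bigl[\varphi^{\mathfrak f}_{a,z}\circ\eta^T_{a\otimes z}-(\id_a\otimes\eta^T_z)\circ\varphi^{\mathfrak g}_{a,z}\bigr]\circ\mathfrak g(\nu)=0 .
\]
This holds for all $\zeta,\alpha,u,v$ and all $\nu,y$. Nondegeneracy on both sides then kills the bracket, which gives the module naturality condition.

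The main obstacle is the first step of the intertwiner argument: confirming that the action of the basis tube on the disk is exactly $|y,v,\Theta(\cdot)\rangle$ up to a constant independent of the functor. One must also check that the tubes with these labels, together with the choices of $u$ and $\alpha$, are rich enough to span all test morphisms. I would verify the first point by evaluating the filled diagram in $\eP$ with one F-move, as in Proposition~\ref{prop:Hom-Nat}'s setup. The second point follows from semisimplicity of $\eN$ and of $\eM$ as a $\eC$-module category.
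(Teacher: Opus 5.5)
Your proposal is correct and follows the paper's proof essentially step by step. The intertwining relation for a basis tube gives $\Theta_{\mathfrak f}(\nu,\alpha,\zeta)\circ\eta^T_y=\Theta_{\mathfrak g}(\nu,\alpha\circ\eta^T_z,\zeta)$; ordinary naturality then reduces this to $\zeta\circ(\id_a\otimes\alpha)\circ D_{a,z}\circ\mathfrak g(\nu)=0$, and the joint epi/monomorphicity of the test morphisms forces $D_{a,z}=0$. The only cosmetic difference is that you read the intertwining identity directly at the level of morphisms, whereas the paper first writes it in F-symbol coordinates and then reverses that expansion.
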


\begin{proof}
Write
\begin{equation}
    \notag
    T_{z,u}(\alpha)
    =
    \sum_{\beta\in V^u_{z\mathfrak g}}
    T^{z,u}_{\beta\alpha}\,\beta. 
\end{equation}
Since $T$ is an $A$-module map, the relation $T\rho_{\mathfrak f}(X) = \rho_{\mathfrak g}(X)T$ and the action formula \eqref{eq:left-action-rep} imply
\begin{equation}
    \sum_{\kappa\in V^v_{y\mathfrak f}}
    T^{y,v}_{\lambda\kappa}
    \bigl[F^v_{az\mathfrak f}\bigr]
     ^{\,y\nu\kappa}_{u\alpha\zeta}
    =
    \sum_{\beta\in V^u_{z\mathfrak g}}
    \bigl[F^v_{az\mathfrak g}\bigr]
     ^{\,y\nu\lambda}_{u\beta\zeta}
    T^{z,u}_{\beta\alpha}.
    \label{eq:T-F-intertwining}
\end{equation}
Here the common factor $\sqrt{{d_a d_z}/{d_y}}$ on the two sides has canceled.

By the construction of $\eta^T$, we have
\begin{equation}
    \notag
    T_{z,u}(\alpha)=\alpha\circ\eta_z^T,
    \qquad
    T_{y,v}(\kappa)=\kappa\circ\eta_y^T. 
\end{equation}
Consequently, after reversing the corresponding F-symbol
expansions, Eq.~\eqref{eq:T-F-intertwining} is equivalent to
\begin{equation}
    \notag
    \Theta_{\mathfrak f}(\nu,\alpha,\zeta)
    \circ\eta_y^T
    =
    \Theta_{\mathfrak g}
    \bigl(\nu,\alpha\circ\eta_z^T,\zeta\bigr). 
\end{equation}
Expanding the definition of $\Theta$ gives
\begin{equation}
    \zeta \circ (\id_a\otimes\alpha) \circ \varphi^{\mathfrak f}_{a,z} \circ \mathfrak f(\nu) \circ \eta_y^T = \zeta \circ (\id_a\otimes\alpha) \circ (\id_a\otimes\eta_z^T) \circ \varphi^{\mathfrak g}_{a,z} \circ \mathfrak g(\nu). \label{eq:expanded-module-naturality-short}
\end{equation}
Since $\eta^T$ is already an ordinary natural transformation,
\begin{equation}
    \notag
    \mathfrak f(\nu)\circ\eta_y^T = \eta_{a\otimes z}^T\circ\mathfrak g(\nu).
\end{equation}
Therefore \eqref{eq:expanded-module-naturality-short} becomes
\begin{equation}
    \zeta\circ(\id_a\otimes\alpha)\circ D_{a,z}\circ\mathfrak g(\nu) = 0,\label{eq:D-tested-short}
\end{equation}
where
\begin{equation*}
    D_{a,z} := \varphi^{\mathfrak f}_{a,z} \circ \eta_{a\otimes z}^T - (\id_a\otimes\eta_z^T) \circ \varphi^{\mathfrak g}_{a,z}. 
\end{equation*}

As $y,\nu$ range over the fusion basis of $a\otimes z$, the
morphisms $\mathfrak g(\nu): \mathfrak g(y)\to \mathfrak g(a\otimes z)$ are jointly epimorphic. 
Indeed, for dual fusion bases one has
\[
\sum_{y,\nu}
\mathfrak g(\nu)\circ\mathfrak g(\widehat\nu)
=
\id_{\mathfrak g(a\otimes z)},
\qquad
\sum_{u,v,\alpha,\zeta}
\widehat q_{u,v;\alpha,\zeta}
\circ
q_{u,v;\alpha,\zeta}
=
\id_{a\otimes\mathfrak f(z)},
\]
where $q_{u,v;\alpha,\zeta} = \zeta\circ(\id_a\otimes\alpha)$.  Similarly, as
$u,v,\alpha,\zeta$ range over the corresponding fusion bases, the
morphisms $\zeta\circ(\id_a\otimes\alpha): a\otimes\mathfrak f(z)\to v$ are jointly monomorphic. 
Hence Eq.~\eqref{eq:D-tested-short} implies $D_{a,z}=0$. Equivalently,
\begin{equation*}
    \varphi^{\mathfrak f}_{a,z} \circ \eta_{a\otimes z}^T = (\id_a\otimes\eta_z^T) \circ \varphi^{\mathfrak g}_{a,z}. 
\end{equation*}
Thus $\eta^T$ is a $\eC$-module natural transformation.
\end{proof}

We have constructed a linear map
\begin{equation}
    \notag
    \Psi:\operatorname{Hom}_{A}(\mathcal{H}^{\mathfrak f},\mathcal{H}^{\mathfrak g})\to\operatorname{Nat}_{\eC}(\mathfrak g,\mathfrak f),\quad\Psi(T)=\eta^T.
\end{equation}

It remains to show the two constructions are mutually inverse.
Let $T\in \Hom_A(H_{\mathfrak f},H_{\mathfrak g})$. For every
$\alpha\in V^t_{s\mathfrak f}$,
Eq.~\eqref{eq:antiFF-T-is-precomposition} gives
\begin{equation}
    \notag
    T_{\eta^T}(\alpha) = \alpha\circ\eta^T_s = T_{s,t}(\alpha).
\end{equation}
Hence $\Phi \bigl(\Psi(T)\bigr)=T$. 
Conversely, let $\eta\in \operatorname{Nat}_{\eC}(\mathfrak g,\mathfrak f)$.  Using the definition of $T_\eta$ and Eq.\eqref{eq:antiFF-matrix-unit-orthogonality}, we find
\begin{equation}
    \notag
    \eta^{T_\eta}_s = \sum_{t,r} \iota^{\mathfrak f}_{s,t,r} \circ T_{\eta;s,t} \bigl(p^{\mathfrak f}_{s,t,r}\bigr) = \sum_{t,r} \iota^{\mathfrak f}_{s,t,r} \circ p^{\mathfrak f}_{s,t,r} \circ \eta_s  = \eta_s. 
\end{equation}
Therefore $\Psi\bigl(\Phi(\eta)\bigr)=\eta$. It follows that $\Phi$ and $\Psi$ are mutually inverse. Hence Proposition~\ref{prop:Hom-Nat} has been established.